\documentclass[UTF-8,reqno]{amsart}
\usepackage{enumerate}
\usepackage{mhequ}
\usepackage{amssymb,url,color, booktabs,nccmath}
\usepackage[left=2.7cm,right=2.7cm,top=3.5cm,bottom=3.5cm]{geometry}
\usepackage{mathrsfs}
\usepackage{enumitem,dsfont}
\usepackage{graphicx}
\usepackage{tikz}
\usepackage{tocvsec2}
\usetikzlibrary{shapes,snakes}
\usetikzlibrary{calc}
\usetikzlibrary{decorations.shapes}
\usepackage{amsmath,amsfonts, verbatim, amsthm,amssymb,bbm,mathtools,mathrsfs,authblk, esint}
\usepackage{bm}
\usepackage{extarrows}

\DeclareMathOperator\supp{supp}
\usepackage{color}
\usepackage[colorlinks=true]{hyperref}
\hypersetup{
	linkcolor=blue,          
	citecolor=red,        
	filecolor=blue,      
	urlcolor=cyan
}

\definecolor{darkergreen}{rgb}{0.0, 0.5, 0.0}
\definecolor{myblue}{RGB}{0,0,139} 

\usepackage{fancyhdr} 
\numberwithin{equation}{section}
\def\theequation{\arabic{section}.\arabic{equation}}
\newcommand{\be}{\begin{eqnarray}}
	\newcommand{\ee}{\end{eqnarray}}
\newcommand{\ce}{\begin{eqnarray*}}
	\newcommand{\de}{\end{eqnarray*}}
\newtheorem{theorem}{Theorem}[section]
\newtheorem{lem}[theorem]{Lemma}
\newtheorem{prop}[theorem]{Proposition}

\newtheorem{assumption}{Assumption}[section]
\newtheorem{definition}[theorem]{Definition}
\theoremstyle{definition}
\newtheorem{remark}[theorem]{Remark}

\newcommand{\RR}{\mathbb{R}} 
\newcommand{\fH}{\mathcal{H}} %
\newcommand{\EE}{\mathbb{E}} 
\newcommand{\PP}{\mathbb{P}} 
\newcommand{\fM}{\mathcal{M}} 
\newcommand{\fF}{\mathcal{F}} 
\newcommand{\1}{\mathbbm{1}} 
\newcommand{\fP}{\mathcal{P}} 
\newcommand{\fI}{\mathcal{I}}

\newcommand{\ZZ}{\mathbb{Z}} 
\newcommand{\HH}{\mathbb{H}} 
\newcommand{\WW}{\mathbb{W}}

\newcommand{\fG}{\mathcal{G}}

\newcommand{\NN}{\mathbb{N}}
\newcommand{\fN}{\mathcal{N}}

\newcommand{\cF}{\mathfrak{F}}
\newcommand{\cH}{\mathfrak{H}}

\newcommand{\fR}{\mathcal{R}}

\newcommand{\fV}{\mathcal{V}}
\newcommand{\fU}{\mathcal{U}}

\newcommand{\BB}{\mathbb{B}}
\newcommand{\CC}{\mathbb{C}}
\newcommand{\fT}{\mathcal{T}}
\newcommand{\me}{\mathfrak e}
\newcommand{\fg}{\mathfrak{g}}
\newcommand{\ud}{\operatorname{d}\! }
\newcommand{\Tr}{\textnormal{Tr}}
\newcommand{\cZ}{\mathscr{Z}}
\newcommand{\cK}{\mathfrak{K}}

\newcommand{\fm}{\mathbf m}
\newcommand{\ii}{\boldsymbol i}

\DeclareMathOperator{\Fix}{Fix}

\tikzset{
	dot/.style={circle,fill=black,inner sep=0pt, outer sep=0.7pt, minimum size=1mm},
	dot1/.style={circle,draw=black,inner sep=0pt, outer sep=0.7pt, minimum size=1mm},
	Phi1/.style={white!40!red,thick,snake=coil,segment amplitude=0.6pt, segment length=2pt},
	Phi2/.style={white!40!blue,thick,snake=coil,segment amplitude=0.6pt, segment length=2pt},
	Phi3/.style={white!40!purple,thick,snake=coil,segment amplitude=0.6pt, segment length=2pt},
	Z/.style={black!40!green,thick,snake=coil,segment amplitude=0.6pt, segment length=2pt},
	C/.style={thick,black!20!black},
	Cr/.style={thick, densely dashed,black!20!black},
	Cg/.style={thick, densely dashed,black!20!green},
	Co/.style={thick,black!20!orange},
	Cg1/.style={thick,black!20!green},
}

\def\Wick#1{\,\colon\!\! #1 \!\colon}
\newcommand{\authorfootnotes}{\renewcommand\thefootnote{\@fnsymbol\c@footnote}}%
\begin{document}
\title{$\fP(\Phi)_2$ Theory from many-body quantum Gibbs states}
\maketitle
\begin{center}
	Phan Th\`anh Nam\footnote{LMU Munich, Department of Mathematics, Theresienstrasse 39, 80333 Munich, Germany. Email: nam@math.lmu.de},
	Zhilin Yang \footnote{Academy of Mathematics and Systems Science, Chinese Academy of Sciences, Beijing 100190, China. E-mail: yangzhilin0112@163.com}, 
	Xiangchan Zhu\footnote{State Key Laboratory of Mathematical Sciences, Academy of Mathematics and Systems Science, Chinese Academy of Sciences, Beijing 100190, China. E-mail: zhuxiangchan@126.com}
\end{center}
\begin{abstract} 
	
We derive the $\fP(\Phi)_2$ measure on the two-dimensional unit torus as the rigorous limit of many-body quantum Gibbs states. In particular, the quantum model corresponding to the $\Phi^{2p}_2$ measure  is formulated in the grand-canonical ensemble with a general symmetric $p$-body interaction potential which is not required to have a factorized form. Unlike in the $\Phi^4_2$ case investigated by Fröhlich--Knowles--Schlein--Sohinger in \cite{FKSS25}, in the treatment of higher-order interactions, Wick renormalization generates a full hierarchy of lower-order interactions which we organize systematically  using a graphical formalism.  One key ingredient of our analysis is a logarithmic stability estimate for both the nonlocal Hartree functional and the many-body Hamiltonian that is uniform in the interaction range, allowing us to use the positivity of the leading $p$-body interaction to control all lower-order terms generated by the Wick counterterms.

\end{abstract}

\tableofcontents
\section{Introduction}
Let $\Lambda=[-\tfrac12,\tfrac12)^2$ be the two-dimensional torus and let $\mu_0$ be the centered complex Gaussian field with covariance $h^{-1}=(1-\Delta)^{-1}$. For a radial polynomial
\[
\fP(z)=\sum_{r=1}^{p}a_r|z|^{2r},\quad a_p>0,
\]
the complex-valued $\fP(\Phi)_2$ measure is the probability measure
\begin{align*}
	\ud\mu_{\fP}(u)=\cZ_\fP^{-1}e^{-V_\fP(u)}\,\ud\mu_0(u),
	\quad
	V_\fP(u)=\sum_{r=1}^{p}a_r\int_\Lambda \Wick{|u(x)|^{2r}}\,\ud x .
\end{align*}
Here Wick ordering is necessary because $\mu_0$ is supported on distributions of negative regularity. The basic case is the monomial $\fP(z)=p^{-1}|z|^{2p}$, whose measure and partition function will be denoted by $\mu_p$ and $\cZ_p$, respectively. We concentrate on this case and the result for general radial polynomials follows similarly.

These measures originate in constructive quantum field theory; their two-dimensional construction goes
back to Nelson \cite{Nel66, Nel73}, with further developments summarized in \cite{Sim74, GJ87}. They also arise as invariant Gibbs measures for the defocusing nonlinear Schr\"odinger equation (NLS). In the cubic case $p=2$, Bourgain \cite{Bou96} used it to construct global solutions of the NLS for rough initial data. For higher powers, global solutions of the NLS preserving the Gibbs measure in law were constructed in \cite{OT18}, and almost-sure global well-posedness was subsequently established in \cite{DNY24}.

Our goal is to derive $\mu_\fP$ from grand-canonical many-body quantum Gibbs states. Fix $p\geq2$. We consider a system of interacting bosons on the torus $\Lambda$. 
Since we do not fix the particle number, the appropriate many-body Hilbert space is the bosonic Fock space
\begin{equation*}
	\mathfrak{F} = \mathbb{C} \oplus \bigoplus_{n=1}^\infty \cH^n, \quad \cH^n = L^2_{\text{sym}}(\Lambda^n),
\end{equation*}
where $L^2_{\text{sym}}(\Lambda^n)$ is the symmetric subspace of $L^2(\Lambda^n)$.
For a symmetric translation-invariant $l$-body kernel $K$, write $\mathbb V_l[K]$ for the operator on $\cF$ whose $n$-particle restriction is the ordered sum displayed below. We consider the quantum Hamiltonian
\begin{equation}
	\begin{aligned}
		\HH_\lambda
		={}&0\oplus\bigoplus_{n=1}^\infty\left(\lambda\sum_{i=1}^n(-\Delta_{x_i}+\vartheta)\right)
		+\frac{\lambda^p}{p}\,\mathbb V_p[v^\varepsilon]
		+\sum_{l=2}^{p-1}\lambda^l\,\mathbb V_l[R_l(v^\varepsilon)]
		+E_0\,.
	\end{aligned}\label{Hamiltonian1}
\end{equation}
Its restriction to the $n$-particle sector $\cH^n$, $n\geq1$, is
\begin{align*}
	H_{\lambda,n}
	={}&\lambda\sum_{i=1}^n(-\Delta_{x_i}-\vartheta)
	+\frac{\lambda^p}{p}\sum_{i_1,\ldots,i_p=1}^nv^\varepsilon(x_{i_2}-x_{i_1},\ldots,x_{i_p}-x_{i_1})\\
	&+\sum_{l=2}^{p-1}\lambda^l
	\sum_{i_1,\ldots,i_l=1}^n
	R_l(v^\varepsilon)
	(x_{i_2}-x_{i_1},\ldots,x_{i_l}-x_{i_1})
	+E_0\,.\nonumber
\end{align*}
while the vacuum sector equals $E_0$. The inverse temperature $\lambda \to 0$ plays the role of the semiclassical parameter, and $\varepsilon\to0$ is the interaction range. The kernel $v^\varepsilon$ is the periodic rescaling of a nonnegative compactly supported positive-type $p$-body profile $v$, symmetric under particle permutations, and $v^\varepsilon\rightharpoonup\delta_0^{\otimes(p-1)}$. 
No factorization or other prescribed form of $v$ is assumed; within this class, the limiting field theory is independent of its detailed shape. The kernels $R_l(v^\varepsilon)$, together with the chemical-potential counterterm $\vartheta$ and the scalar shift $E_0$, are determined by Wick renormalization and are specified in Section \ref{sec2}.

We consider the Gibbs state
$$\Gamma_\lambda=Z_\lambda^{-1}e^{-\HH_\lambda},\quad Z_\lambda=\Tr(e^{-\HH_\lambda}),$$
and compare it with the free state associated with 
\begin{align}
	\HH_0=0\oplus\bigoplus_{n=1}^\infty\left(\lambda\sum_{i=1}^n(-\Delta_{x_i}+1)\right) \label{def:H0}
\end{align}
and partition function $Z_0$.
Our main result states that, for $\lambda,\varepsilon\to0$ with $\varepsilon\geq\lambda^\eta$ and $\eta=\eta(p)>0$ sufficiently small,
\[
\log\frac{Z_\lambda}{Z_0}\longrightarrow\log \cZ_p,
\qquad
k!\lambda^k\Gamma_\lambda^{(k)}
\longrightarrow\int |u^{\otimes k}\rangle\langle u^{\otimes k}|\,\ud\mu_p(u)
\]
in Hilbert--Schmidt norm for every $k\geq1$, where $\Gamma_\lambda^{(k)}$ is the reduced $k$-body density matrix of $\Gamma_\lambda$. We also prove trace-class convergence of the relative one-body density matrix. Thus both the thermodynamics and the correlation functions of the quantum system converge to those of the $\Phi_2^{2p}$ field, and the same conclusions hold for every radial polynomial $\fP$.

The derivation of nonlinear Gibbs measures from quantum Gibbs states was initiated in one dimension in \cite{LNR15}. In higher dimensions, the analysis is more complicated since Wick renormalization is necessary, and the corresponding problem for nonlocal pair interactions in two and three dimensions was resolved by two different approaches in \cite{LNR21, FKSS23} (see also \cite{FKSS17} for an earlier work concerning a modified, non-equilibrium state in two and three dimensions). The local $\Phi^4_2$ measure was derived later in \cite{FKSS25} by refining the functional integral method in \cite{FKSS23} (see also \cite{NZZ25, JR25} for alternative solutions based on the variational method from \cite{LNR15, LNR21}, allowing a polynomial relation between the interaction range and the temperature).  The local $\Phi^4_3$ measure, which requires  renormalization beyond Wick ordering, was derived in \cite{NZZ25} by combining the variational method and paracontrolled calculus \cite{GIP15}. 

Further developments concerning the emergence of nonlinear Gibbs measures from quantum models include the derivation of the inhomogeneous $\Phi^4_2$ measure with an external trapping potential in \cite{CKRG26}, and the derivation of the homogeneous $\Phi^4_2$ measure with a singular Bessel interaction in \cite{NZZ26}. While all the previously mentioned works focus on the defocusing case, in one dimension it is possible to consider attractive interactions; see \cite{RS25} and \cite{LNZ26} for the derivations of the mass-critical focusing $\Phi^6_1$ measure from quantum systems with three-body interactions. 

In the present work, we focus on the derivation of local Gibbs measures from general $p$-body interactions for arbitrary $p\ge 2$ in two dimensions. While it is a natural extension of the pair interaction setting studied in \cite{FKSS25, NZZ25, JR25}, the analysis of the general defocusing $\fP(\Phi)_2$ theory requires a substantial development of existing techniques. In the study of the classical two-dimensional defocusing NLS with rough random data, the extension from the cubic equation considered in \cite{Bou96} to higher-order nonlinearities also required substantially new techniques; in particular a general almost-sure global well-posedness result for these higher powers was obtained only recently in \cite{DNY24} using random averaging operators. Although the quantum problem considered here is of a different nature, it is not surprising that the passage from pair interactions to general $p$-body interactions introduces new structural difficulties.

First, for $\Phi^{2p}_2$ theory with a general power $p\ge 3$, Wick renormalization produces a full hierarchy of effective interactions of all lower orders, with no common quadratic structure, making the treatment of the counterterms more challenging. More precisely, the standard Wick renormalization produces a hierarchy of kernels $f_l^\varepsilon$, $0\leq l\leq p$, associated with the different contraction patterns. After second quantization, each $m$-th order term contributes further to all interactions of orders $l\leq m$, yielding the counterterms $R_l(v^\varepsilon)$ in \eqref{Hamiltonian1}, as well as the corresponding chemical-potential counterterm and scalar shift. This combinatorial procedure cannot be reduced to a simple induction argument since contraction patterns of different orders are strongly coupled. We will use graphical notation for the contraction patterns, which allows us to identify the full renormalization hierarchy and its combinatorial coefficients in a systematic way.  

Second, while the nonlocal (Hartree) interaction corresponding to the $\Phi^4_2$ theory admits a lower bound  by completing a square, this technique is no longer available for our general interaction potentials, and we need new tools to give quantitative control of high-momentum correlations. In this respect,  one key ingredient of our analysis is a logarithmic stability estimate which holds for both the nonlocal Hartree functional and the many-body Hamiltonian. In this bound, we use the positivity of the leading $p$-body interaction to control all lower-order terms generated by the Wick counterterms, leading to a uniform control independent of the interaction range. This estimate provides the exponential integrability needed for the classical zero-range limit, which ensures the convergence of the classical Hartree measure to the local one by using a Nelson-type argument.  
Moreover, while the stability estimate mentioned above is very helpful for the free energy lower bound, the corresponding upper bound requires a separate lower bound for the projected interaction as the spectral projection does not commute with multiplication by the interaction kernels. This part also requires a detailed analysis of all lower-order terms from Wick ordering. Resolving these issues allows us to take the semiclassical and zero-range limits simultaneously in the polynomial regime $\varepsilon\ge\lambda^\eta$, resulting in the desired local $\Phi^{2p}_2$ theory. The limiting field theory is universal in the sense that it is independent of the detailed shape of the microscopic interaction profile.

\section{Setting and main result}\label{sec2}

In this section, we introduce the notation, formulate the quantum problem, and state the main results together with an outline of the proof.
We begin by recalling the basic notations for quantum states on bosonic Fock space. A quantum state $\Gamma$ is a non-negative self-adjoint operator on $\cF$ with
$\Tr(\Gamma)=1$.
The reduced $k$-body density matrix $\Gamma^{(k)}$ is defined by duality as
\begin{equation}
	\textnormal{Tr}_{\cH^k} \left[A_k \Gamma^{(k)}\right] = \textnormal{Tr}_{\mathfrak{F}} \left[\mathbb{A}_k \Gamma\right]
\end{equation}
for every self-adjoint operator $A_k$ on $\cH^k$. Here $\mathbb{A}_k$ denotes the second quantization of $A_k$, defined by
\begin{align*}
	\mathbb{A}_k=0\oplus \cdot \cdot \cdot \oplus 0\bigoplus_{n=k}^\infty \left( \sum_{1 \leq i_1 < ...< i_k \leq n} (A_k)_{i_1,...,i_k} \right) .
\end{align*}
When $k=1$, we denote the second quantization by $\ud\Gamma(A_1)$. 
If $\Gamma$ commutes with the number operator $\fN:=\ud\Gamma(\1_\cH)$, then it admits a diagonal decomposition of the form $\Gamma=\bigoplus_{n=0}^\infty \Gamma_n$, and reduced density matrices are equivalently given via partial traces as
\begin{align*}
	\Gamma^{(k)} = \sum_{n\geq k} { n \choose k} \Tr_{k+1 \to n} (\Gamma_n) .
\end{align*}
The Gaussian quantum state is defined by
\begin{align}\label{def:Gaussian state}
	\Gamma_0 = Z_0^{-1} e^{-\HH_0},\quad Z_0 = \Tr(e^{-\HH_0})\,,
\end{align}
where $\HH_0$ is given in \eqref{def:H0}.
We denote by $N_0=\Tr(\fN\Gamma_0)$ the expected particle number under $\Gamma_0$. 

We next specify the \(p\)-body interaction potential in the interacting Gibbs state. Throughout, we fix \(p\ge 2\), and assume that \(v\) satisfies the following condition:

\begin{assumption}\label{assum:w}
	Let $v\in C_c(\mathbb R^{2(p-1)})$ be an even probability density whose Fourier transform satisfies,
for some $\delta_0>0$,
\begin{align}\label{v-fourier}
		0\leq\widehat v(k)\lesssim\frac{1}{1+|k|^{2p-2+\delta_0}}.
	\end{align}
	 Assume that for every permutation $\sigma$ of $\{1,\dots,p\}$ and $x_1,\cdots,x_p\in\RR^2$,
	\begin{align}\label{con:sym}
		v(x_2-x_1,\cdots,x_p-x_1) &= v(x_{\sigma(2)}-x_{\sigma(1)},\cdots,x_{\sigma(p)}-x_{\sigma(1)})\,.
	\end{align}
\end{assumption}

\begin{remark}
	The function $v(x_2-x_1,\cdots,x_p-x_1)$
	does not impose any restriction on the class of translation-invariant \(p\)-body interactions. Indeed, let $K:\RR^{2p}\to\mathbb R$
	be a general \(p\)-body kernel satisfying the translation invariance property
	\begin{align}\label{trans}
		K(x_1+a,\ldots,x_p+a)=K(x_1,\ldots,x_p),
		\quad a\in\RR^2 .
	\end{align}
	Then \(K\) depends only on the relative distance of the particles. Let $$v(y_{1:p-1}):=K(0,y_{1:p-1}),\quad y_i\in\RR^2,\,1\leq i\leq p-1.$$
	Taking $a=-x_1$ in \eqref{trans}, we have $$K(x_{1:p})=K(0,x_2-x_1.\cdots,x_p-x_1)=v(x_2-x_1.\cdots,x_p-x_1).$$
	Thus the kernel \(v\) used throughout the paper is simply a coordinate representation of a general translation-invariant \(p\)-body interaction.
	We also point out that the choice of the variables
	\((x_2-x_1,\ldots,x_p-x_1)\) is only a convenient coordinate convention.
	For instance, interactions written in terms of other relative variables, such as
	$\widetilde v(x_1-x_2,x_2-x_3,\ldots,x_{p-1}-x_p)$
	are included by a linear change of variables.
\end{remark}
\noindent
For $\varepsilon>0$, we define the rescaled periodic function $v^\varepsilon$, which approximates $\delta_0^{\otimes(p-1)}$ as $\varepsilon\to0$:
\begin{align}
	v^\varepsilon(y):=\sum_{k\in\mathbb Z^{2(p-1)}}
	\widehat v(\varepsilon k)e^{2\pi\iota k\cdot y}
	=\varepsilon^{-2(p-1)}
	\sum_{m\in\mathbb Z^{2(p-1)}}v\bigl(\varepsilon^{-1}(y+m)\bigr),
	\quad y\in\Lambda^{p-1},\label{poisson}
\end{align}
where the second identity follows from the Poisson summation formula.
In particular, $v^\varepsilon\ge0$ and $\int_{\Lambda^{p-1}}v^\varepsilon=1$.
For later use, we introduce marginal operators for general symmetric kernels. 
For \(F\in\cH^{r-1}\) and \(1\le l< r\), we define the \(l\)-particle marginal \(\Pi_lF\in\cH^{l-1}\) by
\begin{align}
	(\Pi_l F)(y_{1:l-1})
	:= \int_{\Lambda^{r-l}} F(y_{1:l-1},z_l,\cdots,z_{r-1})\,\ud z_l\cdots\ud z_{r-1}.\label{def:marginal}
\end{align}
Here and below, if \(l=r\), we use the convention \(\Pi_rF=F\), and if
\(l=1\), \(\Pi_1F=\int_{\Lambda^{r-1}}F\ud x\).

We are now ready to define the interacting Gibbs state on \(\cF\):
\begin{align}\label{def:Gibbs state}
	\Gamma_{\lambda} = Z_{\lambda}^{-1} e^{-\HH_{\lambda}},\quad Z_{\lambda} = \Tr(e^{-\HH_{\lambda}})\,,
\end{align}
with the Hamiltonian operator
\begin{align}
	\HH_{\lambda}=E_0\oplus\bigoplus_{n=1}^\infty&\left(\lambda\sum_{i=1}^n(-\Delta_{x_i}-\vartheta)+\frac1p\lambda^p\sum_{i_{1:p}=1}^n v^\varepsilon(x_{i_2}-x_{i_1},\cdots,x_{i_p}-x_{i_1})\right.\nonumber\\
	&\left.+\sum_{l=2}^{p-1}\lambda^l\sum_{i_{1:l}=1}^n R_l(v^\varepsilon)(x_{i_2}-x_{i_1},\cdots,x_{i_l}-x_{i_1})+E_0\right)\,,\label{Hamiltonian}
\end{align}
where 
\begin{align}
	R_l(v^\varepsilon) =& \sum_{m=l}^p(-\lambda N_0)^{m-l}\binom ml\Pi_lf_m^\varepsilon\,,\quad 2\leq l\leq p-1\,,\label{def:R_l}
\end{align}
and
\begin{align}
	\vartheta =& -\sum_{l=1}^pl\,(-\lambda N_0)^{l-1}\int_{\Lambda^{l-1}}f_l^\varepsilon(x)\ud x-1\,,\quad
	E_0= f_0^\varepsilon + \sum_{l=1}^p(-\lambda N_0)^l\int_{\Lambda^{l-1}}f_l^\varepsilon(x)\ud x\,.\label{def:nu}
\end{align}
The effective kernels $f_l^\varepsilon$, $0\le l\le p$, are generated by the Wick renormalization, with
$f_p^\varepsilon=p^{-1}v^\varepsilon$. For $0\le l<p$, they are given by
\begin{align}
	f_l^\varepsilon(x_2-x_1,\cdots,x_l-x_1) :=& \sum_{\mathbf m\in \fM_{p,l}}C_\fm\int_{\Lambda^{p-l}} v^\varepsilon(x_2-x_1,\cdots,x_p-x_1)\prod_{1\leq a<b\leq p}G(x_a-x_b)^{m_{ab}}\ud x_{l+1:p}\,,\label{def:F_l}
\end{align}
where \begin{align}\label{def:M}
	\fM_{p,l}=\{\fm=(m_{ab})_{1\leq a<b\leq p}: m_{ab}\in\NN, \sum_{1\leq a<b\leq p}m_{ab}=p-l\},
\end{align}
and $G(x-y)=(1-\Delta)^{-1}(x,y)$ is the periodic Green function.
For $l=0,1$, the right-hand side of \eqref{def:F_l} is interpreted as a scalar. The constants $C_\fm$ are the combinatorial coefficients defined in
\eqref{C_g2}. We refer the reader to Remark \ref{rmk 4.1} for an explanation on the form of $f_l^\varepsilon$.
Under Assumption \ref{assum:w}, $\HH_\lambda$ is bounded from below and can therefore be defined as a self-adjoint operator by the Friedrichs extension; see Subsection~\ref{sec4.1} for details.
Consequently, the partition function $Z_\lambda$ is finite and $\Gamma_\lambda$ is a well-defined quantum state.

\begin{remark}
	The lower-order kernels \(R_l(v^\varepsilon)\) in \eqref{def:nu} arise as effective \(l\)-body counterterms after two renormalization steps.
	First, Wick expansion of the nonlocal interaction produces kernels \(f_m^\varepsilon\), \(0\le m\le p-1\), such that the corresponding Hartree functional can be written as
	\begin{align}
		W_p^\varepsilon(u)
		=\sum_{m=0}^p\int f_m^\varepsilon(x_2-x_1,\dots,x_m-x_1)
		\prod_{j=1}^m \Wick{|u(x_j)|^2}\,\ud x_{1:m},\label{def:W^epsilon,p}
	\end{align}
	which is an approximation of $\frac1p\int_{\Lambda}\Wick{|u(x)|^{2p}}\ud x$, see Section \ref{sec3} for details.
	Second, upon passing to Fock space, one expands
	\[\Wick{\ud\Gamma(e_k)}=\ud\Gamma(e_k)-N_0\,\1_{k=0}.\]
	Hence an $l$-body term receives contributions from every $m$-th order kernel	with $m\ge l$: one chooses $l$ factors that remain as $\ud\Gamma(e_k)$, while the remaining $m-l$ factors are replaced by $N_0$. This gives the form of \(R_l(v^\varepsilon)\) in \eqref{def:R_l}.
	In particular, the cases $l=1$ and $l=0$ produce the formulas for the chemical potential $\vartheta$ and the energy shift $E_0$, respectively.
\end{remark}

Finally, let $\mu_p$ be the $\Phi_2^{2p}$ measure defined by
\begin{align}\label{def:Phi measure}
	\ud\mu_p(u) = \cZ_p^{-1}\exp\left(-\frac1p\int_{\Lambda}\Wick{|u(x)|^{2p}}\ud x\right)\ud\mu_0(u),\quad \cZ_p=\int \exp\left(-\frac1p\int_{\Lambda}\Wick{|u(x)|^{2p}}\ud x\right)\ud\mu_0(u)\,,
\end{align}
where $\Wick{\cdot}$ denotes Wick ordering with respect to the Gaussian measure $\mu_0$.
See, for example, \cite[Proposition 1.2]{OT18} for a rigorous construction.
With the above definitions in place, we are ready to state the main result.

\begin{theorem}\label{thm}
	For $p\geq2$, consider the Gibbs state $\Gamma_{\lambda}$ in \eqref{def:Gibbs state} with $v$ satisfying Assumption \ref{assum:w}. Then as $\lambda,\varepsilon\to0$ with $\varepsilon\geq\lambda^\eta$ for a sufficiently small constant $\eta>0$ depending on $p$, we have 
	
	(1) Convergence of the relative free energy:
	\begin{align}\label{thm-eq1}
		\left|\log\frac{Z_{\lambda}}{Z_0}-\log\cZ_p\right|\to0\,.
	\end{align}

	(2) Convergence of density matrices:
	for every $k\in\NN$, we have convergence in Hilbert-Schmidt norm,
	\begin{align}\label{thm-eq2}
		\Tr\left|k!\,\lambda^k\Gamma_{\lambda}^{(k)}-\int|u^{\otimes k}\rangle\langle u^{\otimes k}|\ud\mu_p(u)\right|^2\to0\,.
	\end{align}

	(3) Trace class convergence of the relative one-body density matrix:
	\begin{align}\label{thm-eq3}
		\Tr\left|\lambda(\Gamma_{\lambda}^{(1)}-\Gamma_0^{(1)})-\int|u\rangle\langle u|(\ud\mu_p(u)-\ud\mu_0(u))\right|\to 0 \,.
	\end{align}
\end{theorem}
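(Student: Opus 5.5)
We will follow the variational approach of \cite{LNR15, LNR21, NZZ25}, splitting the limit into a semiclassical step at fixed interaction and a zero-range step, performed simultaneously with quantitative errors polynomial in $\lambda$ and logarithmic or polynomial in $\varepsilon$. The starting point is the Gibbs variational principle
\begin{align*}
-\log\frac{Z_\lambda}{Z_0}=\min_{\Gamma}\Bigl\{\fH(\Gamma,\Gamma_0)+\Tr\bigl[(\HH_\lambda-\HH_0)\Gamma\bigr]\Bigr\},
\end{align*}
together with its classical analogue $-\log \cZ^\varepsilon=\min_\nu\{\fH_{\rm cl}(\nu,\mu_0)+\int W_p^\varepsilon\,\ud\nu\}$. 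Here $\cZ^\varepsilon$ is the partition function of the nonlocal Hartree measure $\ud\mu^\varepsilon\propto e^{-W_p^\varepsilon}\ud\mu_0$, with $W_p^\varepsilon$ as in \eqref{def:W^epsilon,p}.

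\textbf{Step 1 (rewriting the Hamiltonian).} Using the graphical Wick expansion (Section \ref{sec3}) and the identity $\Wick{\ud\Gamma(e_k)}=\ud\Gamma(e_k)-N_0\1_{k=0}$, we rewrite $\HH_\lambda-\HH_0$ as $\lambda^p$ times the second quantization of $\sum_m f_m^\varepsilon$ acting on products of renormalized density operators $\lambda\,\ud\Gamma(e_{k})-\lambda N_0\1_{k=0}$. The choices \eqref{def:R_l}--\eqref{def:nu} are designed precisely so that this is the normal-ordered quantum counterpart of $W_p^\varepsilon$. The aim is to identify the operator as $W_p^\varepsilon$ evaluated in Fock-space variables, up to commutator (normal-ordering) errors of order $\lambda$ times polynomial powers of $\varepsilon^{-1}$.

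\textbf{Step 2 (stability).} We prove the logarithmic stability estimate in two forms. For the Hartree functional,
\[
W_p^\varepsilon(u)\ge \tfrac12\int v^\varepsilon|u|^{2}\cdots|u|^{2}-C\bigl(\log \|P_{\le\Lambda}\|\bigr)^{C},
\]
uniformly in $\varepsilon$. The corresponding operator inequality $\HH_\lambda-\HH_0\ge -C(\log\lambda^{-1})^{C}$ holds modulo kinetic terms. The mechanism is that each lower-order term $f_m^\varepsilon$, $m<p$, is bounded by Hölder-type interpolation: a small multiple of the positive $p$-body term, which is positive since $\widehat v\ge0$, plus powers of the Green function contractions, and these grow only like $\log\varepsilon^{-1}$ in 2D. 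The stability estimate yields uniform exponential integrability of $e^{-W_p^\varepsilon}$. A Nelson-type argument then gives $W_p^\varepsilon\to \frac1p\int\Wick{|u|^{2p}}$ in $L^q(\mu_0)$, so $\cZ^\varepsilon\to\cZ_p$ and $\mu^\varepsilon\to\mu_p$ with an explicit rate in $\varepsilon$. This handles the classical zero-range limit.

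\textbf{Step 3 (free energy bounds).} For the upper bound on $-\log(Z_\lambda/Z_0)$, we use a trial state built from $\mu^\varepsilon$ via lower symbols (a coherent-state/Berezin quantization) on the low modes $P_{\le\Lambda}$ with $\Lambda=\lambda^{-\kappa}$, tensored with the free state on the high modes. We then estimate the energy error. This is where a separate lower bound for the \emph{projected} interaction is required, since $P_{\le\Lambda}$ does not commute with $v^\varepsilon$. All the lower-order Wick terms must be recontrolled after projection, again using positivity of $\widehat v$ and logarithmic bounds on the Green function.

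For the lower bound, we take the Gibbs state itself and use Step 2 to show that it has controlled relative entropy and particle-number moments. We then apply the quantitative de Finetti theorem/upper symbols on low modes together with the Berezin--Lieb inequality, passing to the classical functional. High-mode correlations are controlled by an entropy localization argument and the stability bound, which absorbs the variance-type errors.

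All errors take the form $\lambda^{c}\varepsilon^{-C}$ or $(\log\varepsilon^{-1})^C\Lambda^{-c}$. With $\varepsilon\ge\lambda^\eta$ for small $\eta$, they vanish, which gives (1).

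\textbf{Step 4 (density matrices).} Statement (2) follows from (1) by the standard perturbation argument. We add $\pm\delta\lambda^k\mathbb A_k$ to the Hamiltonian for finite-rank $A_k$, apply the free-energy convergence to the perturbed problem (the same proof works uniformly), and differentiate using convexity of $\delta\mapsto\log Z$. This gives weak convergence of $k!\lambda^k\Gamma^{(k)}_\lambda$. Upgrading to Hilbert--Schmidt convergence uses uniform Hilbert--Schmidt bounds from the stability estimate plus the high-momentum decay. Statement (3) follows similarly, using the relative entropy bound and a Pinsker-type inequality on the one-body level, following \cite{LNR21}.

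\textbf{Main obstacle.} We expect the main obstacle to be the uniform-in-$\varepsilon$ stability estimate of Step 2 at the many-body level, and its projected version in Step 3. Here there is no completion-of-square trick, and all the coupled lower-order counterterms must be dominated by the single positive $p$-body term. We would first attempt to bound each contraction graph of order $m<p$ by Young/Hölder applied to the factors $\Wick{\lambda\,\ud\Gamma}$, paying only $\|G^{k}*v^\varepsilon\|\lesssim(\log\varepsilon^{-1})^k$.
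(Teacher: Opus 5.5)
Your overall architecture (Gibbs variational principle, an intermediate nonlocal Hartree measure, de Finetti and Berezin--Lieb on low modes, the free state on high modes) is the paper's. However, the stability step, which the paper identifies as its key new ingredient, would not work as you set it up. You propose to absorb the lower-order contraction terms into ``the positive $p$-body term, positive since $\widehat v\ge0$'', using H\"older/Young applied to the Wick-ordered factors $\Wick{\lambda\ud\Gamma(e_k)}$. The Wick-ordered $p$-body term is not positive. For small $u_N$ one has $\Wick{|u_N|^2}\approx-G_N(0)$, so $\frac1p\int v^\varepsilon\prod_j\Wick{|u_N(x_j)|^2}\approx(-G_N(0))^p/p<0$ for odd $p$, and the vacuum gives the same in Fock space. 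More generally, for $p\ge3$ the condition $\widehat v\ge0$ gives no sign to a $p$-linear form in a signed density.

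The paper instead first undoes the Wick ordering: $\Wick{|u_N|^2}=|u_N|^2-G_N(0)$, resp.\ expanding $\Wick{\ud\Gamma(e_k)}$ with $\lambda N_0\lesssim|\log\lambda|$. It then bounds each effective kernel pointwise by a logarithmic factor times a marginal, e.g.\ $|f_l^{\varepsilon,N}|\lesssim(\log N)^{p-l}\Pi_lv^\varepsilon$. This leaves $\frac1p\fI_{\varepsilon,p}(\rho)-C\sum_{l<p}(\log N)^{p-l}\fI_{\varepsilon,l}(\rho)$ for the \emph{nonnegative} density $\rho=|u_N|^2$, which is why the bounds are $-C(\log N)^p$ and $-C|\log\lambda|^p$. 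The remaining estimate $\fI_{\varepsilon,l}(\rho)\lesssim\fI_{\varepsilon,p}(\rho)^{l/p}$ is not a H\"older inequality when $v$ is not factorized. It needs three things:
\begin{itemize}
\item $v\ge0$ together with $v\ge a_0>0$ near the origin (from $v(0)=\int\widehat v>0$ and continuity), giving $\fI_{\varepsilon,p}(\rho)\gtrsim\int(B_1^\varepsilon\rho)^p$ with local averages at scale $r_0\varepsilon$;
\item the compact support of $v$, giving $\fI_{\varepsilon,l}(\rho)\lesssim\int(B_2^\varepsilon\rho)^l$ at scale $(R+r_0)\varepsilon$;
\item a finite covering comparing $B_2^\varepsilon$ with translates of $B_1^\varepsilon$.
\end{itemize}
On Fock space the same argument runs with the multiplication operators $\varepsilon^{-2}\sum_i\1_{B(y,r)}(x_i)$. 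The projected version additionally needs the partition of unity $\sum_S(K_S^{[n]})^2=\1$ to isolate distinct-label terms, because $P_N$ does not commute with the kernels.

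Two further load-bearing steps are missing. First, in the free-energy lower bound, replacing $\WW$ by its low-momentum truncation requires a quantum variance bound,
$\lambda^2\langle|\Wick{\ud\Gamma(e_k^+)}|^2\rangle_\lambda\lesssim N^{-1/2}\varepsilon^{2-2p}|\log\lambda|^p+(|k|^2+|k|N)\lambda^2|\log\lambda|^p$.
The entropy bound only yields one-body information, namely $\lambda\|\sqrt h(\Gamma_\lambda^{(1)}-\Gamma_0^{(1)})\sqrt h\|_{\mathrm{HS}}\lesssim|\log\lambda|^p$, and a lower bound on $\WW$ says nothing about fluctuations. The paper obtains the variance bound from the correlation inequality of \cite{DNN25}, applied to the perturbed Gibbs states $\propto e^{-\lambda\ud\Gamma(h-\frac t4 f_k^+)-\WW}$, $|t|\le1$; this is where stability for perturbed Hamiltonians enters. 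The double commutator $[[\ud\Gamma(f_k^+),\WW],\ud\Gamma(f_k^+)]$ is then bounded by particle-number moments. Those moments do not follow from your Step 2: stability plus the variational principle controls $\langle(\lambda\fN)^r\rangle_\lambda$ only for $r\le p$. The Cauchy--Schwarz step in the truncation already uses $r=2(p-1)$, and part (2) uses every $r$. The paper therefore invokes the higher-order inequality of \cite{DNN25} for the family $e^{-\lambda\ud\Gamma(h-t|\log\lambda|^{-p})-\WW}$.

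Second, your Step 4 fails as stated. For $k>p$, the Hamiltonian $\HH_\lambda-\delta\lambda^k\mathbb A_k$ with $A_k=|\phi^{\otimes k}\rangle\langle\phi^{\otimes k}|$ is unbounded below: on $\phi^{\otimes n}$ the perturbation is $-\delta\lambda^k\binom nk$, which beats the $O(n^p)$ energy. There are also no uniform Hilbert--Schmidt bounds with which to upgrade weak convergence, only $\lambda^k\|\Gamma_\lambda^{(k)}\|_{\mathrm{HS}}\lesssim|\log\lambda|^{pk}$. The paper instead extracts from the matching upper and lower bounds a quantitative decoupling,
$\fH(\Gamma_\lambda,\widetilde\Gamma_\lambda)+\fH((\Gamma_\lambda)_Q,(\Gamma_0)_Q)+\fH_{\mathrm{cl}}(\mu^\lambda_{P,\lambda},\widetilde\mu_p^{\varepsilon,N})\lesssim R(\lambda,\varepsilon,N)=o(|\log\lambda|^{-m})$ for every $m$,
where $\widetilde\Gamma_\lambda=\fU^*((\Gamma_\lambda)_P\otimes(\Gamma_\lambda)_Q)\fU$. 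This rate beats the logarithmic moment growth through Pinsker's inequality and the quantitative de Finetti bound. For the same reason, (3) cannot be done with a fixed cutoff $P_L$ as in \cite{LNR21}, since $\fH(\Gamma_\lambda,\Gamma_0)$ is only $O(|\log\lambda|^p)$. The paper combines the low-mode Hilbert--Schmidt rate with a $\lambda$-dependent cutoff $L_\lambda=a_\lambda^{-1}$.
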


To the best of our knowledge, this is the first derivation of the defocusing $\Phi^{2p}_2$ measure of arbitrary polynomial degree, from grand-canonical many-body Gibbs states with genuinely higher-order interactions. In particular, this result covers a rather general setting where the interaction profile is not assumed to be  factorized, and the interaction range is allowed to tend to zero simultaneously with the inverse temperature. 
Similar to the program initiated by \cite{LNR15}, the quantum-to-classical convergence is established for not only the free energy but also for all correlation functions. See \cite{FKSS25} for earlier results in the case $p=2$, and \cite{OT18, DNY24} for related results concerning invariant Gibbs measures for the classical 2D NLS flow.  
The result in Theorem \ref{thm} can be extended to the full defocusing $\fP(\Phi)_2$ as follows. For a general polynomial \(\fP(t)=\sum_{r=1}^p a_r|t|^{2r}\), $t\in\CC$ with $a_p>0$ and $a_r\in\RR$, $1\leq r<p$, 
we define the renormalized polynomial functional by
\begin{align*}
	V_\fP(u):=\sum_{r=1}^p a_r\int_\Lambda \Wick{|u(x)|^{2r}}\,\ud x.
\end{align*}
The positivity of the leading coefficient and Nelson’s estimate imply that $e^{-V_\fP}\in L^1(\mu_0)$; for example, see the proof of \cite[Proposition 1.2]{OT18} for details. Hence the \(\fP(\Phi)_2\) measure
\begin{align*}
	\ud\mu_\fP(u) = \cZ_\fP^{-1}e^{-V_\fP(u)}\ud\mu_0(u),\quad \cZ_\fP =\int e^{-V_\fP(u)}\ud\mu_0(u),
\end{align*}
is well-defined.
For $2\le r\le p$, let $f_l^{\varepsilon,(r)}$,
$0\le l\le r$, be the kernels obtained from \eqref{def:F_l} with $p$ replaced by $r$ and with $r$-body profile $\Pi_rv^\varepsilon$. In particular,
$f_r^{\varepsilon,(r)}=r^{-1}\Pi_rv^\varepsilon$. 
For \(l>r\), set \(f_l^{\varepsilon,(r)}:=0\). Let $f_1^{\varepsilon,(1)}:=1$ and $f_0^{\varepsilon,(1)}:=0$. 
We then define
\begin{align*}
	f_{l,\fP}^{\varepsilon}:=\sum_{r=1}^p ra_r f_l^{\varepsilon,(r)}, \quad 0\le l\le p.
\end{align*}
In particular, $f_{p,\fP}^\varepsilon=a_pv^\varepsilon$.
Let $\vartheta_\fP$, $E_{0,\fP}$, and
$R_{l,\fP}(v^\varepsilon)$ be defined by \eqref{def:R_l} and \eqref{def:nu} with $f_m^\varepsilon$ replaced by $f_{m,\fP}^\varepsilon$. We set
\begin{align}
	\HH_{\lambda,\fP}
	:=E_{0,\fP}\oplus\bigoplus_{n=1}^\infty
	&\left(\lambda\sum_{i=1}^n(-\Delta_{x_i}-\vartheta_\fP)
	+a_p\lambda^p\sum_{i_{1:p}=1}^n
	v^\varepsilon(x_{i_2}-x_{i_1},\ldots,x_{i_p}-x_{i_1})\right.\nonumber\\
	&\left.+\sum_{l=2}^{p-1}\lambda^l\sum_{i_{1:l}=1}^n
	R_{l,\fP}(v^\varepsilon)(x_{i_2}-x_{i_1},\ldots,x_{i_l}-x_{i_1})+E_{0,\fP}\right).\label{def:polynomialH}
\end{align}
Since the polynomial interaction is a finite linear combination of the monomial ones and \(a_pv^\varepsilon\geq0\), we can obtain similarly as Theorem~\ref{thm} that the
\(\fP(\Phi)_2\) measure \(\mu_{\fP}\) from the corresponding quantum Gibbs state.

\begin{theorem}\label{cor:poly}
	Let $\fP$ be as above. Consider the Gibbs state $$\Gamma_{\lambda,\fP}=(Z_{\lambda,\fP})^{-1}e^{-\HH_{\lambda,\fP}},\quad Z_{\lambda,\fP} = \Tr(e^{-\HH_{\lambda,\fP}}),$$ 
	where $\HH_{\lambda,\fP}$ is defined in \eqref{def:polynomialH}.
	Under the assumptions of Theorem~\ref{thm}, conclusions
	\eqref{thm-eq1}--\eqref{thm-eq3} remain valid with
	$\Gamma_\lambda$, $Z_\lambda$, $\mu_p$, and $\cZ_p$ replaced by
	$\Gamma_{\lambda,\fP}$, $Z_{\lambda,\fP}$, $\mu_\fP$, and
	$\cZ_\fP$, respectively.
\end{theorem}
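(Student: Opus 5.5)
The plan is to reduce Theorem~\ref{cor:poly} to the monomial analysis behind Theorem~\ref{thm}. We check that every step of that proof uses only two structural inputs: positivity of the top-order kernel, and the explicit Wick-ordered structure of the lower-order kernels. Both survive when $v^\varepsilon/p$ is replaced by the combination $f_{l,\fP}^\varepsilon$.

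First, the algebraic identity. By linearity of Wick ordering, the classical nonlocal Hartree functional
\[
W_\fP^\varepsilon(u):=\sum_{r=1}^p ra_r W_r^{\varepsilon}(u)
\]
is built from the $r$-body profiles $\Pi_r v^\varepsilon$. It equals $\sum_{m=0}^p\int f_{m,\fP}^\varepsilon\prod_j \Wick{|u(x_j)|^2}$, and it approximates $V_\fP(u)$. The $r=1$ term is $a_1\int\Wick{|u|^2}$, which is consistent with the conventions $f_1^{\varepsilon,(1)}=1$ and $f_0^{\varepsilon,(1)}=0$. Repeating the second-quantization step of the remark after \eqref{def:nu} term by term shows that $\HH_{\lambda,\fP}$ is exactly the Fock-space counterpart of $W^\varepsilon_\fP$. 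Concretely, $\HH_{\lambda,\fP}-\HH_0$ is the second quantization of $W^\varepsilon_\fP$ with $\ud\Gamma(e_k)$ Wick-ordered against $N_0$. This is pure bookkeeping and is linear in the kernels.

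Second, the classical part. The convergence $W^\varepsilon_\fP\to V_\fP$ in $L^q(\mu_0)$ follows from the monomial case for each $r$. Here one uses that $\Pi_r v^\varepsilon$ is again an approximate identity whose Fourier transform is $\widehat v$ restricted to a subspace, so it is still nonnegative with the required decay. One also needs the uniform exponential integrability $\sup_\varepsilon\int e^{-qW^\varepsilon_\fP}\,\ud\mu_0<\infty$. For this I would invoke the logarithmic stability estimate for the Hartree functional with the top term $a_pv^\varepsilon\ge0$. The lower-order pieces coming from $r<p$ are of the same type as the lower-order Wick counterterms already handled there, with degree at most $2(p-1)$, so they are bounded below by $-\delta\cdot(\text{top term})-C\log$-type errors. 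The Nelson-type argument then gives $\cZ^\varepsilon_\fP\to\cZ_\fP$ and convergence of the correlation functions.

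Third, the quantum part. The free-energy lower bound (via the many-body stability estimate and the de Finetti/variational method), the upper bound (via the projected interaction lower bound), and the passage from free energy to density matrices (by perturbing with $\mathbb A_k$ and using the Hellmann--Feynman/Gibbs variational principle) all depend on the interaction only through estimates that are linear in the finitely many kernels, plus positivity of the leading one. So they transfer with constants depending on $|a_r|$.

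The main obstacle is the uniform many-body stability bound when the $a_r$ with $r<p$ may be negative. The term $a_r\lambda^r\mathbb V_r[\Pi_rv^\varepsilon]$ is no longer positive and must be absorbed by $a_p\lambda^p\mathbb V_p[v^\varepsilon]$ together with the kinetic energy. I would handle it with the same operator inequality used for the $R_l$ terms, namely $\lambda^r\mathbb V_r\lesssim\delta\lambda^p\mathbb V_p+C_\delta(\lambda\fN)^{\cdot}$ plus logarithmic corrections. This relies on $\Pi_r v^\varepsilon$ being dominated by marginals of $v^\varepsilon$, which holds because $\Pi_r$ of a positive-type kernel is positive type. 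Once this is done, the proof of Theorem~\ref{thm} carries over, possibly after shrinking $\eta$.
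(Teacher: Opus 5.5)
Your proposal is correct and follows the paper's own one-line argument. The interaction part of $\HH_{\lambda,\fP}$ is the second quantization of $\sum_r ra_rW_r^\varepsilon$. Each marginal $\Pi_rv$ satisfies the degree-$r$ version of Assumption~\ref{assum:w}: its Fourier transform $\widehat v(\cdot,0)$ is nonnegative with enough decay, and it is symmetric and of finite range. Positivity of $a_pv^\varepsilon$ then absorbs all signed lower-order terms, so the proof of Theorem~\ref{thm} carries over.

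Two small corrections. First, the signed terms $a_r\lambda^r\mathbb V_r[\Pi_rv^\varepsilon]$ need no new operator inequality: they are already covered by \eqref{lwb1}, since $\Pi_rv^\varepsilon\le|\log\lambda|^{-(p-r)}v_r^\varepsilon$. That bound comes from $v\ge0$, $v\ge a_0$ near the origin, and finite range (local averages plus finite covering). It does not come from $\Pi_r$ preserving positive type; Fourier positivity of $\Pi_r v$ is what you need for the Fourier-side estimates instead, and no kinetic energy is used. Second, the paper's density-matrix step is not a Hellmann--Feynman perturbation by $\lambda^k\mathbb A_k$. That route would be problematic: for $k>p$ one sign of the perturbation is not controlled by the $p$-body term, and it only yields weak convergence. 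The paper instead uses the relative entropy with respect to the $P$/$Q$-decoupled state, Pinsker's inequality and the quantitative de Finetti bound, and this argument transfers verbatim.
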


\subsubsection*{Strategy of the proof of Theorem \ref{thm}.}
As in the overall approach in \cite{FKSS25} for $p=2$, the proof involves two limiting procedures: 
the quantum-to-classical limit $\lambda\to 0$ and the zero-range limit $\varepsilon \to 0$, and these limits are connected through an $\varepsilon$-dependent nonlocal Hartree measure. However, for $p\ge 3$, Wick renormalization generates a coupled hierarchy of effective interactions of all lower orders, and hence the interaction entering this intermediate Hartree measure is not simply the original $p$-body interaction. 
The main difficulties are therefore to identify this hierarchy, to establish stability estimates that are uniform in the interaction range, and to implement the variational quantum-to-classical method introduced by \cite{LNR15} and further developed in \cite{LNR18, LNR21} simultaneously for all terms in the hierarchy. 
In particular, for $p\ge 3$, the renormalized interaction contains terms of orders $(0,1,\ldots,p)$, with no common quadratic representation and no common sign, and hence neither the classical stability estimate nor its quantum analogue follows formally from the existing argument for $p=2$.  

\noindent\textbf{Hartree measure and classical stability.}
To be precise, for fixed $p\geq2$, we connect the $\Phi_2^{2p}$ measure and the many-body quantum problem through an $\varepsilon$-dependent Hartree measure \begin{align}\label{def:Hartree measure}
	\ud\mu^\varepsilon_p = \frac{1}{\cZ_p^\varepsilon}e^{-W_p^\varepsilon}\ud\mu_0\,,\quad  \cZ_p^\varepsilon=\int e^{-W_p^\varepsilon}\ud\mu_0\,,
\end{align} 
where \(W_p^\varepsilon\) is given in \eqref{def:W^epsilon,p}. Its rigorous construction is given in Section \ref{sec3}. 
In the first step, we need to study the Hartree measure and its limit as $\varepsilon\to 0$. A direct Wick expansion of the nonlocal monomial $\Wick{\prod_{j=1}^p|u(x_j)|^2}$ produces many contraction patterns and nonlocal remainder terms. In order to keep track of and control all these terms, we encode the contractions by graphs and use a recursive reduction to approximate the nonlocal terms as products of $\Wick{|u(x_j)|^2}$. This determines the coefficients $C_\fm$ and the effective kernels $f_l^\varepsilon$ in \eqref{def:F_l}. We then prove that the probability measure \(\mu_p^{\varepsilon}\) converges to \(\mu_p\) in total-variation, as $\varepsilon\to0$.
The key uniform estimate is a Nelson-type logarithmic lower bound
\begin{align}
	W_p^{\varepsilon,N}\ge-C(\log N)^p,\label{lwb:W}
\end{align}
where $C$ is independent of $\varepsilon$ and $W_p^{\varepsilon,N}$ is the Fourier truncation of $W_p^\varepsilon$.
Its proof uses the positivity of $v^\varepsilon$ to control local averages by the $p$-order term, followed by interpolation estimates and a finite-covering argument for all lower-order marginals. 

\noindent\textbf{Fock-space stability and quantum a priori estimates.}
The same coercive argument is lifted to Fock space in Section \ref{sec4}. Namely, once the classical renormalization step is done, we also need to extend the analysis upon quantization. To be precise, expanding each Wick density as $\Wick{\ud\Gamma(e_k)}=\ud\Gamma(e_k)-N_0\1_{\{k=0\}}$ shows that an $m$-th order classical term contributes to every quantum interaction order $l\leq m$. This gives the counterterms $R_l(v^\varepsilon)$, the chemical-potential  $\vartheta$, and the scalar shift $E_0$ in \eqref{Hamiltonian}--\eqref{def:nu}.
With these choices, the quantum interaction $\WW$ is the second-quantized counterpart of $W_p^\varepsilon$ appearing in the Hartree measure above, and we establish the stability estimate
$$\WW\geq-C|\log\lambda|^p\,,$$
which is the many-body quantum analogue of \eqref{lwb:W}.
Note that this stability estimate will be used not only for the original interacting Gibbs state, but also for the  Gibbs state with perturbed Hamiltonians, providing the input needed to apply the correlation inequalities of \cite{DNN25}. In this way, we obtain uniform control in $\varepsilon$ of every moment of $\lambda\mathcal{N}$ and quantitative estimates on high-momentum correlations, in the regime $\varepsilon \ge \lambda^\eta$.

\noindent\textbf{Comparison of the variational problems.}
We connect the quantum Gibbs state $\Gamma_\lambda$ to the Hartree measure by a variational argument following \cite{LNR15,LNR18,LNR21,NZZ25}. On the quantum side, \(\Gamma_\lambda\) is the
unique minimizer of the Gibbs variational principle
\begin{equation}\label{variational problem}
	-\log \frac{Z_\lambda}{Z_0}
	=\inf_{\Gamma\ge 0,\Tr\Gamma=1}
	\Big\{
	\mathcal H(\Gamma,\Gamma_0)+\Tr(\WW\Gamma)
	\Big\},
\end{equation}
where
\begin{align*}
	\mathcal H(\Gamma,\Gamma_0)
	=
	\Tr\big(\Gamma(\log\Gamma-\log\Gamma_0)\big)
\end{align*}
is the von Neumann relative entropy with respect to the Gaussian state \(\Gamma_0\) defined in \eqref{def:Gaussian state}, and $\WW$ is the interaction operator in $\HH_\lambda$. We have chosen $\vartheta$, $E_0$ and $R_l(v^\varepsilon)$ such that $\WW$ is exactly the second quantized version of the classical term $W_p^\varepsilon(u)$ in \eqref{def:W^epsilon,p}. 
On the classical side, the Hartree measure \(\mu_p^{\varepsilon}\) is the unique minimizer of
\begin{equation}\label{eq:var-classical}
	-\log \cZ_p^\varepsilon
	=
	\inf_{\nu\ll\mu_0}
	\left\{
	\mathcal H_{\mathrm{cl}}(\nu,\mu_0)+\int W_p^\varepsilon(u)\,\ud\nu(u)
	\right\},
\end{equation}
where
\begin{align*}
	\mathcal H_{\mathrm{cl}}(\nu,\mu_0)
	=
	\int \frac{\ud\nu}{\ud\mu_0}\log\!\left(\frac{\ud\nu}{\ud\mu_0}\right)\ud\mu_0
\end{align*}
is the classical relative entropy between \(\nu\) and \(\mu_0\).

The two variational problems are compared after a finite-dimensional spectral localization, similar to \cite{LNR21}.
For the lower bound of the relative free energy $-\log\frac{Z_\lambda}{Z_0}$, we localize the interaction $\WW$ to low-momentum modes through a frequency cutoff, and apply a quantitative quantum de Finetti estimate to identify the limiting classical interaction. 
The errors from the high-momentum part are controlled by the particle-number moments and the quantum variance of the
high-momentum modes from Section~\ref{sec4}.
A key point is that this argument must be
performed simultaneously for all lower-order interactions
\(R_l(v^\varepsilon)\), \(2\le l<p\). Their Fourier norms are estimated uniformly in the regime $\varepsilon\geq\lambda^\eta$, and the resulting errors are
controlled by the positive \(p\)-body contribution.

For the upper bound, we use a trial state 
consisting of an interacting Gibbs state on the low-energy Fock space \(\cF(P\cH)\) and the free Gibbs state on \(\cF(Q\cH)\), which is the same as the one used in \cite[Section 8]{NZZ25} for $p=2$. However, for $p\geq3$, the definition of the low-energy interacting Gibbs state requires a lower bound for the projected interaction that is uniform in the cutoff dimension. 
This estimate is not an immediate consequence of the lower bound for $\WW$, since the projection operator does not commute with multiplication by the interaction kernels, and H\"older's inequality fails. 
We focus on the contributions with pairwise distinct labels and use a partition of unity to obtain the lower bound (see Lemma \ref{lem 5.5}). Then, we perform a semiclassical approximation in finite dimensions as in \cite[Section 8]{NZZ25} to connect with \eqref{eq:var-classical}.

\noindent\textbf{Reduced density matrices.} 
Finally, following the strategy of \cite{LNR21}, we combine the lower- and upper-bound arguments to control the relative entropy between the quantum Gibbs state and the decoupled state formed from its low- and high-energy marginals.
Then using Pinsker's inequality, the quantitative de Finetti estimate, and the particle-number moment bounds in the spirit of \cite{NZZ25}, we obtain the Hilbert–Schmidt convergence of all fixed-order reduced density matrices.
\vspace{0.8em}

A natural further question is to study the analogous higher-order local measures in three dimensions. The corresponding local problem is substantially more singular and is not addressed in the present work. 
For a related result in the nonlocal setting, concerning Hartree measure arising from bosonic Gibbs states with three-body interactions in dimensions two and three, we refer to the work of Liang \cite{Lia26}.


\subsubsection*{Organization of the paper}
The paper is organized as follows. 
In Section \ref{sec3}, we construct the Hartree measure and prove its convergence to the $\Phi_2^{2p}$ measure.
In Section~\ref{sec4}, we establish the lower bound for the interaction operator and derive a priori estimates on the particle-number moments, and high-momentum correlation inequalities.
In Section~\ref{sec5}, we derive lower and upper bounds on the relative free energy and conclude the proof of \eqref{thm-eq1}.
In Section \ref{sec6}, we establish the convergence of the reduced density matrices. 
Appendix \ref{Appendix} is devoted to the expansion of Wick products. The basic estimates for the Green function and the
properties of functions $f_l^\varepsilon$ in \eqref{def:F_l} are given in Appendix \ref{appB}.
We recall two elementary estimates for the correlation functions in Appendix \ref{appC}.

\subsubsection*{Notations}
We write $a\lesssim b$ if there exists a constant $c>0$ such that $a\leq cb$, and $\lesssim_p$ means the constant $c=c(p)$ depending on $p$.
We will denote by $C$ a general positive constant independent of relevant variables, whose value may change from line to line. 
We use the convention that the inner product $\langle \cdot, \cdot\rangle$ of a (complex) Hilbert space is linear in the second argument and antilinear in the first. In particular, $$\langle f,g\rangle_{L^2(\Lambda)}=\int_{\Lambda} \overline f g\,\ud x,\quad f,g\in L^2(\Lambda).$$
For $n\geq1$, we write \([n]:=\{1,\dots,n\}\), and $S_n$ denotes the set of all permutations of $[n]$.
Let $e_k(x):=e^{2\pi\iota k\cdot x}$, $k\in\ZZ^2$. The Fourier transform on $\Lambda$ is defined by
\[\widehat f(k)=\fF(f)(k):=\int_\Lambda f(x)e_{-k}(x)\,\ud x.\]
For functions on $\RR^d$, the same convention is used with the Fourier variable $\xi\in\RR^d$. We write
\[\langle k\rangle^2:=1+4\pi^2|k|^2.\]
For $F\in L^2(\Lambda^n)$, define
\[
\operatorname{sym}F(x_{1:n})
:=\frac1{n!}\sum_{\sigma\in S_n}
F(x_{\sigma(1)},\ldots,x_{\sigma(n)}),
\]
and set
$f_1\otimes_{\mathrm s}\cdots\otimes_{\mathrm s}f_n
:=\operatorname{sym}(f_1\otimes\cdots\otimes f_n)$.
For $m\ge1$ and $y\in\RR^m$, let
\[d_m(y):=\min_{z\in\ZZ^m}|y+z|.\]
When the dimension is clear, we simply write $d(y)$. For $y\in\Lambda$ and $r>0$,
\[B(y,r):=\{x\in\Lambda:\ d(x-y)\le r\}.\]

\section{Convergence of the Hartree measure to the $\Phi_2^{2p}$ measure}\label{sec3}

In this section we construct the Hartree measure
\(\mu_p^\varepsilon\) in \eqref{def:Hartree measure} and prove that its finite-dimensional approximations converge to the \(\Phi_2^{2p}\) measure. More precisely, we prove the convergence of the truncated partition functions and of the associated $k$-point correlation functions for $k\geq1$.
The argument has two steps. First, we rewrite the $p$-point Wick monomial as the product of quadratic Wick products, plus lower-order terms up to small errors. We introduce graph notations for this procedure. This yields the effective kernels $f_l^{\varepsilon,N}$ and the truncated Hartree functional $W_p^{\varepsilon,N}$.
Second, we prove the $L^m$ convergence of $\exp(-W_p^{\varepsilon,N})$ to $\exp(-V_p)$ as $N\to\infty$ and $\varepsilon\to0$ for every $m\geq1$, where the local interaction $V_p$ is defined in \eqref{def:V^p}.
The proof follows from the Nelson argument. A key estimate is a uniform logarithmic lower bound for \(W_p^{\varepsilon,N}\). 
For the quartic case, the lower order term can be absorbed by completing the square, which directly gives the desired lower bound as in \cite[Proposition 4.3]{FKSS25}. 
For $p\geq3$, however, the graphical expansion produces effective kernels of different orders, and the full interaction cannot be reduced to a positive expression by a completion-of-squares argument. 
The main new ingredient of this section is the use of the positivity and finite range of potential $v$ to control the lower order terms by the leading \(p\)-body term; see Lemma~\ref{lem:lower bound of WN}.

\subsection{Wick product reduction and graph notations}

Let $u$ be a complex-valued Gaussian field on $\Lambda$, defined on a probability space $(\Omega,\mathcal{F}, \mathbb{P})$, with mean zero and covariance $(1-\Delta)^{-1}$.
Since $u\in H^s(\Lambda)\backslash L^2(\Lambda)$ for any $s<0$, almost surely, we introduce a smooth Fourier cutoff to regularize the field $u$.
Let $\rho\in L^1(\RR^2)$ be a radial function such that its Fourier transform satisfies $\widehat\rho\in C_c^\infty(\RR^2,[0,1])$ with $\widehat\rho(k)=1$ for $|k|\leq\frac12$ and $\widehat\rho(k)=0$ for $|k|>1$.
For $N\in\NN$, let $u_N:=P_Nu$ with  $P_N=\fF^{-1}(\widehat{\rho}(N^{-1}\cdot)\fF)$.
Then we have
\begin{align}
	\EE(u_N(x)\overline{u_N(y)})=(\rho_N*\rho_N*G)(x-y)=: G_N(x-y)\,,\label{Green}
\end{align}
with $\rho_N=\sum_{k\in\ZZ^2}\widehat\rho(N^{-1}k)e_k$.
Define
\[V^N_p(u):=\frac1p\int_\Lambda\Wick{|u_N(x)|^{2p}}\,\ud x.
\]
Here and below, Wick ordering is taken with respect to the Gaussian field $u$; see, for instance, \cite[(A.1)]{FKSS25}.
By \cite[Proposition~1.1]{OT18}, the sequence $(V_N^p)_{N\ge1}$ is Cauchy in
$L^2(\mathbb P)$. We denote its limit by
\begin{align}\label{def:V^p}
	V_p=\frac1p\int_{\Lambda}\Wick{|u(x)|^{2p}}\ud x.
\end{align}
To connect with the many body Gibbs state, we consider the following nonlocal approximation
\begin{align}\label{def:V_N}
	V_p^{\varepsilon,N}(u)=\frac1p \int_{\Lambda^p} v^\varepsilon(x_2-x_1,\cdots,x_p-x_1)\Wick{\prod_{k=1}^p|u_N(x_k)|^2}\ud x_{1:p}\,.
\end{align} 
To match the structure of the Hamiltonian in \eqref{Hamiltonian}, $\Wick{\prod_{k=1}^p|u_N(x_k)|^2}$ must instead be expressed through the product of $\Wick{|u_N(x_i)|^2}$.
We illustrate the reduction in the quartic case $p=2$. By the definition of Wick ordering \cite[(A.2)]{FKSS25},
\begin{align}
	\Wick{|u_N(x_1)|^2|u_N(x_2)|^2}=\prod_{k=1}^2\Wick{|u_N(x_k)|^2}-2\text{Re}\Wick{u_N(x_1)\overline{u_N(x_2)}}G_N(x_1-x_2)-G_N(x_1-x_2)^2\,.\label{3.3}
\end{align}
The mixed Wick product $\Wick{u_N(x_1)\overline{u_N(x_2)}}$ can be shifted to $\Wick{|u_N(x_1)|^2}$ or $\Wick{|u_N(x_2)|^2}$, with an error that vanishes after integration against $v^\varepsilon$; see \cite[Proposition~3.2]{FKSS25}.

In the following, we first write the $p$-body Wick product $\Wick{\prod_{i=1}^p|u_N(x_i)|^2}$ in terms of product of $\Wick{|u_N(x_i)|^2}$ plus some lower-order terms; these lower-order terms are then reduced up to vanishing remainders, to products involving fewer $\Wick{|u_N(x_i)|^2}$.
Since the procedure involves many possible contraction terms, it is convenient to use the graph notations.
Fix $N\in\NN$, $\varepsilon>0$, and ordered vertices $x_1,\dots,x_p\in \Lambda$.
A solid edge joining $x_1$ to $x_2,\dots,x_p$ represents the factor $v^{\varepsilon}(x_2-x_1,\dots,x_p-x_1)$.
A dashed edge between two vertices $x_i$ and $x_j$ represents a factor $G_N(x_i-x_j)$. 
An oriented red wavy half-edge represents $u_N(x_i)$ or $\overline{u_N(x_i)}$, according to its orientation, whereas a blue pair represents $\Wick{|u_N(x_i)|^2}$.
Two vertices are said to be \emph{connected} if they lie in the same dashed-connected component. We use the following conventions.

\begin{itemize}[leftmargin=1.6em,itemsep=0.25em]
	\item A graph containing the solid edges represents the integral over all of its vertex variables.
	\item All wavy half-edges in a given graph have the same color. We write $\fH$ for red graphs and $\fG$ for blue graphs. The graph without wavy edges belongs to both sets. The term represented by a red graph is Wick ordered as a whole.
	\item Graphs contain no dashed self-loops.
	\item Every vertex without wavy lines is incident to at least one dashed edge.
	\item Two vertices carrying single red wavy lines with the same orientation are not allowed to be connected.
\end{itemize}
\noindent
These conventions provide a graphical representation of the Wick expansion. For example, \eqref{3.3} becomes
\begin{align}
	\begin{tikzpicture}[baseline=-5]
		\node[dot] (x1) at (0,-0.1) {};
		\draw[Phi1] (x1) -- ++(-0.1,0.25);
		\draw[Phi1] (x1) -- ++(0.1,0.25);
		\node[dot] (x2) at (0.4,-0.1) {};
		\draw[Phi1] (x2) -- ++(-0.1,0.25);
		\draw[Phi1] (x2) -- ++(0.1,0.25);
	\end{tikzpicture}
	=\begin{tikzpicture}[baseline=-5]
		\node[dot] (x1) at (0,-0.1) {};
		\draw[Phi2] (x1) -- ++(-0.1,0.25);
		\draw[Phi2] (x1) -- ++(0.1,0.25);
		\node[dot] (x2) at (0.4,-0.1) {};
		\draw[Phi2] (x2) -- ++(-0.1,0.25);
		\draw[Phi2] (x2) -- ++(0.1,0.25);
	\end{tikzpicture}
	-\;\begin{tikzpicture}[baseline=-5]
		\node[dot] (x1) at (0,-0.1) {};
		\draw[Phi1] (x1) -- ++(-0.1,0.25);
		\node[dot] (x2) at (0.4,-0.1) {};
		\draw[Phi1] (x2) -- ++(0.1,0.25);
		\draw[Cr,bend left=70] (x1) to (x2);
	\end{tikzpicture}
-\begin{tikzpicture}[baseline=-5]
	\node[dot] (x1) at (0,-0.1) {};
	\draw[Phi1] (x1) -- ++(0.1,0.25);
	\node[dot] (x2) at (0.4,-0.1) {};
	\draw[Phi1] (x2) -- ++(-0.1,0.25);
	\draw[Cr,bend right=70] (x1) to (x2);
\end{tikzpicture}
	-\begin{tikzpicture}[baseline=-5]
		\node[dot] (x1) at (0,-0.1) {};
		\node[dot] (x2) at (0.4,-0.1) {};
		\draw[Cr,bend right=50] (x1) to (x2);
		\draw[Cr,bend left=70] (x1) to (x2);
	\end{tikzpicture}\,.\label{gph1}
\end{align}
\noindent
We now introduce several classes of graphs that will be used later.

\begin{definition}\label{def:graph}
	Fix $N\in\mathbb N$ and $x_{1:p}\in \Lambda^p$. We write $\fT^{N}(x_{1:p})$ for the set of graphs on the ordered vertices $x_1,\dots,x_p$ with no solid edges, in which dashed edges represent $G_N$ and wavy half-edges represent $u_N$ or $\overline{u_N}$. 
	We require that every dashed-connected component $C$ of $\fg$ satisfies
	\begin{align}\label{wC}
		\#\{\text{dashed edges in }C\} + \frac12\#\{\text{wavy lines in }C\}	= \#\{\text{vertices in }C\}.
	\end{align}

For a graph $\fg$, let $U_\fg$ be the set of vertices carrying wavy edges, and let $v_\fg$ be their total number. We denote by $F(\fg)$ the product of all Green functions associated with the dashed edges, and by
$m_\fg$ the number of dashed loops containing at least three vertices. By a slight abuse of notation, we do not distinguish a graph from the random variable or integral associated with it.

	
Let $\fH^{N}(x_{1:p})\subset \fT^N(x_{1:p})\cap \fH$ be the class of red contraction graphs obtained from $\prod_{i=1}^p\Wick{|u_N(x_i)|^2}$ as follows.
	At each vertex $x_i$, we start with two legs, one $u_N(x_i)$-leg and one $\overline{u_N(x_i)}$-leg. Each leg is either left uncontracted, in which it is drawn as a red wavy half-edge, or contracted with an oppsitely oriented leg at a different vertex $x_j$, in which case the contraction is drawn as a dashed edge $G_N(x_i-x_j)$. The remaining red wavy edges are Wick ordered together. Equivalently, $\fH^{N}(x_{1:p})$ consists of red graphs such that, at each vertex, the total number of dashed edges plus the number of red wavy lines is exactly two.
	
Let $\fG^{N}(x_{1:p})\subset \fT^{N}(x_{1:p})\cap \fG$
	be the class of effective blue graphs. In such a graph, a vertex either carries one blue pair, representing $\Wick{|u_N(x_i)|^2}$, or carries no wavy half-edges. A graph belongs to $\fG^{N}(x_{1:p})$ if each dashed-connected component contains at most one vertex of $U_\fg$. Such a graph represents the expression $F(\fg)\prod_{i\in U_\fg}\Wick{|u_N(x_i)|^2}$.
	
	Finally, let $\fG^N_{p}$ be the class of graphs obtained from graphs in $\fG^{N}(x_{1:p})$ by adding the solid edges $(x_1,x_j)$ for $2\le j\le p$. Equivalently, a graph belongs to $\fG^{N}_{p}$ if and only if deleting all of its solid edges produces an element of $\fG^{N}(x_{1:p})$.
	Throughout, graphs in $\fG^N_p$ are treated as labelled graphs. Hence distinct labelled graphs in $\fG^{N}(x_{1:p})$ are kept as separate combinatorial objects, even when their integrals against the symmetric kernel $v^\varepsilon$ coincide by \eqref{con:sym}.
	
\end{definition}
\noindent
Using Lemma \ref{lem:complex-wick}, we can write the Wick product in terms of red graphs in $\fH^N(x_{1:p})$. Pointwise in $x_{1:p}$, we have
\begin{align}
	\prod_{k=1}^p\Wick{|u_N(x_k)|^2}
	=\sum_{\fg\in \fH^{N}(x_{1:p})} 2^{m_\fg}\fg .\label{3.0}
\end{align}
Now we rewrite the red graphs in $\fH^{N}(x_{1:p})$ as effective blue graphs in \(\fG^N(x_{1:p})\), up to small errors.
This step is guided by the structure of the Hartree functional that we seek to recover from the many-body problem. As mentioned before, in the many-body expansion, the surviving lower-order contributions should be expressed in terms of products of \(\Wick{|u_N(x)|^2}\). The role of the blue graphs is precisely to record these \(\Wick{|u_N(x)|^2}\).

Before stating the lemma, we introduce an auxiliary graph class and a relation between graphs, which will serve as an intermediate form in the proof. Let $\widetilde \fT^{N}(x_{1:p})\subset \fT^{N}(x_{1:p})$ be the subclass satisfying that
\begin{itemize}[leftmargin=1.5em]
	\item every vertex carrying wavy lines carries exactly one \emph{pair} and is incident to at most one dashed edge;
	
	\item no two distinct vertices carrying wavy lines belong to the same dashed-connected component;
	
	\item every vertex without wavy lines is incident to either one or two dashed edges.
\end{itemize}
For $h\in\fT^N(x_{1:p})$, $\fg'\in\fG^N(x_k,k\in V)$ and $V\subseteq[p]$ with $1\leq |V|<p$, we write \(\fg'\prec_V h\) if there exists a graph $\fg\in\widetilde\fT^N(x_{1:p})\cap\fH$, such that $U_\fg=V$ and $h=F(\fg)\fg'$. Here $U_\fg$ and $F(\fg)$ are defined in Definition \ref{def:graph}.
If such \(V\) and \(\fg'\) exist, then \(h\in\fG^N(x_{1:p})\).
Moreover, for fixed \(h\), \(V\), and \(\fg'\), the corresponding graph \(\fg\), if it exists, is unique.
Indeed, \(\fg\) is obtained from \(h\) by removing the dashed edges and wavy lines belonging to \(\fg'\) on the vertex set \(V\), and then placing red pairs
at the vertices indexed by \(V\).

We are now ready to state the graphical reduction lemma, which converts the red graph expansion into the blue one, up to negligible remainder terms.

\begin{lem}\label{Lem2.2}
	For every \(p\ge 2\), \(N\in\mathbb N\), and \(x_1,\cdots,x_p\in\Lambda\), we have
	\begin{align}\label{3.1'}
		\prod_{k=1}^p\Wick{|u_N(x_k)|^2}=\Wick{\prod_{k=1}^p|u_N(x_k)|^2}+\sum_{\substack{\fg\in\fG^{N}(x_{1:p})\\v_\fg<2p}} C_\fg \fg + \fR_N(u;x_{1:p})\,,
	\end{align}
	where $\fR_N(u;x_{1:p})$ is a finite linear combination of remainder terms of the form
	\begin{align}\label{gph2}
		\begin{tikzpicture}[baseline=-5]
			\node[dot] (x1) at (0,0) {};
			\draw[Phi1] (x1) -- ++(-0.1,0.25);
			\draw[Phi1] (x1) -- ++(0.1,0.25);
			\node[dot] (x2) at (0.3,0) {};
			\draw[Phi1] (x2) -- ++(-0.1,0.25);
			\node[dot] (x1') at (0.6,0) {};
			\node[dot] (x2') at (0.9,0) {};
			\draw[Phi1] (x2') -- ++(0.1,0.25);
			\node at (1.3,0) {$\cdots$};
			\node[dot] (x3) at (1.6,0) {};
			\draw[Phi1] (x3) -- ++(-0.1,0.25);
			\node[dot] (x5) at (1.9,0) {};
			\draw[Phi1] (x5) -- ++(0.1,0.25);
			\draw[Cr,bend left=50] (x2) to (x1');
			\draw[Cr,bend left=50] (x1') to (x2');
			\draw[Cr,bend left=70] (x3) to (x5);
		\end{tikzpicture}-\begin{tikzpicture}[baseline=-5]
		\node[dot] (x1) at (0,0) {};
		\draw[Phi1] (x1) -- ++(-0.1,0.25);
		\draw[Phi1] (x1) -- ++(0.1,0.25);
		\node[dot] (x2) at (0.3,0) {};
		\draw[Phi1] (x2) -- ++(-0.1,0.25);
		\draw[Phi1] (x2) -- ++(0.1,0.25);
		\node[dot] (x1') at (0.6,0) {};
		\node[dot] (x2') at (0.9,0) {};
		\node at (1.3,0) {$\cdots$};
		\node[dot] (x3) at (1.6,0) {};
		\draw[Phi1] (x3) -- ++(-0.1,0.25);
		\node[dot] (x5) at (1.9,0) {};
		\draw[Phi1] (x5) -- ++(0.1,0.25);
		\draw[Cr,bend left=50] (x2) to (x1');
		\draw[Cr,bend left=50] (x1') to (x2');
		\draw[Cr,bend left=70] (x3) to (x5);
	\end{tikzpicture}\,,
	\end{align}
namely differences of two graphs in
$\fT^N(x_{1:p})\cap\fH$ having the same dashed part and differing only by the position of one red wavy line.
	The constant $C_\fg$, independent of $x_{1:p}$ and $N$, is determined recursively by
	\begin{align}\label{def:C_g}
		C_\fg = 2^{m_\fg} \Bigl(\1_{\{\fg\in\widetilde\fT^{N}(x_{1:p}):v_{\fg}=0\}}-\sum_{\substack{V\subseteq[p]\\1\leq|V|<p}}\sum_{\fg'\prec_V\fg}2^{-m_{\fg'}}C_{\fg'}\Bigr).
	\end{align}
For every $V\subseteq[p]$, if $\fg$ is the completely uncontracted blue graph on \(V\), i.e. \(v_{\fg}=2|V|\), we set $C_{\fg}:=-1$.
\end{lem}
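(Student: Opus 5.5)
We plan an induction on the number of vertices $p$, combined with a reduction of each red graph in \eqref{3.0} to blue graphs. The starting point is \eqref{3.0}, which writes $\prod_{k}\Wick{|u_N(x_k)|^2}$ as $\sum_{\fg\in\fH^N(x_{1:p})}2^{m_\fg}\fg$. The fully uncontracted red graph is exactly $\Wick{\prod_k|u_N(x_k)|^2}$. Every other red graph has some dashed edges. Its dashed-connected components are either closed loops, which carry no wavy lines and count $2^{m}$ orientations, or open chains. An open chain $x_{i_1}-\cdots-x_{i_r}$ carries one red wavy line at each endpoint, of opposite orientation.

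The first key step is to move wavy lines. In each open chain we move the wavy line at one endpoint onto the other endpoint, so the chain becomes a path with a red pair at one vertex. The price is a difference of two graphs with identical dashed part, differing only in the position of one red line. This is precisely a remainder term of the form \eqref{gph2}. We do this one chain at a time, which yields a telescoping sum of such remainders. The outcome is a red graph lying in $\widetilde\fT^N(x_{1:p})\cap\fH$: each wavy vertex carries a single pair and at most one dashed edge, interior path vertices have two dashed edges, and each component contains at most one wavy vertex.

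The second step handles the fact that the wavy lines of this graph are still Wick ordered \emph{jointly}. Let $V=U_\fg$ be the set of vertices carrying pairs, with $|V|<p$. Then the graph equals $F(\fg)\,\Wick{\prod_{k\in V}|u_N(x_k)|^2}$. We apply the induction hypothesis, i.e.\ \eqref{3.1'} on the smaller vertex set $V$, solved for $\Wick{\prod_{k\in V}|u_N(x_k)|^2}$. This expresses it as $\prod_{k\in V}\Wick{|u_N(x_k)|^2}$ minus blue graphs $\fg'\in\fG^N(x_k,k\in V)$ with coefficients $C_{\fg'}$, minus remainders. The completely uncontracted blue graph appears with coefficient $-1$ after moving terms, which matches the stated convention $C_\fg=-1$.

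Multiplying by $F(\fg)$ gives blue graphs $h=F(\fg)\fg'$, which by definition satisfy $\fg'\prec_V h$. The remainders, multiplied by $F(\fg)$, are again of the form \eqref{gph2}: a dashed factor times red graphs differing by one wavy position. When $V=\emptyset$ ($v_\fg=0$), the graph is already blue and contributes $2^{m_\fg}$. We then collect coefficients for each fixed blue graph $h\in\fG^N(x_{1:p})$ with $v_h<2p$. The graph $h$ receives $2^{m_h}$ from itself if it is a vacuum graph in $\widetilde\fT^N$. It also receives $-2^{m_\fg}C_{\fg'}$ from each pair $(V,\fg')$ with $\fg'\prec_V h$, using that the underlying $\fg$ is unique and that $m_\fg=m_h-m_{\fg'}$ (loops split between $F(\fg)$ and $\fg'$). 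This reproduces \eqref{def:C_g}. Independence from $x_{1:p}$ and $N$ is clear, since only combinatorics enters.

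The main obstacle is the bookkeeping in the reduction step, and in particular choosing the wavy-line moves consistently. The endpoint choice must be made canonically, for instance moving to the smaller-labelled endpoint, so that the red graphs produced from $\fH^N$ land bijectively onto graphs in $\widetilde\fT^N\cap\fH$ with the right multiplicities. The delicate checks are that the convention excluding same-orientation connected single lines is preserved, and that loop factors $2^m$ are counted correctly, including 2-vertex double edges, which are not counted in $m$. We would verify this carefully on chains and loops separately before running the induction. The base case $p=1$ is trivial, and the case $p=2$ is \eqref{gph1}.
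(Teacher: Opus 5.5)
Your overall architecture matches the paper's. You induct on $p$, reduce each red graph with $v_\fg<2p$ to $\widetilde\fT^N\cap\fH$ by moving wavy lines along open chains (each move costing a remainder of the form \eqref{gph2}), apply the induction hypothesis to $\Wick{\prod_{k\in U_\fg}|u_N(x_k)|^2}$ with $F(\fg)$ frozen, and collect coefficients using uniqueness of $\fg$ and $m_h=m_\fg+m_{\fg'}$.

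The problem is the one concrete choice you propose for the step you correctly identify as the crux. Suppose you always move the wavy line onto the smaller-labelled endpoint $a$ of a chain with endpoints $a<b$. Then both orientations of that chain are sent to the same graph with the pair at $a$: the $u_N$-leg at $a$ with the $\overline{u_N}$-leg at $b$, and vice versa. The map from $\fH^N(x_{1:p})$ to $\widetilde\fT^N(x_{1:p})\cap\fH$ is then two-to-one onto the graphs whose pairs sit at smaller-labelled endpoints, and it misses all the others. Already for $p=2$ you would obtain $2\,G_N(x_1-x_2)\Wick{|u_N(x_1)|^2}$ and no term $G_N(x_1-x_2)\Wick{|u_N(x_2)|^2}$, whereas \eqref{def:C_g} gives $C_\fg=1$ for both graphs. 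Your procedure still yields a valid identity of the shape \eqref{3.1'}, but its coefficients are not those of \eqref{def:C_g}. It also loses the permutation invariance of $C_\fg$ that the paper uses afterwards in \eqref{C_g2} and in Lemma~\ref{lem:fl}.

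The fix is to make the move depend on orientation rather than on labels, as the paper does. Always move the line of one fixed orientation (the left-oriented one) onto the endpoint carrying the oppositely oriented line. Then the two orientations of a given chain produce the pair at the two different endpoints. The inverse operation moves the left-oriented line of the pair back to the other end of the dashed path. This shows that every graph in $\widetilde\fT^N(x_{1:p})\cap\fH$ with $v_\fg<2p$ has exactly one preimage in $\fH^N(x_{1:p})$, with the same dashed part and hence the same weight $2^{m_\fg}$. With that correction the rest of your argument goes through as written.
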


\begin{proof} 
	We argue by induction on $p$. For $p=2$, by \eqref{3.3}, we have
	\begin{align*}
		\prod_{k=1}^2\Wick{|u_N(x_k)|^2}=\Wick{\prod_{k=1}^2|u_N(x_k)|^2}+\sum_{1\leq i\neq j\leq 2}\Wick{u_N(x_i)\overline{u_N(x_j)}}G_N(x_i-x_j)+G_N(x_1-x_2)^2\,.
	\end{align*}
	We rewrite the two mixed red terms by moving one red half-edge so as to create a local pair at one vertex. Whenever such a local red pair is created, we identify it with the corresponding blue pair, representing $\Wick{|u_N(x_i)|^2}$.
	\begin{align*}
		\Wick{u_N(x_1)\overline{u_N(x_2)}}G_N(x_1-x_2) = \begin{tikzpicture}[baseline=-5]
			\node[dot] (x1) at (0,-0.1) {};
			\draw[Phi1] (x1) -- ++(-0.1,0.25);
			\node[dot] (x2) at (0.4,-0.1) {};
			\draw[Phi1] (x2) -- ++(0.1,0.25);
			\draw[Cr,bend left=70] (x1) to (x2);
		\end{tikzpicture}
	=&\begin{tikzpicture}[baseline=-5]
		\node[dot] (x1) at (0,-0.1) {};
		\draw[Phi2] (x1) -- ++(-0.1,0.25);
		\draw[Phi2] (x1) -- ++(0.1,0.25);
		\node[dot] (x2) at (0.4,-0.1) {};
		\draw[Cr,bend left=70] (x1) to (x2);
	\end{tikzpicture}
+\left(\begin{tikzpicture}[baseline=-5]
	\node[dot] (x1) at (0,-0.1) {};
	\draw[Phi1] (x1) -- ++(-0.1,0.25);
	\node[dot] (x2) at (0.4,-0.1) {};
	\draw[Phi1] (x2) -- ++(0.1,0.25);
	\draw[Cr,bend left=70] (x1) to (x2);
\end{tikzpicture}-\begin{tikzpicture}[baseline=-5]
\node[dot] (x1) at (0,-0.1) {};
\draw[Phi1] (x1) -- ++(-0.1,0.25);
\draw[Phi1] (x1) -- ++(0.1,0.25);
\node[dot] (x2) at (0.4,-0.1) {};
\draw[Cr,bend left=70] (x1) to (x2);
\end{tikzpicture}\right),\\
\Wick{u_N(x_2)\overline{u_N(x_1)}}G_N(x_1-x_2) = \begin{tikzpicture}[baseline=-5]
	\node[dot] (x1) at (0,-0.1) {};
	\draw[Phi1] (x1) -- ++(0.1,0.25);
	\node[dot] (x2) at (0.4,-0.1) {};
	\draw[Phi1] (x2) -- ++(-0.1,0.25);
	\draw[Cr,bend left=70] (x1) to (x2);
\end{tikzpicture}
=&\begin{tikzpicture}[baseline=-5]
	\node[dot] (x1) at (0,-0.1) {};
	\node[dot] (x2) at (0.4,-0.1) {};
	\draw[Phi2] (x2) -- ++(-0.1,0.25);
	\draw[Phi2] (x2) -- ++(0.1,0.25);
	\draw[Cr,bend left=70] (x1) to (x2);
\end{tikzpicture}
+\left(\begin{tikzpicture}[baseline=-5]
	\node[dot] (x1) at (0,-0.1) {};
	\draw[Phi1] (x1) -- ++(0.1,0.25);
	\node[dot] (x2) at (0.4,-0.1) {};
	\draw[Phi1] (x2) -- ++(-0.1,0.25);
	\draw[Cr,bend left=70] (x1) to (x2);
\end{tikzpicture}
-\begin{tikzpicture}[baseline=-5]
	\node[dot] (x1) at (0,-0.1) {};
	\node[dot] (x2) at (0.4,-0.1) {};
	\draw[Phi1] (x2) -- ++(-0.1,0.25);
	\draw[Phi1] (x2) -- ++(0.1,0.25);
	\draw[Cr,bend left=70] (x1) to (x2);
\end{tikzpicture}\right).
	\end{align*}
Each bracket is a remainder of the form \eqref{gph2}, and
the resulting local paired terms are exactly the two graphs in \(\fG^N(x_{1:2})\) with \(v_\fg=2\). For each of them, the only possible lower-order graph satisfying \(\fg'\prec_V\fg\) is the completely uncontracted
graph on the corresponding single vertex \(V=\{1\}\) or \(V=\{2\}\). By the initial convention, \(C_{\fg'}=-1\), and thus \eqref{def:C_g} gives \(C_\fg=1\). 
Moreover, $G_N(x_1-x_2)^2$ is the unique graph in \(\fG^N(x_{1:2})\) with $v_\fg=0$.
This proves \eqref{3.1'} for \(p=2\).
	
	Assume now that $p\geq 3$ and that the statement is already	known for all orders $2,\dots,p-1$. The unique graph in $\fH^N(x_{1:p})$ with $v_\fg=2p$ is the totally uncontracted graph $\Wick{\prod\limits_{k=1}^p|u_N(x_k)|^2}$, hence \eqref{3.0} gives
	\begin{align}\label{3.4}
		\prod_{k=1}^p\Wick{|u_N(x_k)|^2}-\Wick{\prod_{k=1}^p|u_N(x_k)|^2}=\sum_{\substack{\fg\in \fH^{N}(x_{1:p})\\v_\fg<2p}}2^{m_\fg}\fg \,.
	\end{align}
\medskip
\noindent\textbf{Step 1: Reduction from $\fH^N(x_{1:p})$ to $\widetilde\fT^N(x_{1:p})$.}
We first record the structure of dashed-connected components in $\fH^N(x_{1:p})$.
Let $C$ be a dashed-connected component of a graph
$\fg\in \fH^N(x_{1:p})$, and set $w_C:=\#\{\text{wavy lines in }C\}$. 
Since $C$ is connected, its number of dashed edges is at least the number of vertices minus one. Combining this with \eqref{wC} shows that $w_C$ is even and satisfies $w_C\le2$. Hence $w_C\in\{0,2\}$, and exactly one of the following
alternatives occurs.
\begin{enumerate}[leftmargin=2em]
	\item[(i)] $w_C=0$, and then every vertex of $C$ has dashed degree $2$; hence $C$ is a dashed cycle.
	\item[(ii)] $w_C=2$ and one vertex of $C$ carries the two red wavy lines; then this vertex has dashed degree $0$, so $C$ is a single isolated vertex carrying one red pair.
	\item[(iii)] $w_C=2$ and no vertex of $C$ carries two red wavy lines; then $C$ is a dashed path whose two endpoints carry the two single red wavy lines with opposite orientations.
\end{enumerate}
Let $\fH_1:=\{\fg\in \fH^N(x_{1:p}):0\leq v_\fg<2p,\ \text{no vertex carries exactly one red wavy line}\}$. 
For $\fg\in \fH_1$, if $v_\fg=0$, then every vertex is incident to two dashed edges and $\fg\in\widetilde \fT^N(x_{1:p})$. If $v_\fg\geq2$, then every dashed-connected component carrying red wavy lines is of type~(ii) in the above classification. Hence wavy lines occur in pairs, and each vertex carrying wavy lines is incident to zero dashed edges. Therefore $\fH_1\subset \widetilde \fT^N(x_{1:p})\cap \fH.$
Now fix $\fg\in \fH^N(x_{1:p})\setminus \fH_1$ with $v_\fg<2p$. Then $\fg$ has at least one component of type~(iii). 
For such a component, we move the
left-oriented single red wavy line to the
endpoint carrying the right-oriented single red wavy line, leaving the dashed part unchanged. Graphically, this operation is represented as
\begin{align}\label{shift}
	\begin{tikzpicture}[baseline=-5]
		\node[dot] (x3) at (0,0) {};
		\node[dot] (x4) at (0.4,0) {};
		\node[dot] (x5) at (0.8,0) {};
		\node[dot] (x6) at (1.2,0) {};
		\node[dot] (x7) at (1.6,0) {};
		\draw[Phi1] (x3) -- ++(-0.1,0.25);
		\draw[Phi1] (x7) -- ++(0.1,0.25);
		\draw[Cr,bend left=70] (x4) to (x5);
		\draw[Cr,bend left=70] (x3) to (x4);
		\draw[Cr,bend left=70] (x5) to (x6);
		\draw[Cr,bend left=70] (x6) to (x7);
	\end{tikzpicture}
=\begin{tikzpicture}[baseline=-5]
	\node[dot] (x3) at (0,0) {};
	\node[dot] (x4) at (0.4,0) {};
	\node[dot] (x5) at (0.8,0) {};
	\node[dot] (x6) at (1.2,0) {};
	\node[dot] (x7) at (1.6,0) {};
	\draw[Phi1] (x3) -- ++(-0.1,0.25);
	\draw[Phi1] (x3) -- ++(0.1,0.25);
	\draw[Cr,bend left=70] (x4) to (x5);
	\draw[Cr,bend left=70] (x3) to (x4);
	\draw[Cr,bend left=70] (x5) to (x6);
	\draw[Cr,bend left=70] (x6) to (x7);
\end{tikzpicture} + \left(\begin{tikzpicture}[baseline=-5]
\node[dot] (x3) at (0,0) {};
\node[dot] (x4) at (0.4,0) {};
\node[dot] (x5) at (0.8,0) {};
\node[dot] (x6) at (1.2,0) {};
\node[dot] (x7) at (1.6,0) {};
\draw[Phi1] (x3) -- ++(-0.1,0.25);
\draw[Phi1] (x7) -- ++(0.1,0.25);
\draw[Cr,bend left=70] (x4) to (x5);
\draw[Cr,bend left=70] (x3) to (x4);
\draw[Cr,bend left=70] (x5) to (x6);
\draw[Cr,bend left=70] (x6) to (x7);
\end{tikzpicture}
-\begin{tikzpicture}[baseline=-5]
\node[dot] (x3) at (0,0) {};
\node[dot] (x4) at (0.4,0) {};
\node[dot] (x5) at (0.8,0) {};
\node[dot] (x6) at (1.2,0) {};
\node[dot] (x7) at (1.6,0) {};
\draw[Phi1] (x3) -- ++(-0.1,0.25);
\draw[Phi1] (x3) -- ++(0.1,0.25);
\draw[Cr,bend left=70] (x4) to (x5);
\draw[Cr,bend left=70] (x3) to (x4);
\draw[Cr,bend left=70] (x5) to (x6);
\draw[Cr,bend left=70] (x6) to (x7);
\end{tikzpicture}\right)\,.
\end{align}
The term in the bracket is a remainder term of the form \eqref{gph2}. The first graph on the right-hand side has a red pair at a single vertex. This vertex is incident to exactly one dashed edge, the remaining vertices in the same dashed-connected component are incident to one or two
dashed edges, and the component contains no other wavy lines. Repeating this operation for every component of type (iii), we reduce \(\fg\) to a graph in \(\widetilde\fT^N(x_{1:p})\cap\fH\), up to a linear
combination of remainder terms of the form \eqref{gph2}.
Moreover, this operation does not change the dashed part of the graph, and therefore the multiplicity factor \(2^{m_{\fg}}\) remains unchanged.

Conversely, every graph in $\widetilde\fT^N(x_{1:p})\cap\mathcal H$ with \(v_{\fg}<2p\) arises uniquely from a graph in $\fH^N(x_{1:p})$ by the above procedure. Indeed, consider a dashed-connected component $C$. Since $C$ is connected, by \eqref{wC}, $w_C\in\{0,2\}$.
Then, by the definition of $\widetilde\fT^N(x_{1:p})$,
exactly one of the following alternatives holds:
\begin{enumerate}[leftmargin=2.2em]
	\item[(i')] $w_C=0$, and then $C$ is a dashed cycle.
	\item[(ii')] $w_C=2$ and $C$ is a single isolated vertex carrying one red pair.
	\item[(iii')] $w_C=2$ and $C$ is a dashed path whose one endpoint carries one red pair.
\end{enumerate}
The first two cases already correspond to components of type (i) and (ii) above. In the third case, we reconstruct the preimage by moving the left-oriented red wavy line in the pair to the other endpoint of the dashed path, leaving the right-oriented red wavy line fixed. This gives a component of type (iii), with two single red wavy lines of opposite orientations. Applying the operation in \eqref{shift} to this reconstructed component returns precisely the original component in $C$.
Doing this independently for every component of type (iii') gives a unique graph in \(\fH^N(x_{1:p})\). 
Hence there exists an one to one correspondence from graphs
in \(\fH^N(x_{1:p})\) to the graphs in \(\widetilde\fT^N(x_{1:p})\cap\fH\).
Therefore, \eqref{3.4} becomes
	\begin{align}\label{3.41}
		\prod_{k=1}^p\Wick{|u_N(x_k)|^2}-\Wick{\prod_{k=1}^p|u_N(x_k)|^2}=\sum_{\substack{\fg\in \widetilde\fT^N(x_{1:p})\cap\fH\\v_\fg<2p}}2^{m_\fg}\fg + R_N^{(1)}(u;x_{1:p})\,,
	\end{align}
where $\fR_N^{(1)}(u;x_{1:p})$ is a linear combination of terms of the form \eqref{gph2}. 

\medskip
\noindent\textbf{Step 2: Recursive reduction to blue graphs.}
Fix $\fg\in \widetilde \fT^N(x_{1:p})\cap \fH$ with $2\leq v_\fg<2p$.
Recall from Definition~\ref{def:graph} that \(U_\fg\) denotes the set of vertices carrying wavy lines, and \(F(\fg)\) denotes the product of all dashed edge factors in \(\fg\). Since $\fg\in \widetilde T^N(x_{1:p})$, every dashed-connected component of $\fg$ contains at most one vertex of $U_\fg$. Thus, the fixed dashed
factor $F(\fg)$ does not connect two different red-pair vertices, and except $F(\fg)$ the remaining part is exactly the Wick-ordered monomial supported on the vertices indexed by $U_\fg$, i.e.
\begin{align}\label{gph4}
	\fg = F(\fg)\, \Wick{\prod_{k\in U_\fg}|u_N(x_k)|^2}.
\end{align}
If $v_\fg=2$, then no further expansion is needed. The graph \(\fg\) contains a single red pair, which already represents \(\Wick{|u_N(x)|^2}\). We simply regard this pair as a blue pair, and
the resulting graph belongs to \(\fG^N(x_{1:p})\). 

If $4\leq v_\fg<2p$, we apply the induction hypothesis at the order $v_\fg/2$ to the
Wick-ordered monomial $\Wick{\prod_{k\in U_\fg}|u_N(x_k)|^2}$, while keeping the dashed factor $F(\fg)$ fixed. 
Since $C_\fg=-1$ for every $\fg\in\fG^N(x_k,k\in V)$ and $V\subseteq[p]$ with $v_\fg=2|V|$, by \eqref{3.1'}, we have
\begin{align*}
	\Wick{\prod_{k\in U_\fg}|u_N(x_k)|^2} = - \sum_{\fg'\in \fG^N(x_k,\,k\in U_\fg)} C_{\fg'}\,\fg'+ \fR_N(u;x_k,\,k\in U_\fg),
\end{align*}
where $\fR_N(u;x_k,\,k\in U_\fg)$ is again a finite linear combination of remainder terms of the form in \eqref{gph2}.
Here we include the completely uncontracted blue graph $\fg_0$ in the sum with the convention $C_{\fg_0}=-1$.
Multiplying by $F(\fg)$, we obtain
\begin{align}\label{3.6}
	\fg=-\sum_{\fg'\in\mathcal G^N(x_k,k\in U_\fg)}C_{\fg'}F(\fg)\,\fg' + F(\fg)\,\fR_N(u;x_k,k\in U_\fg)\,.
\end{align}
The same identity also holds in the case \(v_\fg=2\) by the convention \(C_{\fg'}=-1\) for the completely uncontracted blue graph on \(U_\fg\) and $\fR_N=0$.
The term $F(\fg)\,\fR_N$ in \eqref{3.6} is still a linear combination of remainder terms of the form in \eqref{gph2}, since multiplying by a fixed dashed factor does not change the fact that the two graphs in each remainder have the same dashed part and differ only by the position of one red wavy line.
Substituting \eqref{3.6} into \eqref{3.41} and absorbing all remainder terms into a single term $\fR_N(u;x_{1:p})$, we have
\begin{align}
	&\prod_{k=1}^p\Wick{|u_N(x_k)|^2}-\Wick{\prod_{k=1}^p|u_N(x_k)|^2}\nonumber\\
	=&\sum_{\substack{\fg\in \widetilde\fT^N(x_{1:p})\\v_\fg=0}}2^{m_\fg}\fg-\sum_{\substack{\fg\in \widetilde\fT^N(x_{1:p})\cap\fH\\2\leq v_\fg<2p}}\sum_{\fg'\in\mathcal G^N(x_k,k\in U_\fg)}2^{m_\fg}C_{\fg'}F(\fg)\,\fg'
	+ \fR_N(u;x_{1:p}).\label{3.6'}
\end{align}

\medskip
\noindent\textbf{Step 3: Identification of the coefficients.}
We now identify the coefficient of each graph
\(h\in\fG^N(x_{1:p})\) with \(v_h<2p\) in the expansion \eqref{3.6'}.
First suppose $v_h=0$. There are two possible contributions. The first is direct: if \(h\in\widetilde\fT^N(x_{1:p})\), then the first sum in the right-hand side of \eqref{3.6'} contributes \(2^{m_h}\).
The second contribution comes from the second sum. Such a term contributes to the coefficient of \(h\) precisely when
$F(\fg)\fg' = h$ for some $\fg\in \widetilde\fT^N(x_{1:p})\cap\fH$ with $2\leq v_\fg<2p$ and $\fg'\in\mathcal G^N(x_k,k\in U_\fg)$. 
By definition, this is exactly the relation $\fg'\prec_V h$ for some $V\subseteq[p]$ with $1\leq |V|<p$ and $\fg'\in\mathcal G^N(x_k,k\in V)$. 
In this case, the dashed loops of \(h\) are precisely the dashed loops of \(\fg\) together with those inside \(\fg'\). Hence $m_h=m_\fg+m_{\fg'}$. Therefore the total coefficient of \(h\) is
$$C_h=2^{m_h}\1_{\{h\in\widetilde\fT^N(x_{1:p})\}} - \sum_{\substack{V\subseteq[p]\\1\leq|V|<p}}\sum_{\fg'\prec_Vh}2^{m_h-m_{\fg'}}C_{\fg'}.$$
If $2\leq v_h<2p$, the only contribution to the coefficient of $h$ comes from the second sum in the right-hand side of \eqref{3.6'}. 
The same argument as above shows that its coefficient is
$$C_h=-\sum_{\substack{V\subseteq[p]\\1\leq|V|<p}}\sum_{\fg'\prec_V h}2^{m_h-m_{\fg'}}C_{\fg'}.$$
Combining the two cases, we obtain the recursive formula \eqref{def:C_g}. Substituting this coefficient identity into \eqref{3.6'} yields \eqref{3.1'}. Thus, the result follows.

\end{proof}

Now we can define the effective interaction functional $W_p^{\varepsilon,N}$.
For any graph $\fg\in \fG_p^N$, let $\fg(x_{1:p})\in \fG^N(x_{1:p})$ denote the graph obtained by removing all solid lines. 
Since the coefficient $C_{\fg(x_{1:p})}$ defined in \eqref{def:C_g} is purely combinatorial, it is independent of the positions $x_{1:p}$ and invariant under permutations of the vertices.
Since graphs in \(\fG_p^N\) are treated as labelled graphs, Lemma \ref{Lem2.2} yields
\begin{align*}
	pV_p^{\varepsilon,N}(u) = \int v^\varepsilon(x_2-x_1,\cdots,x_p-x_1) \left(\prod_{k=1}^p\Wick{|u_N(x_k)|^2} - \fR_N(u;x_{1:p})\right)\ud x_{1:p} -\sum_{\substack{\fg\in\fG_p^N\\v_\fg<2p}}C_{\fg(x_{1:p})}\,\fg .
\end{align*}
Then we define the effective interaction functional by keeping these blue-graph contributions and discarding the remainder:
\begin{align}\label{W}
	W_p^{\varepsilon,N}(u):=\frac1p\int v^\varepsilon(x_2-x_1,\cdots,x_p-x_1) \prod_{k=1}^p\Wick{|u_N(x_k)|^2}\ud x_{1:p}-\frac1p\sum_{\fg\in\fG_p^N:v_\fg<2p}C_{\fg(x_{1:p})}\,\fg\,.
\end{align}

We next regroup the blue graphs in the right-hand side of \eqref{W} according to the number $l:=\frac{v_\fg}{2}$ of blue pairs, i.e. according to the number of $\Wick{|u_N(x_i)|^2}$. 
Recall $\fM_{p,l}$ from \eqref{def:M}.
For fixed $l<p$, the contraction pattern is described by a multi-index $\fm=(m_{ab})_{1\le a<b\le p}\in \fM_{p,l}$, where $m_{ab}$ counts how many covariance lines join the
$a$-th and $b$-th vertices. The condition 
$\sum_{1\le a<b\le p} m_{ab}=p-l$ means that exactly $p-l$ contractions have been performed, leaving $l-\Wick{|u_N(x_i)|^2}$.
Thus \(\fm\) uniquely determines the following standard term:
\begin{align}\label{graph1}
	\fg_\fm(x_{1:p}):=\prod_{k=1}^l\Wick{|u_N(x_k)|^2}\prod_{1\leq a<b\leq p}G_N(x_a-x_b)^{m_{ab}},\quad \fg_\fm:=\int v^\varepsilon(x_2-x_1,\cdots,x_p-x_1)\fg_\fm(x_{1:p})\ud x_{1:p}.
\end{align}
Notice that \(\fg_{\fm}(x_{1:p})\) is a formal blue-graph term determined by \(\fm\); it may fail to belong to the graph class \(\fG^N(x_{1:p})\) if it violates one of the graph conventions. 
To record all blue graphs that produce the same contraction pattern, we define the coefficient:
\begin{align}\label{C_g2}
	C_\fm:=\begin{cases}
		-\frac1p \binom pl\;C_{\fg_\fm(x_{1:p})}, & \fg_\fm\in\fG_p^N,\\
		0, & \fg_\fm\notin\fG_p^N.
	\end{cases}
\end{align}
The factor $\binom pl$ comes from the choice of the $l$ vertices carrying blue pairs. By the symmetry of $v$ in \eqref{con:sym} and the invariance of $C_\fg$, all such choices are equivalent to the standard choice $\{x_1,\cdots,x_l\}$.
Then we integrate out the variables $x_{l+1},\dots,x_p$ against the Hartree kernel
$v^\varepsilon$ and the Green-function factors prescribed by $\fm$. This gives the effective kernel $f_l^{\varepsilon,N}$:
\begin{align}
	f_l^{\varepsilon,N}(x_2-x_1,\cdots,x_l-x_1) :=& \sum_{\mathbf m\in \fM_{p,l}}C_\fm\int_{\Lambda^{p-l}} v^\varepsilon(x_2-x_1,\cdots,x_p-x_1)\prod_{1\leq a<b\leq p}G_N(x_a-x_b)^{m_{ab}}\ud x_{l+1:p}\,.\label{def:F_lN}
\end{align}
By translation invariance, $f_0^{\varepsilon,N}$ and $f_1^{\varepsilon,N}$ are scalars.
Then we have
\begin{align*}
	\int_{\Lambda^l} f_l^{\varepsilon,N}(x_2-x_1,\cdots,x_l-x_1)\prod_{k=1}^l\Wick{|u_N(x_k)|^2}\ud x_{1:l}=&\sum_{\fm\in\fM_{p,l}} C_{\fm}\, \fg_\fm
	=-\sum_{\fg\in\fG_p^N:v_\fg=2l}\frac1p\, C_{\fg(x_{1:p})}\;\fg\,.
\end{align*}
Substituting this identity into \eqref{W} gives
\begin{align}\label{W1}
	W_p^{\varepsilon,N}(u)=\sum_{l=2}^{p}\int_{\Lambda^l} f_l^{\varepsilon,N}(x_2-x_1,\cdots,x_l-x_1)\prod_{k=1}^l\Wick{|u_N(x_k)|^2}\ud x_{1:l}+f_1^{\varepsilon,N}\int_\Lambda\Wick{|u_N(x)|^2}\ud x +f_0^{\varepsilon,N}\,,
\end{align}
where $f_p^{\varepsilon,N}=\frac1p v^\varepsilon$.

\begin{remark}\label{rmk 4.1}
	For $l\leq p-1$, the kernel \(f_l^{\varepsilon,N}\) should be viewed as the effective \(l\)-body interaction generated by the original \(p\)-body potential \(v^\varepsilon\). The \(l\) variables \(x_1,\ldots,x_l\) correspond to $\Wick{|u_N(x_i)|^2}$, while the remaining \(p-l\) variables are integrated out. Each multi-index \(\fm\) specifies a possible pattern of Wick contractions among the \(p\) vertices; the product of Green functions in \eqref{def:F_lN} is the weight of that pattern, and \(C_\fm\) is the corresponding combinatorial coefficient.
 The function $f_l^\varepsilon$ in \eqref{def:F_l} is exactly the $L^1(\Lambda^{l-1})$ limit of $f_l^{\varepsilon,N}$ as $N\to\infty$, see \eqref{limit} for the proof.
\end{remark}
\begin{remark} 
	For illustration, we compute the coefficients $C_\fm$ and the kernels $f_l^\varepsilon$, $0\le l<p$, explicitly for $p=2$ and $p=3$.
	When \(p=2\), by \eqref{def:C_g}, for every $\fg\in\fG^N(x_{1:2})$ with $v_\fg\leq2$, $C_\fg=1$. Moreover, for every $\fm=(m_{12})\in\fM_{2,l}$ with $l=0,1$, the standard term $\fg_\fm$ defined in \eqref{graph1} belongs to $\fG_2^N$. Then, by \eqref{C_g2}, $C_\fm=-1$ if $m_{12}=1$ and $C_\fm=-\frac12$ if $m_{12}=2$. Therefore, 
	$$f_0^\varepsilon = -\frac12\int v^\varepsilon(x)G(x)^2\ud x,\quad f_1^\varepsilon = -\int v^\varepsilon(x)G(x)\ud x.$$
	When $p=3$, we write $\fm=(m_{12},m_{13},m_{23}).$
	For \(l=2\), the first two vertices \(x_1,x_2\) carry blue pairs and \(|\fm|=1\). 
	The term \(\fg_{(1,0,0)}\) is not in $\fG_3^N$ because the dashed edge connects the two blue-pair vertices. For the other two terms, the only graph $\fg'$ and $V\subset\{1,2,3\}$ satisfying $\fg'\prec_V\fg_\fm(x_{1:3})$ is $\fg'=\Wick{|u_N(x_1)|^2}\Wick{|u_N(x_2)|^2}$. Then, by \eqref{def:C_g}, for $\fm=(0,1,0)$ or $(0,0,1)$,
	the graph coefficient $C_\fm=-C_{\fg_\fm(x_{1:3})}=-1$.
	Thus,
	$$f_2^\varepsilon(x_2-x_1)=-\int_\Lambda v^\varepsilon(x_2-x_1,x_3-x_1)(G(x_1-x_3)+G(x_2-x_3))\ud x_3.$$
	For \(l=1\), only \(x_1\) carries a blue pair and \(|\fm|=2\). Since in graphs, every vertex without wavy lines is incident to at least one dashed edge, the terms $\fg_{(2,0,0)}$ and $\fg_{(0,2,0)}$ are not in $\fG_3^N$.
	For \(\fm=(0,0,2)\), the only contribution $\fg'\prec_V\fg_\fm(x_{1:3})$ for some $V$ is $\fg'=\Wick{|u_N(x_1)|^2}$, so \(C_{\fg_\fm(x_{1:3})}=1\).
	For \(\fm=(1,1,0)\), $\fg_\fm$ is not in $\widetilde\fT^N(x_{1:3})$. 
	There are two contributions, corresponding to $V=\{1,2\}$ and $V=\{1,3\}$. In both cases, $\fg'$ is an admissible two-vertex blue subgraph with exactly one vertex carrying a blue pair. By $p=2$, $C_{\fg'}=1$.
	Hence $C_{\fg_\fm(x_{1:3})}=-2.$
	For the two path graphs $\fm=(1,0,1)$ and $\fm=(0,1,1)$, $\fg_\fm(x_{1:3})\in\widetilde\fT^N(x_{1:3})$, and there are two contributions: one from the one-vertex
	uncontracted graph $\Wick{|u_N(x_1)|^2}$ with coefficient \(-1\), and one from the two-vertex blue subgraph with coefficient \(1\). They cancel, giving
	$C_{\fg_\fm(x_{1:3})}=0.$ Therefore, 
	$$f_1^\varepsilon=\int_{\Lambda^2} v^\varepsilon(x_2-x_1,x_3-x_1)(2G(x_1-x_2)G(x_1-x_3)-G(x_2-x_3)^2)\ud x_{2:3}.$$  
	Finally, for \(l=0\), there are no blue pairs and \(|\fm|=3\). As before, if $m_{ij}=3$ for some $1\leq i<j\leq 3$, then $\fg_\fm$ is not in $\fG^N_3$ since one of the vertices is isolated. For $\fm=(1,1,1)$, \(m_{\fg_\fm}=1\), and there is no pair $(\fg',V)$ such that $\fg'\prec_V\fg_\fm(x_{1:3})$. Thus $C_{\fg_\fm(x_{1:3})}=2$. For the remaining $\fm\in\fM_{3,0}$, $m_{\fg_\fm}=0$ and $\fg_\fm$ is not in $\widetilde\fT^N(x_{1:3})$. Therefore, the only contribution comes from the \(p=2\) scalar subgraph, whose coefficient is \(1\), hence \(C_{\fg_\fm(x_{1:3})}=-1\). Then, by \eqref{C_g2}, we have
	\begin{align*}
		f_0^\varepsilon=\frac13\int_{\Lambda^3}
		v^\varepsilon(x_2-x_1,x_3-x_1)
		\Big[& G_{12}^2(G_{13}+G_{23})
		+G_{13}^2(G_{12}+G_{23})
		+G_{23}^2(G_{12}+G_{13})-2G_{12}G_{13}G_{23}
		\Big]\,\ud x_{1:3},
	\end{align*}  
where $G_{ij}:=G(x_i-x_j)$ for $1\leq i<j\leq 3$.
\end{remark}

\subsection{Convergence of measures}
Let \(p\ge2\), and let \(\mu_p\) be the \(\Phi^{2p}_2\) measure defined in \eqref{def:Phi measure}. We define the truncated Hartree measure \(\mu_p^{\varepsilon,N}\) by
\begin{align}
	\ud\mu_p^{\varepsilon,N}=(\cZ_p^{\varepsilon,N})^{-1}\exp(-W_p^{\varepsilon,N})\ud\mu_0,\quad \cZ_p^{\varepsilon,N}=\int\exp(-W_p^{\varepsilon,N})\ud\mu_0.\label{def:mu_p^N}
\end{align}
For fixed $\varepsilon>0$, the sequence $(W_p^{\varepsilon,N})_{N\in\NN}$ is Cauchy in $L^2(\PP)$; see \eqref{3.15} below. 
We denote its limit by $W_p^\varepsilon$. The exponential
integrability proved in \eqref{3.17} shows that the Hartree measure $\mu_p^\varepsilon$ in \eqref{def:Hartree measure} is well defined. 
The main result of this section is the following approximation theorem.

\begin{prop}\label{Prop}
	For $p\geq2$, $N\in\NN$ and $\varepsilon>0$, let $W_p^{\varepsilon,N}$ and $V_p$ be defined in \eqref{W} and \eqref{def:V^p} respectively, with $v$ satisfying Assumption \ref{assum:w}. Then, for every $m\geq1$,
	\begin{align}
		\lim_{\varepsilon\to0,N\to\infty}\EE|e^{- W_p^{\varepsilon,N}(u)}-e^{-V_p(u)}|^m&=0.\label{1.2}
	\end{align}
In particular, 
\begin{align}
	\lim_{\varepsilon\to0,N\to\infty}\cZ_p^{\varepsilon,N}=\cZ_p\,,\label{partition function conv}
\end{align}
and we have the convergence of the associated \(k\)-point correlation function for every $k\geq1$,
\begin{align}\label{matrices conv}
	\lim_{\varepsilon\to0,N\to\infty}\left\|\int|u^{\otimes k}\rangle\langle u^{\otimes k}|\ud\,(\mu_p^{\varepsilon,N}(u)-\mu_p(u))\right\|_{\textnormal{HS}}=0,
\end{align}
where $\|\cdot\|_{\textnormal{HS}}$ is the Hilbert-Schmidt norm in $\cH^k$.

\end{prop}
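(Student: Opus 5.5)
The plan is to run Nelson's argument: convergence in probability of $W_p^{\varepsilon,N}$ to $V_p$, together with a uniform bound on $\EE e^{-qW_p^{\varepsilon,N}}$ for some $q>m$. Then $|e^{-a}-e^{-b}|\le |a-b|\,(e^{-a}+e^{-b})$ and Hölder give \eqref{1.2}.

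\textbf{Step 1: $L^2$ convergence.} By \eqref{W} and Lemma~\ref{Lem2.2}, after integrating against $v^\varepsilon$,
\[
W_p^{\varepsilon,N}=V_p^{\varepsilon,N}+\frac1p\int v^\varepsilon\,\fR_N\,\ud x_{1:p},
\]
where $V_p^{\varepsilon,N}$ is the nonlocal Wick monomial \eqref{def:V_N}. I will prove three estimates.
\begin{enumerate}[leftmargin=2em]
\item $\EE|V_p^{\varepsilon,N}-V_p^N|^2\to0$ as $\varepsilon\to0$, uniformly in $N$. Wick's theorem turns $\EE|V_p^{\varepsilon,N}-V_p^N|^2$ into integrals of $\prod G_N^{p}$-type products against $(v^\varepsilon-\delta)\otimes(v^\varepsilon-\delta)$. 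Since $G_N(x)\lesssim 1+|\log d(x)|$ uniformly in $N$ (Appendix~\ref{appB}), these integrals are dominated by $|\log|$-powers, which are integrable in two dimensions. Continuity of translations in $L^r$ then gives smallness.
\item $V_p^N\to V_p$ in $L^2$, which is \cite[Proposition 1.1]{OT18}.
\item $\EE|\int v^\varepsilon\fR_N|^2\to0$. Each remainder \eqref{gph2} is a difference of Wick products that differ only by moving one field $u_N$ from $x_i$ to $x_j$, with the other factors $G_N$ being common. The second moment therefore carries a factor $(G_N(x_i-x_i')-G_N(x_i-x_j')-\dots)$, i.e. a second difference of $G_N$ at scale $|x_i-x_j|\lesssim\varepsilon$ (on the support of $v^\varepsilon$). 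I expect a bound like $\varepsilon^{\kappa}(\log N)^C + N^{-\kappa}$, following \cite[Proposition 3.2]{FKSS25}. The decay of $\widehat v$ in \eqref{v-fourier} is used to bound the Green-function products against $v^\varepsilon$.
\end{enumerate}
This also yields the Cauchy property in $N$ claimed via \eqref{3.15}. One caveat: I need the joint limit, so I would take $\kappa$ quantitative in both $\varepsilon$ and $N$.

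\textbf{Step 2: uniform exponential integrability (main obstacle).} The key input is the logarithmic lower bound \eqref{lwb:W}, $W_p^{\varepsilon,M}\ge -C(\log M)^p$, uniformly in $\varepsilon$, stated there as Lemma~\ref{lem:lower bound of WN}.

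To prove it, write $\Wick{|u_M|^2}=|u_M|^2-c_M$ with $c_M=G_M(0)\sim\log M$ and expand $W_p^{\varepsilon,M}$ into nonlocal monomials $\int g_l\prod_{j\le l}|u_M(x_j)|^2$. The coefficients $g_l$ are bounded by $(\log M)^{p-l}$ times averages built from $v^\varepsilon$ and the kernels $f_l^{\varepsilon,M}$, whose $L^1$ norms are $O(1)$ (Appendix~\ref{appB}).

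The leading term is $\frac1p\int v^\varepsilon\prod|u_M(x_j)|^2\ge0$. I would cover $\Lambda$ by cells of size $\sim\varepsilon$, using the compact support and positivity near $0$ of $v$, so that this term dominates $\sum_Q \varepsilon^{2(p-1)}\big(\fint_Q|u_M|^2\big)^p$ up to local averaging. Each lower-order term is then bounded by $(\log M)^{p-l}\sum_Q\varepsilon^{2(l-1)}\big(\fint_Q|u_M|^2\big)^l$ plus similar expressions. Young's inequality $a^l(\log M)^{p-l}\le \delta a^p+C_\delta(\log M)^p$ absorbs these into the leading term, leaving $-C(\log M)^p$.

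The delicate part is that $v^\varepsilon$ is not pointwise comparable to a cell indicator, and that the $f_l$ have mixed signs. I would handle this with a finite covering by translated balls on which $v$ is bounded below, together with the marginal interpolation mentioned before \eqref{lwb:W}.

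Given the lower bound, the standard Nelson tail estimate applies. Choose $M=M(t)$ with $C(\log M)^p=t/2$, and use hypercontractivity of the order-$2p$ chaos $W^{\varepsilon,N}-W^{\varepsilon,M}$, whose $L^2$ norm is $\lesssim M^{-\kappa}$ uniformly in $\varepsilon$ and $N\ge M$ by Step 1. This gives
\[
\PP(W_p^{\varepsilon,N}<-t)\le \exp(-c\,M^{\kappa/p}t^{1/p})=\exp(-c\,e^{c't^{1/p}}),
\]
hence $\sup_{\varepsilon,N}\EE e^{-qW_p^{\varepsilon,N}}<\infty$ for every $q$. The same bound holds for $V_p$.

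\textbf{Step 3: consequences.} Taking $m=1$ in \eqref{1.2} gives \eqref{partition function conv}. For \eqref{matrices conv}, $\cZ_p>0$, so it suffices to bound $\EE\big[\|u_N^{\otimes k}\|\,\|u^{\otimes k}\|\,|e^{-W}-e^{-V}|\big]$ in the appropriate Hilbert–Schmidt sense. More precisely, I expand the HS norm squared as $\EE\otimes\EE\,|\langle u,u'\rangle|^{2k}$ against the signed measure; $\langle u,u'\rangle$ has all moments under $\mu_0\otimes\mu_0$ in two dimensions. Cauchy–Schwarz together with \eqref{1.2} and the moment bounds then closes the argument. One must also account for $u$ versus $u_N$ inside the integrand, which is controlled since $\EE|\langle u-u_N,u'\rangle|^2\to0$.
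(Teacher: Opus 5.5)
Your overall architecture matches the paper's: a logarithmic lower bound $W_p^{\varepsilon,N}\ge -C(\log N)^p$ uniform in $\varepsilon$, $L^2$ convergence to $V_p$, a hypercontractive Nelson tail estimate, and $L^2(\mu_0)$ convergence of the densities for the partition and correlation functions. Your Steps 1 and 3 are in line with the paper.

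The genuine gap is the input to the tail estimate. You need $\|W_p^{\varepsilon,N}-W_p^{\varepsilon,M}\|_{L^2}\lesssim M^{-\kappa}$ for $N\ge M$ \emph{uniformly in $\varepsilon$}, and you say this follows from Step 1. It does not. Every estimate in Step 1 compares $W_p^{\varepsilon,N}$ with $V_p$, and each contains an error that vanishes only as $\varepsilon\to0$: your $\varepsilon^\kappa(\log N)^C$, and the modulus of continuity of translations. The triangle inequality therefore gives at best $\omega(\varepsilon)+M^{-\kappa}$, which does not decay in $M$ at fixed $\varepsilon$. With $\delta:=\|W_p^{\varepsilon,N}-W_p^{\varepsilon,M(t)}\|_{L^2}\ge\omega(\varepsilon)$, the Nelson bound reads $\PP(W_p^{\varepsilon,N}<-t)\lesssim\exp(-c(t/\delta)^{1/p})$. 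This is not integrable against $e^{qt}$ once $t\gtrsim\omega(\varepsilon)^{-1/(p-1)}$, so at fixed $\varepsilon$ you get no bound on $\EE e^{-qW_p^{\varepsilon,N}}$. Even a rate $C_\varepsilon M^{-\kappa}$ with $C_\varepsilon\to\infty$ would only give bounds that blow up as $\varepsilon\to0$. Uniformity in $\varepsilon$ is therefore essential, not cosmetic.

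What is missing is a direct Cauchy estimate at fixed $\varepsilon$. This is \eqref{3.15} in the paper, $\sup_{\varepsilon\le(4R)^{-1}}\EE|W_p^{\varepsilon,M}-W_p^{\varepsilon,N}|^2\lesssim N^{-1/2}$, and most of the work lies there.

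- Write $u_M=u_N+u_{N,M}$. The part $V_p^{\varepsilon,M}-V_p^{\varepsilon,N}$ is easy: every contraction pattern contains at least one covariance $\rho_N*F_{N,M}$ or $(\rho_M-\rho_N)*F_{N,M}$, where $F_{N,M}=\rho_N*(\rho_M-\rho_N)*G$. Since $\int v^\varepsilon=1$ and $\|F_{N,M}\|_{L^q}^q\lesssim N^{-2}$, this gives $N^{-2}(\log N)^{2p}$ uniformly in $\varepsilon$.
- For the remainder terms, first replace the wavy lines $u_M$ by $u_N$ (handled as for $V$). Then replace the dashed $G_M$ by $G_N$ one edge at a time. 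Each step produces a factor $F_{N,M}$ at separations $\lesssim\varepsilon$. Combined with the $\varepsilon|\log\varepsilon|$ gain from the moved wavy line, this leads to \eqref{G_N1}, namely $\varepsilon^{-1}|\log\varepsilon|^{2p}\int_{|x|\le C\varepsilon}|F_{N,M}|^2$.
- The crude bound $\|F_{N,M}\|_{L^2}^2\lesssim N^{-2}$ only gives $\varepsilon^{-1}N^{-2}$, which is not uniform. You must split cases. If $N\varepsilon\ge1$, then $\varepsilon^{-1}N^{-2}\lesssim N^{-1}$. If $N\varepsilon<1$, the pointwise bound $|F_{N,M}(x)|\lesssim1+|\log(N|x|)|$ gives $N^{-1}(1+\log N)^{2p}\,a(1+|\log a|)^{2p+2}$ with $a=N\varepsilon$, which is $\lesssim N^{-1}(1+\log N)^{2p}$.

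Two smaller points. In your item (3), use $|G_N(x)|\lesssim1+|\log|x||$ to obtain a bound uniform in $N$, namely $\varepsilon|\log\varepsilon|^{2p}$; a factor $(\log N)^C$ would spoil the unrestricted joint limit. In the cell argument, the leading term and the $l$-th term both carry weight $\varepsilon^2$ per cell, not $\varepsilon^{2(p-1)}$ and $\varepsilon^{2(l-1)}$, since $\varepsilon^{-2(l-1)}(\int_Qf)^l$ is comparable to $\varepsilon^2(|Q|^{-1}\int_Qf)^l$. With your weights the Young absorption is not uniform in $\varepsilon$. The mixed signs of $f_l^{\varepsilon,M}$ are handled by the pointwise bound $|f_l^{\varepsilon,M}|\lesssim(\log M)^{p-l}\Pi_lv^\varepsilon$, which follows from $|G_M|\le G_M(0)$.
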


The proof follows the Nelson argument used in \cite[Proposition 4.3]{FKSS25} for the quartic case. The main difficulty for higher-order interactions is the uniform logarithmic lower bound for \(W_p^{\varepsilon,N}\). Indeed, the expansion of $\Wick{\prod_{j=1}^p|u_N(x_j)|^2}$ in \eqref{3.1'} produces the effective kernels $f_l^{\varepsilon,N}$ with no fixed sign. As mentioned before, the argument for the quartic case in \cite[Proposition 4.3]{FKSS25} cannot be applied.
The key observation is that the leading \(p\)-body term controls a \(p\)-th power of a localized average of $|u_N|^2$ due to the positivity and finite range of $v$, which in turn dominates all lower-order terms through interpolation estimates and a finite-covering argument.

We first prove this logarithmic lower bound needed for Nelson's argument.
For $1\leq l\leq p$ and every bounded periodic function $f\geq0$, set
\begin{align}\label{def:fI}
	\fI_{\varepsilon,l}(f)
	:=\int_{\Lambda^l}(\Pi_l v^\varepsilon)(x_2-x_1,\ldots,x_l-x_1)
	\prod_{j=1}^l f(x_j)\ud x_{1:l}.
\end{align}
Since $v\in C_c(\RR^{2(p-1)})$, there exists $R\geq1$ such that
\begin{align}\label{def:R}
	\supp v\subseteq\{x\in\mathbb R^{2(p-1)}:|x|\leq R\}.
\end{align}

	\begin{lem}\label{lem:lower bound of WN}
		For every $p\geq2$, there exists $C>0$ depending on $v$ and $p$ such that, for all $N\geq2$ and  $0<\varepsilon\leq\frac{1}{4R}$, almost surely
		\begin{align}\label{lower bound:W_N}
			W_p^{\varepsilon,N}\geq-C(\log N)^p \,.
		\end{align}
	\end{lem}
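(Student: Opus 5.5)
We write $W_p^{\varepsilon,N}$ in the form \eqref{W1} and expand every Wick square as $\Wick{|u_N(x)|^2}=|u_N(x)|^2-c_N$, where $c_N:=G_N(0)\sim\frac{1}{2\pi}\log N$. This turns $W_p^{\varepsilon,N}$ into a polynomial in the nonnegative function $\phi:=|u_N|^2$, with no Wick ordering left. The top-order term is $\frac1p\fI_{\varepsilon,p}(\phi)\ge0$, and it carries no $c_N$ factor. Each remaining term has the form
\[
\int g(x_2-x_1,\dots,x_l-x_1)\prod_{j=1}^l\phi(x_j)\,\ud x_{1:l},\qquad 0\le l\le p-1,
\]
with coefficient kernel $g=\sum_{m\ge l}\binom ml(-c_N)^{m-l}\Pi_l f_m^{\varepsilon,N}$.

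We need two facts about these kernels, which we would check from \eqref{def:F_lN} using the bounds $|G_N(x)|\lesssim 1+\log_+ (1/d(x))\wedge\log N$ and $|G_N|\lesssim\log N$.
\begin{enumerate}
\item[(a)] $|\Pi_l f_m^{\varepsilon,N}|$ is pointwise bounded by $C(\log N)^{p-m}$ times a nonnegative kernel supported where all relative distances are at most $R\varepsilon\le\frac14$, with $L^1$ norm $\lesssim 1$ uniformly in $\varepsilon$.
\item[(b)] Consequently each coefficient kernel satisfies $|g|\le C(\log N)^{p-l}\,\omega_l$, where $\omega_l\ge0$ has support in an $R\varepsilon$-neighbourhood of the diagonal and $\|\omega_l\|_{L^1}\lesssim1$.
\end{enumerate}
For (a), each Green function is bounded by $\log N$, and the marginals of $v^\varepsilon$ are probability densities. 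This crude bound suffices, since we only need powers of $\log N$.

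The core step is to control each lower-order term by the top one. Fix a scale $\varepsilon$ and cover $\Lambda$ by $O(\varepsilon^{-2})$ boxes $Q_\alpha$ of side $\varepsilon$; set $a_\alpha:=\int_{Q_\alpha}\phi$. Since $\omega_l$ forces all points to lie within $R\varepsilon$ of $x_1$, for $x_1\in Q_\alpha$ the other points lie in a bounded number $K(R)$ of neighbouring boxes. Hence a lower-order term is bounded by
\[
C(\log N)^{p-l}\varepsilon^{-2(l-1)}\sum_\alpha \tilde a_\alpha^{\,l},\qquad \tilde a_\alpha:=\sum_{\beta\sim\alpha}a_\beta .
\]
The main obstacle is the matching lower bound for the top term,
\[
\fI_{\varepsilon,p}(\phi)\ \ge\ c\,\varepsilon^{-2(p-1)}\sum_\alpha a_\alpha^p .
\]
It uses the positivity of $v$. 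Since $v$ is a continuous probability density, it is bounded below by some $c_0>0$ on a small ball $B(z_0,r_0)$. I would take $z_0$ with $v(z_0)>0$ and use symmetry \eqref{con:sym} together with translation to arrange the geometry. If $v(0)>0$ this is immediate: the cell $Q_\alpha^p$, rescaled at scale $\varepsilon r_0$, sits inside the positivity region. In general I would instead use a covering at scale $r_0\varepsilon$ adapted to the shifts $z_0$. This gives $\fI_{\varepsilon,p}(\phi)\ge c\varepsilon^{-2(p-1)}\sum_\alpha\prod_j a_{\alpha+\text{shift}_j}$, which is not directly a $p$-th power. If that route fails, I would fall back on the Fourier positivity $\widehat v\ge0$: it gives $\fI_{\varepsilon,p}(\phi)\ge \fI_{\varepsilon,p}(\phi)$ restricted to a smooth mollified field, with $v^\varepsilon$ bounded below by a product kernel of positive type. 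This is the step I expect to be genuinely delicate.

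Granting the lower bound, we finish by Young's inequality. Write $b_\alpha=\varepsilon^{-2}\tilde a_\alpha$ and note $\sum_\alpha\tilde a_\alpha^{\,p}\le K^{p}\sum_\alpha a_\alpha^p$. Then
\[
(\log N)^{p-l}\varepsilon^{2}\sum_\alpha b_\alpha^{l}\le \delta\,\varepsilon^2\sum_\alpha b_\alpha^p + C_\delta(\log N)^{p}\varepsilon^2\#\{\alpha\},
\]
using $x^l(\log N)^{p-l}\le\delta x^p+C_\delta(\log N)^p$. Since $\varepsilon^2\#\{\alpha\}\lesssim1$, choosing $\delta$ small absorbs all lower-order terms into $\frac1p\fI_{\varepsilon,p}(\phi)$, leaving the error $-C(\log N)^p$. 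The scalar term is $O((\log N)^p)$ directly. This proves \eqref{lower bound:W_N}.
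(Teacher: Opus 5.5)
Your architecture is the same as the paper's. You expand $\Wick{|u_N|^2}=|u_N|^2-G_N(0)$, bound each coefficient kernel by $(\log N)^{p-l}$ times a marginal of $v^\varepsilon$, dominate the lower-order terms by local $l$-th powers of masses, and absorb them into the top term with Young's inequality. Your boxes $Q_\alpha$ are a discrete version of the paper's local averages $B_1^\varepsilon f$ and $B_2^\varepsilon f$ over balls of radius $r_0\varepsilon$ and $(R+r_0)\varepsilon$, and the neighbour count $K$ replaces the finite covering; either version works.

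One small omission: the box bound for the lower-order terms needs the pointwise bound $\Pi_l v^\varepsilon\lesssim\varepsilon^{-2(l-1)}\1_{\{\max_i d(x_i-x_1)\le R\varepsilon\}}$. This follows from $v\in L^\infty$ and its compact support, but not from the $L^1$ bound you state in (a).

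The genuine gap is exactly the step you flag. You prove $\fI_{\varepsilon,p}(\phi)\ge c\,\varepsilon^{-2(p-1)}\sum_\alpha a_\alpha^p$ only under the condition $v(0)>0$, and otherwise offer fallbacks that are not arguments. The shifted covering only gives mixed products $\prod_j a_{\alpha+\mathrm{shift}_j}$. The remark about Fourier positivity and a mollified field yields no inequality at all.

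No fallback could work. If $v$ vanished on a neighbourhood of $0$, take $\phi\ge0$ concentrated in a ball of radius much smaller than $\varepsilon$ inside one box. The left side would be $0$ while $\sum_\alpha a_\alpha^p>0$, so the target inequality would be false.

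The missing observation is that $v(0)>0$ is forced by Assumption~\ref{assum:w}:
\begin{itemize}
\item $\widehat v\ge0$ is continuous and integrable on $\RR^{2(p-1)}$, by the decay in \eqref{v-fourier}.
\item $\widehat v(0)=\int v=1$.
\item Hence $v(0)=\int\widehat v(\xi)\,\ud\xi>0$.
\end{itemize}

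By continuity, $v\ge a_0>0$ on $\{|z|\le 2r_0\sqrt{p-1}\}$ for some $r_0>0$. You must then take boxes of side comparable to $r_0\varepsilon$, not $\varepsilon$. With that choice, $x_{1:p}\in Q_\alpha^p$ forces $v^\varepsilon(x_2-x_1,\dots,x_p-x_1)\ge a_0\varepsilon^{-2(p-1)}$, since every term in \eqref{poisson} is nonnegative. The constant $K$ then depends on $R/r_0$. With these choices, your Young's inequality step goes through and yields \eqref{lower bound:W_N}.
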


\begin{proof}
	Recall the definition of $f_l^{\varepsilon,N}$ in \eqref{def:F_lN}.
	Since $|G_N(x)|\leq G_N(0)\lesssim\log N$ and $0\leq v^\varepsilon\in L^1(\Lambda^{p-1})$, we have
	\begin{align*}
		|f_0^{\varepsilon,N}|\lesssim(\log N)^p,\quad
		|f_1^{\varepsilon,N}|\lesssim(\log N)^{p-1},\quad |f_l^{\varepsilon,N}|\lesssim(\log N)^{p-l}\Pi_lv^\varepsilon,\, 2\leq l\leq p-1.
	\end{align*}
	Then, we use $\Wick{|u_N(x)|^2}=|u_N(x)|^2-G_N(0)$ and expand the product in \eqref{W1} to obtain
	\begin{align}
		W_p^{\varepsilon,N}(u)
		\geq\frac1p\fI_{\varepsilon,p}(|u_N|^2)-C\left(\sum_{l=1}^{p-1}(\log N)^{p-l}\fI_{\varepsilon,l}(|u_N|^2)+(\log N)^p\right)\,.\label{3.31}
	\end{align}
	We claim that, for every \(1\le l\le p-1\), there exists \(C_l>0\), independent of
	\(\varepsilon\), such that for every bounded periodic function $f\geq0$,
	\begin{align}\label{If}
		\fI_{\varepsilon,l}(f)
		\leq C_l \,\fI_{\varepsilon,p}(f)^{l/p}.
	\end{align}
	Since $u_N=\rho_N*u$ is bounded in $\Lambda$ almost surely, we take $f=|u_N|^2$ in \eqref{If}. 
	Young's inequality
	then yields, for every $1\le l\le p-1$,
	\[
	C(\log N)^{p-l}\fI_{\varepsilon,l}(|u_N|^2)
	\le \frac{1}{2p(p-1)}\fI_{\varepsilon,p}(|u_N|^2)
	+C_l'(\log N)^p
	\]
	for some constant $C_l'>0$. Summing over $l$ proves \eqref{lower bound:W_N}, once \eqref{If} is established.
	
	In the following, we prove \eqref{If}. 
	By Assumption \ref{assum:w}, $\widehat v\geq0$ is continuous and $\widehat v(0)=1$. Hence
	$v(0)=\int\widehat v(\xi)\ud \xi>0$. By the continuity of \(v\), there exist $r_0>0$ and $a_0>0$ such that
	\begin{align}\label{lower bound v}
		v(z)\geq a_0\,,\quad \text{for all}\; |z|\leq 2r_0\sqrt{p-1}.
	\end{align}
	We choose $r_0$ smaller if necessary so that $2r_0\sqrt{p-1}\le R$. Then \(r_0\varepsilon\le 1/8\) whenever \(\varepsilon R\le 1/4\).
	Hence all balls of radius \(r_0\varepsilon\)
	on \(\Lambda\) have the Euclidean area \(\pi r_0^2\varepsilon^2\).
	We shall use this fact below without further comment.
	Moreover, since $\varepsilon R\leq1/4$, for every $y\in\Lambda^{p-1}$ with \(d(y)\le R\varepsilon\), there exists a unique \(m^y\in\mathbb Z^{2(p-1)}\) such that
	$|y+m^y|=d(y)$. Therefore, \eqref{poisson} becomes
	\begin{align}
		v^\varepsilon(y)
		=\varepsilon^{-2(p-1)}
		v\bigl(\varepsilon^{-1}(y+m^y)\bigr).\label{poisson1}
	\end{align}
	If \(d(y)>R\varepsilon\), then \(v^\varepsilon(y)=0\) by \eqref{def:R}.
	It follows that
	\begin{align}
		0\le v^\varepsilon(y)
		\le
		\|v\|_{L^\infty}\varepsilon^{-2(p-1)}
		\1_{\{d(y)\le R\varepsilon\}}.
		\label{periodiz}
	\end{align}
	Since \(2r_0\sqrt{p-1}\le R\), by \eqref{lower bound v}, we have
	\begin{align}
		v^\varepsilon(y)
		\ge a_0\varepsilon^{-2(p-1)},
		\quad\text{whenever}\quad
		d(y)\le2r_0\varepsilon\sqrt{p-1}.
		\label{periodiz1}
	\end{align}
	For later use, we record the corresponding representation of the marginals. Let \(2\le l\le p\) and
	\(y_{2:l}\in\Lambda^{l-1}\). Integrating \eqref{poisson} in the last \(p-l\) variables and changing the variables
	$y_j+m_j=\varepsilon z_j$ for $l+1\le j\le p$, we obtain
	\begin{align}
		(\Pi_l v^\varepsilon)(y_{2:l})
		={}&
		\varepsilon^{-2(l-1)}
		\sum_{m_{2:l}\in(\mathbb Z^2)^{l-1}}
		\int_{(\mathbb R^2)^{p-l}}
		v\Bigl(
		\varepsilon^{-1}(y_2+m_2),\ldots,
		\varepsilon^{-1}(y_l+m_l),
		z_{l+1:p}
		\Bigr)
		\,\ud z_{l+1:p}.
		\label{marginal-rescaling}
	\end{align}
	For $y\in\Lambda$, define the local average
	\begin{align}
		B^\varepsilon_1 f(y):=\varepsilon^{-2}\int_{B(y,r_0\varepsilon)} f(x)\ud x,\quad B^\varepsilon_2 f(y):=\varepsilon^{-2}\int_{B(y,(R+r_0)\varepsilon)} f(x)\ud x.\label{def:local average}
	\end{align}
	
	Now we estimate $\fI_{\varepsilon,p}(f)$. 
	Let \begin{align}
		A:=\{(x_1,\cdots,x_p)\in\Lambda^p: \text{there exists}\; y\in\Lambda,\,\text{such that}\; x_i\in B(y,r_0\varepsilon), 1\leq i\leq p\}.\label{def:A}
	\end{align}
	Then for every \(x_{1:p}\in\Lambda^p\), we have
	\begin{align}\label{3}
		\int_\Lambda
		\prod_{j=1}^p \1_{B(y,r_0\varepsilon)}(x_j)\,\ud y
		\le \1_A(x_{1:p})|B(x_1,r_0\varepsilon)| = \1_A(x_{1:p})\, \pi \, r_0^2\,\varepsilon^2.
	\end{align}
	If $x_{1:p}\in A$, then $d(x_2-x_1,\cdots,x_p-x_1)\leq2r_0\varepsilon\sqrt{p-1}$. Thus, by \eqref{periodiz1}, we have
	\begin{align}
		v^\varepsilon(x_2-x_1,\ldots,x_p-x_1)
		\ge
		a_0\,\varepsilon^{-2(p-1)}.\label{2}
	\end{align}
	Using $f\geq0$ and \eqref{3}, we obtain
	\begin{align}
		\fI_{\varepsilon,p}(f)
		&\geq a_0\,\varepsilon^{-2(p-1)}
		\int_{\Lambda^p}\1_A(x_{1:p})
		\prod_{j=1}^p f(x_j)\,\ud x_{1:p}\geq a_0\pi^{-1}r_0^{-2}
		\int_\Lambda (B_1^\varepsilon f(y))^p\,\ud y .\label{A}
	\end{align}
	Moreover, by Fubini's theorem,
	\[\int_\Lambda B_1^\varepsilon f(y)\ud y
	=\varepsilon^{-2}|B(0,r_0\varepsilon)|\int_\Lambda f(x)\ud x=\pi r_0^2\int_\Lambda f(x)\ud x.\]
	Since \(|\Lambda|=1\), H\"older's inequality and \eqref{A} imply
	$$\fI_{\varepsilon,1}(f) = \int_\Lambda f(x)\ud x \lesssim \fI_{\varepsilon,p}(f)^\frac1p.$$
	This proves \eqref{If} for $l=1$.
	
	It remains to consider \(2\le l\le p-1\). Define
	\begin{align}\label{def:dl}
		\widetilde d_l(x_{1:l}):=\max_{2\le i\le l}d(x_i-x_1).
	\end{align} 
	Applying \eqref{marginal-rescaling} with
	\(y_i=x_i-x_1\), we observe that at most one term in the sum over \(m_{2:l}\) can be non-zero, and by \eqref{def:R}, this term vanishes
	unless $\widetilde d_l(x_{1:l})\le R\varepsilon.$
	Since \(v\in L^\infty\), we therefore obtain
	\begin{align}
		(\Pi_l v^\varepsilon)
		(x_2-x_1,\ldots,x_l-x_1)
		\le{}&
		\|v\|_{L^\infty}
		(\pi R^2)^{p-l}
		\varepsilon^{-2(l-1)}
		\1_{\{
			\widetilde d_l(x_{1:l})\le R\varepsilon
			\}}.
		\label{marginal-pointwise}
	\end{align}
	Consequently,
	\begin{align}
		\fI_{\varepsilon,l}(f)
		&\leq\|v\|_{L^\infty}(\pi R^2)^{p-l}\varepsilon^{-2(l-1)}
		\int_{\Lambda^l}\1_{\{\widetilde d_l(x_{1:l})\leq R\varepsilon\}}\prod_{j=1}^lf(x_j)\ud x_{1:l}.\label{1'}
	\end{align}
	If $\widetilde d_l(x_{1:l})\leq R\varepsilon$, then
	\[B(x_1,r_0\varepsilon)\subseteq\{y\in\Lambda:x_1,\ldots,x_l\in B(y,(R+r_0)\varepsilon)\}.\]
	Therefore,
	\begin{align}
		\pi\,r_0^2\varepsilon^2\,\1_{\{\widetilde d_l(x_{1:l})\leq R\varepsilon\}}\leq
		\int_\Lambda\prod_{j=1}^l\1_{B(y,(R+r_0)\varepsilon)}(x_j)\,\ud y.\label{1}
	\end{align}
	Substituting into \eqref{1'}, it follows that
	\begin{align}
		\fI_{\varepsilon,l}(f)
		\leq \|v\|_{L^\infty}(\pi R^2)^{p-l}(\pi r_0^2)^{-1}\int_\Lambda (B_2^\varepsilon f(y))^l\,\ud y.\label{B}
	\end{align}
	
	Finally, choose $M=M(R,r_0)$ and $z_1,\cdots,z_M\in\RR^2$ such that $B(0,r_0+R)\subseteq\cup_{i=1}^MB(z_i,r_0)$. Then, for every $\varepsilon>0$ and $y\in\Lambda$,
	\begin{align}
		B(y,(R+r_0)\varepsilon)\subseteq\cup_{i=1}^MB(y+\varepsilon z_i,r_0\varepsilon),\label{4}
	\end{align}
	and hence
	\[B_2^\varepsilon f(y)\leq\sum_{m=1}^M B_1^\varepsilon f(y+\varepsilon z_m).\]
	Then, by H\"older's inequality, Minkowski's inequality and translation invariance on the torus, we have
	\begin{align*}
		\|B_2^\varepsilon f\|_{L^l(\Lambda)}\leq\|B_2^\varepsilon f\|_{L^p(\Lambda)}
		\leq M\|B_1^\varepsilon f\|_{L^p(\Lambda)}.
	\end{align*}
	Combining this with \eqref{A} and \eqref{B}, for $2\leq l\leq p-1$, we have
	\[\fI_{\varepsilon,l}(f)
	\lesssim_l\|B_2^\varepsilon f\|_{L^l(\Lambda)}^l
	\leq M^l\|B_1^\varepsilon f\|_{L^p(\Lambda)}^l
	\lesssim \fI_{\varepsilon,p}(f)^{l/p}.
	\]
\end{proof}

We then prove the $L^2(\mathbb P)$ convergence of the interaction functionals. 
The estimate below is similar as \cite[Lemma~4.1]{FKSS25} with combinatorial graph notations.

\begin{prop}
	Under the assumptions of Proposition~\ref{Prop}, for every $p\geq2$, it holds that
\begin{align}
	\lim_{\varepsilon\to0,N\to\infty}\EE|W_p^{\varepsilon,N}(u)-V_p(u)|^2&=0.\label{1.1}
\end{align}
\end{prop}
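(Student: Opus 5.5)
The proof will rest on an exact algebraic identity that comes out of Lemma \ref{Lem2.2}. Integrate \eqref{3.1'} against $v^\varepsilon$ and compare with the definition \eqref{W}. The blue-graph sums cancel, which gives
\begin{align*}
	W_p^{\varepsilon,N}(u)=V_p^{\varepsilon,N}(u)+\frac1p\int_{\Lambda^p}v^\varepsilon(x_2-x_1,\ldots,x_p-x_1)\,\fR_N(u;x_{1:p})\,\ud x_{1:p}.
\end{align*}
By the triangle inequality it therefore suffices to prove two statements, each as $\varepsilon\to0$ and $N\to\infty$ jointly:
\begin{enumerate}[leftmargin=2em]
	\item[(a)] $V_p^{\varepsilon,N}\to V_p$ in $L^2(\PP)$;
	\item[(b)] $\int v^\varepsilon\fR_N\ud x_{1:p}\to0$ in $L^2(\PP)$.
\end{enumerate}
In both cases I will compute the second moments exactly by Wick's theorem for complex Gaussians. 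I will then bound the resulting deterministic integrals of products of $G_N$'s using two inputs: the pointwise bound $|G_N(x)|\lesssim 1+\log(1/d(x))$, uniform in $N$, and $G_N\to G$ in every $L^q(\Lambda)$, $q<\infty$.

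\textbf{Part (a).} Expand $\EE|V_p^{\varepsilon,N}-V_p|^2=\EE|V_p^{\varepsilon,N}|^2-2\,\mathrm{Re}\,\EE\bigl(V_p^{\varepsilon,N}\overline{V_p}\bigr)+\EE|V_p|^2$. Wick's theorem gives
\[
\EE\Bigl[\Wick{\prod_k|u_N(x_k)|^2}\,\Wick{\prod_k|u_{N'}(y_k)|^2}\Bigr]=\sum_{\sigma,\tau\in S_p}\prod_k G_{N,N'}(x_k-y_{\sigma(k)})\,\overline{G_{N,N'}(x_k-y_{\tau(k)})},
\]
where $G_{N,N'}$ is the mixed covariance. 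Each of the three terms is then an integral of $v^\varepsilon(x)v^\varepsilon(y)$ against a sum of products of $2p$ Green functions. Since $v^\varepsilon$ is a nonnegative approximate identity concentrating on the diagonal, every term should converge to $(p!)^2p^{-2}\int_{\Lambda}G(z)^{2p}\ud z$ (up to the conjugation convention, which does not affect the argument). To justify the passage to the limit I plan to:
\begin{itemize}[leftmargin=1.6em]
	\item use the logarithmic bound together with Hölder's inequality to dominate the integrands uniformly in $\varepsilon,N$;
	\item use continuity of translations in $L^q$ to replace $G(x_k-y_j)$ by $G(x_1-y_1)$ when $|x_k-x_1|,|y_j-y_1|\le R\varepsilon$.
\end{itemize}
This is essentially \cite[Proposition 1.1]{OT18} combined with the approximate-identity argument of \cite[Lemma 4.1]{FKSS25}.

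\textbf{Part (b).} By Lemma \ref{Lem2.2}, $\fR_N$ is a finite linear combination, with $\varepsilon,N$-independent coefficients, of terms of the form
\[
F(\fg)\,\Wick{\,\cdots\,\bigl(u_N(x_i)-u_N(x_j)\bigr)\cdots}
\quad\text{or its conjugate analogue,}
\]
where $x_i,x_j$ are endpoints of a dashed path. On the support of $v^\varepsilon$ we have $d(x_i-x_j)\le 2R\varepsilon$. I will treat each such term separately and compute its second moment by Wick's theorem. Every pairing then contains exactly one factor of the form
\[
\EE\Bigl[\bigl(u_N(x_i)-u_N(x_j)\bigr)\overline{\bigl(u_N(y_{i'})-u_N(y_{j'})\bigr)}\Bigr],
\]
which is a second difference of $G_N$. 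In Fourier variables this factor is bounded by
\[
\sum_k\langle k\rangle^{-2}\,|e_k(x_i-x_j)-1|\,|e_k(y_{i'}-y_{j'})-1|\lesssim \varepsilon^{2s}\sum_k\langle k\rangle^{-2+2s}\,|\cdots|,
\]
using $|e_k(h)-1|\le |kh|^s$ for small $s>0$. The missing integrability is recovered because this difference factor is integrated against the other Green-function factors, which lie in $L^q$. The remaining ingredients are then controlled as follows:
\begin{itemize}[leftmargin=1.6em]
	\item the deterministic dashed factors $F(\fg)$, $F(\fg')$ are supported at scale $\varepsilon$ on the supports of $v^\varepsilon$ and cost at most $(1+|\log\varepsilon|)^{C}$, using \eqref{periodiz} and the logarithmic bound on $G_N$;
	\item the remaining cross-covariances are handled by Hölder's inequality as in (a).
\end{itemize}
This gives $\EE\bigl|\int v^\varepsilon\fR_N\bigr|^2\lesssim\varepsilon^{s}(1+|\log\varepsilon|)^C$, uniformly in $N$, which yields (b).

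\textbf{Main obstacle.} I expect the hardest step to be this uniform-in-$N$ bound in (b). There are many Wick pairings, and the dashed factors within a single graph may form long paths or cycles concentrated at scale $\varepsilon$. I would handle this by going entirely to Fourier space: write $v^\varepsilon$ via $\widehat v(\varepsilon k)$, use the decay \eqref{v-fourier} to sum the internal momenta of the dashed subgraph, and extract $\varepsilon^s$ from the single difference factor. If the logarithmic factors become unwieldy, a fallback is to estimate $\sup_{x_{1:p}}\EE|\fR_N(u;x_{1:p})|^2$ pointwise on $\{d\le R\varepsilon\}$ and then integrate against $v^\varepsilon\in L^1$. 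The joint limit in $\varepsilon$ and $N$ requires no extra work: all bounds are uniform in $N$, and for (a) the $N$-dependence enters only through $G_N\to G$ in $L^q$.
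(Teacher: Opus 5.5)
Your identity $W_p^{\varepsilon,N}=V_p^{\varepsilon,N}+\frac1p\int v^\varepsilon\fR_N\,\ud x_{1:p}$ is exactly the paper's starting point, and your part (a) is a valid alternative route.

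The paper instead writes $V_p^{\varepsilon,N}-V_p=(V_p^{\varepsilon,N}-V_p^N)+(V_p^N-V_p)$. It uses the known $L^2$ convergence $V_p^N\to V_p$, and treats $\frac1p\int v^\varepsilon(\Wick{\prod_k|u_N(x_k)|^2}-\Wick{|u_N(x_1)|^{2p}})$ as one more remainder containing a difference $G_N(x_1-y_1)-G_N(x_2-y_1)$. This yields the rate $\varepsilon|\log\varepsilon|^{2p}$ uniformly in $N$. Your expand-the-square argument (approximate identity, $G_N\to G$ in $L^q$, continuity of translations, with the cross term computed through $V_p^M$ and $M\to\infty$) gives convergence without a rate, which is all the statement needs.

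The gap is in part (b), which carries all the difficulty.

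First, it is not true that every Wick pairing in $\EE|\int v^\varepsilon\fR_N|^2$ contains a second difference. As soon as the remainder monomial has more than one $u_N$-leg, the field $u_N(x_i)-u_N(x_j)$ can pair with an ordinary $\overline{u_N(y_k)}$ of the other copy. That leaves only first differences such as $G_N(x_i-y_k)-G_N(x_j-y_k)$.

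Second, no bound that takes absolute values of a single covariance factor in Fourier space can be uniform in $N$ in two dimensions. Even $\sum_k\langle k\rangle^{-2}|e_k(a)-1||e_k(b)-1|$ with $|a|,|b|\sim\varepsilon$ contains $\sum_{\varepsilon^{-1}\lesssim|k|\le N}\langle k\rangle^{-2}\sim\log(N\varepsilon)$. After extracting $\varepsilon^{2s}$, the sum $\sum_k\langle k\rangle^{-2+2s}$ diverges outright.

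Your fallback fails for the same reason. Pointwise in $x_{1:p}$, $\EE|u_N(x_i)-u_N(x_j)|^2=2(G_N(0)-G_N(x_i-x_j))$ is of order $\log(N|x_i-x_j|)$. At separation $\varepsilon$ this is neither small nor bounded in $N$, and the rest of the Wick monomial contributes further powers of $G_N(0)\sim\log N$. Smallness and uniformity in $N$ appear only after integrating over the relative position of the two clusters. Your remark about ``integrating against the other Green-function factors'' points the right way, but it does not say in which variable or with which estimate.

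The step that closes the argument is the paper's:
\begin{enumerate}
\item After Wick's rule, keep a single first-difference factor and bound all other factors in absolute value.
\item Rescale $x_j-x_1=\varepsilon h_j$ and $y_j-y_1=\varepsilon z_j$ (modulo periodization), and set $w=x_1-y_1$.
\item Apply Cauchy--Schwarz in $w$ only, for fixed $h,z$ in the support of $v\otimes v$.
\end{enumerate}
The pieces are then controlled as follows.
\begin{itemize}
\item The difference factor contributes $\int_\Lambda|G_N(w)-G_N(w+\varepsilon h_2)|^2\,\ud w\lesssim\varepsilon^2|\log\varepsilon|^2$, uniformly in $N$. Your Fourier instinct works cleanly exactly here: the left side is at most $\sum_k\langle k\rangle^{-4}\min(4,4\pi^2\varepsilon^2|k|^2|h_2|^2)\lesssim\varepsilon^2|\log\varepsilon|$.
\item The product of the remaining cross factors is bounded in $L^2(\ud w)$ uniformly in $N,h,z$, using $|G_N(x)|\lesssim1+|\log|x||$.
\item The intra-cluster dashed factors, integrated against $v(h)v(z)$, cost at most $|\log\varepsilon|^{2p-2}$.
\end{itemize}
This gives $\sup_N\EE|W_p^{\varepsilon,N}-V_p^{\varepsilon,N}|^2\lesssim\varepsilon|\log\varepsilon|^{2p}$, and together with your part (a) the proposition follows.
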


\begin{proof}
We first estimate $\EE|W_p^{\varepsilon,N}(u)-V_p^{\varepsilon,N}(u)|^2$. By Lemma \ref{Lem2.2} and \eqref{W}, the difference $W_p^{\varepsilon,N}(u)-V_p^{\varepsilon,N}(u)$
is a finite linear combination of remainder terms of the form \eqref{gph2}, with the solid Hartree edges attached. 
It is therefore enough to estimate such term in $L^2(\mathbb P)$:
\begin{align}\label{gph3}
	\begin{tikzpicture}[baseline=-5]
		\node[dot] (x1) at (0,0) {};
		\draw[Phi1] (x1) -- ++(-0.1,0.25);
		\draw[Phi1] (x1) -- ++(0.1,0.25);
		\node[dot] (x2) at (0.3,0) {};
		\draw[Phi1] (x2) -- ++(-0.1,0.25);
		\node[dot] (x1') at (0.6,0) {};
		\node[dot] (x2') at (0.9,0) {};
		\draw[Phi1] (x2') -- ++(0.1,0.25);
		\node at (1.3,0) {$\cdots$};
		\node[dot] (x3) at (1.6,0) {};
		\draw[Phi1] (x3) -- ++(-0.1,0.25);
		\node[dot] (x5) at (1.9,0) {};
		\draw[Phi1] (x5) -- ++(0.1,0.25);
		\draw[C,bend left=60] (x5) to (x1);
		\draw[C,bend left=60] (x2) to (x1);
		\draw[C,bend left=60] (x1') to (x1);
		\draw[C,bend left=60] (x2') to (x1);
		\draw[C,bend left=60] (x3) to (x1);
		\draw[Cr,bend left=50] (x2) to (x1');
		\draw[Cr,bend left=50] (x1') to (x2');
		\draw[Cr,bend left=70] (x3) to (x5);
	\end{tikzpicture}-\begin{tikzpicture}[baseline=-5]
		\node[dot] (x1) at (0,0) {};
		\draw[Phi1] (x1) -- ++(-0.1,0.25);
		\draw[Phi1] (x1) -- ++(0.1,0.25);
		\node[dot] (x2) at (0.3,0) {};
		\draw[Phi1] (x2) -- ++(-0.1,0.25);
		\draw[Phi1] (x2) -- ++(0.1,0.25);
		\node[dot] (x1') at (0.6,0) {};
		\node[dot] (x2') at (0.9,0) {};
		\node at (1.3,0) {$\cdots$};
		\node[dot] (x3) at (1.6,0) {};
		\draw[Phi1] (x3) -- ++(-0.1,0.25);
		\node[dot] (x5) at (1.9,0) {};
		\draw[Phi1] (x5) -- ++(0.1,0.25);
		\draw[C,bend left=60] (x5) to (x1);
		\draw[C,bend left=60] (x2) to (x1);
		\draw[C,bend left=60] (x1') to (x1);
		\draw[C,bend left=60] (x2') to (x1);
		\draw[C,bend left=60] (x3) to (x1);
		\draw[Cr,bend left=50] (x2) to (x1');
		\draw[Cr,bend left=50] (x1') to (x2');
		\draw[Cr,bend left=70] (x3) to (x5);
	\end{tikzpicture}\,.
\end{align}
By Wick's rule, each contraction between a red wavy line at some $x_i$ and an oppositely oriented red wavy line at some $y_j$ produces a factor $G_N(x_i-y_j)$. Since the two graphs in \eqref{gph3} differ only by the position of one red wavy line, every resulting term contains exactly
one distinguished difference factor of the form $G_N(x_i-y_j)-G_N(x_{i'}-y_j)$ for some $i,i',j\in\{1,\dots,p\}$ and $i\neq i'$. Hence, up to a multiplicative constant, the squared $L^2(\PP)$ norm of \eqref{gph3} is bounded by
\begin{align}
	&\int_{\Lambda^p}\int_{\Lambda^p}\ud x_{1:p}\ud y_{1:p}v^\varepsilon(x_2-x_1,\cdots,x_p-x_1)v^\varepsilon(y_2-y_1,\cdots,y_p-y_1)\times\nonumber\\
	&\times\prod_{\{i,j\}\in\Pi}|G_N(x_i-x_j)G_N(y_i-y_j)|\prod_{\{i,j\}\in\widetilde\Pi}|G_N(x_i-y_j)||G_N(x_1-y_1)-G_N(x_2-y_1)|.\label{3.47}
\end{align}
Here $\Pi$ and $\widetilde\Pi$ are finite collections of pairs, determined by the contraction pattern, and satisfy $2|\Pi|+|\widetilde\Pi|=2p-1$ and $i<j$ for any $\{i,j\}\in\Pi$.
Using the periodized representation of \(v^\varepsilon\) in \eqref{poisson}, for each fixed $m_2,\cdots,m_p\in\ZZ^2$ and $n_2,\cdots,n_p\in\ZZ^2$, we change the variables by $x_j-x_1+m_j=\varepsilon h_j$ and
$y_j-y_1+n_j=\varepsilon z_j$ for $2\le j\le p$, as well as
$h_1=x_1-y_1$. Using periodicity of $G_N$, we obtain that \eqref{3.47} is bounded by

\begin{align}
&\int_{2\Lambda}\int_{(\RR^2)^{p-1}}\int_{(\RR^2)^{p-1}}v(h_{2:p})v(z_{2:p})\prod_{\substack{\{i,j\}\in\Pi\\i,j\neq1}}|G_N(\varepsilon(h_i-h_j))G_N(\varepsilon(z_i-z_j))|\prod_{\{i,j\}\in\widetilde\Pi}|G_N(\varepsilon(h_i-z_j)+h_1)|\times\nonumber\\
&\times|G_N(h_1)-G_N(\varepsilon h_2+h_1)||G_N(h_1)|^m\prod_{\{1,j\}_{j\geq2}\in\Pi}|G_N(\varepsilon h_j)G_N(\varepsilon z_j)|\ud h_1 \ud h_{2:p}\ud z_{2:p},\label{1.6}
\end{align}
for some $m\geq0$. 
We begin by estimating the integral in $h_1$. By H\"older's inequality, 
\begin{align}
	&\int_{2\Lambda}|G_N(h_1)-G_N(\varepsilon h_2+h_1)|\prod_{\{i,j\}\in\widetilde\Pi}|G_N(\varepsilon(h_i-z_j)+h_1)||G_N(h_1)|^m\ud h_1\nonumber\\
	\leq&\left(\int_{2\Lambda}|G_N(h_1)-G_N(\varepsilon h_2+h_1)|^2\ud h_1\right)^\frac12\left(\int_{2\Lambda}\prod_{\{i,j\}\in\widetilde\Pi}|G_N(\varepsilon(h_i-z_j)+h_1)|^2|G_N(h_1)|^{2m}\ud h_1\right)^\frac12.\label{4.23}
\end{align}
For the second term in \eqref{4.23}, since $v$ is compact supported, for some $q>0$ and $C>0$, whenever $v(h_{2:p})v(z_{2:p})\neq0$,
\begin{align*}
	&\int_{2\Lambda}\prod_{\{i,j\}\in\widetilde\Pi}|G_N(\varepsilon(h_i-z_j)+h_1)|^2|G_N(h_1)|^{2m}\ud h_1
	\lesssim\int_{|x|\leq C}(1+|\log d(x)|)^q\ud x\lesssim1,
\end{align*}
where in the first inequality we used H\"older's inequality and \eqref{C3_2}.
For the first term in \eqref{4.23}, by \eqref{C0} and \eqref{C0_1}, for almost every $x,y\in\RR^2$, $$|G_N(x)-G_N(y)|\leq\int_\Lambda|\widetilde\rho_N(z)||G(x-z)-G(y-z)|\ud z\lesssim \int_\Lambda|\widetilde\rho_N(z)|\left(|x-y|+\left|\log\frac{d(x-z)}{d(y-z)}\right|\right)\ud z,$$
where $\widetilde\rho_N=\rho_N*\rho_N$.
Hence, by H\"older's inequality, whenever $v(h_{2:p})\neq0$,
\begin{align}
	\int_{2\Lambda}|G_N(x)-G_N(\varepsilon h_2+x)|^2\ud x
	\lesssim&\|\widetilde\rho\|_{L^1(\RR^2)}\left(\varepsilon^2\|\widetilde\rho\|_{L^1(\RR^2)}+
	\int_\Lambda\ud z\int_{2\Lambda}\ud x \;|\widetilde\rho_N(z)|\left|\log\frac{d(\varepsilon h_2+x-z)}{d(x-z)}\right|^2\right).\label{integral1}
\end{align}
To control the integral in the right-hand side, fix $z\in\Lambda$ and $h_2\in\RR^2$ such that $v(h_{2:p})\neq0$, and decompose $2\Lambda$ into two regions: $\Omega_1=\{x:d(x-z)<2d(\varepsilon h_2)\}$ and $\Omega_2=2\Lambda\backslash\Omega_1$. 
On $\Omega_1$, we have $d(\varepsilon h_2+x-z)\le 3d(\varepsilon h_2)$, and therefore,
since $v$ is compactly supported, for some constant $C>0$,
$$\int_{\Omega_1}\left|\log\frac{d(\varepsilon h_2+x-z)}{d(x-z)}\right|^2\ud x\lesssim\int_{|x|\leq C}\1_{d(x)\leq3d(\varepsilon h_2)}|\log d(x)|^2\ud x\lesssim\int_\Lambda\1_{|x|\leq3d(\varepsilon h_2)}|\log|x||^2\ud x.$$
Let $\varepsilon$ be small enough so that $|\varepsilon h_2|\le\varepsilon R \leq\frac12$. Then
$3d(\varepsilon h_2)=3\varepsilon|h_2|\lesssim\varepsilon$, and the above is bounded by
$$\int_{|x|\lesssim\varepsilon}|\log|x||^2\ud x = \varepsilon^2\int_{|x|\lesssim1}|\log|\varepsilon x||^2\ud x\lesssim\varepsilon^2|\log\varepsilon|^2.$$
On $\Omega_2$, since $$\left|\frac{d(\varepsilon h_2+x-z)}{d(x-z)}-1\right|\leq\frac{d(\varepsilon h_2)}{d(x-z)}\leq\frac12,$$
we can use $|\log(1+t)|\lesssim |t|$ for $|t|<\frac12$ to obtain
$$\left|\log\frac{d(\varepsilon h_2+x-z)}{d(x-z)}\right|\lesssim \left|\frac{d(\varepsilon h_2+x-z)}{d(x-z)}-1\right|\leq\frac{d(\varepsilon h_2)}{d(x-z)}.$$
Then, similarly as above, when $\varepsilon$ is small enough,
\begin{align*}
	\int_{\Omega_2}\left|\log\frac{d(\varepsilon h_2+x-z)}{d(x-z)}\right|^2\ud x
	\lesssim& \int_{3\Lambda}\frac{d(\varepsilon h_2)^2}{d(x)^2}\1_{d(x)\geq2d(\varepsilon h_2)}\ud x\lesssim d(\varepsilon h_2)^2|\log d(\varepsilon h_2)|\lesssim\varepsilon^2|\log\varepsilon|.
\end{align*}
Substituting these bounds into \eqref{integral1} and using \eqref{C2_1}, we get
\begin{align*}
	\int_{2\Lambda}|G_N(x)-G_N(\varepsilon h_2+x)|^2\ud x
	\lesssim\varepsilon^2|\log\varepsilon|^2\|\widetilde\rho\|_{L^1(\RR^2)}^2.
\end{align*}
Combining this with Young's inequality and \eqref{C3_2}, we obtain that \eqref{1.6} is bounded by a constant times
\begin{align*}
	&\varepsilon|\log\varepsilon|\iint v(h_{2:p})v(z_{2:p})\prod_{\substack{\{i,j\}\in\Pi\\i,j\neq1}}|G_N(\varepsilon(h_i-h_j))G_N(\varepsilon(z_i-z_j))|\prod_{\{1,j\}_{j\geq2}\in\Pi}|G_N(\varepsilon h_j)G_N(\varepsilon z_j)|\ud h_{2:p}\ud z_{2:p}\\
	\lesssim&\,\varepsilon|\log\varepsilon|\left(\int_{|x|\leq 2R}(1+|\log d(\varepsilon x)|)^{|\Pi|}\ud x\right)^2\lesssim\varepsilon|\log\varepsilon|^{2p},
\end{align*}
where $R$ is defined in \eqref{def:R}. In the last inequality, we used that $2|\Pi|\leq2p-2$.
We have therefore proved that $$\sup_N\EE|W_p^{\varepsilon,N}(u)-V_p^{\varepsilon,N}(u)|^2\lesssim\varepsilon|\log\varepsilon|^{2p}.$$

We next show that $\lim_{\varepsilon\to0,N\to\infty}\EE|V_p^{\varepsilon,N}(u)-V_p(u)|^2=0.$ Since $V_p$ is the $L^2(\PP)$ limit of $V^N_p$ and $\widehat v(0)=1$, it remains to estimate
\begin{align*}
	V_p^{\varepsilon,N}(u)-V_p^N(u)=\frac1p\int v^\varepsilon(x_2-x_1,\cdots,x_p-x_1) (\Wick{|u_N(x_1)|^2\cdots|u_N(x_p)|^2}-\Wick{|u_N(x_1)|^{2p}})\ud x_{1:p}.
\end{align*}
By Wick's rule, $\EE|V_N^{\varepsilon,p}(u)-V_N^p(u)|^2$ can again be expanded as a finite sum of terms of the form
\begin{align*}
	&\int_{\Lambda^p}\int_{\Lambda^p}v^\varepsilon(x_2-x_1,\cdots,x_p-x_1)v^\varepsilon(y_2-y_1,\cdots,y_p-y_1)\prod_{\{i,j\}\in\Pi}G_N(x_i-y_j)\times\\
	&\times(G_N(x_1-y_1)-G_N(x_2-y_1))\ud x_{1:p}\ud y_{1:p}\,.
\end{align*}
Applying the estimate derived above for \eqref{1.6}, we obtain $$\sup_N\EE|V_p^{\varepsilon,N}(u)-V_p^N(u)|^2\lesssim\varepsilon|\log\varepsilon|^{2p},$$
which proves \eqref{1.1}.

\end{proof}

We are now in a position to prove Proposition \ref{Prop}. 
Since we already have the uniform lower bound on $W_p^{\varepsilon,N}$ in \eqref{lower bound:W_N}, the key point is to estimate the convergence rate of the Cauchy sequence $\EE |W_p^{\varepsilon,M}(u)-W_p^{\varepsilon,N}(u)|^2$.
Here we emphasize that the argument used in \cite[Lemma~4.6]{FKSS25} cannot be used here. In the quartic case, as $V^\varepsilon-W^\varepsilon$ belongs to the second Wiener chaos, the Taylor expansion of $\exp(V^\varepsilon-W^\varepsilon)$ is summable by hypercontractivity. 
For $p\ge3$, however, $V_p^{\varepsilon,N}-W_p^{\varepsilon,N}$ no longer belongs to the second Wiener chaos and similar argument leads to blow up for general $p$. 
Therefore, we prove the uniform exponential integrability by Nelson's argument directly for $W_p^{\varepsilon,N}$.
Here we use the remainder estimate derived from \eqref{1.6} and carefully balance the dependence on $\varepsilon$ and $N$ to conclude the proof.

\begin{proof}[Proof of Proposition \ref{Prop}]
	First, we estimate $\sup_{0<\varepsilon\le(4R)^{-1}}\EE |W_p^{\varepsilon,M}(u)-W_p^{\varepsilon,N}(u)|^2$ for $M>N$. 
	By the triangle inequality,
	\begin{align*}
		\EE|W_p^{\varepsilon,M}(u)-W_p^{\varepsilon,N}(u)|^2\leq2 \EE|V_p^{\varepsilon,M}(u)-V_p^{\varepsilon,N}(u)|^2+2\EE|(V_p^{\varepsilon,M}(u)-W_p^{\varepsilon,M}(u))-(V_p^{\varepsilon,N}(u)-W_p^{\varepsilon,N}(u))|^2\,.
	\end{align*}
We estimate the two terms on the right-hand side separately.
	
	\noindent {\bf Step 1.}
	We begin with $\EE|V_p^{\varepsilon,M}(u)-V_p^{\varepsilon,N}(u)|^2$ for $M> N$.
	Set $u_{N,M}:=u_M-u_N$. By \eqref{Green},
	we have\begin{align*}
		\EE(u_{N,M}(x)\overline{u_N(y)})=\rho_N*(\rho_M-\rho_N)*G(x-y)=:\rho_N*F_{N,M}(x-y),
	\end{align*}
and\begin{align*}
	\EE(u_{N,M}(x)\overline{u_{N,M}(y)})=(\rho_M-\rho_N)*F_{N,M}(x-y).
\end{align*}
	By \eqref{def:V_N}, it follows that
	\begin{align*}
		p(V_p^{\varepsilon,M}(u)-V_p^{\varepsilon,N}(u))
		=&\int_{\Lambda^p} v^\varepsilon(x_2-x_1,\cdots,x_p-x_1)\left(\Wick{\prod_{j=1}^p|u_N(x_j)+u_{N,M}(x_j)|^2}
		-\Wick{\prod_{j=1}^p|u_N(x_j)|^2}\right)\ud x_{1:p}\\
		=&\sum_{\substack{b_1+\cdots+b_{2p}>0\\b_i\in\{0,1\}}} V^\varepsilon_{N,M}(b_{1:2p})\,,
	\end{align*}
where $$V^\varepsilon_{N,M}(b_{1:2p})=\int_{\Lambda^p} v^\varepsilon(x_2-x_1,\cdots,x_p-x_1)\Wick{\prod_{k=1}^p\left(u_N(x_k)^{1-b_{2k-1}}\overline{u_N}(x_k)^{1-b_{2k}}u_{N,M}(x_k)^{b_{2k-1}}\overline{u_{N,M}}(x_k)^{b_{2k}}\right)}\ud x_{1:p}\,.$$
For fixed $b_{1:2p}$ with $b_1+\cdots+b_{2p}>0$ and $b_i\in\{0,1\}$, Wick's rule
implies that $\EE|V_{N,M}^\varepsilon(b_{1:2p})|^2$ can be written as a finite sum of
integrals involving products of covariance factors of the form
\[
G_N(x-y),\quad (\rho_N*F_{N,M})(x-y),\quad ((\rho_M-\rho_N)*F_{N,M})(x-y).
\]
Since $b_1+\cdots+b_{2p}>0$, at least one factor of the last two types must appear. 
Moreover, by \eqref{C3_2}, we bound all factors of the first
type by $|G_N|\lesssim 1+\log N$, and then apply H\"older's inequality to the remaining ones.
Since $\int_{\Lambda^{p-1}} v^\varepsilon(x)\ud x=1$, it follows that
\begin{align}\label{4.26}
	\EE|V_{N,M}^\varepsilon(b_{1:2p})|^2
	\lesssim(1+\log N)^r\|\rho_N*F_{N,M}\|_{L^{s+t}(\Lambda)}^s\|(\rho_M-\rho_N)*F_{N,M}\|_{L^{s+t}(\Lambda)}^t,
\end{align}
for some integers $r,s,t\ge0$ with $r+s+t=2p$ and $s+t\ge1$.
By \eqref{C3_1} (with $\widetilde\rho$ replaced by $\rho$), for any $M>N$ and $x\in\Lambda$, we have
\begin{align}\label{bound:GN}
	|F_{N,M}(x)|\lesssim(1+|\log(N|x|)|)\wedge\frac{1}{N^2|x|^2}.
\end{align}
Then, for any $q\geq1$, we have
\begin{align*}
	\int_{\Lambda}|F_{N,M}(x)|^q\ud x\lesssim&\int_{|x|\leq N^{-1}}(1+|\log(N|x|)|)^q\ud x +N^{-2q}\int_{N^{-1}<|x|\leq1}|x|^{-2q}\ud x\\
	=&2\pi N^{-2}\int_0^1\left(1+\log\frac{1}{r}\right)^qr\ud r + 2\pi N^{-2q}\int_{N^{-1}}^1 r^{1-2q}\ud r,
\end{align*}
and thus \begin{align}\label{F_{N,M}}
	\|F_{N,M}\|_{L^1(\Lambda)}\lesssim N^{-2}\log N\quad\text{and}\quad \|F_{N,M}\|_{L^q(\Lambda)}^q\lesssim N^{-2},\, q>1.
\end{align}
Using Young's convolution inequality together with $\int_{\Lambda}|\rho_N(x)|\ud x\leq\int_{\RR^2}|\rho(x)|\ud x$, we deduce from \eqref{4.26} that
\begin{align*}
	\EE|V_{N,M}^\varepsilon(b_{1:2p})|^2
	\lesssim(1+\log N)^r\|F_{N,M}\|_{L^{s+t}(\Lambda)}^{s+t}\lesssim N^{-2}(1+\log N)^{2p}.
\end{align*}
Therefore, \begin{align*}
	\EE|V^{\varepsilon,p}_M(u)-V^{\varepsilon,p}_N(u)|^2\lesssim N^{-2}(1+\log N)^{2p}.
\end{align*}

	\noindent {\bf Step 2.}
	Next we estimate $\EE|(V_p^{\varepsilon,M}(u)-W_p^{\varepsilon,M}(u))-(V_p^{\varepsilon,N}(u)-W_p^{\varepsilon,N}(u))|^2$ for $M\geq N$.
	It suffices to deal with each remainder term in \eqref{gph3}, which we denote by $\fR^{\varepsilon,p}_N$.
	For $\fR_M^{\varepsilon,p}-\fR_N^{\varepsilon,p}$, we add and subtract the graph obtained from $\fR_M^{\varepsilon,p}$ by replacing
	every dashed factor $G_M$ with $G_N$.
	Then the second part of $\fR_M^{\varepsilon,p}- \fR_N^{\varepsilon,p}$ has the form as $V_{N,M}^\varepsilon(b_{1:2p})$ with additional $G_N$ in the integral, and can be estimated as above. The first part is 
	\begin{align*}
		\left(\begin{tikzpicture}[baseline=-5]
			\node[dot] (x1) at (0,0) {};
			\draw[Phi1] (x1) -- ++(-0.1,0.25);
			\draw[Phi1] (x1) -- ++(0.1,0.25);
			\node[dot] (x2) at (0.3,0) {};
			\draw[Phi1] (x2) -- ++(0.1,0.25);
			\node[dot] (x1') at (0.6,0) {};
			\draw[Phi1] (x1') -- ++(-0.1,0.25);
			\node[dot] (x2') at (0.9,0) {};
			\node at (1.3,0) {$\cdots$};
			\node[dot] (x3) at (1.6,0) {};
			\draw[Phi1] (x3) -- ++(-0.1,0.25);
			\node[dot] (x4) at (1.9,0) {};
			\draw[Phi1] (x4) -- ++(-0.1,0.25);
			\draw[Phi1] (x4) -- ++(0.1,0.25);
			\node[dot] (x5) at (2.2,0) {};
			\draw[Phi1] (x5) -- ++(0.1,0.25);
			\draw[C,bend left=60] (x2) to (x1);
			\draw[C,bend left=60] (x1') to (x1);
			\draw[C,bend left=60] (x2') to (x1);
			\draw[C,bend left=60] (x3) to (x1);
			\draw[C,bend left=60] (x4) to (x1);
			\draw[C,bend left=60] (x5) to (x1);
			\draw[Cg,bend right=50] (x2) to (x2');
			\draw[Cg,bend left=70] (x1') to (x2');
			\draw[Cg,bend left=70] (x3) to (x5);
		\end{tikzpicture}-\begin{tikzpicture}[baseline=-5]
			\node[dot] (x1) at (0,0) {};
			\draw[Phi1] (x1) -- ++(-0.1,0.25);
			\draw[Phi1] (x1) -- ++(0.1,0.25);
			\node[dot] (x2) at (0.3,0) {};
			\node[dot] (x1') at (0.6,0) {};
			\draw[Phi1] (x1') -- ++(0.1,0.25);
			\draw[Phi1] (x1') -- ++(-0.1,0.25);
			\node[dot] (x2') at (0.9,0) {};
			\node at (1.3,0) {$\cdots$};
			\node[dot] (x3) at (1.6,0) {};
			\draw[Phi1] (x3) -- ++(-0.1,0.25);
			\node[dot] (x4) at (1.9,0) {};
			\draw[Phi1] (x4) -- ++(-0.1,0.25);
			\draw[Phi1] (x4) -- ++(0.1,0.25);
			\node[dot] (x5) at (2.2,0) {};
			\draw[Phi1] (x5) -- ++(0.1,0.25);
			\draw[C,bend left=60] (x2) to (x1);
			\draw[C,bend left=60] (x1') to (x1);
			\draw[C,bend left=60] (x2') to (x1);
			\draw[C,bend left=60] (x3) to (x1);
			\draw[C,bend left=60] (x4) to (x1);
			\draw[C,bend left=60] (x5) to (x1);
			\draw[Cg,bend right=50] (x2) to (x2');
			\draw[Cg,bend left=70] (x1') to (x2');
			\draw[Cg,bend left=70] (x3) to (x5);
		\end{tikzpicture}\right)-
	\left(\begin{tikzpicture}[baseline=-5]
		\node[dot] (x1) at (0,0) {};
		\draw[Phi1] (x1) -- ++(-0.1,0.25);
		\draw[Phi1] (x1) -- ++(0.1,0.25);
		\node[dot] (x2) at (0.3,0) {};
		\draw[Phi1] (x2) -- ++(0.1,0.25);
		\node[dot] (x1') at (0.6,0) {};
		\draw[Phi1] (x1') -- ++(-0.1,0.25);
		\node[dot] (x2') at (0.9,0) {};
		\node at (1.3,0) {$\cdots$};
		\node[dot] (x3) at (1.6,0) {};
		\draw[Phi1] (x3) -- ++(-0.1,0.25);
		\node[dot] (x4) at (1.9,0) {};
		\draw[Phi1] (x4) -- ++(-0.1,0.25);
		\draw[Phi1] (x4) -- ++(0.1,0.25);
		\node[dot] (x5) at (2.2,0) {};
		\draw[Phi1] (x5) -- ++(0.1,0.25);
		\draw[C,bend left=60] (x2) to (x1);
		\draw[C,bend left=60] (x1') to (x1);
		\draw[C,bend left=60] (x2') to (x1);
		\draw[C,bend left=60] (x3) to (x1);
		\draw[C,bend left=60] (x4) to (x1);
		\draw[C,bend left=60] (x5) to (x1);
		\draw[Cr,bend right=50] (x2) to (x2');
		\draw[Cr,bend left=70] (x1') to (x2');
		\draw[Cr,bend left=70] (x3) to (x5);
	\end{tikzpicture}-\begin{tikzpicture}[baseline=-5]
		\node[dot] (x1) at (0,0) {};
		\draw[Phi1] (x1) -- ++(-0.1,0.25);
		\draw[Phi1] (x1) -- ++(0.1,0.25);
		\node[dot] (x2) at (0.3,0) {};
		\node[dot] (x1') at (0.6,0) {};
		\draw[Phi1] (x1') -- ++(0.1,0.25);
		\draw[Phi1] (x1') -- ++(-0.1,0.25);
		\node[dot] (x2') at (0.9,0) {};
		\node at (1.3,0) {$\cdots$};
		\node[dot] (x3) at (1.6,0) {};
		\draw[Phi1] (x3) -- ++(-0.1,0.25);
		\node[dot] (x4) at (1.9,0) {};
		\draw[Phi1] (x4) -- ++(-0.1,0.25);
		\draw[Phi1] (x4) -- ++(0.1,0.25);
		\node[dot] (x5) at (2.2,0) {};
		\draw[Phi1] (x5) -- ++(0.1,0.25);
		\draw[C,bend left=60] (x2) to (x1);
		\draw[C,bend left=60] (x1') to (x1);
		\draw[C,bend left=60] (x2') to (x1);
		\draw[C,bend left=60] (x3) to (x1);
		\draw[C,bend left=60] (x4) to (x1);
		\draw[C,bend left=60] (x5) to (x1);
		\draw[Cr,bend right=50] (x2) to (x2');
		\draw[Cr,bend left=70] (x1') to (x2');
		\draw[Cr,bend left=70] (x3) to (x5);
	\end{tikzpicture}\right)\,,
	\end{align*}
	where the red wavy line represents $u_M$ and green (resp. black) dashed line represents $G_M$ (resp. $G_N$). A replacement of the dashed factors, one edge at a time, writes the preceding difference as a finite sum of terms of the form
	\begin{align}
		&\int_{\Lambda^p}\ud x_{1:p} v^\varepsilon(x_2-x_1,\cdots,x_p-x_1)F_{N,M}(x_2-x_1)\prod_{\{i_1,j_1\}\in\Pi_1}G_M(x_{i_1}-x_{j_1})\prod_{\{i_2,j_2\}\in\Pi_2}G_N(x_{i_2}-x_{j_2})\times\nonumber\\
		&\times\left(\begin{tikzpicture}[baseline=-5]
			\node[dot] (x1) at (0,0) {};
			\draw[Phi1] (x1) -- ++(-0.1,0.25);
			\draw[Phi1] (x1) -- ++(0.1,0.25);
			\node[dot] (x2) at (0.3,0) {};
			\draw[Phi1] (x2) -- ++(0.1,0.25);
			\node[dot] (x1') at (0.6,0) {};
			\draw[Phi1] (x1') -- ++(-0.1,0.25);
			\node at (1.1,0) {$\cdots$};
			\node[dot] (x3) at (1.6,0) {};
			\draw[Phi1] (x3) -- ++(-0.1,0.25);
			\node[dot] (x4) at (1.9,0) {};
			\draw[Phi1] (x4) -- ++(-0.1,0.25);
			\draw[Phi1] (x4) -- ++(0.1,0.25);
			\node[dot] (x5) at (2.2,0) {};
			\draw[Phi1] (x5) -- ++(0.1,0.25);
		\end{tikzpicture}-\begin{tikzpicture}[baseline=-5]
			\node[dot] (x1) at (0,0) {};
			\draw[Phi1] (x1) -- ++(-0.1,0.25);
			\draw[Phi1] (x1) -- ++(0.1,0.25);
			\node[dot] (x2) at (0.3,0) {};
			\node[dot] (x1') at (0.6,0) {};
			\draw[Phi1] (x1') -- ++(0.1,0.25);
			\draw[Phi1] (x1') -- ++(-0.1,0.25);
			\node at (1.1,0) {$\cdots$};
			\node[dot] (x3) at (1.6,0) {};
			\draw[Phi1] (x3) -- ++(-0.1,0.25);
			\node[dot] (x4) at (1.9,0) {};
			\draw[Phi1] (x4) -- ++(-0.1,0.25);
			\draw[Phi1] (x4) -- ++(0.1,0.25);
			\node[dot] (x5) at (2.2,0) {};
			\draw[Phi1] (x5) -- ++(0.1,0.25);
		\end{tikzpicture}\right),\label{G_N}
	\end{align}
where $i<j$ for any $\{i,j\}\in\Pi_1\cup\Pi_2$.
	Then by a similar argument as \eqref{1.6}, we obtain that the squared $L^2(\PP)$ norm of \eqref{G_N} is bounded by a constant times
	\begin{align}
		&\varepsilon|\log\varepsilon|\left(\int\ud h_{2:p} v(h_{2:p})|F_{N,M}(\varepsilon h_2)|\prod_{\substack{\{i,j\}\in\Pi_1\cup\Pi_2\\i,j\neq1}}(1+\log(d(\varepsilon (h_i-h_j))))\prod_{\{1,j\}_{j>1}\in\Pi_1\cup\Pi_2}(1+\log(d(\varepsilon h_j)))\right)^2\nonumber\\
		\lesssim&\varepsilon|\log\varepsilon|\left(\int_{|x|\leq C}|F_{N,M}(\varepsilon x)|^2\ud x\right)
		\left(\int_\Lambda(1+|\log(\varepsilon|x|)|)^{2p-2}\ud x\right)\nonumber\\
		\lesssim&\varepsilon^{-1}|\log\varepsilon|^{2p}\int_{|x|\leq C\varepsilon}|F_{N,M}(x)|^2\ud x.\label{G_N1}
	\end{align}
Here $C>0$ is a constant depending only on the support of $v$. We choose
$\varepsilon>0$ sufficiently small so that $\{|x|\le C\varepsilon\}\subset \Lambda$.
If $CN\varepsilon\ge1$, then by \eqref{F_{N,M}}, the right-hand side of \eqref{G_N1} is bounded by a constant times
	\begin{align*}
		\varepsilon^{-1}|\log\varepsilon|^{2p}N^{-2}\lesssim N^{-1}(1+\log N)^{2p}\,.
	\end{align*}
If instead $CN\varepsilon<1$, by \eqref{bound:GN}, \eqref{G_N1} is bounded by
	\begin{align*}
		\varepsilon^{-1}|\log\varepsilon|^{2p}\int_{|x|\leq C\varepsilon}(1+|\log(N|x|)|)^2\ud x
		=&\varepsilon|\log\varepsilon|^{2p}\int_{|x|\leq C}(1+|\log(N\varepsilon|x|)|)^2\ud x\\ \lesssim&\varepsilon\left(1+|\log CN\varepsilon|+\log N\right)^{2p}\left(1+|\log CN\varepsilon|\right)^2
		\lesssim N^{-1}(1+\log N)^{2p}\,,
	\end{align*}
where in the last step we used that \(a(1+|\log a|)^{2p+2}\lesssim 1\) for \(a=CN\varepsilon<1\).
In summary, we have
\begin{align}
	\sup_{0<\varepsilon\leq (4R)^{-1}}\EE|W_p^{\varepsilon,M}(u)-W_p^{\varepsilon,N}(u)|^2\lesssim  N^{-\frac12}.\label{3.15}
\end{align}

\noindent {\bf Step 3. Proof of \eqref{1.2}.} 
By \eqref{1.1}, it suffices to prove that for any $m\geq1$,
	\begin{align}\label{3.17}
		\sup_{\varepsilon,N}\|e^{-W_p^{\varepsilon,N}}\|_{L^m(\mu_0)}<\infty.
	\end{align}
We have
	\begin{align*}
	\|e^{-W_p^{\varepsilon,N}}\|_{L^m(\mu_0)}^m
	=&\int_0^\infty \mu_0(e^{-m W_p^{\varepsilon,N}}>t)\ud t
	\leq e+\int_e^\infty \mu_0(m W_p^{\varepsilon,N}<-\log t)\ud t \,.
\end{align*}
Let $C$ be the constant in \eqref{lower bound:W_N}, and for $t>e$, choose the unique integer $N_0=N_0(t)$ such that
\[mC(\log N_0)^p\le\log t-1
<mC(\log(N_0+1))^p.\]
If $N\le N_0$, by \eqref{lower bound:W_N},
$$\mu_0(mW_p^{\varepsilon,N}<-\log t)\leq\mu_0(m W_p^{\varepsilon,N}<-mC(\log N)^p-1)=0.$$
If $N>N_0$, then for every integer $r\ge2$,
Chebyshev's inequality and \eqref{3.15} give
\begin{align*}
	\mu_0\bigl(mW_p^{\varepsilon,N}< -\log t\bigr)
	&\le \mu_0\bigl(m|W_p^{\varepsilon,N}-W_p^{\varepsilon,N_0}|>1\bigr)
	\le m^r\|W_p^{\varepsilon,N}-W_p^{\varepsilon,N_0}\|_{L^r(\mu_0)}^r\\
	&\le \bigl(Cm r^p N_0^{-1/4}\bigr)^r\le \left(Cm r^p
	\exp\left[-c(\log t)^{1/p}\right]\right)^r
\end{align*}
for some constant $c,C>0$. In the third inequality we use hyper-contractivity \cite[Theorem 5.10]{Jan97} since $W_p^{\varepsilon,N}$ lives in at most $2p$-th Wiener chaos.
Choose an integer $r\ge2$ comparable to
$(eCm)^{-1/p}\exp(c(\log t)^{1/p}/p)$. Then, after decreasing $c>0$ if necessary, we have
\[\mu_0\bigl(mW_p^{\varepsilon,N}< -\log t\bigr)\lesssim
\exp\!\left[-c\exp\bigl(c(\log t)^{1/p}/p\bigr)\right],
\]
which is integrable on $(e,\infty)$. This proves \eqref{3.17}. Together with the $L^2$ convergence in \eqref{1.1}, uniform integrability yields \eqref{1.2}.

Taking \(m=1\) in \eqref{1.2}, we obtain
\[
\cZ_p^{\varepsilon,N}
=\int e^{-W_p^{\varepsilon,N}}\,\ud\mu_0
\to
\int e^{-V_p}\,\ud\mu_0
=\cZ_p,\quad \varepsilon\to0,N\to\infty,
\]
which proves \eqref{partition function conv}. Moreover, taking \(m=2\) in \eqref{1.2} and using the convergence of the normalizing constants, we get
\[
g_p^{\varepsilon,N}:=
(\cZ_p^{\varepsilon,N})^{-1}e^{-W_p^{\varepsilon,N}}
\to
g_p:=\cZ_p^{-1}e^{-V_p}
\quad\text{in}\;L^2(\mu_0).
\]
Therefore, by Lemma~\ref{lem:Gaussian-moment-maps}, as $\varepsilon\to0$ and $N\to\infty$,
\[
\begin{aligned}
	&\left\|
	\int |u^{\otimes k}\rangle\langle u^{\otimes k}|
	\,\bigl(\ud\mu_p^{\varepsilon,N}(u)-\ud\mu_p(u)\bigr)
	\right\|_{\mathfrak S^2}
	\le
	C_k\|g_p^{\varepsilon,N}-g_p\|_{L^2(\mu_0)}
	\to 0.
\end{aligned}
\]
This proves \eqref{matrices conv}.

\end{proof}

\section{A priori estimates for the quantum Gibbs state}\label{sec4}

We now establish the uniform estimates on the quantum Gibbs state \eqref{def:Gibbs state} that will be used in the variational analysis of Sections~\ref{sec5} and \ref{sec6}. We first prove a logarithmic lower bound for the interaction, which is the Fock-space analogue of the classical lower bound in Lemma~\ref{lem:lower bound of WN}:
the positive \(p\)-body interaction controls a \(p\)-th order local average, and the lower-order terms are absorbed by the same finite-covering argument.
We then derive moment estimates for the quasi-free state, arbitrary polynomial moments of the particle number, and a high-momentum correlation estimate. 

\subsection{The interaction operator on Fock space}\label{sec4.1}

We denote by $h=-\Delta+1$ the one-body operator on $L^2(\Lambda)$. Then the Gaussian state $\Gamma_0$ in \eqref{def:Gaussian state} is $Z_0^{-1}e^{-\lambda\ud\Gamma(h)}$.
We write $\mathbb W=\bigoplus_{n=0}^\infty \mathbb W_n,$
where \(\mathbb W_n\) acts on the \(n\)-th Fock space \(\cH^n\) as
\begin{align}\label{def:quantum-W1}
	\WW_n=&\sum_{l=2}^{p}\lambda^l\sum_{i_{1:l}=1}^n R_l(v^\varepsilon)(x_{i_2}-x_{i_1},\cdots,x_{i_l}-x_{i_1})-(1+\vartheta)\lambda n+E_0\,,
\end{align}
where $R_p(v^\varepsilon):=p^{-1}v^\varepsilon$ and $R_l(v^\varepsilon)$, $2\leq l\leq p-1$, are defined in \eqref{def:R_l}.
With this notation, the interacting Hamiltonian \(\mathbb H_\lambda\) in \eqref{Hamiltonian} can be written as $$\HH_\lambda=\lambda\ud\Gamma(h)+\WW.$$
We first establish a lower bound for \(\WW\). Together with the positivity of \(\ud\Gamma(h)\), this implies that \(\HH_\lambda\) is
bounded from below and hence can be defined as a self-adjoint operator by the Friedrichs extension. 
The proof is similar to that of  Lemma~\ref{lem:lower bound of WN}.
After bounding the effective kernels \(R_l(v^\varepsilon)\) by the marginals of \(v^\varepsilon\) with logarithmic weights, the interaction on each \(n\)-particle sector is reduced to a positive \(p\)-body term minus lower-order terms. The positivity and finite range of \(v\) allow us to control these lower-order terms by the \(p\)-body term through local averages.

\begin{lem}\label{lem 4.1}
	Fix $\eta>0$. There exists a constant $C>0$, such that for every $0<\lambda\leq1$ and $\lambda^\eta\leq\varepsilon\leq 1/(4R)$, one has
	\begin{align}\label{lower bound:W}
		\WW\geq-C|\log\lambda|^p.
	\end{align}
	Here $R$ is the support radius in \eqref{def:R}.
\end{lem}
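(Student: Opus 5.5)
Since $\WW$ commutes with $\fN$, it suffices to prove $\WW_n\ge -C|\log\lambda|^p$ on each $\cH^n$, uniformly in $n$. The plan is to mimic Lemma~\ref{lem:lower bound of WN}, with $\lambda N_0$ playing the role of $G_N(0)$ and the multiplication operators $\lambda^l\sum_{i_{1:l}}(\Pi_l v^\varepsilon)(x_{i_2}-x_{i_1},\dots)$ playing the role of $\fI_{\varepsilon,l}(|u_N|^2)$.

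\textbf{Step 1: bounds on the coefficients.} We first bound the coefficients. By Appendix~\ref{appB} (or directly from \eqref{def:F_l}, using $|G(x)|\lesssim 1+|\log d(x)|$ and the finite range of $v^\varepsilon$), one should get $|f_m^\varepsilon|\lesssim |\log\varepsilon|^{p-m}\,\Pi_m v^\varepsilon$ pointwise for $2\le m\le p-1$, and $|f_1^\varepsilon|+|f_0^\varepsilon|^{1/p}\cdot|\log\varepsilon|^{p-1}\lesssim|\log\varepsilon|^{p-1}$. The pointwise bound is the delicate point, since $G$ is unbounded; the log singularities of $G(x_a-x_b)$ integrated against $v^\varepsilon$ on an $\varepsilon$-ball give $|\log\varepsilon|$ factors. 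If a pointwise bound fails because of contractions among the surviving variables $x_1,\dots,x_l$, I would keep those $G$ factors and bound them separately.

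Since $\lambda N_0=\Tr(\lambda h^{-1}\cdots)\sim \sum_k \lambda/(e^{\lambda\langle k\rangle^2}-1)\lesssim|\log\lambda|$, and $|\log\varepsilon|\le\eta|\log\lambda|$, \eqref{def:R_l} gives
\[
|R_l(v^\varepsilon)|\lesssim|\log\lambda|^{p-l}\,\Pi_l v^\varepsilon,\qquad |1+\vartheta|\lesssim|\log\lambda|^{p-1},\qquad |E_0|\lesssim|\log\lambda|^{p}.
\]
Hence, as a multiplication operator on $\cH^n$,
\[
\WW_n\ge \tfrac1p\lambda^p\sum_{i_{1:p}}v^\varepsilon-C\sum_{l=1}^{p-1}|\log\lambda|^{p-l}\lambda^l\sum_{i_{1:l}}\Pi_lv^\varepsilon-C|\log\lambda|^p,
\]
with the $l=1$ term equal to $\lambda n$.

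\textbf{Step 2: pointwise reduction to \eqref{If}.} Fix a configuration $x_1,\dots,x_n$ and set $\mu:=\lambda\sum_{i}\delta_{x_i}$, a positive measure. Then
\[
\lambda^l\sum_{i_{1:l}}\Pi_lv^\varepsilon(x_{i_2}-x_{i_1},\dots)=\fI_{\varepsilon,l}(\mu),
\]
where $\fI_{\varepsilon,l}$ is extended to measures, and the sums run over all indices, including coincident ones. Coincident indices are harmless: the proof of \eqref{If} only used $f\ge0$, Fubini, the local averages $B^\varepsilon_{1,2}$ and the covering bound, and all of these apply verbatim to positive measures. Equivalently, one can mollify $\mu$ by $\varepsilon$-independent-ratio approximations and pass to the limit using the continuity of $v$. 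So I would restate the claim \eqref{If} for nonnegative measures,
\[
\fI_{\varepsilon,l}(\mu)\le C_l\,\fI_{\varepsilon,p}(\mu)^{l/p}.
\]
Here the lower bound \eqref{A} uses $v^\varepsilon\ge a_0\varepsilon^{-2(p-1)}$ on the set $A$, which also covers diagonal configurations.

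\textbf{Step 3: conclusion.} Young's inequality then gives
\[
C|\log\lambda|^{p-l}\fI_{\varepsilon,l}(\mu)\le\tfrac1{2p^2}\fI_{\varepsilon,p}(\mu)+C'|\log\lambda|^p,
\]
and summing over $l$ yields $\WW_n\ge -C|\log\lambda|^p$ uniformly in $n$, which is \eqref{lower bound:W}.

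\textbf{Main obstacle.} I expect the main obstacle to be Step 1, namely the pointwise control of $R_l(v^\varepsilon)$ by $|\log\lambda|^{p-l}\Pi_lv^\varepsilon$. This requires sign-free kernel bounds on $f_m^\varepsilon$ with the correct logarithmic power, uniformly for $\varepsilon\ge\lambda^\eta$, including Green-function factors between the retained variables. The measure-valued version of \eqref{If} in Step 2 is routine.
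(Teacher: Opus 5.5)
\textbf{Verdict: genuine gap in Step 1.} Your architecture matches the paper's: sector-wise reduction, coefficient bounds via $\lambda N_0\lesssim|\log\lambda|$ and $|\log\varepsilon|\le\eta|\log\lambda|$, then the local-average and covering argument run pointwise in the configuration (coincident indices allowed), then Young. Steps 2--3 are fine. The gap is the step you flag and leave open: the pointwise bound $|f_m^\varepsilon|\lesssim|\log\varepsilon|^{p-m}\Pi_m v^\varepsilon$, and with it $|R_l(v^\varepsilon)|\lesssim|\log\lambda|^{p-l}\Pi_lv^\varepsilon$, is neither proved nor true in general.

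\textbf{Why the marginal is the wrong majorant.} Take a factor $|\log d(x_a-x_b)|$ with $x_a$ retained and $x_b$ integrated. Dividing by $\Pi_mv^\varepsilon(x_{2:m})$ gives the average of $|\log d(x_a-x_b)|^{p-m}$ under the conditional law $v^\varepsilon(x_{2:m},\cdot)/\Pi_mv^\varepsilon(x_{2:m})$ of the integrated variables. Nothing in Assumption~\ref{assum:w} stops this conditional law from concentrating near the singular set $\{x_b=x_a\}$, for instance near the edge of $\supp\Pi_mv^\varepsilon$. So the ratio is not uniformly bounded.

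\textbf{The fix.} The majorant you need is an indicator kernel, not the marginal. From $v^\varepsilon\le\|v\|_{L^\infty}\varepsilon^{-2(p-1)}\1_{\{d\le R\varepsilon\}}$ and rescaling the integrated variables,
\[
\int v^\varepsilon\,|\log d(x_a-x_b)|^{p-m}\,\ud x_{m+1:p}\lesssim\varepsilon^{2-2m}|\log\varepsilon|^{p-m}\,\1_{\{\widetilde d_m(x_{1:m})\le R\varepsilon\}} ,
\]
and the same holds for $\Pi_lf_m^\varepsilon$ after integrating out $x_{l+1:m}$. This is what the paper does in \eqref{F:bound}--\eqref{v:bound}. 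The proof of \eqref{If} for $2\le l\le p-1$ immediately replaces $\Pi_lv^\varepsilon$ by $\varepsilon^{2-2l}\1_{\{\widetilde d_l\le R\varepsilon\}}$ (see \eqref{marginal-pointwise}) and uses nothing else. So your Step 2 goes through with this kernel in place of $\Pi_lv^\varepsilon$.

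\textbf{Contractions among retained variables.} These never appear. By \eqref{C_g2}, $C_\fm=0$ whenever $m_{ab}\neq0$ for some $1\le a<b\le l$, because such a graph would join two blue-pair vertices and so $\fg_\fm\notin\fG^N_p$. You have to use this fact rather than work around it. Your fallback of keeping those $G$ factors and bounding them separately would fail: a term of unfavourable sign carrying $|\log d(x_{i_1}-x_{i_2})|$ with $i_1\neq i_2$ is unbounded below as two particles collide. Meanwhile the $p$-body term $\lambda^p\sum_{i_{1:p}}v^\varepsilon$ stays bounded there, since $v^\varepsilon\in L^\infty$.
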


\begin{proof}
For $2\le l\le p-1$, recall the definition of $f_l^\varepsilon$ in \eqref{def:F_l}.
	Using the logarithmic bound \eqref{C0} for \(G\), we obtain
	\begin{align*}
		|f_l^\varepsilon(x_{2:l})|
		&\lesssim
		\sum_{\substack{\fm\in\fM_{p,l}\\ C_\fm\neq0}}
		\int_{\Lambda^{p-l}}
		v^\varepsilon(x_{2:p})
		\prod_{1\le a<b\le p}
		\bigl(1+|\log d(x_a-x_b)|\bigr)^{m_{ab}}
		\,\ud x_{l+1:p}.
	\end{align*}
	Here and below we set \(x_1=0\). The definition of the graph coefficients in \eqref{C_g2} implies that $C_\fm=0$ whenever the term \(\fg_\fm\) is not in $\fG_p^N$ (In particular, a contributing multi-index cannot contain an edge joining two of the $l$ external vertices; hence $m_{ab}=0$ if $1\le a<b\le l$). 
	Since \(\sum_{a<b}m_{ab}=p-l\), the Young's inequality
	$\prod_{\me=(a,b)} A_\me^{m_\me}\lesssim\sum_{\me:m_\me>0} A_\me^{p-l},$ 
	$A_\me\ge0,$ gives
	\begin{align}
		|f_l^{\varepsilon}(x_{2:l})|\lesssim&\sum_{\substack{1\leq a<b\leq p\\b>l}}\int_{\Lambda^{p-l}}v^\varepsilon(x_{2:p})|\log d(x_a-x_b)|^{p-l}\ud x_{l+1:p}\nonumber\\
		\lesssim&\,\sum_{i=1}^{p-1}\int_{\Lambda^{p-l}}v^\varepsilon(x_{2:p})|\log d(x_i-x_p)|^{p-l}\ud x_{l+1:p}
		=:\fM_lv^\varepsilon(x_{2:l})\,.\label{F:bound}
	\end{align}
Here in the second step we used the invariance of $v^\varepsilon$ under permutation.
For each $1\leq i\leq p-1$, we use \eqref{poisson1} and change the variables $x_j+m^x_j=\varepsilon z_j$ for $l+1\leq j\leq p-1$, and $x_p+m^x_p=\varepsilon (z_p+z_i)$ to get
\begin{align}
	\int_{\Lambda^{l-1}}|f_l^\varepsilon(x)|\ud x\lesssim\int_{|z_p|\leq 2R}|\log d(\varepsilon z_p)|^{p-l}\ud z_p\lesssim|\log\varepsilon|^{p-l}.\label{5.1}
\end{align}
The same rescaling gives, for the scalar terms $l=0,1$,
\begin{align}
	|f_l^{\varepsilon}|\lesssim&\sum_{i=1}^{p-1}\int_{\Lambda^{p-l}}v^\varepsilon(x_{2:p})|\log d(x_i-x_p)|^{p-l}\ud x_{2:p}\lesssim|\log\varepsilon|^{p-l}, \quad l=0,1.\label{f_0}
\end{align}
Recall from \cite[(3.2)]{LNR21} that $N_0=\Tr(\Gamma_0^{(1)})\lesssim\lambda^{-1}|\log\lambda|$. Then, it follows that
\begin{align}\label{R1}
	|\vartheta|\lesssim(|\log\lambda|+|\log\varepsilon|)^{p-1},\quad |E_0|\lesssim(|\log\lambda|+|\log\varepsilon|)^p.
\end{align} 
Next, for \(2\le l<m\le p-1\), since $v$ is symmetric, by \eqref{F:bound} and Young's inequality, we have
\begin{align*}
	|\log\lambda|^{m-l}|\Pi_lf_m^\varepsilon(x_{1:l-1})|\leq|\log\lambda|^{m-l}\int\fM_mv^\varepsilon(x_{1:m-1})\ud x_{l:m-1}\lesssim|\log\lambda|^{p-l}\Pi_lv^\varepsilon(x_{1:l-1})+\fM_lv^\varepsilon(x_{1:l-1}).
\end{align*}
Therefore, $R_l(v^\varepsilon)$ defined in \eqref{def:R_l} has the bound:
\begin{align}\label{R}
	|R_l(v^\varepsilon)|\lesssim&|\log\lambda|^{p-l}\Pi_lv^\varepsilon + |f_l^\varepsilon|+ \sum_{m=l+1}^{p-1}|\log\lambda|^{m-l}|\Pi_lf_m^\varepsilon|\lesssim |\log\lambda|^{p-l}\Pi_lv^\varepsilon+\fM_lv^\varepsilon =: v^\varepsilon_l\,.
\end{align}
Define the operator on $\cH^n$:
\[\fV_{\varepsilon,l}^{(n)}
:=\sum_{i_1,\ldots,i_l=1}^nv_l^\varepsilon(x_{i_2}-x_{i_1},\ldots,x_{i_l}-x_{i_1}),\quad 2\leq l\leq p,\]
where $v_p^\varepsilon=v^\varepsilon$.
By \eqref{R1} and \eqref{R}, there exists $C>0$ such that for every $n\geq1$,
\begin{align}
	\WW_n\geq&\frac{\lambda^p}{p}\fV^{(n)}_{\varepsilon,p}- C\left(\sum_{l=2}^{p-1}\lambda^l\fV^{(n)}_{\varepsilon,l}+(|\log\lambda|+|\log\varepsilon|)^{p-1}(\lambda n+|\log\lambda|+|\log\varepsilon|)\right).\label{4.8}
\end{align}
We claim the following two estimates. First, for every $2\le l\le p-1$ and $C_1>0$, there exists $C_2=C_2(C_1,p,v,\eta)>0$, independent of $n$, $\varepsilon$, and $\lambda$, such that
\begin{align}
	\lambda^p\fV_{\varepsilon,p}^{(n)}
	-C_1\lambda^l\fV_{\varepsilon,l}^{(n)}
	\ge-C_2|\log\lambda|^p.\label{lwb1}
\end{align}
Second, there exists $c=c(p,v)>0$, independent of $n$ and $\varepsilon$, such that
\begin{align}
	\fV_{\varepsilon,p}^{(n)}\geq cn^p.\label{lwb2}
\end{align}
For $p=2$, the lower-order sum in \eqref{4.8} is empty. For $p\ge3$, applying \eqref{lwb1} and using $|\log\varepsilon|\le\eta|\log\lambda|$, we obtain
\[\WW_n\ge\frac{1}{2p}\lambda^p\fV_{\varepsilon,p}^{(n)}
-C(|\log\lambda|^{p-1}\lambda n+|\log\lambda|^p).\]
Then, by \eqref{lwb2} and Young's inequality, we obtain \eqref{lower bound:W}.

Now we prove \eqref{lwb1} and \eqref{lwb2}. Recall from the proof of \eqref{lower bound:W_N} that \(r_0,a_0>0\) were chosen so that \eqref{lower bound v} holds and $2r_0\sqrt{p-1}\leq R$. 
The choice of \(r_0\), together with \(R\varepsilon\le1/4\), gives $r_0\varepsilon\leq 1/8$. Hence for every $y\in\Lambda$, we have $|B(y,r_0\varepsilon)|=\pi r_0^2\varepsilon^2.$
Fix $n\ge1$. In analogy with \eqref{def:local average}, define on $\mathfrak H^n$ the nonnegative multiplication operators
\[B_1^\varepsilon(y):=\varepsilon^{-2}
\sum_{i=1}^n\1_{B(y,r_0\varepsilon)}(x_i),\quad B_2^\varepsilon(y)
:=\varepsilon^{-2}\sum_{i=1}^n\1_{B(y,(R+r_0)\varepsilon)}(x_i).
\]
All inequalities below are inequalities of multiplication operators, or equivalently pointwise inequalities for almost every configuration $(x_1,\ldots,x_n)\in\Lambda^n$. 
Recall the set $A$ from \eqref{def:A}.
By \eqref{3} and \eqref{2}, we have
\begin{align}\label{6}
	\fV_{\varepsilon,p}^{(n)}
	\geq a_0\varepsilon^{-2(p-1)}\sum_{i_{1:p}=1}^n \1_A(x_{i_1},\cdots,x_{i_p})
	\geq (\pi r_0^2)^{-1} a_0\int_\Lambda (B_1^\varepsilon(y))^p\,\ud y .
\end{align}
Since
\[\int_\Lambda B_1^\varepsilon(y)\,\ud y=\varepsilon^{-2}|B(0,r_0\varepsilon)|\,n=\pi r_0^2 n,\]
by H\"older's inequality and \eqref{6}, we obtain
$$n^p\lesssim\int_\Lambda (B_1^\varepsilon(y))^p\,\ud y\lesssim \fV_{\varepsilon,p}^{(n)}.$$
This proves \eqref{lwb2}.

For \(2\le l\le p-1\), by \eqref{R},
\begin{align*}
	&v_l^\varepsilon(x_2-x_1,\cdots,x_l-x_1)
	=\int_{\Lambda^{p-l}}v^\varepsilon(x_2-x_1,\cdots,x_p-x_1)\left(\sum_{i=1}^{p-1}|\log d(x_i-x_p)|^{p-l}+|\log\lambda|^{p-l}\right)\ud x_{l+1:p}.
\end{align*}
We then use the same argument as in the proof of \eqref{marginal-pointwise}.
More precisely, by \eqref{poisson1} and a change of variables $x_i-x_1+m_i^x=\varepsilon y_i$ for $l+1\leq i\leq p$, we have
\begin{align*}
	v_l^\varepsilon(x_2-x_1,\cdots,x_l-x_1)
	=&\varepsilon^{2-2l}\int_{(\RR^2)^{p-l}}\ud y_{l+1:p}\,v(\varepsilon^{-1}(x_2-x_1+m^x_1),\cdots,\varepsilon^{-1}(x_l-x_1+m^x_l),y_{l+1:p})\times\\
	&\left(\sum_{i=1}^l|\log d(x_i-x_1-\varepsilon y_p)|^{p-l}+\sum_{i=l+1}^{p-1}|\log d(\varepsilon(y_i-y_p))|^{p-l}+|\log\lambda|^{p-l}\right).
\end{align*}
Recall the definition of $\widetilde d_l$ in \eqref{def:dl}. 
Then, on the support of \(v\), all the variables \(y_{l+1},\ldots,y_p\) belong to \(B(0,R)\), and
$\widetilde d_l(x_{1:l})\le R\varepsilon$. Since
\(v\in L^\infty\) and \(\varepsilon\ge\lambda^\eta\), we have
\begin{align}
	v_l^\varepsilon(x_2-x_1,\ldots,x_l-x_1)
	\lesssim&\1_{\{\widetilde d_l(x_{1:l})\le R\varepsilon\}}\varepsilon^{2-2l}\int_{|x|\leq 2R}(|\log d(\varepsilon x)|^{p-l}+|\log\lambda|^{p-l})\ud x\nonumber\\
	\lesssim&\1_{\{\widetilde d_l(x_{1:l})\le R\varepsilon\}}\varepsilon^{2-2l}|\log\lambda|^{p-l}.\label{v:bound}
\end{align}
By \eqref{1}, it follows that
\begin{align}
	\fV_{\varepsilon,l}^{(n)}
	&=\sum_{i_{1:l}=1}^n
	v_l^\varepsilon(x_{i_2}-x_{i_1},\ldots,x_{i_l}-x_{i_1}) \lesssim|\log\lambda|^{p-l}
	\int_\Lambda(B_2^\varepsilon(y))^l\,\ud y,\label{5}
\end{align}
The finite-covering argument from \eqref{4} gives
\[B_2^\varepsilon(y)
\le\sum_{m=1}^MB_1^\varepsilon(y+\varepsilon z_m),\quad y\in\Lambda.\]
Since these are commuting multiplication operators, Minkowski's inequality and translation invariance on the torus imply
\[\|B_2^\varepsilon\|_{L^l(\Lambda)}
\le M\|B_1^\varepsilon\|_{L^l(\Lambda)}
\le M\|B_1^\varepsilon\|_{L^p(\Lambda)}.\]
Then, for any $C_1>0$, it follows from \eqref{6} and \eqref{5} that
\[\lambda^p\fV_{\varepsilon,p}^{(n)}
-C_1\lambda^l\fV_{\varepsilon,l}^{(n)}
\ge(\pi r_0^2)^{-1}a_0(\lambda\|B_1^\varepsilon\|_{L^p(\Lambda)})^p - \widetilde C_1|\log\lambda|^{p-l}(\lambda\|B_1^\varepsilon\|_{L^p(\Lambda)})^l\gtrsim-|\log\lambda|^p.
\]
This gives \eqref{lwb1}.
\end{proof}

We now rewrite \(\WW\) in Fourier variables. For \(2\le l\le p\), introduce the associated translation-invariant \(l\)-variable kernel
\begin{align*}
	F_l^\varepsilon(x_{1:l}):=f_l^\varepsilon(x_2-x_1,\ldots,x_l-x_1),
\end{align*}
where $f_p^\varepsilon=\frac1p v^\varepsilon$ and $f_l^\varepsilon,$ $2\leq l\leq p-1$, is defined in \eqref{def:F_l}.
Equivalently, with our Fourier convention,
\[\widehat{F^\varepsilon_l}(k_{1:l})
=\1_{\{k_1+\cdots+k_l=0\}}\widehat f_l^\varepsilon(k_{2:l}).\]
For \(2\le l\le p-1\), \(F_l^\varepsilon\) is symmetric by \eqref{sym:f}; for \(l=p\), this follows from Assumption~\ref{assum:w}.
For any $k\in\ZZ^2$, we define 
\begin{align*}
	\Wick{\ud\Gamma(e_k)}=\ud\Gamma(e_k)- \Tr(\ud\Gamma(e_k)\Gamma_0)\,.
\end{align*}
In particular, $\Wick{\fN}=\fN-N_0$ with $N_0=\Tr(\fN\Gamma_0)$.
By translation invariance, we have $\Tr(\ud\Gamma(e_k)\Gamma_0)=\1_{k=0}N_0.$ 
Then, for every $F\in\cH^l$, the following identity holds on Fock space:
\begin{align}
	\sum_{k_{1:l}}\widehat{F}(k_{1:l})\prod_{j=1}^l\Wick{\ud\Gamma(e_{k_j})}
	=&\sum_{m=0}^{l}\binom lm(-N_0)^{l-m}\sum_{k_{1:m}}\widehat{F}(k_{1:m},0,\cdots,0)\prod_{j=1}^m\ud\Gamma(e_{k_j})\label{4.2}\\
	=&\bigoplus_{n=0}^\infty\sum_{m=0}^{l}\binom lm(-N_0)^{l-m}\sum_{i_{1:m}=1}^n\Pi_{m+1}F(x_{i_{1:m}})\,.\nonumber
\end{align}
Applying \eqref{4.2} to $F=F_l^\varepsilon$, $2\leq l\leq p$, with our choice of the terms $\vartheta$, $E_0$ and $R_l(v^\varepsilon)$ in \eqref{def:R_l} and \eqref{def:nu}, we obtain
\begin{align}\label{def:quantum-W}
	\WW=&\sum_{l=2}^p\lambda^l\sum_{k_{1:l}}\widehat{F^\varepsilon_l}(k_{1:l})
	\prod_{j=1}^{l}\Wick{\ud\Gamma(e_{k_j})} + f_1^\varepsilon\,\lambda\Wick{\fN}+f_0^\varepsilon\,,
\end{align}
where $f_0^\varepsilon$ and $f_1^\varepsilon$ are constants.
In the following, we will use this Fourier expression \eqref{def:quantum-W} for convenience.

\subsection{Moment estimates for the Gaussian quantum state}
Let $\langle\cdot\rangle_0:=\Tr(\cdot\,\Gamma_0)$ denote the expectation against $\Gamma_0$ on the Fock space $\cF$.
We prove a formula for arbitrary moments of second-quantized one-body observables under the Gaussian state. The second-moment case is \cite[Lemma~5.11]{LNR21}.

For simplicity, we introduce the following notations. 
For a finite set \(E\), let \(S_E\) denote the permutation group of \(E\).
In particular, we write \(S_l:=S_{[l]}\) for $l\geq1$.
For \(\sigma\in S_E\), define its set of fixed points by $\Fix(\sigma):=\{i\in E:\sigma(i)=i\}.$
A cycle on a nonempty set \(B\) is a permutation \(c\in S_B\) for which
there exist pairwise distinct elements \(i_1,\dots,i_m\in B\) such that
\(B=\{i_1,\dots,i_m\}\) and
\[
c(i_r)=i_{r+1},\quad 1\le r\le m-1,\qquad c(i_m)=i_1.
\]
In this case, we write \(c=(i_1\,i_2\,\cdots\,i_m)\). In particular, a cycle of length $1$ is just a fixed point.
Every permutation \(\sigma\in S_E\) admits a unique decomposition into
pairwise disjoint cycles. We denote by \(\mathrm{cyc}(\sigma)\) the set of disjoint cycles in this decomposition, including \(1\)-cycles corresponding to fixed points.
For a family of trace class operators \(T=(T_i)_{i\in E}\) and a cycle \(c=(i_1\,\cdots\,i_m)\), define
\[
\Tr_c(T):=\Tr(T_{i_1}\cdots T_{i_m}).
\]
This is well defined, since $\Tr(T_{i_1}\cdots T_{i_m})$ is invariant under cyclic permutations of the factors.

With this notation, the higher-order moment formula takes the following form. Its proof separates the distinct-label contribution from the other terms. The latter are of lower particle order and are controlled by particle-number moments. The former is computed directly, yielding the cycle expansion below. The scalar term subtracted in each \(:\ud\Gamma(A_j):\) removes all one-cycles, so only fixed-point free permutations remain.

\begin{lem}\label{Lem5.1}
	Let $2\leq p\in\NN$, and let \(A_1,\dots,A_p\) be bounded
	self-adjoint operators on \(L^2(\Lambda)\). Set
	$T_j:=A_j\Gamma_0^{(1)},\, 1\le j\le p.$
	Then,
	\begin{align}
		\lambda^p\left\langle\prod_{j=1}^p\Wick{\ud\Gamma(A_j)}\right\rangle_0=\lambda^p
		\sum_{\substack{\sigma\in S_p\\ \Fix(\sigma)=\varnothing}}\prod_{c\in\mathrm{cyc}(\sigma)}\Tr_c(T)+O(\lambda|\log\lambda|^{p-1}\prod_{j=1}^p\|A_j\|)\,,\label{1.73}
	\end{align}
where $\Wick{\ud\Gamma(A_j)} = \ud\Gamma(A_j)-\Tr(A_j\Gamma_0^{(1)})$.
\end{lem}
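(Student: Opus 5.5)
We will compute the moment via the quasi-free (Wick) structure of the Gaussian state $\Gamma_0$. First, expand each $\ud\Gamma(A_j)$ in creation and annihilation operators, $\ud\Gamma(A_j)=\sum_{k,l}\langle e_k,A_je_l\rangle a_k^*a_l$. Then normal-order the product $\prod_{j=1}^p\ud\Gamma(A_j)$ as a sum over partial identifications of labels. Concretely, on $\cH^n$ we write
\[
\prod_{j=1}^p\ud\Gamma(A_j)=\sum_{i_1,\dots,i_p=1}^n (A_1)_{i_1}\cdots(A_p)_{i_p}
\]
and split the sum according to the set partition of $[p]$ induced by coincidences among the labels $i_j$. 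A block $B$ of the partition gives an operator of the form $\ud\Gamma$-type term acting on one particle with the ordered product $\prod_{j\in B}A_j$. Hence a partition $\pi$ with $|\pi|$ blocks produces a term $\mathbb{A}_\pi$, the second quantization of a $|\pi|$-body operator built from distinct labels. The distinct-label part, $\pi$ all singletons, equals $p!$ times the second quantization of $\operatorname{sym}(A_1\otimes\cdots\otimes A_p)$ acting as a $p$-body operator. Its expectation in $\Gamma_0$ is $\Tr(A_1\otimes\cdots\otimes A_p\,\Gamma_0^{(p)})$ up to symmetrization. Since $\Gamma_0$ is quasi-free, $\Gamma_0^{(p)}=(\Gamma_0^{(1)})^{\otimes p}$ projected onto symmetric tensors (with the $p!$ normalization). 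Its expectation is therefore $\sum_{\sigma\in S_p}\prod_{c\in\mathrm{cyc}(\sigma)}\Tr_c(T)$. This is the standard permanent/cycle formula, which we verify by writing $\Gamma_0^{(p)}=\Gamma_0^{(1)\otimes p}\frac{1}{p!}\sum_\sigma U_\sigma$ up to the normalization convention and computing the trace of $A_1\otimes\cdots\otimes A_p$ times a permutation operator.

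Second, we handle the Wick subtraction. Expanding $\prod_j(\ud\Gamma(A_j)-\Tr T_j)$ binomially, a subset $S\subseteq[p]$ of factors is replaced by scalars $-\Tr T_j$. Applying the distinct-label formula to each remaining subproduct, we get a sum over permutations $\sigma$ of $[p]\setminus S$ times $\prod_{j\in S}(-\Tr T_j)$. Grouping by the full permutation obtained by adjoining fixed points on $S$, inclusion–exclusion over $S\subseteq\Fix$ cancels every permutation having a fixed point. This leaves exactly $\sum_{\Fix(\sigma)=\varnothing}\prod_c\Tr_c(T)$. This works provided every lower-order remainder from coinciding labels is collected into the error. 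The cancellation is purely combinatorial, $\sum_{S\subseteq F}(-1)^{|S|}=\1_{F=\varnothing}$.

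Third, we bound the coinciding-label terms, which is the main obstacle. Each such term arises from a partition with fewer than $p$ blocks and is a $q$-body second-quantized operator with $q\le p-1$, with norm at most $\prod_j\|A_j\|$. It is bounded in absolute value by $\prod_j\|A_j\|\,\binom{\fN}{q}$-type operators, and mixed with scalars $\Tr T_j$, which are $O(\|A_j\|N_0)$. Thus every error term is dominated by $\prod_j\|A_j\|$ times a polynomial of total degree at most $p-1$ in $\fN$ and $N_0$. For bounded self-adjoint (not positive) $A$'s we will use $|\langle \mathbb{B}\rangle_0|\le\|B\|\langle\fN^q\rangle_0$ for a $q$-body $B$. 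We then need $\langle\fN^q\rangle_0\lesssim N_0^q\lesssim(\lambda^{-1}|\log\lambda|)^q$, which follows from the geometric distribution of mode occupations in $\Gamma_0$: $\langle\fN^q\rangle_0$ is a polynomial in the $\Tr((\Gamma_0^{(1)})^k)$, each bounded by $N_0$ times $O(1)$ powers. Multiplying by $\lambda^p$ gives $O(\lambda|\log\lambda|^{p-1}\prod\|A_j\|)$. The care needed is in making sure no degree-$p$ term hides among these errors: subsets $S$ combined with coinciding labels must be tracked so that the total degree is at most $p-1$. This holds because each coincidence reduces the count of free sums by one.
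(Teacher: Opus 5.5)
Your proposal is correct and follows essentially the same route as the paper. You expand the Wick subtractions over subsets, and split each remaining product $\prod_{j\in Q}\ud\Gamma(A_j)$ into a pairwise-distinct-label part, evaluated through $\Gamma_0^{(q)}=(\Gamma_0^{(1)})^{\otimes q}$ as a sum of cycle traces, plus collision terms with at most $q-1$ distinct labels, bounded by $\prod_j\|A_j\|\,\Tr(\fN^{q-1}\Gamma_0)$. You then cancel all permutations with fixed points via $\sum_{R\subseteq\Fix(\tau)}(-1)^{|R|}=\1_{\{\Fix(\tau)=\varnothing\}}$; the only cosmetic difference is that you justify $\langle\fN^{q}\rangle_0\lesssim N_0^q$ through geometric occupation statistics, whereas the paper quotes \cite[Lemma 5.10]{LNR21}.
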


\begin{proof}
	Since
\begin{align}
	\lambda^p\left\langle\prod_{j=1}^p\Wick{\ud\Gamma(A_j)}\right\rangle_0
	=\lambda^p\sum_{Q\subseteq[p]}\prod_{i\notin Q}\left(-\langle\ud\Gamma(A_i)\rangle_0\right)\left\langle\prod_{j\in Q}\ud\Gamma(A_j)\right\rangle_0\,,\label{1.7}
\end{align}
we calculate $\left\langle\prod_{j\in Q}\ud\Gamma(A_j)\right\rangle_0$ first for any subset \(Q\subset[p]\) with \(q:=|Q|\). 
All products over \(Q\) are taken in the increasing order inherited from \([p]\). 
For \(q=0\), the product is the identity. For \(q\ge1\), by the definition of $\ud\Gamma(A_j)$, we have
\begin{align}
	\prod_{j\in Q}\ud\Gamma(A_j)
	=&\bigoplus_{n\geq q}\sum_{\ii\in[n]^q_{\neq}}\prod_{j\in Q}(A_j)_{i_j}+B_Q \,,\label{1.8}
\end{align}
where $\ii:=(i_j)_{j\in Q}$ and $[n]^q_{\neq}$ denotes the set of ordered $q$-tuples of pairwise distinct labels. Here $B_Q=\oplus_{n\ge0}B_{Q,n}$ and each $B_{Q,n}$ collects all terms on $\cH^n$ containing at most $q-1$ distinct particle labels. Since the $A_i$ are bounded, by \cite[Lemma 5.10]{LNR21},
\begin{align}
	\left|\left\langle B_Q\right\rangle_0\right|\lesssim_q \Tr(\mathcal N^{q-1}\Gamma_0)
	\prod_{j\in Q}\|A_j\|
	\lesssim_q
	(\lambda^{-1}|\log\lambda|)^{q-1}
	\prod_{j\in Q}\|A_j\|.\label{1.72}
\end{align}
Moreover,
\[
|\langle \ud\Gamma(A_i)\rangle_0|
=
|\Tr(A_i\Gamma_0^{(1)})|
\leq
\|A_i\|\,\Tr(\Gamma_0^{(1)})
\lesssim
\lambda^{-1}|\log\lambda|\,\|A_i\|.
\]
Therefore, the total contribution of all the remainders \(B_Q\) in the
centered expansion is bounded by
\begin{align*}
	&\lambda^p
	\sum_{Q\subset[p]}
	\prod_{i\notin Q}|\langle d\Gamma(A_i)\rangle_0|\,
	\left|\left\langle B_Q\right\rangle_0\right|
	\lesssim
	\lambda^p\sum_{q=1}^p
	(\lambda^{-1}|\log\lambda|)^{p-q}
	(\lambda^{-1}|\log\lambda|)^{q-1}
	\prod_{j=1}^p\|A_j\|
	\lesssim
	\lambda|\log\lambda|^{p-1}
	\prod_{j=1}^p\|A_j\|.
\end{align*}
Then, by \eqref{1.7} and \eqref{1.8}, we have
\begin{align}
	\lambda^p\left\langle\prod_{j=1}^p\Wick{\ud\Gamma(A_j)}\right\rangle_0=\lambda^p\sum_{Q\subseteq [p]}\prod_{i\notin Q}\left(-\langle\ud\Gamma(A_i)\rangle_0\right)\,
	{q!}\,\Tr\left(\mathop{\bigotimes^{\text{sym}}}\limits_{j\in Q}A_j\Gamma_0^{(q)}\right)+O\left(\lambda|\log\lambda|^{p-1}\prod_{j=1}^p\|A_j\|\right)\,,\label{1.71}
\end{align}
where $\bigotimes^{\text{sym}}$ denotes the symmetric tensor product.
By \cite[Lemma 5.10]{LNR21}, $\Gamma_0^{(q)}=(\Gamma_0^{(1)})^{\otimes q}$. Then,
\begin{align*}
	q!\,\Tr\left(\mathop{\bigotimes^{\text{sym}}}\limits_{j\in Q}A_j\Gamma_0^{(q)}\right)
	=\sum_{\sigma\in S_Q}\sum_{\substack{i_k=1\\k\in Q}}^\infty\prod_{k\in Q}\langle u_{i_k},T_ku_{i_{\sigma(k)}}\rangle	=&\sum_{\sigma\in S_Q}
	\prod_{c\in\mathrm{cyc}(\sigma)} \Tr_c(T)\,,
\end{align*}
where $\{u_i\}_{i=1}^\infty$ is an orthonormal basis of $L^2(\Lambda)$.

Therefore, the first term in the right-hand side of \eqref{1.71} becomes
\begin{align}
	\lambda^p\sum_{Q\subseteq[p]}
	\prod_{i\notin Q}(-\Tr(T_i))\sum_{\sigma\in S_Q}
	\prod_{c\in\mathrm{cyc}(\sigma)} \Tr_c(T).\label{0.3}
\end{align}
For each pair \((Q,\sigma)\), let \(\overline\sigma\in S_p\) be the extension of \(\sigma\) such that 
$$\overline\sigma(i)=\begin{cases}
	\sigma(i), & i\in Q,\\
	i, & i\notin Q.
\end{cases}$$ 
Then we have
\[\prod_{i\notin Q}(-\Tr(T_i))\prod_{c\in\mathrm{cyc}(\sigma)}\Tr_c(T)
=(-1)^{p-|Q|}\prod_{c\in\mathrm{cyc}(\bar\sigma)}\Tr_c(T).\]
We now regroup the sum in \eqref{0.3} according to $\tau:=\bar\sigma\in S_p$.
Fix \(\tau\in S_p\), and let \(F:=\Fix(\tau)\). We determine all pairs
\((Q,\sigma)\) such that \(\bar\sigma=\tau\). If \(i\notin Q\), then
\(\bar\sigma(i)=i\), hence necessarily \(i\in F\). Therefore
$R:=[p]\setminus Q\subset F.$
Conversely, for every subset \(R\subset F\), set
$Q=[p]\setminus R$. Since we remove only fixed
points of \(\tau\), we have $\tau(Q)=Q$. Hence $\sigma=\tau|_Q\in S_Q$, and its extension $\overline\sigma$ is exactly \(\tau\). Thus the pairs \((Q,\sigma)\) which give the same permutation \(\tau\) are in one-to-one correspondence with the subsets \(R\subset\Fix(\tau)\) and $p-|Q|=|R|$.
Consequently, the total coefficient of the cycle product associated with \(\tau\) is
\[a(\tau)=\sum_{R\subset\Fix(\tau)}(-1)^{|R|}.\]
When \(\Fix(\tau)=\varnothing\), this sum consists only of the empty
subset and equals \(1\). When \(\Fix(\tau)\neq\varnothing\), we have
\[a(\tau)=(1-1)^{|\Fix(\tau)|}=0.\]
Therefore, after regrouping, \eqref{0.3} equals
\[
\lambda^p\sum_{\tau\in S_p}a(\tau)\prod_{c\in\mathrm{cyc}(\tau)}\Tr_c(T)
=\lambda^p\sum_{\substack{\tau\in S_p\\ \Fix(\tau)=\varnothing}}\prod_{c\in\mathrm{cyc}(\tau)}\Tr_c(T).
\]
Substituting this into \eqref{1.71} proves \eqref{1.73}.

\end{proof}

\subsection{A priori estimates for the relative partition function}
We next derive bounds on the relative free energy from the Gibbs variational principle \eqref{variational problem}.

\begin{lem}\label{Lem4.2}
	Let $0<\lambda\leq1$ and $\lambda^\eta\leq\varepsilon\leq(4R)^{-1}$ with $0<\eta<\frac{1}{2(p-1)}$ and $R$ defined in \eqref{def:R}. Then we have
	\begin{align}
		\left|\log\frac{Z_{\lambda}}{Z_0}\right|\lesssim
		|\log\lambda|^p \,.\label{partition function}
	\end{align}
In particular, the Gibbs state $\Gamma_{\lambda}$ satisfies
\begin{equation}\label{entropy}
	\mathcal{H}(\Gamma_{\lambda},\Gamma_0)\lesssim |\log\lambda|^p \,,
\end{equation}
and it holds that
\begin{equation} \label{eqa priori bound on RDM}
	\lambda \left \lVert \sqrt{h} \left(\Gamma_{\lambda}^{(1)} - \Gamma_0^{(1)} \right) \sqrt{h} \right \lVert_{\textnormal{HS}} \lesssim |\log\lambda|^p \,,
\end{equation}
where $\|\cdot\|_{\textnormal{HS}}$ denotes the Hilbert-Schmidt norm.
\end{lem}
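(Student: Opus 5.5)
We plan to prove \eqref{partition function} through the Gibbs variational principle \eqref{variational problem}, treating the lower and upper bounds for $-\log(Z_\lambda/Z_0)$ separately. We then read off \eqref{entropy} and \eqref{eqa priori bound on RDM} from the same principle, evaluated at the minimizer $\Gamma_\lambda$.

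\emph{Upper bound on $\log(Z_\lambda/Z_0)$.} Recall that $\mathcal H(\Gamma,\Gamma_0)\ge0$, and that Lemma~\ref{lem 4.1} gives the operator inequality $\WW\ge -C|\log\lambda|^p$ whenever $\lambda^\eta\le\varepsilon\le(4R)^{-1}$. Consequently every trial state has free energy at least $-C|\log\lambda|^p$. Taking the infimum gives $-\log(Z_\lambda/Z_0)\ge -C|\log\lambda|^p$.

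\emph{Lower bound on $\log(Z_\lambda/Z_0)$.} We insert the trial state $\Gamma=\Gamma_0$, which has zero relative entropy. This yields $-\log(Z_\lambda/Z_0)\le\Tr(\WW\Gamma_0)$. To evaluate $\Tr(\WW\Gamma_0)$ we use the Fourier form \eqref{def:quantum-W}. The term $f_1^\varepsilon\lambda\Wick{\fN}$ has zero $\Gamma_0$-expectation, and $|f_0^\varepsilon|\lesssim|\log\varepsilon|^p\lesssim|\log\lambda|^p$ by \eqref{f_0}. For each $2\le l\le p$ we write
\[\sum_{k}\widehat{F_l^\varepsilon}(k_{1:l})\prod_j\Wick{\ud\Gamma(e_{k_j})}\]
and apply the moment formula of Lemma~\ref{Lem5.1}. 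Since the norm bound there is not summable over momenta, we will not use it term by term. Instead we redo its proof at the level of kernels, which is legitimate because translation invariance of $\Gamma_0$ makes $\Gamma_0^{(1)}$ diagonal in the $e_k$ basis with eigenvalues $(e^{\lambda\langle k\rangle^2}-1)^{-1}$.

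The main term is then a sum over fixed-point-free permutations of products of traces. Each such product corresponds to an integral of $F_l^\varepsilon$ against products of $\lambda\Gamma_0^{(1)}(x,y)$, which is a regularized Green function bounded by $C(1+|\log d(x-y)|+|\log\lambda|)$. Combined with the rescaling argument behind \eqref{5.1}, this gives $\lesssim|\log\lambda|^{l}\int|f_l^\varepsilon|\lesssim|\log\lambda|^p$.

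The lower-particle-number remainders $B_Q$ would ordinarily be controlled by $\Tr(\fN^{q-1}\Gamma_0)$ times a sup norm of the kernel. Here we expect a factor $\|f_l^\varepsilon\|_\infty\lesssim\varepsilon^{-2(l-1)}|\log\lambda|^{p-l}$, which gives a bound of order $\lambda\varepsilon^{-2(p-1)}|\log\lambda|^{C}$. This is at most $\lambda^{1-2\eta(p-1)}|\log\lambda|^C$, and it stays bounded precisely because $\eta<\frac1{2(p-1)}$. I expect this to explain the hypothesis on $\eta$.

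\textbf{Main obstacle.} The main difficulty is this evaluation of $\Tr(\WW\Gamma_0)$: verifying that the Wick subtraction removes every $N_0$-divergent contraction, so that only logarithmic divergences remain, and correctly bounding the non-distinct-label terms with $\varepsilon$-dependent kernels.

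\emph{Entropy bound and density matrices.} For \eqref{entropy}, equality holds at the minimizer $\Gamma_\lambda$, so
\[\mathcal H(\Gamma_\lambda,\Gamma_0)=-\log(Z_\lambda/Z_0)-\Tr(\WW\Gamma_\lambda)\le C|\log\lambda|^p+C|\log\lambda|^p,\]
where the second bound uses Lemma~\ref{lem 4.1} again. For \eqref{eqa priori bound on RDM} we invoke the known estimate from \cite{LNR21}, which bounds $\lambda\|\sqrt h(\Gamma^{(1)}-\Gamma_0^{(1)})\sqrt h\|_{\rm HS}$ by a constant times $\mathcal H(\Gamma,\Gamma_0)+\sqrt{\mathcal H(\Gamma,\Gamma_0)}$. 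Inserting \eqref{entropy} then finishes the proof.
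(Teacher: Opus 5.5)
Your argument is correct and has the same skeleton as the paper's proof. The upper bound on $-\log(Z_\lambda/Z_0)$ comes from the trial state $\Gamma_0$ in \eqref{variational problem}. The lower bound and \eqref{entropy} come from the operator bound of Lemma~\ref{lem 4.1}, and \eqref{eqa priori bound on RDM} comes from \cite[Theorem 6.1]{LNR21}. Your collision error of size $\lambda\varepsilon^{2-2p}|\log\lambda|^{p-1}$ is indeed exactly where the paper uses $\eta<\frac{1}{2(p-1)}$.

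Where you differ is in evaluating $\Tr(\WW\Gamma_0)$. The paper does apply Lemma~\ref{Lem5.1} term by term, and the uniform-in-$k$ error $O(\lambda|\log\lambda|^{l-1})$ is summable. The reason is that each component $F^\varepsilon_{l,\fm}$ has nonnegative Fourier coefficients by \eqref{f:positive Fourier}, so $\sum_k|\widehat{F^\varepsilon_l}(k)|\lesssim\sum_{\fm}f^\varepsilon_{l,\fm}(0)\lesssim\varepsilon^{2-2l}|\log\lambda|^{p-l}$. The same positivity, together with $\lambda n_k\le\widehat G(k)$ in \eqref{3.8}, lets the paper replace $\lambda\Gamma_0^{(1)}$ by $G$ in the cycle terms. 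Parseval then gives the $\lambda$-independent bound $|\log\varepsilon|^p$. So your remark that the norm bound is ``not summable over momenta'' is inaccurate once this Fourier positivity is used.

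Your position-space route avoids Fourier positivity altogether and is also valid. The colliding-label parts of \eqref{4.2} are multiplication operators. They are bounded by $\|f^\varepsilon_l\|_\infty\lesssim\varepsilon^{2-2l}|\log\lambda|^{p-l}$, which follows from \eqref{F:bound} and \eqref{v:bound}; the marginals obey the same sup bound since $|\Lambda|=1$. Multiplying by $\lambda^l\langle(\fN+N_0)^{l-1}\rangle_0$ gives $\lambda|\log\lambda|^{l-1}\|f^\varepsilon_l\|_\infty$.

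The cancellation you list as your main obstacle is purely algebraic. In the distinct-label cycle expansion, a fixed point contributes $\Gamma_0^{(1)}(x,x)=N_0$ times a marginal of $F^\varepsilon_l$, which exactly matches the counterterms. For the remaining fixed-point-free terms you do not need a logarithmic pointwise bound on the kernel. Positive-definiteness already gives $|\lambda\Gamma_0^{(1)}(x,y)|\le\lambda N_0\lesssim|\log\lambda|$. Combined with \eqref{5.1}, this yields $|\log\lambda|^l|\log\varepsilon|^{p-l}\lesssim|\log\lambda|^p$. That is weaker than the paper's $|\log\varepsilon|^p$, but it is sufficient for the lemma.
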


\begin{proof}
We take $\Gamma=\Gamma_0$ in \eqref{variational problem}. Then by the definition of $\WW$ in \eqref{def:quantum-W}, we have
\begin{equation}\label{3.7}
	-\log\frac{Z_\lambda}{Z_0}
	\le\Tr(\WW\Gamma_0)=\sum_{l=2}^p\lambda^l\sum_{k_{1:l}}
	\widehat{F^\varepsilon_l}(k_{1:l})
	\left\langle\prod_{j=1}^l\Wick{\ud\Gamma(e_{k_j})}\right\rangle_0 + f_0^\varepsilon.
\end{equation}
In the following, we fix $2\leq l\leq p$.
By complex multilinearity, the identity in \eqref{1.73} extends from bounded self-adjoint operators to bounded multiplication operators. Hence, with $T_j:=e_{k_j}\Gamma_0^{(1)}, \, 1\le j\le l,$
Lemma~\ref{Lem5.1} yields
\begin{align}
	\lambda^l\left\langle\prod_{j=1}^l\Wick{\ud\Gamma(e_{k_j})}\right\rangle_0
	=\lambda^l\sum_{\substack{\sigma\in S_l\\ \Fix(\sigma)=\varnothing}}\prod_{c\in \mathrm{cyc}(\sigma)} \Tr_c(T)+ O(\lambda |\log\lambda|^{\,l-1}).\label{3.81}
\end{align}
Set $n_k:=\frac{1}{e^{\lambda\langle k\rangle^2}-1} \leq\lambda^{-1}\langle k\rangle^{-2}$. Then, for any cycle $c=(c_1\cdots c_{|c|})$ with $|c|\geq2$, we have
\begin{align}
	\lambda^{|c|}\Tr_c(T)
	=&\lambda^{|c|}\1_{\sum_{i\in c}k_i=0}\sum_{q\in\ZZ^2}\prod_{j=1}^{|c|}n_{q+k_{c_1}+\cdots+k_{c_j}}
	\leq\1_{\sum_{i\in c}k_i=0}\sum_{q\in\ZZ^2}\prod_{j=1}^{|c|}\widehat G(q+k_{c_1}+\cdots+k_{c_j}).\label{3.8}
\end{align}
For every $\fm\in\fM_{p,l}$ such that $C_\fm\neq0$, let $F_{l,\fm}^\varepsilon(x_{1:l})$ be defined in \eqref{def:Fl,m} for $l<p$ and $F_{p,\fm}^\varepsilon(x_{1:p})=v^\varepsilon(x_2-x_1,\cdots,x_p-x_1)$. Since $\fF(F_{l,\fm}^\varepsilon)\geq0$ by \eqref{f:positive Fourier}, we obtain
\begin{align*}
	&\lambda^l\sum_{k_{1:l}}
	\widehat{F_{l,\mathbf m}^\varepsilon}(k_{1:l})
	\prod_{c\in\operatorname{cyc}(\sigma)}\Tr_c(T)\le
	\sum_{\substack{k_{1:l}\\\sum_{i\in c}k_i=0,\ \forall c\in\operatorname{cyc}(\sigma)}}
	\widehat{F_{l,\mathbf m}^\varepsilon}(k_{1:l})
	\prod_{c\in\operatorname{cyc}(\sigma)}
	\left(\sum_{q\in\mathbb Z^2}
	\prod_{r=1}^{|c|}\widehat G(q+k_{c_1}+\cdots+k_{c_r})\right).
\end{align*}
Since every cycle has length at least two and the sum of the cycle lengths is $l$, by the Parseval equality and Young's inequality, the last expression equals
\[\int_{\Lambda^l}F_{l,\mathbf m}^\varepsilon(x_{1:l})
\prod_{c\in\operatorname{cyc}(\sigma)}
\left(\prod_{r=2}^{|c|}G(x_{c_r}-x_{c_{r-1}})\right)
G(x_{c_{|c|}}-x_{c_1})\ud x_{1:l}\lesssim\sum_{1\leq i<j\leq l}\int|F_{l,\mathbf m}^\varepsilon(x_{1:l})||G(x_i-x_j)|^l\ud x_{1:l}.\]
Therefore, by \eqref{3.81}, we have
\begin{align}
	\sum_{k_{1:l}}\fF(F_{l,\fm}^\varepsilon)(k_{1:l})\lambda^l\left\langle\prod_{j=1}^l\Wick{\ud\Gamma(e_{k_j})}\right\rangle_0
	\lesssim&\sum_{1\leq i<j\leq l}\int|F_{l,\mathbf m}^\varepsilon(x_{1:l})||G(x_i-x_j)|^l\ud x_{1:l} + C\lambda|\log\lambda|^{l-1}f_{l,\fm}^\varepsilon(0).\label{4.29}
\end{align}
Here we used
$\sum_{k_{1:l}}\widehat{F_{l,\mathbf m}^\varepsilon}(k_{1:l})
=F_{l,\mathbf m}^\varepsilon(0,\ldots,0)
=f_{l,\mathbf m}^\varepsilon(0)$ for the error term.
By \eqref{F:bound} and \eqref{v:bound}, for $l<p$ and $C_\fm\neq0$, we have
\begin{align}\label{F:bound1}
	|f_{l,\fm}^\varepsilon(0)|\lesssim v_l^\varepsilon(0)\lesssim\varepsilon^{2-2l}|\log\lambda|^{p-l},
\end{align}
and by \eqref{poisson1}, for $\varepsilon\leq(4R)^{-1}$,
$$f_{p,\fm}^\varepsilon(0)=v^\varepsilon(0)\lesssim\varepsilon^{2-2p}.$$
Moreover, by \eqref{con:sym} and Young's inequality, for every $1\leq i<j\leq l$,
$$\int_{\Lambda^l} |F_{l,\fm}^\varepsilon(x_{1:l})||G(x_i-x_j)|^l\ud x_{1:l}\lesssim\int_{\Lambda^{p-1}} v^\varepsilon(x_{1:p-1})(|G(x_1-x_2)|^p+|G(x_1)|^p)\ud x_{1:p-1}\lesssim|\log\varepsilon|^p.$$
Since $\varepsilon\geq\lambda^\eta$ with $0<\eta<\frac{1}{2(p-1)}$ and $|f_0^\varepsilon|\lesssim|\log\varepsilon|^p$ by \eqref{f_0}, substituting these upper bounds into \eqref{3.7}, we obtain
\begin{align*}
	-\log\frac{Z_{\lambda}}{Z_0}\leq \Tr(\WW\Gamma_0)\lesssim|\log\varepsilon|^p+\lambda|\log\lambda|^{p-1}\varepsilon^{2-2p}\lesssim|\log\lambda|^p \,.
\end{align*}
On the other hand, by the lower bound \eqref{lower bound:W}, 
\begin{align*}
	-\log\frac{Z_{\lambda}}{Z_0}= \fH(\Gamma_{\lambda},\Gamma_0)+\Tr(\WW\Gamma_{\lambda}) \geq \fH(\Gamma_{\lambda},\Gamma_0)-C|\log\lambda|^p \,.
\end{align*}
This gives both \eqref{partition function} and
\eqref{entropy}. Finally, \eqref{eqa priori bound on RDM} follows from \cite[Theorem 6.1]{LNR21}.

\end{proof}

\subsection{Moments of particle number and correlation estimate for high momenta}\label{sec4.4}
Recall that for each $k\geq1$, $\Gamma_0^{(k)}=\left(\frac{1}{e^{\lambda h}-1}\right)^{\otimes k}$, which implies that $\lambda^k\Tr(\fN^k\Gamma_0)\lesssim_k|\log\lambda|^k$. Then by \eqref{eqa priori bound on RDM} and the Cauchy-Schwarz inequality, if $\varepsilon\geq\lambda^\eta$ with $0<\eta<\frac{1}{2(p-1)}$, we have
\begin{align*}
	\lambda\Tr(\fN\Gamma_{\lambda})\leq \lambda\Tr(\fN\Gamma_0) + \lambda \left \lVert \sqrt{h} \left(\Gamma_{\lambda}^{(1)}-\Gamma_0^{(1)} \right) \sqrt{h} \right \lVert_{\textnormal{HS}}(\Tr(h^{-2}))^\frac12\lesssim |\log\lambda|^p \,.
\end{align*}
Now, we aim to extend this inequality to the $k$-th moment $\lambda^k\Tr(\fN^k\Gamma_\lambda)$ for any $k\geq1$.
Due to the lower bound of $\WW$ in \eqref{lower bound:W}, the estimates used in the proof of \cite[(5.51)]{LNR21} are no longer sufficient. Instead, we employ a higher-order correlation inequality from \cite{DNN25} to achieve this extension.

\begin{lem}\label{Lemma4.3}
	Under the assumptions of Lemma~\ref{Lem4.2}, for every $k\ge1$, it holds that
	\begin{align}
		\lambda^k\Tr(\fN^k\Gamma_{\lambda})\lesssim_{k}|\log\lambda|^{pk} \,.\label{0.2}
	\end{align}
\end{lem}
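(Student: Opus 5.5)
The plan is to compare $\Gamma_\lambda$ with the perturbed Gibbs states $\Gamma_{\lambda,t}:=Z_{\lambda,t}^{-1}e^{-\HH_\lambda+t(\lambda\fN)^k}$ for small $t\ge0$, and to read off moments of $\lambda\fN$ from derivatives of $\log Z_{\lambda,t}$. These perturbed states are exactly where the stability estimate of Lemma~\ref{lem 4.1} has to be reused, as announced in the strategy section. The first difficulty is that $+t(\lambda\fN)^k$ has the wrong sign. However, \eqref{lwb2} shows that the $p$-body part of $\WW$ controls $c\lambda^p\fN^p$ on each sector. So I would run the proof of Lemma~\ref{lem 4.1} keeping half of the $p$-body term, which gives
\[
\WW\ge c(\lambda\fN)^p-C|\log\lambda|^p .
\]
For $k<p$, Young's inequality then yields $\WW-t(\lambda\fN)^k\ge -C|\log\lambda|^p$ uniformly for $t\le1$. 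For $k\ge p$ I would use the perturbation $-t(\lambda\fN)^k$ instead (the favourable sign) and argue through the correlation inequality below, or reduce via Hölder to moments built from $(\lambda\fN)^{j}$ with $j<p$, iterated.

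The core tool is the higher-order correlation inequality of \cite{DNN25}. Roughly, for $\Gamma_s=Z_s^{-1}e^{-\HH+sA}$, the variance/Duhamel correlations $\langle A;A\rangle_s$ are controlled by $\langle A^2\rangle_s$, and conversely $\langle A\rangle_{\Gamma}$ is bounded in terms of $\log(Z_{s}/Z_0)$ by convexity. Concretely, convexity of $s\mapsto\log Z_{\lambda,s}$ gives
\[
s\,\Tr\big((\lambda\fN)^k\Gamma_\lambda\big)\le \log Z_{\lambda,s}-\log Z_\lambda .
\]
It then suffices to bound $-\log(Z_{\lambda}/Z_0)$ and $\log(Z_{\lambda,s}/Z_0)$ by $C|\log\lambda|^{pk}$ for some $s\sim1$, or by $|\log\lambda|^{p}$ for $s\sim |\log\lambda|^{-p(k-1)}$. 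The first bound is Lemma~\ref{Lem4.2}. For the second, the Gibbs variational principle gives
\[
-\log\frac{Z_{\lambda,s}}{Z_0}=\inf_\Gamma\big\{\mathcal H(\Gamma,\Gamma_0)+\Tr((\WW-s(\lambda\fN)^k)\Gamma)\big\}\ge -C|\log\lambda|^p ,
\]
using the operator lower bound of the previous paragraph with $s$ small relative to the coercive term. For $k\le p$ I would choose $s$ to be a small constant. For $k>p$ I would proceed by induction on $k$: write $(\lambda\fN)^k\le (\lambda\fN)^{k-1}\cdot\lambda\fN$ and localize to the sectors $\lambda n\le L$ and $\lambda n> L$ with $L\sim|\log\lambda|^{p}$. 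On the sector $\lambda n>L$, the coercive bound $\WW_n\ge c(\lambda n)^p-C|\log\lambda|^p$ makes the Gibbs weight exponentially small relative to the free state. Via the \cite{DNN25} inequality, this yields $\lambda^k\Tr(\fN^k\Gamma_\lambda)\lesssim L^{k}$ plus the free-state moments $\lambda^k\Tr(\fN^k\Gamma_0)\lesssim|\log\lambda|^k$.

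The main obstacle I expect is this tail control. The Gibbs state does not commute with the kinetic term, so one cannot simply weight sectors by $e^{-\WW_n}$. Since $\HH_\lambda$ commutes with $\fN$, however, $\Gamma_\lambda=\oplus_n\Gamma_{\lambda,n}$. I would therefore first try Peierls–Bogoliubov/Golden–Thompson sector-wise:
\[
\Tr e^{-\HH_{\lambda,n}}\le e^{-\inf\WW_n}\Tr e^{-\lambda\ud\Gamma(h)}|_{\cH^n}\le e^{-c(\lambda n)^p+C|\log\lambda|^p}\Tr e^{-\lambda\ud\Gamma(h)}|_{\cH^n}.
\]
Combined with $Z_\lambda\ge Z_0e^{-C|\log\lambda|^p}$ from Lemma~\ref{Lem4.2}, this gives a bound on $\Tr(\Gamma_\lambda\1_{\fN=n})$ relative to $\Gamma_0$ with factor $e^{-c(\lambda n)^p+2C|\log\lambda|^p}$. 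Summing $(\lambda n)^k$ against it, splitting at $\lambda n\sim|\log\lambda|$ using the Gaussian moment bounds, should give $|\log\lambda|^{pk}$ directly, so the correlation inequality may only be needed as a fallback.
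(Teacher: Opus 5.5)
Your last paragraph contains a complete and correct proof, and it takes a genuinely different, more elementary route than the paper.

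\textbf{The paper's route.} The paper never uses coercivity in $\fN$. It perturbs the one-body operator, $h_t=h-t|\log\lambda|^{-p}$, so that $\Gamma_{\lambda,t}\propto e^{-\HH_\lambda+t\BB}$ with $\BB=|\log\lambda|^{-p}\lambda\fN$. It then proves $\lambda\|\sqrt{h}(\Gamma^{(1)}_{\lambda,t}-\Gamma^{(1)}_{0,t})\sqrt{h}\|_{\textnormal{HS}}\lesssim|\log\lambda|^p$ uniformly in $|t|\le1$, using the trial state $\Gamma_{0,t}$, the lower bound \eqref{lower bound:W}, and \cite[Theorem 6.1]{LNR21}. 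From this it deduces $\sup_{|t|\le1}\Tr(\BB\Gamma_{\lambda,t})\lesssim1$. Finally it invokes \cite[Theorem 3]{DNN25} to control all moments of $\BB$; for commuting $\BB$ this is essentially $\log\Tr(e^{\BB}\Gamma_\lambda)=\int_0^1\Tr(\BB\Gamma_{\lambda,t})\,\ud t$. The rescaling by $|\log\lambda|^{-p}$ is where the exponent $pk$ comes from.

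\textbf{Your route.} You observe that the proof of Lemma~\ref{lem 4.1} already gives $\WW_n\ge c(\lambda n)^p-C|\log\lambda|^p$: apply \eqref{lwb2} to the surviving $\frac{1}{2p}\lambda^p\fV^{(n)}_{\varepsilon,p}$ and absorb the linear term by Young, instead of discarding it. You then compare sector by sector,
$\Tr\big(e^{-\HH_\lambda}\1_{\fN=n}\big)\le e^{-c(\lambda n)^p+C|\log\lambda|^p}\Tr\big(e^{-\lambda\ud\Gamma(h)}\1_{\fN=n}\big)$.
This step is min--max monotonicity or Golden--Thompson; Peierls--Bogoliubov goes the other way. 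Combined with $Z_\lambda\ge Z_0e^{-C|\log\lambda|^p}$ from Lemma~\ref{Lem4.2}, you split at $\lambda n\sim|\log\lambda|$. The tail needs only $\sum_n\Tr(\Gamma_0\1_{\fN=n})=1$, not Gaussian moment bounds. This actually yields the sharper bound $\lambda^k\Tr(\fN^k\Gamma_\lambda)\lesssim_k|\log\lambda|^k$.

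\textbf{What each route buys.} The paper's route is more robust: it needs only a lower bound on the interaction plus first-moment control under one-body perturbations. That is why the same scheme can be recycled for the trial state $\widehat\Gamma_\lambda$ in \eqref{eq:trial-moments}. There the projected interaction is coercive only in $\lambda\ud\Gamma(P_N^2)$ (cf.\ \eqref{lwb4}), not in $\lambda\fN$, so your sector-wise comparison does not apply directly. A (non-commuting) correlation inequality from \cite{DNN25} is also needed anyway for \eqref{1.4}.

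\textbf{The earlier routes.} You should make the sector-wise argument the proof and drop the earlier routes for $k>p$, which do not work as written:
\begin{itemize}
\item $\Tr e^{-\HH_\lambda+s(\lambda\fN)^k}=\infty$ for every $s>0$, because $\WW_n$ is bounded above by a degree-$p$ polynomial in $\lambda n$ (with $\varepsilon$-dependent coefficients).
\item Convexity with the favourable-sign perturbation $-t(\lambda\fN)^k$ only bounds the moment in the perturbed state, or bounds $\Tr((\lambda\fN)^k\Gamma_\lambda)$ from below.
\item H\"older cannot raise the order of a moment.
\item The induction/localization sketch is not yet an argument.
\end{itemize}
The convexity route with $+s(\lambda\fN)^k$ is valid for $k\le p$ (for $k=p$ with $s<c$), but the sector-wise bound subsumes it.
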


\begin{proof}
	For $-1\leq t\leq1$, define 
	$$\Gamma_{\lambda,t} :=Z_{\lambda,t}^{-1}e^{-\lambda\ud\Gamma(h_t)-\WW},\quad \Gamma_{0,t} :=Z_{0,t}^{-1}e^{-\lambda\ud\Gamma(h_t)},$$ 
	where $h_t = h-t|\log\lambda|^{-p}$.
	Let $\lambda$ be sufficiently small such that $|\log\lambda|^{-p}\leq\frac12$. Since $h\geq1$, for every $|t|\leq1$, we have $\frac12 h\leq h_t\leq\frac32 h$.
	
	\noindent {\bf Step 1.} 
	First, we mimic the proof of \eqref{eqa priori bound on RDM} to show that there exists $C>0$ such that for $|t|\leq1$,
	\begin{equation} \label{RDM1}
		\lambda \left \lVert \sqrt{h} \left(\Gamma_{\lambda,t}^{(1)} - \Gamma_{0,t}^{(1)} \right) \sqrt{h} \right \lVert_{\textnormal{HS}} \leq C |\log\lambda|^p \,.
	\end{equation}
	As in \eqref{3.7}, by the variational formula \eqref{variational problem} for $\Gamma_{\lambda,t}$, we have
	\begin{align}
		-\log\frac{Z_{\lambda,t}}{Z_{0,t}}\leq&\sum_{l=2}^p\sum_{k_{1:l}}\widehat{F_l^\varepsilon}(k_{1:l})\lambda^l\left\langle\prod_{j=1}^l\Wick{\ud\Gamma(e_{k_j})}\right\rangle_{0,t} + f_1^\varepsilon\lambda (\Tr(\Gamma_{0,t}^{(1)})-\Tr(\Gamma_0^{(1)})) + f_0^\varepsilon\,.\label{3.9}
	\end{align}
	Since $\|h_t-h\|\leq\frac12$ and $h_t\geq\frac12 h$, by \cite[Lemma 5.4]{DNN25}, we have
	\begin{align}
		\lambda |\Tr(\Gamma_{0,t}^{(1)})-\Tr(\Gamma_0^{(1)})|
		\lesssim\|h_t-h\|(\Tr(h^{-2}) + \Tr(h_t^{-2}))^{\frac12}
		\lesssim(\Tr(h^{-2}))^\frac12\lesssim1\,.\label{4.32}
	\end{align} 
	We estimate the first term in \eqref{3.9} as in Lemma~\ref{Lem4.2}, replacing
	$\Gamma_0$ with $\Gamma_{0,t}$. The only additional point is that the operators in \eqref{def:quantum-W} are centered with respect to $\Gamma_0$ rather than $\Gamma_{0,t}$.
	Recall that \(\Wick{\ud\Gamma(e_k)}\) is centered with respect to \(\Gamma_0\). Hence
	\[\Wick{\ud\Gamma(e_k)}=\Wick{\ud\Gamma(e_k)}_{0,t}
	+\delta_{k,t},\quad \delta_{k,t}:=\Tr\bigl(e_k(\Gamma_{0,t}^{(1)}-\Gamma_0^{(1)})\bigr),\]
	where \(\Wick{\ud\Gamma(e_k)}_{0,t}\) denotes the centering with respect to \(\Gamma_{0,t}\). By translation invariance and \eqref{4.32}, we have
	\begin{align}\label{5.29}
		\lambda|\delta_{k,t}|=\1_{k=0}\lambda |\Tr(\Gamma_{0,t}^{(1)})-\Tr(\Gamma_0^{(1)})|\lesssim\1_{k=0}.
	\end{align}
	For every \(2\le l\le p\), we decompose the contribution of the \(l\)-th term in \eqref{3.9} as
	\begin{align*}
		&\lambda^l\sum_{k_{1:l}}
		\widehat{F^\varepsilon_l}(k_{1:l})
		\left\langle\prod_{j=1}^l\Wick{\ud\Gamma(e_{k_j})}
		\right\rangle_{0,t}
		=\sum_{S\subseteq[l]}\lambda^l\sum_{k_{1:l}}
		\widehat{F^\varepsilon_l}(k_{1:l})
		\prod_{j\notin S}\delta_{k_j,t}
		\left\langle\prod_{i\in S}\Wick{\ud\Gamma(e_{k_i})}_{0,t}\right\rangle_{0,t}.
	\end{align*}
	For $l<p$, let $F_{l,\fm}^\varepsilon$ be defined in \eqref{def:Fl,m} for any $\fm\in\fM_{p,l}$ and $F_{p,\fm}^\varepsilon(x_{1:p}):=v^\varepsilon(x_2-x_1,\cdots,x_p-x_1)$.
	Since $F_l^\varepsilon$ is symmetric and $\fF(F_{l,\fm}^\varepsilon)\geq0$ by \eqref{sym:f} and \eqref{f:positive Fourier}, we use \eqref{5.29} to derive that
	\begin{align}
		&\lambda^l\sum_{k_{1:l}}
		\widehat{F^\varepsilon_l}(k_{1:l})
		\left\langle\prod_{j=1}^l\Wick{\ud\Gamma(e_{k_j})}
		\right\rangle_{0,t}\nonumber\\
		\lesssim&
		\sum_{r=2}^l\lambda^r\sum_{\substack{\fm\in\fM_{p,l}\\C_\fm\neq0}}\sum_{k_{1:r}}\fF(F_{l,\fm}^\varepsilon)(k_{1:r},0,\cdots,0)\left|\left\langle\prod_{i=1}^r\Wick{\ud\Gamma(e_{k_i})}_{0,t}\right\rangle_{0,t}\right|+\sum_{\substack{\fm\in\fM_{p,l}\\C_\fm\neq0}}\fF(F_{l,\fm}^\varepsilon)(0)\,.\label{4.37}
	\end{align}
	For the last term, by \eqref{5.1} (which also holds for $ f_{l,\fm}^\varepsilon$ with $C_\fm\neq0$) and $\widehat{v^\varepsilon}(0)=1$, we have
	\begin{align}\label{4.38}
		\sum_{\substack{\fm\in\fM_{p,l}\\C_\fm\neq0}}\fF(F_{l,\fm}^\varepsilon)(0)=\sum_{\substack{\fm\in\fM_{p,l}\\C_\fm\neq0}}\int f_{l,\fm}^\varepsilon(x)\ud x\lesssim |\log\varepsilon|^{p-l}.
	\end{align}
	Since $h_t\simeq h$ uniformly in $t$, for every $2\leq r\leq l$, the proof of Lemma~\ref{Lem5.1} applies with $\Gamma_0$ replaced by $\Gamma_{0,t}$ and gives
	\begin{align*}
		\lambda^r\left\langle\prod_{j=1}^r\Wick{\ud\Gamma(e_{k_j})}_{0,t}\right\rangle_{0,t}
		=\lambda^r\sum_{\substack{\sigma\in S_r\\ \Fix(\sigma)=\varnothing}}\prod_{c\in \mathrm{cyc}(\sigma)} \Tr_c(T^t)+ O(\lambda |\log\lambda|^{r-1}),
	\end{align*}
	where $T^t_j:=e_{k_j}\Gamma_{0,t}^{(1)},$ $1\le j\le r.$
	Then, the estimate is the same as in the first term of \eqref{3.7}.
	The only modification is that \eqref{3.8} becomes 
	\begin{align*}
		\lambda^{|c|}\Tr_c(T^t)
		=&\lambda^{|c|}\1_{\sum_{i\in c}k_i=0}\sum_{k\in\ZZ^2}\prod_{j=1}^{|c|}n_{k+k_{c_1}+\cdots+k_{c_j},t}
		\lesssim\1_{\sum_{i\in c}k_i=0}\sum_{k\in\ZZ^2}\prod_{j=1}^{|c|}\widehat G(k+k_{c_1}+\cdots+k_{c_j}),
	\end{align*}
	where
	$$n_{k,t}=(e^{\lambda(\langle k\rangle^2-t|\log\lambda|^{-p})}-1)^{-1}\lesssim\lambda^{-1}\langle k\rangle^{-2}$$
	uniformly for \(|t|\le1\) since $\langle k\rangle^2-t|\log\lambda|^{-p}\geq\langle k\rangle^2-1/2\geq1/2\langle k\rangle^2$.
Therefore, similarly to \eqref{4.29}, for $2\leq r\leq l$ and each $\fm\in\fM_{p,l}$ such that $C_\fm\neq0$, it follows that
	\begin{align*}
		&\lambda^r\sum_{k_{1:r}}\fF(F_{l,\fm}^\varepsilon)(k_{1:r},0,\cdots,0)\left|\left\langle\prod_{i=1}^r\Wick{\ud\Gamma(e_{k_i})}_{0,t}\right\rangle_{0,t}\right|\\
		\lesssim&\sum_{1\leq i<j\leq r}\int|F_{l,\mathbf m}^\varepsilon(x_{1:l})||G(x_i-x_j)|^l\ud x_{1:l}+\lambda|\log\lambda|^{r-1}f_{l,\fm}^\varepsilon(0)\lesssim|\log\lambda|^p.
	\end{align*}
	Since $\varepsilon\geq\lambda^\eta$, combining with \eqref{4.38}, \eqref{4.37} is bounded by a constant times $|\log\lambda|^p$.
	Substituting into \eqref{3.9}, by \eqref{4.32} and \eqref{f_0}, we have
	\[
	-\log\frac{Z_{\lambda,t}}{Z_{0,t}}
	\lesssim |\log\lambda|^p.\]
	Combining this with the lower bound \eqref{lower bound:W}, we also get $\mathcal H(\Gamma_{\lambda,t},\Gamma_{0,t})\lesssim|\log\lambda|^p$, which implies 
	\begin{align*}
		\lambda \left \lVert \sqrt{h_t} \left(\Gamma_{\lambda,t}^{(1)} - \Gamma_{0,t}^{(1)} \right) \sqrt{h_t} \right \lVert_{\textnormal{HS}} \lesssim |\log\lambda|^p
	\end{align*}
	by \cite[Theorem 6.1]{LNR21}. Since $h_t\geq\frac12 h$, the operator $h^\frac12h_t^{-\frac12}$ is bounded, and hence we can replace $\sqrt{h_t}$ by $\sqrt{h}$ and obtain \eqref{RDM1}.
	
	\noindent {\bf Step 2.} 
	Set $\BB :=|\log\lambda|^{-p}\lambda\fN$. Then for all $|t|\leq1$, by the Cauchy-Schwarz inequality,
\begin{align*}
	|\Tr(\BB\Gamma_{\lambda,t})| =& |\log\lambda|^{-p}\lambda|\Tr(\Gamma_{\lambda,t}^{(1)})|
	\leq|\log\lambda|^{-p}\,\lambda\left \lVert \sqrt{h} \left(\Gamma_{\lambda,t}^{(1)} - \Gamma_{0,t}^{(1)} \right) \sqrt{h} \right \lVert_{\textnormal{HS}}(\Tr(h^{-2}))^\frac12+|\log\lambda|^{-p}\lambda\Tr(\Gamma_{0,t}^{(1)}),
\end{align*}
Since $h_t\geq\frac12 h$, we have 
$$\lambda\Tr(\Gamma_{0,t}^{(1)})=\Tr\left(\frac{\lambda}{e^{\lambda h_t}-1}\right)\leq\Tr\left(\frac{\lambda}{e^{1/2\lambda h}-1}\right)\lesssim|\log\lambda|.$$
Therefore, combining with \eqref{RDM1}, it holds that $$a:=\sup_{|t|\leq 1}|\Tr(\BB\Gamma_{\lambda,t})|\lesssim1.$$
Since $\BB\geq0$ and $[\BB,\HH_{\lambda}]=0$, \cite[Theorem 3]{DNN25} gives, for every $k\geq1$,
$$\Tr(\BB^k\Gamma_{\lambda})\lesssim_k e^{2a}\lesssim1,$$
which proves \eqref{0.2}.

\end{proof}

We next use these particle-number moments to prove the high-momentum correlation estimate needed for the finite-dimensional reduction.

\begin{theorem}
	Under the assumptions of Lemma~\ref{Lem4.2}, 
	let $N\in\NN$ and $\lambda^{-6p\eta}\leq N\leq\lambda^{-\frac18}$. For $k\in\ZZ^2$, we denote $e_k=e_k^-+e_k^+$ with $e_k^- = P_Ne_kP_N$.
	Then
	\begin{align}
		\lambda^2\langle|\Wick{\ud\Gamma(e_k^+)}|^2\rangle_{\lambda}\lesssim N^{-\frac12}\varepsilon^{2-2p}|\log\lambda|^p +(|k|^2+|k|N)\lambda^2|\log\lambda|^p\,,\label{1.4}
	\end{align}
where $\langle\cdot\rangle_{\lambda}$ denotes the expectation against $\Gamma_{\lambda}$ in the Fock space $\cF$.
\end{theorem}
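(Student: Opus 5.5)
We follow the strategy of \cite[Section 5]{LNR21} and \cite{NZZ25} for the high-momentum variance. We perturb the Hamiltonian and use a correlation inequality from \cite{DNN25} to compare the variance under $\Gamma_\lambda$ with the one under the free state. For $k\in\ZZ^2$ and $|t|\le1$ consider
\[
\HH_{\lambda,t}:=\HH_\lambda+t\,\mathbb T_k,\qquad \mathbb T_k:=\lambda\,\Wick{\ud\Gamma(e_k^+)}+\lambda\,\Wick{\ud\Gamma(e_k^+)}^*,
\]
with the analogous perturbation of the imaginary part. The \cite{DNN25} estimate (the same tool as in Lemma~\ref{Lemma4.3}) bounds $\langle \mathbb T_k^2\rangle_\lambda$ in terms of second derivatives in $t$ of $\log Z_{\lambda,t}$. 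More precisely, it bounds the Duhamel two-point function by the free-energy difference $\log Z_{\lambda,t}+\log Z_{\lambda,-t}-2\log Z_\lambda$, plus a commutator error $\langle[\mathbb T_k,[\mathbb T_k,\HH_\lambda]]\rangle_\lambda$.

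\textbf{Step 1: the commutator term.} This is the source of the second term on the right of \eqref{1.4}. Since $e_k^+=e_k-P_Ne_kP_N$, the double commutator with the kinetic part $\lambda\ud\Gamma(h)$ is $\lambda^3\ud\Gamma$ of a one-body operator. That operator is supported near the cutoff shell, with symbol of size $|k|^2+|k|N$, because $|q+k|^2-|q|^2=2q\cdot k+|k|^2$ and $|q|\lesssim N$ at the relevant transitions. We bound it using $\lambda\Tr(\fN\Gamma_\lambda)\lesssim|\log\lambda|^p$ from Lemma~\ref{Lemma4.3}. The interaction $\WW$ is a multiplication operator in position space, and $\ud\Gamma(e_k^+)$ is not, so $[\mathbb T_k,\WW]\ne0$. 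We will show that this contribution is absorbed into the first term of \eqref{1.4}, or else treat it through the lower bound of Lemma~\ref{lem 4.1} applied to the perturbed Hamiltonian.

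\textbf{Step 2: the free-energy difference.} Here we use the Gibbs variational principle \eqref{variational problem}, with $\Gamma_{0}$ replaced by the free state $\Gamma_{0,t}$ of $\lambda\ud\Gamma(h)+t\mathbb T_k$; this is a quasi-free state with a quadratic perturbation. The upper bound on $-\log Z_{\lambda,t}$ comes from the trial state $\Gamma_{0,t}$, exactly as in the proof of Lemma~\ref{Lem4.2}. That argument relies on Lemma~\ref{Lem5.1}, the cycle expansion, and the positivity $\fF(F^\varepsilon_{l,\fm})\ge0$. The lower bound comes from Lemma~\ref{lem 4.1}, which holds uniformly in $t$. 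The difference between the perturbed and unperturbed free partition functions is explicit: it equals the free variance $\lambda^2\langle|\Wick{\ud\Gamma(e_k^+)}|^2\rangle_0\lesssim\lambda^2\sum_{|q|\gtrsim N}n_qn_{q+k}\lesssim N^{-2}$ up to logarithms.

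\textbf{Step 3: the main obstacle.} The hard part is showing that the interaction energy changes by at most $N^{-1/2}\varepsilon^{2-2p}|\log\lambda|^p$ when $\Gamma_0$ is replaced by $\Gamma_{0,t}$. We expand $\langle\prod_j\Wick{\ud\Gamma(e_{k_j})}\rangle_{0,t}$ by Wick's theorem. In every cycle that differs from the unperturbed one, at least one propagator is replaced by a high-momentum correction, whose weight is of order $\lambda^{-1}\langle q\rangle^{-2}\1_{|q|\gtrsim N}$. Summing against $\widehat{F^\varepsilon_{l}}$ costs $\|\widehat{F_l^\varepsilon}\|_{\ell^\infty}$-type bounds, or $f^\varepsilon_{l}(0)\lesssim\varepsilon^{2-2l}|\log\lambda|^{p-l}$ as in \eqref{F:bound1}. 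Together with the gain $\sum_{|q|\ge N}\langle q\rangle^{-2}\cdot(\text{another propagator})\lesssim N^{-1/2}$ (via Hölder, trading a square root), this gives the first term. Finally, we optimize over $t$ (choosing $t$ of order $1$, or of order $|\log\lambda|^{-p}$ as in Lemma~\ref{Lemma4.3}). The constraints $\lambda^{-6p\eta}\le N\le\lambda^{-1/8}$ ensure that all $\lambda$-dependent errors, such as $\lambda|\log\lambda|^{p-1}\varepsilon^{2-2p}$ and $\lambda N^2$, are dominated by the two terms displayed in \eqref{1.4}.
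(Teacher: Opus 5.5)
\textbf{There is a genuine gap in Steps 2--3: they cannot produce the smallness that \eqref{1.4} requires.}

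You propose to estimate each $\log Z_{\lambda,\pm t}$ separately, from above through the trial state $\Gamma_{0,\pm t}$ and from below through Lemma~\ref{lem 4.1}. This only pins $-\log(Z_{\lambda,\pm t}/Z_{0,\pm t})$ to a window whose width is at least of order $|\log\lambda|^p$, uniformly in $t$ and $N$. (Since $\Gamma_{0,t}$ is not translation invariant, the cycle estimate \eqref{3.8} is unavailable, and the paper only gets the upper bound $\varepsilon^{2-2p}|\log\lambda|^p$ there.) Nothing makes these errors cancel in $\log Z_{\lambda,t}+\log Z_{\lambda,-t}-2\log Z_\lambda$. So at best you obtain $\varepsilon^{2-2p}|\log\lambda|^p$, with no factor $N^{-1/2}$.

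Step 3 does not repair this. A Wick expansion under $\Gamma_{0,t}$ only compares the trial energies $\Tr(\WW\Gamma_{0,t})$ and $\Tr(\WW\Gamma_0)$. But $-\log(Z_\lambda/Z_0)=\fH(\Gamma_\lambda,\Gamma_0)+\Tr(\WW\Gamma_\lambda)$ differs from $\Tr(\WW\Gamma_0)$ by a non-small amount, and the interacting state $\Gamma_{\lambda,t}$ is not quasi-free. In fact the factor $N^{-1/2}$ does not come from the interaction at all.

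\textbf{The missing idea is to control first moments of the observable under the perturbed interacting states.}

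The paper applies \cite[Theorem 2]{DNN25} with $\widetilde\BB=\frac14\lambda\Wick{\ud\Gamma(f_k^+)}$ in the form
\[
\langle\widetilde\BB^2\rangle_\lambda\le ae^a+\tfrac14\langle[[\widetilde\BB,\HH_\lambda],\widetilde\BB]\rangle_\lambda,\qquad a=\sup_{|t|\le1}|\Tr(\widetilde\BB\widetilde\Gamma_{\lambda,t})|.
\]
Your second difference is also governed by such first moments: testing the Gibbs principle for $\HH_\lambda$ with the perturbed Gibbs state shows that $\log Z_{\lambda,\pm t}-\log Z_\lambda$ is bounded by $|t|$ times the first moment in that perturbed state.

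The crude free-energy bounds you describe are then used only to get $\fH(\widetilde\Gamma_{\lambda,t},\widetilde\Gamma_{0,t})\lesssim\varepsilon^{2-2p}|\log\lambda|^p$. By \cite[Theorem 6.1]{LNR21} this gives
\[
\lambda\|\sqrt h(\widetilde\Gamma_{\lambda,t}^{(1)}-\widetilde\Gamma_{0,t}^{(1)})\sqrt h\|_{\textnormal{HS}}\lesssim\varepsilon^{2-2p}|\log\lambda|^p.
\]
You then pair this with $\|h^{-1/2}f_k^+h^{-1/2}\|_{\textnormal{HS}}\lesssim N^{-1/2}$; the smallness is a property of the high-momentum observable. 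This yields $a\lesssim N^{-1/2}\varepsilon^{2-2p}|\log\lambda|^p$, and the condition $N\ge\lambda^{-6p\eta}$ makes $a\le1$.

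\textbf{Step 1 also leaves the interaction part of the double commutator open.} Lemma~\ref{lem 4.1} is a lower bound on $\WW$ and cannot control a commutator. The paper expands $[[\ud\Gamma_n(f_k^+),\prod_j\ud\Gamma_n(e_{k_j})],\ud\Gamma_n(f_k^+)]$ sector-wise by the Leibniz rule and bounds its norm by $\sum_{r=1}^l n^rN_0^{l-r}$. It then uses $\sum_{k_{1:l}}\fF(F^\varepsilon_{l,\fm})(k_{1:l})=f^\varepsilon_{l,\fm}(0)\lesssim\varepsilon^{2-2l}|\log\lambda|^{p-l}$ together with the moments of Lemma~\ref{Lemma4.3}. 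This gives $\lambda^2\varepsilon^{2-2p}|\log\lambda|^{p^2}$, which the condition $N\le\lambda^{-1/8}$ absorbs into the first term of \eqref{1.4}.
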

\begin{proof}
	By linearity, it suffices to prove \eqref{1.4} with $e_k^+$ replaced by $f_k^+$, where $f_k(x)=\cos(2\pi k\cdot x)$ or $\sin(2\pi k\cdot x)$.
	Set $\widetilde\BB = \frac14\lambda\Wick{\ud\Gamma(f_k^+)}$ and 
	$$\widetilde\Gamma_{\lambda,t} := \widetilde Z_{\lambda,t}^{-1}\, e^{-\frac14\lambda t\langle\ud\Gamma(f_k^+)\rangle_0}\,e^{-\lambda\ud\Gamma(\widetilde h_t)-\WW},\quad \widetilde\Gamma_{0,t} := \widetilde Z_{0,t}^{-1}\,e^{-\frac14\lambda t\langle\ud\Gamma(f_k^+)\rangle_0}\,e^{-\lambda\ud\Gamma(\widetilde h_t)},$$ 
	where $\widetilde h_t = h-\frac14tf_k^+$. 
	Since $h\geq1$ and $\|f_k^+(x)\|\leq2$, for every $|t|\leq1$, we have $\frac12 h\leq \widetilde h_t\leq\frac32 h$.
	
	First, we estimate $\Tr(\widetilde \BB\widetilde\Gamma_{\lambda,t})$ for every $t\in[-1,1]$ as in the proof of \eqref{0.2}. We use the variational formula for $\widetilde\Gamma_{\lambda,t}$, \eqref{f_0} and \eqref{4.32} with $\Gamma_{0,t}^{(1)}$ and $h_t$ replaced by $\widetilde\Gamma_{0,t}^{(1)}$ and $\widetilde h_t$ respectively to get
	\begin{align}\label{4.39}
		-\log\frac{\widetilde Z_{\lambda,t}}{\widetilde Z_{0,t}}\lesssim&|\log\varepsilon|^p+\sum_{l=2}^{p}\sum_{k_{1:l}}\widehat{F_l^\varepsilon}(k_{1:l})\,\lambda^l\,
		\Tr\left(\prod_{j=1}^{l}\Wick{\ud\Gamma(e_{k_j})}\widetilde\Gamma_{0,t}\right)\,.
	\end{align} 
Since $\widetilde h_t\geq\frac12 h$, we have 
	$$\lambda\Tr(\widetilde\Gamma_{0,t}^{(1)})=\Tr\left(\frac{\lambda}{e^{\lambda \widetilde h_t}-1}\right)\leq\Tr\left(\frac{\lambda}{e^{1/2\lambda h}-1}\right)\lesssim|\log\lambda|.$$
	Thus, for any $2\leq l\leq p$, and $k_{1:l}\in\ZZ^{2l}$, it follows that
	\begin{align*}
		\lambda^l\left|\Tr\left(\prod_{j=1}^{l}\Wick{\ud\Gamma(e_{k_j})}\widetilde\Gamma_{0,t}\right)\right|
		\lesssim&\lambda^l\Tr(\fN^l\widetilde\Gamma_{0,t})+|\log\lambda|^l
		\lesssim(\lambda\Tr(\widetilde\Gamma_{0,t}^{(1)}))^l+|\log\lambda|^l\lesssim|\log\lambda|^l \,,
	\end{align*}
	where in the second inequality, we use $\widetilde\Gamma_{0,t}^{(l)}=(\widetilde\Gamma_{0,t}^{(1)})^{\otimes l}$, and use \eqref{1.8}, \eqref{1.72} for $A_1=\cdots=A_l=\1_\cH$. 
	Substituting into \eqref{4.39}, by \eqref{f:positive Fourier} and \eqref{F:bound1}, it follows that
	\begin{align*}
		-\log\frac{\widetilde Z_{\lambda,t}}{\widetilde Z_{0,t}}\lesssim|\log\varepsilon|^p+\sum_{l=2}^p\sum_{\substack{\fm\in\fM_{p,l}\\C_\fm\neq0}}|\log\lambda|^lf_{l,\fm}^{\varepsilon}(0)
		\lesssim\varepsilon^{2-2p}|\log\lambda|^p \,,
	\end{align*}
	where $f_{p,\fm}^\varepsilon=v^\varepsilon$.
	Then, by \eqref{lower bound:W}, we have $\fH(\widetilde\Gamma_{\lambda,t},\widetilde\Gamma_{0,t})\lesssim \varepsilon^{2-2p}|\log\lambda|^p$. 
By \cite[Theorem 6.1]{LNR21} and $h_t\geq\frac12 h$, this implies that
	\begin{align}
		\lambda \left \lVert \sqrt{h} \left(\widetilde\Gamma_{\lambda,t}^{(1)} - \widetilde\Gamma_{0,t}^{(1)} \right) \sqrt{h} \right \lVert_{\textnormal{HS}}
		\leq
		\lambda \left \lVert \sqrt{\widetilde h_t} \left(\widetilde\Gamma_{\lambda,t}^{(1)} - \widetilde\Gamma_{0,t}^{(1)}\right) \sqrt{\widetilde h_t} \right \lVert_{\textnormal{HS}}\|h^\frac12(\widetilde h_t)^{-\frac12}\|^2 \lesssim \varepsilon^{2-2p}|\log\lambda|^p,\label{4.41}
	\end{align}
where $\|\cdot\|$ is the operator norm on $\cH$.
Therefore, arguing as in \cite[(7.32)]{NZZ25},
by the Cauchy-Schwarz inequality and \cite[Lemma 6.3]{LNR21}, we obtain
\begin{align}
	|\Tr(\widetilde\BB\widetilde\Gamma_{\lambda,t})|=\frac14\lambda|\Tr(f_k^+\widetilde\Gamma_{\lambda,t}^{(1)})-\Tr(f_k^+\Gamma_0^{(1)})|\lesssim
	\lambda \left \lVert \sqrt{h} \left(\widetilde\Gamma_{\lambda,t}^{(1)} - \widetilde\Gamma_{0,t}^{(1)} \right) \sqrt{h} \right \lVert_{\textnormal{HS}}\|h^{-\frac12}f_k^+h^{-\frac12}\|_{\textnormal{HS}}+\Tr(f_k^+h^{-1}f_k^+h^{-1}).\label{4.40}
\end{align}
Set $Q_N=\1-P_N$. Then $[Q_N,h]=0$, $0\leq Q_N\leq \1_{h\geq1/4(2\pi)^2N^2+1}$, and 
$$\|Q_Nh^{-1}\|^2_{\textnormal{HS}}\lesssim\sum_{|k|>N}\frac{1}{\langle k\rangle^4}\lesssim N^{-2}.$$
Since $f_k^+=P_Nf_kQ_N+Q_Nf_kP_N+Q_Nf_kQ_N$, by H\"older's inequality, we have
$$|\Tr(f_k^+h^{-1}f_k^+h^{-1})|\leq\|h^{-\frac12}f_k^+h^{-\frac12}\|_{\textnormal{HS}}^2\lesssim \|Q_Nh^{-1}\|_{\textnormal{HS}}\|h^{-1}\|_{\textnormal{HS}}\lesssim N^{-1}.$$
Therefore, by \eqref{4.41} and \eqref{4.40}, it follows that
$$a:=\sup_{|t|\leq 1}|\Tr(\widetilde\BB\widetilde\Gamma_{\lambda,t})|\lesssim N^{-\frac12}\varepsilon^{2-2p}|\log\lambda|^p\leq1.$$
Then by \cite[Theorem 2]{DNN25}, we have
	\begin{align}
		\Tr(\widetilde\BB^2\Gamma_\lambda)\leq ae^a+\frac14\Tr([[\widetilde\BB,\HH_{\lambda}],\widetilde\BB]\Gamma_{\lambda})
		\lesssim N^{-\frac12}\varepsilon^{2-2p}|\log\lambda|^p+\Tr([[\widetilde\BB,\HH_{\lambda}],\widetilde\BB]\Gamma_{\lambda})\,,\label{correlation ineq}
	\end{align}

Next, we estimate the second term in the right-hand side of \eqref{correlation ineq}. By linearity, we have
$$[[\widetilde\BB,\HH_{\lambda}],\widetilde\BB]=\frac{\lambda^3}{16}\ud\Gamma([[f_k^+,h],f_k^+])+\frac{\lambda^2}{16}[[\ud\Gamma(f_k^+),\WW],\ud\Gamma(f_k^+)].$$
Then by \cite[(7.34)]{NZZ25} and \eqref{0.2}, the expectation of the first part is bounded by
\begin{align}\label{4.35}
	|\lambda^3\Tr(\ud\Gamma([[f_k^+,h],f_k^+])\Gamma_{\lambda})|\lesssim(|k|^2+|k|N)\lambda^3\Tr(\Gamma_{\lambda}^{(1)})\lesssim(|k|^2+|k|N)\lambda^2|\log\lambda|^p\,.
\end{align}
For the second part, we have
\begin{align}\label{4.36}
	\lambda^2[[\ud\Gamma(f_k^+),\WW],\ud\Gamma(f_k^+)]=\sum_{l=2}^p\lambda^{2+l}\sum_{k_{1:l}}\widehat{F_l^\varepsilon}(k_{1:l})
	\left[\left[\ud\Gamma(f_k^+),\prod_{j=1}^l\Wick{\ud\Gamma(e_{k_j})}\right],\ud\Gamma(f_k^+)\right]\,.
\end{align}
For $n\geq1$ and every one-body operator $T$ on $\cH$, we define the corresponding operator on $\cH^n$:
\begin{align}\label{def:Gamma_n}
	\ud\Gamma_n(T):=\sum_{i=1}^nT_i.
\end{align}
Set $A_n=\ud\Gamma_n(f_k^+)$ and $B_{j,n}=\ud\Gamma_n(e_{k_j})-\1_{k_j=0}N_0$, $1\leq j\leq p$. 
Since \(\|f_k^+\|\lesssim1\) and \(\|e_{k_j}\|=1\), on \(\cH^n\) one has $\|B_{j,n}\|\leq n+N_0$ and
\begin{align}\label{eq:sector-commutator-bounds}
	\|[A_n,B_{j,n}]\|
	=
	\|\ud\Gamma_n([f_k^+,e_{k_j}])\|
	\lesssim n,\quad 
	\|[[A_n,B_{j,n}],A_n]\|
	=
	\|\ud\Gamma_n([[f_k^+,e_{k_j}],f_k^+])\|
	\lesssim n.
\end{align}
Here the constants are independent of \(k,k_j,n,\lambda,\varepsilon\).
For $2\leq l\leq p$, by the Leibniz rule, the double commutator
\begin{align}\label{change}
	\left[[A_n,B_{1,n}\cdots B_{l,n}],A_n\right]
\end{align}
is a finite sum, depending only on \(l\), of ordered products of the following
two types:
\[
B_{1,n}\cdots B_{i-1,n}\,[[A_n,B_{i,n}],A_n]\,B_{i+1,n}\cdots B_{l,n},
\]
and
\[
B_{1,n}\cdots [A_n,B_{i,n}]\cdots [B_{j,n},A_n]\cdots B_{l,n},
\quad i\neq j.
\]
Hence \eqref{eq:sector-commutator-bounds} implies that, on \(\cH^n\),
\begin{align}\label{eq:double-commutator-sector-bound}
	\left\|
	\left[[A_n,B_{1,n}\cdots B_{l,n}],A_n\right]
	\right\|
	\lesssim_l
	n(n+N_0)^{l-1}+n^2(n+N_0)^{l-2}
	\lesssim_l
	\sum_{r=1}^l n^rN_0^{\,l-r}.
\end{align}
Since $$\left[\left[\ud\Gamma(f_k^+),\Wick{\ud\Gamma(e_{k_1})}\cdots \Wick{\ud\Gamma(e_{k_l})}\right],\ud\Gamma(f_k^+)\right]=\bigoplus_{n=1}^\infty\left[[A_n,B_{1,n}\cdots B_{l,n}],A_n\right],$$
we have
\begin{align}\label{eq:double-commutator-expectation-bound}
	\Big|
	\Tr\Big(\left[\left[\ud\Gamma(f_k^+),\Wick{\ud\Gamma(e_{k_1})}\cdots \Wick{\ud\Gamma(e_{k_l})}\right],\ud\Gamma(f_k^+)\right]\Gamma_\lambda
	\Big)
	\Big|
	\lesssim_l
	\sum_{r=1}^l
	\Tr(\fN^r\Gamma_\lambda)N_0^{\,l-r}.
\end{align}
We apply this to the interaction \(\WW\).  Using the decomposition
$F_l^\varepsilon=\sum_{\mathbf m\in\fM_{p,l}} C_{\mathbf m}
F_{l,\mathbf m}^\varepsilon$ and \eqref{f:positive Fourier}, we obtain from \eqref{eq:double-commutator-expectation-bound}
\begin{align*}
	&\left|
	\lambda^2
	\Tr\left(
	[[\ud\Gamma(f_k^+),\WW],\ud\Gamma(f_k^+)]\Gamma_\lambda
	\right)
	\right|\lesssim
	\lambda^2
	\sum_{l=2}^p
	\sum_{\substack{\mathbf m\in\fM_{p,l}\\ C_{\mathbf m}\neq0}}
	f_{l,\mathbf m}^\varepsilon(0)
	\sum_{r=1}^l
	\lambda^r\Tr(\fN^r\Gamma_\lambda)(\lambda N_0)^{l-r}.
\end{align*}
Here, for \(l=p\), we interpret
\(f_{p,\mathbf m}^\varepsilon(0)=v^\varepsilon(0)\).  
By \eqref{F:bound1}, \eqref{0.2} and
\(\lambda N_0\lesssim|\log\lambda|\), we have
\begin{align*}
	\lambda^2\,\Tr([[\ud\Gamma(f_k^+),\WW],\ud\Gamma(f_k^+)]\Gamma_{\lambda})\,\lesssim
	\lambda^2
	\sum_{l=2}^p
	\varepsilon^{2-2l}
	|\log\lambda|^{p-l}
	\sum_{r=1}^l
	|\log\lambda|^{pr+l-r}
	\lesssim
	\lambda^2\varepsilon^{2-2p}|\log\lambda|^{p^2}.
\end{align*}
Since \(N\le \lambda^{-1/8}\), for \(0<\lambda<1\), we have $\lambda^2|\log\lambda|^{p^2}\lesssim N^{-\frac12}|\log\lambda|^p$. 
Combining with \eqref{4.35} and \eqref{correlation ineq}, the proof is complete.

\end{proof}

\section{Convergence of the relative free energy}\label{sec5}

In this section we prove the free-energy convergence \eqref{thm-eq1}. The argument follows the finite-dimensional reduction developed in \cite[Sections~9--10]{LNR21}. The high-momentum correlation estimate from the preceding section allows us to replace the interaction $\WW$ by its low-frequency truncation, to which a semiclassical approximation applies. A pointwise comparison between the Gaussian lower symbol and the cylindrical Gaussian measure, derived in \cite{NZZ25}, is then used to obtain estimates when $\varepsilon$ depends polynomially on $\lambda$. For the upper bound, we additionally introduce an auxiliary negative number-operator term, which is bounded by the projected $p$-body interaction uniformly.

First, we recall the localization method on Fock space and introduce the quantum de Finetti measure.
For $f\in \cH$, we define the creation operator $a^\dagger(f)$ and the annihilation operator $a(f)$ on $\cF$ by
\begin{align*}
	(a^\dagger(f)\psi)(x_1,...,x_{n+1})&=\frac{1}{\sqrt{n+1}} \sum_{i=1}^{n+1} f(x_i)  \psi(x_1,...,x_{i-1},x_{i+1},...,x_{n+1}),
	\\(a(f)\psi)(x_1,...,x_{n-1})&=\sqrt{n} \int_\Lambda \overline{f(x)}\psi(x,x_1,...,x_{n-1})\ud x,
\end{align*}
where $\psi\in\cH^n$.
They satisfy the canonical commutation relations for all $f,g\in \cH$ :
\begin{align*}
	[a(f),a^\dagger(g)]= \langle f,g\rangle, \quad [a^\dagger(f),a^\dagger(g)]=[a(f),a(g)]=0.
\end{align*}
Let $P$ be an orthogonal projection on $\cH=L^2(\Lambda)$ and let $Q=\1-P$. Then there exists a unitary
\begin{align}\label{eq:Fock factor}
	\fU : \cF ( P\cH \oplus Q\cH) \mapsto \cF(P\cH) \otimes \cF (Q\cH)
\end{align}
satisfying $\fU\fU^*=\fU^*\fU=\1$ and
\begin{align}\label{eq:loc creation}
	\fU a^\dagger(f) \fU ^*  = a ^\dagger(Pf) \otimes \1 + \1 \otimes a^\dagger(Qf)
\end{align}
for every $f\in\cH$, with the analogous identity for $a(f)$.
For any state $\Gamma$ on $\cF$ and any orthogonal projector $P$, we define its localization $\Gamma_{P}$ as a state on $\cF(P\cH)$ obtained by taking the partial trace over $\cF(Q\cH)$:
$$\Gamma_{P}:=\Tr_{\cF(Q\cH)}\left[ \fU \Gamma \fU^* \right].$$
The density matrices of $\Gamma_P$ can be shown to be equal to
\begin{align}\label{eq:GammaV-k}
	(\Gamma_P)^{(k)}=P^{\otimes k}\Gamma^{(k)}P^{\otimes k},\quad k\geq1.
\end{align}

Next, we introduce an important probability measure on $P\cH$.
For $u\in P\cH$, define the coherent state on $\cF(P\cH)$:
\begin{align}\label{eq:coherent state}
	W(u):= \exp(a^\dagger(u)-a(u))|0\rangle = e^{-\|u\|_{L^2(\Lambda)}^2/2} \bigoplus_{n=0}^\infty \frac{u^{\otimes n}} {\sqrt{n!}}\,,
\end{align}
where $|0\rangle$ is the vacuum in $\cF(P\cH)$.
It satisfies the resolution of identity
\begin{align}\label{resolution of identity}
	\1_{\cF(P\cH)}=\pi^{-\Tr(P)}\int_{P\cH}|W(u)\rangle\langle W(u)|\ud u \,,
\end{align}
where $\ud u$ is the usual Lebesgue measure on $P\cH \simeq \CC^{\Tr(P)}$,
and its $k$-particle density matrix is given by
\begin{equation}\label{density matrix of coherent state}
	|W(u)\rangle \langle W(u)|^{(k)} = \frac{1}{k!} |u^{\otimes k}\rangle \langle u^{\otimes k}|\,. 
\end{equation}
Then we have the following quantum de Finetti theorem  \cite[Lemma 6.2 and Remark 6.4]{LNR15}, which relates the density matrices of a localized quantum state to correlation function of a probability measure.

\begin{theorem}[Lower symbols as de Finetti measures]\label{thm:quant deF}
	For any state $\Gamma$ on $\cF$, using the coherent states in \eqref{eq:coherent state}, we define the {lower symbol} of $\Gamma$ on $P\cH$ at scale $\lambda$ by
	\begin{align}\label{eq:Husimi}
		\ud\mu_{P,\Gamma}^\lambda(u):=(\lambda\pi)^{-\Tr(P)}\Big\langle W(u/\sqrt{\lambda}),\Gamma_P W(u/\sqrt{\lambda})\Big\rangle_{\cF(P\cH)} \ud u \,.
	\end{align}
	Then, for all $k\in\NN$, it holds that
	\begin{align}\label{eq:Chiribella}
		\int_{P\cH}|u^{\otimes k}\rangle\langle u^{\otimes k}|\;\ud\mu^\lambda_{P,\Gamma}(u) = k!\lambda^k(\Gamma_P)^{(k)} + k! \lambda ^k \sum_{l = 0} ^{k-1} {k \choose l} (\Gamma_P)^{(l)} \otimes_s \1_{\otimes_s ^{k-l} P\cH}.
	\end{align}
	Thus, with $d=\Tr [P]$,
	\begin{align}\label{eq:quantitative}
		\Tr \left| k!\lambda^k(\Gamma_P)^{(k)}-\int_{P\cH}|u^{\otimes k}\rangle\langle u^{\otimes k}|\;\ud\mu^\lambda_{P,\Gamma}(u) \right| \leq \lambda^k \sum_{l=0}^{k-1}{k\choose l}^2  \frac{(k-l +d-1)!}{(d-1)!}\Tr \left[ \fN^{l}\Gamma_P\right].
	\end{align}
\end{theorem}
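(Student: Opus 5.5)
The plan is to work in the finite-dimensional Fock space $\cF(P\cH)\simeq$ bosonic Fock space over $\CC^d$, $d=\Tr P$. Write $\Gamma_P$ for the localized state, so that \eqref{eq:GammaV-k} reduces everything to $\Gamma_P$. We use the resolution of identity \eqref{resolution of identity} at scale $\lambda$: the rescaled coherent states $W(u/\sqrt\lambda)$, $u\in P\cH$, satisfy $(\lambda\pi)^{-d}\int|W(u/\sqrt\lambda)\rangle\langle W(u/\sqrt\lambda)|\,\ud u=\1$. Hence $\mu^\lambda_{P,\Gamma}$ is a probability measure.

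For the identity \eqref{eq:Chiribella}, test both sides against $k$-body operators. By \eqref{eq:coherent state}, $a(f)W(u)=\langle f,u\rangle W(u)$. Therefore, for $f_1,\dots,f_k,g_1,\dots,g_k\in P\cH$,
\[
\langle g_1\otimes\cdots\otimes g_k,\Big(\int|u^{\otimes k}\rangle\langle u^{\otimes k}|\,\ud\mu^\lambda_{P,\Gamma}\Big) f_1\otimes\cdots\otimes f_k\rangle
\]
equals $\lambda^k\Tr\big[\Gamma_P\,(\lambda\pi)^{-d}\int \prod_j a(f_j)|W\rangle\langle W|\prod_j a^\dagger(g_j)\,\ud u\big]$. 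In this expression $a(f_j)$ acts on the ket and $a^\dagger(g_j)$ on the bra. Integrating out $u$ via the resolution of identity yields the \emph{anti-normal} ordered product $\Tr[\Gamma_P\, a(f_1)\cdots a(f_k)a^\dagger(g_k)\cdots a^\dagger(g_1)]$.

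We then normal-order this product using the CCR. Moving all $a^\dagger$ to the left produces the sum over $l$ of terms with $l$ creation/annihilation pairs left and $k-l$ contractions $\langle f_i,g_j\rangle$. A pair term paired with $\Gamma_P$ gives $l!\,(\Gamma_P)^{(l)}$, using $\Tr[\Gamma a^\dagger(g)^{\otimes l}a(f)^{\otimes l}]=l!\langle g^{\otimes l},\Gamma^{(l)}f^{\otimes l}\rangle$ for symmetrized tensors. The contractions give $\1$ on the remaining $k-l$ slots. Counting choices yields $\binom kl$ selections from each side and $(k-l)!$ matchings, which combine to $k!\binom kl\,(\Gamma_P)^{(l)}\otimes_s\1$ after symmetrization. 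This is exactly \eqref{eq:Chiribella}. The main care point is this combinatorial bookkeeping in the Wick/normal-ordering formula. I would verify it first for $k=1$ (giving $\lambda\Gamma^{(1)}+\lambda P$) and then argue by the standard formula for anti-normal ordering.

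For \eqref{eq:quantitative}, subtract and use the triangle inequality. Note that $(\Gamma_P)^{(l)}\otimes_s\1_{\otimes_s^{k-l}P\cH}\ge0$. Its trace equals $\Tr(\Gamma_P)^{(l)}$ times the dimension-type factor $\Tr_{\otimes_s^{k-l}}$ of the symmetrized identity, relative to the $l$-body trace. Then $\Tr(\Gamma_P)^{(l)}=\Tr[\binom{\fN}{l}\Gamma_P]\le \Tr[\fN^l\Gamma_P]/l!$. The trace of $\otimes_s$ is bounded by $\binom kl^{-1}$ times the trace over $\cH^l\otimes(P\cH)^{\otimes_s(k-l)}$, whose symmetric part has dimension $\binom{d+k-l-1}{k-l}$. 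Combining the factorials $k!$, $l!$, $(k-l)!$ gives the bound $\binom kl^2\frac{(k-l+d-1)!}{(d-1)!}\Tr[\fN^l\Gamma_P]$, possibly with a crude overcount, which is harmless. This is the content of \cite[Lemma 6.2, Remark 6.4]{LNR15}, which one may simply cite.
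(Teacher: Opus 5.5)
Your route is the standard one and is correct in substance. The paper gives no proof of its own here; it quotes \cite[Lemma 6.2 and Remark 6.4]{LNR15}. Your argument is essentially what lies behind that citation: the scale-$\lambda$ resolution of identity turns the integral into the anti-normally ordered expectation $\lambda^k\Tr[\Gamma_P\,a(\cdot)\cdots a(\cdot)\,a^\dagger(\cdot)\cdots a^\dagger(\cdot)]$, and CCR normal ordering then produces the $l$-body terms. Your count $l!\binom{k}{l}^2(k-l)!=k!\binom{k}{l}$ is right once $A\otimes_s B$ is read as $P_s(A\otimes B)P_s$, where $P_s$ is the projection onto the symmetric subspace.

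As written you interchange $f$ and $g$ twice. With $M$ the left side of \eqref{eq:Chiribella}, the element $\langle g_1\otimes\cdots\otimes g_k, M f_1\otimes\cdots\otimes f_k\rangle$ is the complex conjugate of $\lambda^k\Tr[\Gamma_P\,a(f_1)\cdots a(f_k)a^\dagger(g_k)\cdots a^\dagger(g_1)]$. Likewise, $\Tr[\Gamma\,a^\dagger(g)^l a(f)^l]=l!\langle f^{\otimes l},\Gamma^{(l)}g^{\otimes l}\rangle$, not $l!\langle g^{\otimes l},\Gamma^{(l)}f^{\otimes l}\rangle$. The two slips cancel because every operator involved is self-adjoint, so this is harmless.

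The step that does need correcting is in the trace bound. You claim that $\Tr[(\Gamma_P)^{(l)}\otimes_s\1_{\otimes_s^{k-l}P\cH}]\le\binom{k}{l}^{-1}\binom{d+k-l-1}{k-l}\Tr(\Gamma_P)^{(l)}$, and this is false. Take $k=2$, $l=1$ and a one-particle state with $(\Gamma_P)^{(1)}=|\phi\rangle\langle\phi|$. Since $P_s=\frac12(\1+\mathrm{swap})$, one gets $\Tr[P_s(|\phi\rangle\langle\phi|\otimes\1)]=(d+1)/2$, which exceeds your bound $d/2$.

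Had the claim been true, your factors would combine to $\binom{k}{l}$ rather than $\binom{k}{l}^2$. So the ``crude overcount'' remark hides an inconsistency in the bookkeeping rather than a harmless loss.

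What you need is only $\Tr[P_s(A\otimes B)P_s]\le\Tr A\,\Tr B$ for $A,B\ge0$, which holds because $0\le P_s\le\1$. Combine it with three facts:
\begin{itemize}
\item $\Tr(\Gamma_P)^{(l)}=\Tr[\binom{\fN}{l}\Gamma_P]\le\Tr[\fN^l\Gamma_P]/l!$;
\item $\dim\otimes_s^{k-l}P\cH=\binom{d+k-l-1}{k-l}$;
\item the prefactor $k!\binom{k}{l}$ from \eqref{eq:Chiribella}.
\end{itemize}
This gives $k!\binom{k}{l}\cdot\frac{1}{l!}\cdot\frac{(k-l+d-1)!}{(k-l)!\,(d-1)!}=\binom{k}{l}^2\frac{(k-l+d-1)!}{(d-1)!}$, which is exactly the constant in \eqref{eq:quantitative}.
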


Finally, we recall a Berezin-Lieb type inequality given in \cite[Theorem 7.1]{LNR15}, which links the relative entropy of two quantum states to the classical entropy of their de Finetti measures.

\begin{theorem}[Relative entropy: quantum to classical] \label{thm:rel-entropy}
	Let $\Gamma$ and $\Gamma'$ be two states on $\cF$. Let $\mu_{P,\Gamma}^\lambda$ and $\mu_{P,\Gamma'}^\lambda$ be the lower symbols defined in \eqref{eq:Husimi}.  Then we have
	\begin{align}\label{eq:Berezin-Lieb}
		\fH(\Gamma,\Gamma')\geq \fH(\Gamma_P,\Gamma'_P)\geq \fH_{\textnormal{cl}}(\mu_{P,\Gamma}^\lambda,\mu_{P,\Gamma'}^\lambda).
	\end{align}
\end{theorem}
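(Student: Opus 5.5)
The first inequality is the monotonicity of the relative entropy under partial traces. The second is the same monotonicity principle applied to the positive, trace-preserving map sending a state on $\cF(P\cH)$ to its lower symbol. I would prove them separately in this order.

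\emph{First inequality.} Since $\fU$ in \eqref{eq:Fock factor} is unitary, the von Neumann relative entropy is invariant under conjugation. Hence
\[
\fH(\Gamma,\Gamma')=\fH(\fU\Gamma\fU^*,\fU\Gamma'\fU^*),
\]
where both arguments are states on $\cF(P\cH)\otimes\cF(Q\cH)$. By definition, $\Gamma_P$ and $\Gamma'_P$ are the partial traces over $\cF(Q\cH)$. The bound $\fH(\Gamma,\Gamma')\ge\fH(\Gamma_P,\Gamma'_P)$ is then Lindblad--Uhlmann monotonicity under partial trace, a completely positive trace-preserving map. If $\fH(\Gamma,\Gamma')=+\infty$ there is nothing to prove. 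Otherwise I would quote the infinite-dimensional version of monotonicity, which holds for arbitrary density matrices on separable Hilbert spaces.

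\emph{Second inequality.} Let $d=\Tr P<\infty$ and identify $P\cH\simeq\CC^d$. Consider the map
\[
\Phi:\ \gamma\longmapsto \phi_\gamma(u):=(\lambda\pi)^{-d}\big\langle W(u/\sqrt\lambda),\gamma\,W(u/\sqrt\lambda)\big\rangle ,
\]
from trace-class operators on $\cF(P\cH)$ to $L^1(\CC^d,\ud u)$. It is positive. By the resolution of identity \eqref{resolution of identity}, after the change of variables $u\mapsto u/\sqrt\lambda$, it is trace preserving: $\int\phi_\gamma=\Tr\gamma$. Thus $\Phi$ is a measurement channel given by the POVM $\{(\lambda\pi)^{-d}|W(u/\sqrt\lambda)\rangle\langle W(u/\sqrt\lambda)|\,\ud u\}$, whose output is a classical (commutative) system.

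For channels into a commutative algebra, positivity already implies complete positivity. Monotonicity of the relative entropy then gives $\fH(\Gamma_P,\Gamma'_P)\ge\fH_{\rm cl}(\Phi\Gamma_P,\Phi\Gamma'_P)$, and the right-hand side is exactly $\fH_{\rm cl}(\mu^\lambda_{P,\Gamma},\mu^\lambda_{P,\Gamma'})$ by \eqref{eq:Husimi}.

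A more self-contained route is the Berezin--Lieb argument. I would use the variational formula
\[
\fH_{\rm cl}(\mu,\mu')=\sup_f\Big\{\int f\,\ud\mu-\log\int e^f\ud\mu'\Big\},
\]
with $f$ ranging over bounded continuous functions. For such $f$, the upper-symbol operator
\[
A_f=(\lambda\pi)^{-d}\int f(u)\,|W(u/\sqrt\lambda)\rangle\langle W(u/\sqrt\lambda)|\,\ud u
\]
satisfies $\int f\,\ud\mu^\lambda_{P,\Gamma}=\Tr(A_f\Gamma_P)$. Jensen's inequality along the coherent-state POVM gives the Berezin--Lieb bound
\[
\Tr e^{A_f+\log\Gamma'_P}\le\ldots\le\int e^{f}\,\ud\mu^\lambda_{P,\Gamma'} .
\]
Combined with the Gibbs variational principle $\fH(\Gamma_P,\Gamma'_P)\ge\Tr(A_f\Gamma_P)-\log\Tr e^{A_f+\log\Gamma'_P}$, taking the supremum over $f$ would conclude.

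\emph{Main obstacle.} The delicate step is the Berezin--Lieb/Jensen inequality for $\log\Gamma'_P$, which is unbounded on the infinite-dimensional space $\cF(P\cH)$, together with the justification of the sup formulas. To handle it, I would first replace $\Gamma'_P$ by a truncation to $\{\fN\le M\}$, mixed with a small multiple of a reference state so that its logarithm is bounded below. I would prove the inequality there, and then remove the regularization by letting $M\to\infty$. This uses the lower semicontinuity of both relative entropies and the fact that the lower symbols converge in total variation, since $\Phi$ is a contraction from trace class to $L^1$. For this reason I would rather cite the general data-processing inequality for positive trace-preserving maps into commutative algebras, which handles these limits, and keep the Jensen argument only as a heuristic check.
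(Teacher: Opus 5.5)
The paper does not prove this statement; it simply quotes \cite[Theorem 7.1]{LNR15}. Your main argument is a correct proof along the standard lines: unitary invariance plus monotonicity of the relative entropy under the partial trace over $\cF(Q\cH)$, then data processing for the coherent-state measurement channel $\gamma\mapsto\phi_\gamma$, which is completely positive because its range is commutative and trace preserving by \eqref{resolution of identity}. You are also right to keep the Jensen route only as a heuristic: via the two Gibbs variational principles, the bound $\Tr e^{A_f+\log\Gamma'_P}\le\int e^f\,\ud\mu^\lambda_{P,\Gamma'}$ for all $f$ is equivalent to the second inequality itself, and Jensen along the coherent-state POVM does not produce it once $\log\Gamma'_P$ is added.
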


\subsection{Free energy lower bound}

We first prove the lower bound on the relative free energy. The high-momentum truncation and the finite-dimensional semiclassical analysis follow the same general strategy as in the quartic case \(p=2\), but the localization of the interaction requires additional estimates. When \(p=2\), the interaction \(\mathbb W\) contains only the
two-body kernel \(v^\varepsilon\), together with the one-body and scalar counterterms. For \(p\ge3\), Wick renormalization produces counterterms of every order \(0\le l<p\), so we perform the localization for all the kernels \(F_l^\varepsilon\).


\begin{lem}\label{Lem:lower bound}
Let $\theta_0:=\frac12\min\{1,\delta_0\}$ with $\delta_0$ defined in \eqref{v-fourier}.
	Assume that $0<\eta<\min\left\{\frac{1}{96p},
	\frac{15\theta_0}{16(2p-2+\theta_0)}\right\}$.
	Let $0<\lambda<1/2$, and
	$\lambda^\eta\leq\varepsilon\leq(4R)^{-1}$ with $R$ defined in \eqref{def:R}. Let $N\in\NN$ satisfy
	\(\lambda^{-6p\eta}\leq N\leq\lambda^{-1/8}\). 
	Let $W_p^{\varepsilon,N}$ be defined in \eqref{W}.
	Consider the partition functions $Z_{\lambda}$ and $Z_0$ defined in \eqref{def:Gibbs state} and \eqref{def:Gaussian state} respectively. Then there exists $C>0$ depending only on $p$ and the function $v$, such that 
	\begin{align}
		-\log\frac{Z_{\lambda}}{Z_0}\geq&-\log\int e^{-W_p^{\varepsilon,N}(u)}\ud\mu_{0}(u)-C|\log\lambda|^{p^2}((\log N)^{p-2}+1)
		(N^{-1/4}\varepsilon^{3-3p}+\lambda^\delta)\,,\label{2.0}
	\end{align}
where $\delta:=15\theta_0/16-(2p-2+\theta_0)\eta>0$.
\end{lem}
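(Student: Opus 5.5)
We follow the lower-bound strategy of \cite[Section~9]{LNR21} and \cite{NZZ25}, carried out for every order of the renormalization hierarchy at once. We start from the Gibbs variational principle \eqref{variational problem} evaluated at the minimizer,
\[
-\log\frac{Z_\lambda}{Z_0}=\fH(\Gamma_\lambda,\Gamma_0)+\Tr(\WW\Gamma_\lambda),
\]
and use the Fourier form \eqref{def:quantum-W} of $\WW$. Let $P=P_N$, and split each $e_{k}=e_k^-+e_k^+$ as in \eqref{1.4}. Inserting this splitting into every product $\prod_{j=1}^l\Wick{\ud\Gamma(e_{k_j})}$, $2\le l\le p$, expresses $\WW$ as $\WW^-+\mathbb E$. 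Here $\WW^-$ contains only the low-momentum factors $\Wick{\ud\Gamma(e_{k_j}^-)}$. The error $\mathbb E$ is a sum of products in which at least one factor is $\Wick{\ud\Gamma(e_{k_j}^+)}$.

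\textbf{High-momentum error.} We bound $\lambda^l\,|\Tr(\prod_j\Wick{\ud\Gamma(e_{k_j}^{\pm})}\Gamma_\lambda)|$ by Cauchy--Schwarz. We keep one high factor and bound the remaining $l-1$ factors by $(\lambda\fN+\lambda N_0)^{l-1}$ in operator sense, on each $n$-sector. Then \eqref{1.4} and Lemma~\ref{Lemma4.3} give the bound
\[
\bigl(N^{-1/2}\varepsilon^{2-2p}|\log\lambda|^p+(|k|^2+|k|N)\lambda^2|\log\lambda|^p\bigr)^{1/2}|\log\lambda|^{p(l-1)}.
\]
We then sum against $|\widehat{F_l^\varepsilon}(k_{1:l})|\le\sum_{\fm}|C_\fm|\fF(F^\varepsilon_{l,\fm})$, using the positivity \eqref{f:positive Fourier}. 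The sum $\sum_k\fF(F^\varepsilon_{l,\fm})(k,\cdot)$ produces $f^\varepsilon_{l,\fm}(0)\lesssim\varepsilon^{2-2l}|\log\lambda|^{p-l}$ by \eqref{F:bound1}. The weights $|k|$ are handled by splitting at $|k|\le\varepsilon^{-1}$ and using the decay \eqref{v-fourier} of $\widehat v$ for larger $|k|$. This decay is where $\theta_0$ and the condition on $\eta$ enter, producing the $\lambda^\delta$ term. These bounds give an error of order $|\log\lambda|^{p^2}(N^{-1/4}\varepsilon^{3-3p}+\lambda^\delta)$.

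\textbf{Localization and semiclassics.} The low part $\WW^-$ acts only on $\cF(P\cH)$, so $\Tr(\WW^-\Gamma_\lambda)=\Tr(\WW^-\Gamma_{\lambda,P})$. We write $\WW^-$ through upper symbols: each $\lambda^l\prod_j\ud\Gamma(e^-_{k_j})$ equals the anti-Wick (upper) symbol $\int\prod_j\langle u,e_{k_j}^-u\rangle\,\ud\mu^\lambda_{P,\Gamma}$ plus lower-order terms. These terms are controlled by \eqref{eq:Chiribella}--\eqref{eq:quantitative} with $d=\Tr P\sim N^2$. They contribute $\lambda d\,|\log\lambda|^{pl}$ times kernel norms. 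Since $N\le\lambda^{-1/8}$, these are absorbed in $\lambda^\delta$. In addition, the Wick centering with respect to $\Gamma_0$ must be replaced by centering with respect to $G_N$, and the smooth cutoff $\widehat\rho$ must be compared with the sharp projection. The resulting differences in $f_l$ and the constants give $(\log N)^{p-2}$-type factors, and are handled as in \cite{NZZ25} through the pointwise comparison of the Gaussian lower symbol $\mu^\lambda_{P,\Gamma_0}$ with the cylindrical measure $\mu_{0,P}$. This yields
\[
\Tr(\WW^-\Gamma_\lambda)\ge\int W_p^{\varepsilon,N}\,\ud\mu^\lambda_{P,\Gamma_\lambda}-\text{error}.
\]
Theorem~\ref{thm:rel-entropy} gives $\fH(\Gamma_\lambda,\Gamma_0)\ge\fH_{\rm cl}(\mu^\lambda_{P,\Gamma_\lambda},\mu^\lambda_{P,\Gamma_0})$. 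After replacing $\mu^\lambda_{P,\Gamma_0}$ by the Gaussian cylindrical measure up to a small error, the classical variational principle \eqref{eq:var-classical} for $W_p^{\varepsilon,N}$ gives \eqref{2.0}.

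\textbf{Main obstacle.} The hardest part is the high-momentum step. The kernels $F_l^\varepsilon$ have no sign, and their $\ell^1$ Fourier norms blow up like $\varepsilon^{2-2l}$. The plan is to always dominate them by the positive-Fourier pieces $F^\varepsilon_{l,\fm}$ and to track the exponents so that the single factor $N^{-1/4}\varepsilon^{3-3p}$ collects all $l$. If Cauchy--Schwarz loses too much in $\varepsilon$, we would fall back to bounding only the first high factor via \eqref{1.4} and using the commutator structure in the proof of \eqref{1.4} for the others.
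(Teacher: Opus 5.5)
\textbf{Gap 1: the kernels of $\WW$ are not those of $W_p^{\varepsilon,N}$.} Your overall architecture is the paper's, but the argument as written does not arrive at $W_p^{\varepsilon,N}$. The interaction $\WW$ in \eqref{def:quantum-W} is built from $F_l^\varepsilon$, $f_1^\varepsilon$, $f_0^\varepsilon$, which contain $G$. By contrast, $W_p^{\varepsilon,N}$ in \eqref{W1} is built from $f_l^{\varepsilon,N}$, which contain $G_N$. Your $\WW^-$ keeps the former, so the de Finetti step produces the classical functional with kernels $f_l^\varepsilon$, not $W_p^{\varepsilon,N}$.

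You assign this discrepancy to the lower-symbol comparison of \cite{NZZ25}. That estimate, \eqref{5.23}, only replaces $\mu^\lambda_{P,\Gamma_0}$ by $\mu_{0,P}$ inside the entropy; it cannot change interaction kernels. The missing step is the paper's kernel replacement, term $\mathrm{I}$ after \eqref{def:W_N}:
\begin{itemize}
\item By \eqref{f:positive Fourier}, $\fF(F^\varepsilon_{l,\fm})\ge\fF(F^{\varepsilon,N}_{l,\fm})\ge0$, hence $\sum_k|\fF(F^\varepsilon_l-F^{\varepsilon,N}_l)(k)|\lesssim\sum_{\fm}(f^\varepsilon_{l,\fm}(0)-f^{\varepsilon,N}_{l,\fm}(0))$.
\item A telescoping in $G-G_N$ with Lemma~\ref{lem:G} bounds each difference by $\varepsilon^{1-2l}N^{-1}$ up to logarithms. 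With the moment bounds this costs $O(N^{-1}\varepsilon^{3-2p}|\log\lambda|^{p(p-1)})$; $f_0,f_1$ are treated the same way.
\item Your splitting only covers $2\le l\le p$. The one-body term $f_1\lambda\Wick{\fN}$ also has a high part, $\lambda\Tr((1-P_N^2)(\Gamma_\lambda^{(1)}-\Gamma_0^{(1)}))=O(N^{-1}|\log\lambda|^p)$, which is controlled by \eqref{eqa priori bound on RDM}.
\end{itemize}
All these errors are $N^{-1}$-small; none of them produces the factor $(\log N)^{p-2}$.

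\textbf{Gap 2: the high-momentum sum with un-truncated kernels.} The order of the two replacements is what makes the high-momentum step work. The paper swaps kernels first and only then removes high momenta, using $F_l^{\varepsilon,N}$. In \eqref{F:fourier} the internal momenta then carry $\fF(G_N^{m})$, which by \eqref{GN:bound} is supported in $|q|\le mN$. So the weight $\Theta(|k_1|)$ coming from \eqref{1.4} is bounded by $\sum_i\Theta(|t_i|+CN)$, and the remaining free sums give $\sum_q\fF(G_N^m)(q)=G_N(0)^m\lesssim(\log N)^m$. That is the actual origin of $((\log N)^{p-2}+1)$. Interpolating $\min\{a,b\}\le a^{\theta_0}b^{1-\theta_0}$ and using $\sum_t\widehat v(\varepsilon t)|t_1|^{\theta_0}\lesssim\varepsilon^{2-2p-\theta_0}$ then gives $N^{-1/4}\varepsilon^{3-3p}+\lambda^\delta$.

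With your kernels $F^\varepsilon_l$ this breaks down. For $l<p$, $\fF(F^\varepsilon_{l,\fm})$ is a convolution of $\widehat{v^\varepsilon}$ with $\fF(G^m)$, which is not compactly supported, and the paper's bound on the free internal sums would become $G(0)^m=\infty$. Moreover, when the contraction graph has a dashed cycle through three or more integrated vertices, there is a loop momentum on which $\widehat v$ gives no decay at all. So the decay of $\widehat v$ alone does not control $\sum_k|\fF(F^\varepsilon_l)(k)|\Theta(|k_1|)$. You would need a separate graph-by-graph estimate, which you do not supply; the simple fix is to adopt the paper's order.

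Two smaller points:
\begin{itemize}
\item Cauchy--Schwarz needs the high factor at an end of the product. Since $\Wick{\ud\Gamma(e^-_k)}$ and $\Wick{\ud\Gamma(e^+_{k'})}$ do not commute, you must first commute it out as in \eqref{5.16}; the commutators cost $O(\lambda|\log\lambda|^{p(l-1)})$.
\item $P_N$ is not a projection. Localization and Theorem~\ref{thm:quant deF} require $P=\1_{\{h\le(2\pi)^2N^2+1\}}$, and since $P_NP=P_N$, there is no smooth-versus-sharp comparison error at all.
\end{itemize}
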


\begin{proof}
	Recall that $e_k^-=P_Ne_kP_N$ for any $k\in\ZZ^2$, with $P_N=\fF^{-1}(\widehat\rho(N^{-1}\cdot)\fF)$. We define the quantum interaction on $\cF$ related to the truncated interaction term $W_p^{\varepsilon,N}$:
	\begin{align}
		\WW_N:=&\sum_{l=2}^p\lambda^l\sum_{k_{1:l}}\fF(F^{\varepsilon,N}_l)(k_{1:l})
		\prod_{j=1}^{l}\Wick{\ud\Gamma(e_{k_j}^-)} + f_1^{\varepsilon,N}\lambda\Wick{\ud\Gamma(P_N^2)} + f_0^{\varepsilon,N}\,,\label{def:W_N}
	\end{align}
where $F_l^{\varepsilon,N}(x_{1:l}):=f_l^{\varepsilon,N}(x_2-x_1,\cdots,x_l-x_1)$ as in \eqref{F},
with $f_p^{\varepsilon,N}=\frac1pv^\varepsilon$ and $f_l^{\varepsilon,N}$, $l<p$, given by \eqref{def:F_l} with $G$ replaced by $G_N$.

\noindent {\bf Step 1: High-frequency truncation.} First, we estimate the expectation of $\WW-\WW_{N}$. From \eqref{def:quantum-W} and \eqref{def:W_N},  $$\Tr((\WW-\WW_{N})\Gamma_{\lambda})= \textnormal{I}+\textnormal{II}\,,$$ where
\begin{align*}
	\textnormal{I}=&\sum_{l=2}^{p-1}\lambda^l\sum_{k_{1:l}}\fF(F^{\varepsilon}_l-F^{\varepsilon,N}_l)(k_{1:l})
	\left\langle\prod_{j=1}^{l}\Wick{\ud\Gamma(e_{k_j})}\right\rangle_{\lambda} + (f_1^\varepsilon-f_1^{\varepsilon,N})\lambda\Tr(\Gamma_\lambda^{(1)}-\Gamma_0^{(1)}) + f_0^\varepsilon-f_0^{\varepsilon,N}\,,\\
	\textnormal{II}=&\sum_{l=2}^{p}\lambda^l\sum_{k_{1:l}}\fF(F^{\varepsilon,N}_l)(k_{1:l})\left\langle\prod_{j=1}^{l}\Wick{\ud\Gamma(e_{k_j})}-\prod_{j=1}^{l}\Wick{\ud\Gamma(e_{k_j}^-)}\right\rangle_{\lambda} + f_1^{\varepsilon,N}\lambda\Tr((1-P_N^2)(\Gamma_\lambda^{(1)}-\Gamma_0^{(1)}))\,.
\end{align*}

We first bound the term I. 
For the one-body term, by \eqref{eqa priori bound on RDM} and
\(\Tr(h^{-2})<\infty\),
\[
\lambda\left|\Tr(\Gamma_\lambda^{(1)}-\Gamma_0^{(1)})\right|
\le\lambda\left\|h^{1/2}(\Gamma_\lambda^{(1)}-\Gamma_0^{(1)})h^{1/2}\right\|_{\mathrm{HS}}\|h^{-1}\|_{\mathrm{HS}}
\lesssim|\log\lambda|^p.
\]
Hence
\[
\left|
(f_1^\varepsilon-f_1^{\varepsilon,N})
\lambda\Tr(\Gamma_\lambda^{(1)}-\Gamma_0^{(1)})
\right|
\lesssim|\log\lambda|^p
|f_1^\varepsilon-f_1^{\varepsilon,N}|.
\]
For each \(2\le l\le p-1\), by \eqref{0.2} and \(\lambda N_0\lesssim|\log\lambda|\), we have
\[
\lambda^l
\left|
\left\langle
\prod_{j=1}^l\Wick{\ud\Gamma(e_{k_j})}
\right\rangle_\lambda
\right|
\lesssim
\lambda^l
\left\langle(\mathcal N+N_0)^l\right\rangle_\lambda
\lesssim
|\log\lambda|^{pl}.
\]
In order to estimate the Fourier coefficients of
\(F_l^\varepsilon-F_l^{\varepsilon,N}\), we use the component decomposition,
\[
F_l^\varepsilon-F_l^{\varepsilon,N}
=
\sum_{\substack{\fm\in\fM_{p,l}\\ C_\fm\neq0}}
C_\fm
\bigl(F_{l,\fm}^\varepsilon-F_{l,\fm}^{\varepsilon,N}\bigr),
\]
where $F_{l,\fm}^\varepsilon$ and $F_{l,\fm}^{\varepsilon,N}$
are defined in \eqref{def:Fl,m}.
By \eqref{f:positive Fourier}, we have
\[
\sum_{k_{1:l}}
\left|\mathcal F(F_l^\varepsilon-F_l^{\varepsilon,N})(k_{1:l})
\right|
\lesssim
\sum_{\substack{\fm\in\fM_{p,l}\\ C_\fm\neq0}}
\sum_{k_{1:l}}
\mathcal F(F_{l,\fm}^\varepsilon-F_{l,\fm}^{\varepsilon,N})(k_{1:l})
=\sum_{\substack{\fm\in\fM_{p,l}\\ C_\fm\neq0}}(f_{l,\fm}^\varepsilon(0)-f_{l,\fm}^{\varepsilon,N}(0)).
\]
Combining the above estimates, we obtain
\begin{align}
	|\textnormal{I}|\lesssim&|f_0^\varepsilon-f_0^{\varepsilon,N}|+|\log\lambda|^p|f_1^\varepsilon-f_1^{\varepsilon,N}|
	+\sum_{l=2}^{p-1}|\log\lambda|^{pl}\sum_{\substack{\fm\in\fM_{p,l}\\C_\fm\neq0}}(f^{\varepsilon}_{l,\fm}(0)-f^{\varepsilon,N}_{l,\fm}(0)).\label{5.13}
\end{align}
For $i=0,1$, a telescopic expansion replaces one factor $G$ by $G_N$ at a time.  Every resulting term contains one factor $F_N:=G-G_N$ and otherwise only factors $G$ or $G_N$. The proof of \eqref{G_N1}, with \eqref{F_{N,M}} and the H\"older's inequality, therefore gives
\begin{align}
	|f_i^\varepsilon-f_i^{\varepsilon,N}|
	\lesssim&|\log\varepsilon|^{p-1-i}\left(\int_{|x|\lesssim1}|F_N(\varepsilon x)|^2\ud x\right)^\frac12\lesssim\varepsilon^{-1}|\log\varepsilon|^{p-1-i}N^{-1}.\label{5.14}
\end{align}
Fix $2\leq l\leq p-1$ and $\fm\in\fM_{p,l}$ with $C_\fm\neq0$. By \eqref{C_g2}, $m_{ab}=0$ for $1\leq a<b\leq l,$ so every Green-function factor contains at least one of the integrated variables $x_{l+1},\ldots,x_p$. Enumerate the $p-l$ edge copies encoded by $\fm$, including repetitions, as $(a_s,b_s),$ $1\leq s\leq p-l.$ Then, we have
\begin{align*}
	f_{l,\fm}^{\varepsilon}(0)
	&=\int_{\Lambda^{p-l}}v^\varepsilon(0,\ldots,0,x_{l+1},\ldots,x_p)
	\prod_{s=1}^{p-l}G(z_s)\,\ud x_{l+1:p},
\end{align*}
where $z_s:=x_{a_s}-x_{b_s}$ for $1\leq s\leq p-l$ and $x_{1:l}:=0$. $f_{l,\fm}^{\varepsilon,N}(0)$ is analogous with $G$ replaced by $G_N$. Since $F_N=G-G_N$, the exact telescopic identity is
\begin{align*}
	\prod_{s=1}^{p-l}G(z_s)-\prod_{s=1}^{p-l}G_N(z_s)
	=\sum_{r=1}^{p-l} F_N(z_r)
	\prod_{s<r}G_N(z_s)\prod_{s>r}G(z_s).
\end{align*}
Since $|G(x)|+|G_N(x)|\lesssim1+|\log|x||$ for $x\in\Lambda$, it follows that
\begin{align*}
	|f_{l,\fm}^{\varepsilon}(0)-f_{l,\fm}^{\varepsilon,N}(0)|
	\lesssim\sum_{r=1}^{p-l} \int_{\Lambda^{p-l}}v^\varepsilon(0,\ldots,0,x_{l+1},\ldots,x_p)|F_N(z_r)|\prod_{s\neq r}(1+|\log d(z_s)|)\,\ud x_{l+1:p}.
\end{align*}
As in \eqref{G_N1}, by the change of variables $x_j=\varepsilon h_j$, and using \eqref{F_{N,M}} together with H\"older's inequality, we obtain
\begin{align*}
	|f_{l,\fm}^{\varepsilon}(0)-f_{l,\fm}^{\varepsilon,N}(0)|
	&\lesssim \varepsilon^{-2(l-1)}|\log\varepsilon|^{p-l-1}
	\left(\int_{|z|\leq C}|F_N(\varepsilon z)|^2\,\ud z\right)^{1/2}\lesssim \varepsilon^{-2(l-1)-1}|\log\varepsilon|^{p-l-1}N^{-1},
\end{align*}
where $C$ depends only on the support of $v$. 
Therefore, by \eqref{5.14} and \eqref{entropy}, \eqref{5.13} becomes
\begin{align}
	|\textnormal{I}|\lesssim N^{-1}\varepsilon^{3-2p}|\log\lambda|^{p(p-1)}\,.\label{4.1}
\end{align}

Now we estimate $\textnormal{II}$. For $l=1$, by \eqref{eqa priori bound on RDM}, we have
\begin{align}
	\lambda|\Tr(1-P_N^2)(\Gamma_{\lambda}^{(1)}-\Gamma_0^{(1)})|
	\lesssim\lambda\|h^{\frac12}(\Gamma_{\lambda}^{(1)}-\Gamma_0^{(1)})h^{\frac12}\|_{\text{HS}}\|(1-P_N^2)h^{-1}\|_{\text{HS}}\lesssim N^{-1}|\log\lambda|^p \,.\label{5.15}
\end{align}
For $2\leq l\leq p$, we write
\begin{align}\label{5.18}
	\left\langle\prod_{j=1}^{l}\Wick{\ud\Gamma(e_{k_j})}-\prod_{j=1}^{l}\Wick{\ud\Gamma(e_{k_j}^-)}\right\rangle_{\lambda}
	 =\sum_{i=1}^l\left\langle B_1\cdots B_{i-1}A_iC_{i+1}\cdots C_l\right\rangle_\lambda,
\end{align}
where $A_j=\Wick{\ud\Gamma(e_{k_j}^+)},$ $B_j=\Wick{\ud\Gamma(e_{k_j}^-)}$ and $C_j=A_j+B_j$ for $1\leq j\leq l$. 
Fix $1\leq i\leq l$. We now move \(A_i\) to the left of the factors \(B_j\). By the commutator identity, we have
\begin{align}
	\left\langle B_1\cdots B_{i-1}A_iC_{i+1}\cdots C_l\right\rangle_\lambda
	=
\left\langle A_iB_1\cdots B_{i-1}C_{i+1}\cdots C_l\right\rangle_\lambda
	+\left\langle
	\left[B_1\cdots B_{i-1},A_i\right]C_{i+1}\cdots C_l
	\right\rangle_\lambda.\label{5.16}
\end{align}
For the second term, expanding the commutator by the Leibniz rule as in \eqref{change}, we obtain
\[
\lambda^l\left|\left\langle
\left[B_1\cdots B_{i-1},A_i\right]C_{i+1}\cdots C_l
\right\rangle_\lambda\right|
\lesssim\lambda^l
\left\langle\fN(\fN+N_0)^{l-2}\right\rangle_\lambda\lesssim\lambda|\log\lambda|^{p(l-1)},
\]
where we used \eqref{0.2} and $\lambda N_0\lesssim|\log\lambda|$ in the last inequality.
Similarly, it holds that
$$\lambda^{l-1}\left\langle \left|B_1\cdots B_{i-1}C_{i+1}\cdots C_l\right|^2\right\rangle_\lambda^\frac12
\lesssim\langle(\lambda(\fN+N_0))^{2(l-1)}\rangle_\lambda^\frac12
\lesssim|\log\lambda|^{p(l-1)}.$$
Therefore, by \eqref{5.16} and the Cauchy-Schwarz inequality, \eqref{5.18} is bounded by
\begin{align}\label{1.5}
	\lambda^l\left|\left\langle\prod_{j=1}^{l}\Wick{\ud\Gamma(e_{k_j})}-\prod_{j=1}^{l}\Wick{\ud\Gamma(e_{k_j}^-)}\right\rangle_{\lambda}\right|
	\lesssim|\log\lambda|^{p(l-1)}\sum_{i=1}^l
	\langle|\lambda A_i^*|^2\rangle_\lambda^\frac12+\lambda|\log\lambda|^{p(l-1)}.
\end{align}
For each $1\leq i\leq l$, by \eqref{1.4} and $\langle|\lambda A_i^*|^2\rangle_\lambda\leq\langle|\lambda(\fN+N_0)|^2\rangle_\lambda\lesssim|\log\lambda|^{2p}$, we have
\begin{align*}
	\langle|\lambda A_i^*|^2\rangle_\lambda^\frac12
	\lesssim&\min\left\{(N^{-\frac12}\varepsilon^{2-2p}|\log\lambda|^p
	+\lambda^2|\log\lambda|^pN|k_i|^2)^\frac12,|\log\lambda|^p\right\},
\end{align*}
where we used $|k_i|^2+N|k_i|\lesssim N |k_i|^2$.
Therefore, by \eqref{1.5} and the symmetry of $F_l^{\varepsilon,N}$, summing over $k_{1:l}\in(\ZZ^2)^l$, we have 
\begin{align}
	&\lambda^l\left|\sum_{k_{1:l}\in(\ZZ^2)^l}\fF(F^{\varepsilon,N}_l)(k_{1:l})\left\langle\prod_{j=1}^{l}\Wick{\ud\Gamma(e_{k_j})}-\prod_{j=1}^{l}\Wick{\ud\Gamma(e_{k_j}^-)}\right\rangle_{\lambda}\right|
	\lesssim|\log\lambda|^{p(l-1)}\sum_{k_{1:l}\in(\ZZ^2)^l}|\fF(F^{\varepsilon,N}_l)(k_{1:l})|\Theta(|k_1|)\,,\label{5.17}
\end{align}
where $$\Theta(|k_1|):=\min\left\{(N^{-\frac12}\varepsilon^{2-2p}|\log\lambda|^p
+\lambda^2|\log\lambda|^pN|k_1|^2)^\frac12,|\log\lambda|^p\right\}+\lambda.$$

In the following, we prove that, for each $2\leq l\leq p$,
\begin{align}\label{1.4'}
	\sum_{k_{1:l}\in(\mathbb Z^2)^l}
	|\fF(F_l^{\varepsilon,N})(k_{1:l})|\Theta(|k_1|)
	\lesssim
	((\log N)^{p-l}+1)\sum_{t_{1:p-1}\in(\ZZ^2)^{p-1}}
	\widehat v(\varepsilon t_{1:p-1})\Theta(|t_1|+CN),
\end{align}
where \(C>0\) depends only on \(p\).
For \(l=p\), since
\[
\mathcal F(F_p^{\varepsilon,N})(k_{1:p})
=
p^{-1}\1_{k_1+\cdots+k_p=0}\,
\widehat v(\varepsilon k_{2:p}),
\]
by the definition of $\Theta$ and the symmetry of \(\widehat v\), we have
\[
\sum_{k_{1:p}}
|\mathcal F(F_p^{\varepsilon,N})(k_{1:p})|\Theta(|k_1|)
\lesssim
\sum_{j=2}^p\sum_{k_{2:p}}\widehat v(\varepsilon k_{2:p})\Theta(|k_j|)\lesssim\sum_{k_{2:p}}
\widehat v(\varepsilon k_{2:p})\Theta(|k_2|),\]
which is bounded by the right-hand side of \eqref{1.4'}.
For \(2\le l<p\), by \eqref{f:positive Fourier}, we have
\begin{align}\label{F_l,m}
	|\fF(F_l^{\varepsilon,N})(k_{1:l})|
	\lesssim\sum_{\substack{\fm\in\mathcal M_{p,l}\\C_\fm\neq0}}\fF(F_{l,\fm}^{\varepsilon,N})(k_{1:l}).
\end{align}
Fix such an \(\fm\). Let \(E_p:=\{(a,b):1\le a<b\le p\}\) and write \(m_\me=m_{ab}\) for \(\me=(a,b)\). We use the convention \(G_N^0\equiv1\). By \eqref{B5}, we have
\begin{align}\label{F:fourier}
	\fF(F^{\varepsilon,N}_{l,\fm})(k_{1:l})=
	\1_{k_1+\cdots+k_l=0}
	\sum_{(q_\me)_{\me\in E_p}}\left(\prod_{\me\in E_p}\fF(G_N^{m_\me})(q_\me)\right)\widehat{v^\varepsilon}(k_2-\mathbf q_1,\cdots,k_l-\mathbf q_{l-1},-\mathbf q_{l:p-1})
\end{align}
where 
$$\mathbf q_r=\sum_{\me=(a,b)\in E_p}(\1_{a=r+1}-\1_{b=r+1})q_\me,\quad 1\leq r\leq p-1.$$
Set $E_*:=\{\me_j=(1,j+1): l\le j\le p-1\}.$
For \(j=l,\ldots,p-1\), the edge \(\me_j\) contributes only to \(\mathbf q_j\), and its contribution is \(-q_{\me_j}\). More precisely, for fixed \((q_\me)_{\me\in E_p\setminus E_*}\), we write $\mathbf q_j=-q_{\me_j}+\widetilde{\mathbf q}_j,$ where
\[
\widetilde{\mathbf q}_j
:=
\sum_{\me=(a,b)\in E_p\setminus E_*}
(\1_{a=j+1}-\1_{b=j+1})q_\me .
\]
Then, by the change of variable $t_i=k_{i+1}-\mathbf q_i$ for $1\leq i\leq l-1$ and $t_j=q_{\me_j}-\widetilde{\mathbf q_j}$ for $l\leq j\leq p-1$, it follows that
\begin{align*}
	&\sum_{k_{1:l}\in(\mathbb Z^2)^l}
	\big|\mathcal F(F_{l,\fm}^{\varepsilon,N})(k_{1:l})\big|
	\Theta(|k_1|)\\
	=&
	\sum_{(q_\me)_{e\me\in E_p\setminus E_*}}
	\left(\prod_{\me\in E_p\setminus E_*}
	\mathcal F(G_N^{m_\me})(q_\me)\right)
	\sum_{t_{1:p-1}}
	\widehat v(\varepsilon t_{1:p-1})
	\left(\prod_{j=l}^{p-1}\fF(G_N^{m_{\me_j}})
	(t_j+\widetilde{\mathbf q_j})\right)
	\Theta\left(\left|\sum_{i=1}^{l-1}(t_i+\mathbf q_i)\right|\right).
\end{align*}
By \eqref{GN:bound}, we bound $\fF(G_N^{m_{\me_j}})$ by a constant, and on the
support of the remaining product, \(|\mathbf q_i|\lesssim N\) for \(1\le i\le l-1\). Therefore,
$$\Theta\left(\left|\sum_{i=1}^{l-1}(t_i+\mathbf q_i)\right|\right)\lesssim\sum_{i=1}^{l-1}\Theta(|t_i+\mathbf q_i|)\lesssim\sum_{i=1}^{l-1}\Theta(|t_i|+CN),$$
for some $C>0$.
Then by the symmetry of $v$, we have
\begin{align*}
	&\sum_{k_{1:l}\in(\mathbb Z^2)^l}
	\big|\mathcal F(F_{l,\fm}^{\varepsilon,N})(k_{1:l})\big|
	\Theta(|k_1|)
	\lesssim\left(\prod_{\me\in E_p\setminus E_*}G_N(0)^{m_\me}\right)
	\sum_{t_{1:p-1}}\widehat v(\varepsilon t_{1:p-1})\Theta(|t_1|+CN).
\end{align*}
Since $\sum_{\me\in E_p}m_\me=p-l,$ by \eqref{C3_2}, \eqref{1.4'} holds for $2\leq l<p$.
Substituting into \eqref{5.17}, by \eqref{5.15} and \eqref{f_0}, we obtain
\begin{align}
	|\textnormal{II}|\lesssim |\log\lambda|^{p(p-1)}((\log N)^{p-2}+1)
	\sum_{t_{1:p-1}\in(\mathbb Z^2)^{p-1}}
	\widehat v(\varepsilon t_{1:p-1})\Theta(|t_1|+CN)+N^{-1}|\log\varepsilon|^{p-1}|\log\lambda|^{p}\,.\label{5.21}
\end{align}

Since $N\leq\lambda^{-1/8}$ and $0<\theta_0<1$, by the definition of \(\Theta\), we have
\begin{align*}
	\Theta(|t_1|+CN)
	\lesssim&N^{-1/4}\varepsilon^{1-p}|\log\lambda|^{p/2}+(\lambda N^{1/2}|t_1|)^{\theta_0}|\log\lambda|^p,
\end{align*}
and by \eqref{v-fourier},
\[
\sum_{t_{1:p-1}\in(\mathbb Z^2)^{p-1}}
\widehat v(\varepsilon t_{1:p-1}) |t_1|^{\theta_0}
\lesssim\sum_{t_{1:p-1}\in(\mathbb Z^2)^{p-1}}
(1+\varepsilon |t_{1:p-1}|)^{-2p+2-\delta_0} |t_1|^{\theta_0}\lesssim
\varepsilon^{2-2p-\theta_0}.
\]
Therefore,
\begin{align*}
	\sum_{t_{1:p-1}}
	\widehat v(\varepsilon t_{1:p-1})
	\Theta(|t_1|+CN)
	\lesssim&
	N^{-1/4}\varepsilon^{3-3p}|\log\lambda|^{p/2}
	+\lambda^{\theta_0} N^{\theta_0/2}\varepsilon^{2-2p-\theta_0}
	|\log\lambda|^p.
\end{align*}
Substituting into \eqref{5.21}, since $\varepsilon\geq\lambda^\eta$, we obtain
\begin{align*}
	|\textnormal{II}|
	\lesssim&|\log\lambda|^{p^2}((\log N)^{p-2}+1)
	(N^{-1/4}\varepsilon^{3-3p}+\lambda^\delta)\,,
\end{align*}
where $\delta=15\theta_0/16-(2p-2+\theta_0)\eta$.
Combining with \eqref{4.1}, we obtain that
\begin{align}
	|\Tr((\WW-\WW_{N})\Gamma_\lambda)|
	\lesssim |\log\lambda|^{p^2}((\log N)^{p-2}+1)
	(N^{-1/4}\varepsilon^{3-3p}+\lambda^\delta).\label{4.6}
\end{align}

\noindent {\bf Step 2: Semiclassical analysis of the low-frequency interaction.}  
Now we turn to the low-momentum
part of the interaction and use the de Finetti theorem to connect with the classical field measure.
By \eqref{0.2} and the same argument as in \eqref{1.72} and \eqref{1.71},
for every $1\leq l\leq p$ and $k_{1:l}\in\ZZ^{2l}$, we have
\begin{align}
	\lambda^l\left\langle\prod_{j=1}^l\Wick{\ud\Gamma(e_{k_j}^-)}\right\rangle_{\lambda}=\lambda^l\sum_{Q\subseteq [l]}\prod_{i\notin Q}(-\langle\ud\Gamma(e_{k_i}^-)\rangle_0)
	\,{q!}\,\Tr\left(\mathop{\bigotimes^{\text{sym}}}\limits_{j\in Q}e_{k_j}^-\Gamma_\lambda^{(q)}\right)+O(\lambda |\log\lambda|^{p(l-1)})\,,\label{5.26}
\end{align}
where $q=|Q|$; for $Q=\varnothing$, we use the convention $q=0$.
By \cite[(9.9)]{LNR21}, we have uniformly in \(k\),
\begin{align}
	\left|\lambda \Tr(e_k^-\Gamma_0^{(1)}) - \int\langle u,e_k^-u\rangle\ud\mu_0(u)\right|\lesssim \lambda N^2.\label{9.9}
\end{align}
Since $\int|u\rangle\langle u|\ud\mu_0(u)=h^{-1}$ and \(N\le \lambda^{-1/8}\), it follows that
\begin{align}
	|\lambda \Tr(e_k^-\Gamma_0^{(1)})|+\left|\int\langle u,e_k^-u\rangle\ud\mu_0(u)\right|\leq\lambda N_0+\Tr(P_Nh^{-1})
	\lesssim |\log\lambda|+\log N+1\lesssim|\log\lambda|.\label{5.20}
\end{align}
Then, for every $q=|Q|<l$, we have
\begin{align}
	\left|\prod_{i\notin Q}(\lambda\langle\ud\Gamma(e_{k_i}^-)\rangle_0)-\prod_{i\notin Q}\int\langle u,e_{k-i}^-u\rangle\ud\mu_0(u)\right|\lesssim&|\log\lambda|^{l-q-1}\sum_{j\notin Q}\left|\lambda \Tr(e_{k_j}^-\Gamma_0^{(1)}) - \int\langle u,e_{k_j}^-u\rangle\ud\mu_0(u)\right|\nonumber\\
	\lesssim&\lambda N^2|\log\lambda|^{l-q-1}.\label{difference}
\end{align}
Moreover, since \(\|e_k^-\|\le1\), by \eqref{0.2}, we have
\[\lambda^q
\left|\Tr\left(
\mathop{\bigotimes^{\mathrm{sym}}}_{j\in Q}
e_{k_j}^-\Gamma_\lambda^{(q)}
\right)
\right|
\lesssim \lambda^q\Tr(\Gamma_\lambda^{(q)})
\leq\lambda^q\Tr(\mathcal N^q\Gamma_\lambda)
\lesssim
|\log\lambda|^{pq}.
\]
Since $\sum_{q=0}^{l-1}|\log\lambda|^{l-q-1}|\log\lambda|^{pq}
\le|\log\lambda|^{p(l-1)}$, \eqref{5.26} becomes
\begin{align}
	\lambda^l\left\langle\prod_{j=1}^l\Wick{\ud\Gamma(e_{k_j}^-)}\right\rangle_{\lambda}=\sum_{Q\subseteq [l]}\prod_{i\notin Q}\left(-\int\langle u,e_{k_i}^-u\rangle\ud\mu_0\right)
	\,{q!}\,\lambda^q\,\Tr\left(\mathop{\bigotimes^{\text{sym}}}\limits_{j\in Q}e_{k_j}^-\Gamma_{\lambda}^{(q)}\right)+O(\lambda N^2|\log\lambda|^{p(l-1)})\,.\label{4.7}
\end{align}

Let $P:=\1_{\{h\leq(2\pi)^2N^2+1\}}$. 
Since $P_NP=PP_N=P_N$, by \eqref{eq:GammaV-k}, we have
\begin{align}
	\Tr\left(\mathop{\bigotimes^{\text{sym}}}\limits_{j\in Q}e_{k_j}^-\Gamma_{\lambda}^{(q)}\right)=\Tr\left(\mathop{\bigotimes^{\text{sym}}}\limits_{j\in Q}e_{k_j}^-(\Gamma_{\lambda})_P^{(q)}\right)\,.\label{5.32}
\end{align}
Let $\mu_{P,\lambda}^\lambda$ be the lower symbol of $\Gamma_{\lambda}$ associated with the projection $P$ at scale $\lambda$. 
Using Theorem \ref{thm:quant deF}, for $1\leq k\leq l$, we have
\begin{align*}
	\lambda^k(\Gamma_{\lambda})_P^{(k)}& = \frac{1}{k!}\int_{P\cH}|u^{\otimes k}\rangle\langle u^{\otimes k}|\ud\mu_{P,\lambda}^\lambda(u)-\lambda^k\sum_{m=0}^{k-1}\binom{k}{m}(\Gamma_{\lambda})_P^{(m)}\otimes_sP^{\otimes_s(k-m)}\,.
\end{align*} 
Again, by \eqref{0.2}, for $0\leq m<k\leq l$, $$\lambda^k\Tr((\Gamma_{\lambda})_P^{(m)}\otimes_sP^{\otimes_s(k-m)})\lesssim_k\lambda^k\Tr(\Gamma_{\lambda}^{(m)})(\Tr P)^{k-m}\lesssim(\lambda N^2)^{k-m}|\log\lambda|^{pm}\,.$$
Therefore, by \eqref{5.32} and $\lambda N^2\leq1$, we have
\begin{align}\label{5.33}
	q!\,\lambda^q\,\Tr\left(\mathop{\bigotimes^{\text{sym}}}\limits_{j\in Q}e_{k_j}^-\Gamma_{\lambda}^{(q)}\right)=\int_{P\cH}
	\prod_{j\in Q}\langle u,e_{k_j}^-u\rangle
	\,\ud\mu_{P,\lambda}^{\lambda}(u) + O(\lambda N^2|\log\lambda|^{p(q-1)})\,.
\end{align}
Substituting this into \eqref{4.7}, by \eqref{5.20} and $\sum_{q=0}^{l-1}|\log\lambda|^{l-q}|\log\lambda|^{p(q-1)}
\le|\log\lambda|^{p(l-1)}$, we have
\begin{align*}
	\lambda^l\left\langle\prod_{j=1}^l\Wick{\ud\Gamma(e_{k_j}^-)}\right\rangle_{\lambda}
	=&\sum_{Q\subseteq [l]}\prod_{i\notin Q}\left(-\int\langle u,e_{k_i}^-u\rangle\ud\mu_0\right)\int_{P\cH}\prod_{j\in Q}\langle u,e_{k_j}^- u\rangle\ud\mu_{P,\lambda}^\lambda(u)
	+O(\lambda N^2|\log\lambda|^{p(l-1)})\\
	=&\int_{P\cH}\prod_{j=1}^l\Wick{\langle u,e_{k_j}^-u\rangle}\ud\mu_{P,\lambda}^\lambda(u)+O(\lambda N^2|\log\lambda|^{p(l-1)})\,.
\end{align*}
Moreover, for $2\leq l<p$, by \eqref{F_l,m} and \eqref{F:bound1}, we have
\begin{align}
	\sum_{k_{1:l}}
	\left|\mathcal F(F_l^{\varepsilon,N})(k_{1:l})\right|
	&\lesssim
	\sum_{\substack{\mathbf m\in\mathcal M_{p,l}\\ C_{\mathbf m}\ne0}}
	\sum_{k_{1:l}}
	\mathcal F(F_{l,\mathbf m}^{\varepsilon,N})(k_{1:l})
	=
	\sum_{\substack{\mathbf m\in\mathcal M_{p,l}\\ C_{\mathbf m}\ne0}}
	F_{l,\mathbf m}^{\varepsilon,N}(0)
	\lesssim
	\varepsilon^{2-2l}|\log\lambda|^{p-l}.\label{5.35}
\end{align}
For \(l=p\), since
\(F_p^{\varepsilon,N}=p^{-1}v^\varepsilon\), the same bound follows from $\widehat v\geq0$ and \eqref{periodiz}.
Hence for every $2\leq l\leq p$, it follows that
\begin{align}
	\lambda^l
	\sum_{k_{1:l}}
	\mathcal F(F_l^{\varepsilon,N})(k_{1:l})
	\left\langle
	\prod_{j=1}^l\Wick{\ud\Gamma(e_{k_j}^-)}
	\right\rangle_\lambda
	\geq&
	\sum_{k_{1:l}}
	\mathcal F(F_l^{\varepsilon,N})(k_{1:l})
	\int_{P\cH}\prod_{j=1}^l
	\fF(\Wick{|u_N|^2})(-k_j)\ud\mu_{P,\lambda}^{\lambda}(u)\label{5.34}\\
	&-C\lambda N^2\varepsilon^{2-2l}|\log\lambda|^{(p-1)l},\nonumber
\end{align}
where we used that $\Wick{\langle u,e_{k_j}^-u\rangle}=\fF(\Wick{|u_N|^2})(-k_j)$. The first term in the right-hand side is exactly the $l$-th component in \eqref{W1} by the Parseval equality.
The one-body term is treated in the same way. 
Taking \(q=1\) and \(e_{k_j}^-=P_N^2\) in \eqref{5.33}, we get
\[
\lambda\left\langle\Wick{\ud\Gamma(P_N^2)}\right\rangle_\lambda
=
\int_{P\cH}
\Wick{\langle u,P_N^2u\rangle}
\,\ud\mu_{P,\lambda}^{\lambda}(u)
+
O(\lambda N^2).
\]
Since \(|f_1^{\varepsilon,N}|\lesssim|\log\varepsilon|^{p-1}\) by \eqref{f_0}, 
it follows that
$$f_1^{\varepsilon,N}\lambda\left\langle\Wick{\ud\Gamma(P_N^2)}\right\rangle_\lambda\geq f_1^{\varepsilon,N}\int_{P\cH}
\Wick{|u_N|^2}\ud\mu_{P,\lambda}^{\lambda}(u)-C\lambda N^2|\log\varepsilon|^{p-1}.$$
Combining this with \eqref{5.34} for $2\leq l\leq p$, we obtain
\[
\Tr(\WW_N\Gamma_\lambda)
\geq
\int_{P\mathcal H}W_p^{\varepsilon,N}(u)\,
\ud\mu_{P,\lambda}^{\lambda}(u)
-C
\lambda N^2\varepsilon^{2-2p}|\log\lambda|^{p(p-1)}.
\]
Finally, since $N\leq \lambda^{-\frac18}$, combining with \eqref{4.6}, we obtain that
\begin{align}
	\Tr(\WW\Gamma_\lambda)
	\geq&\int_{P\cH} W_p^{\varepsilon,N}\ud\mu_{P,\lambda}^\lambda -C |\log\lambda|^{p^2}((\log N)^{p-2}+1)
	(N^{-1/4}\varepsilon^{3-3p}+\lambda^\delta)\label{5.22}
\end{align}

\noindent {\bf Step 3: Variational conclusion.} 
We now combine the preceding estimate with the variational principle and the Gaussian lower-symbol comparison from \cite{NZZ25}.
Let $\mu_{P,0}^\lambda$ be the lower symbol of $\Gamma_0$ associated with the projection $P$ at scale $\lambda$, and let $\mu_{0,P}$ be the cylindrical projection of $\mu_0$ on $P\cH$.
Then by \eqref{5.22}, the variational formula \eqref{variational problem}, and the Berezin-Lieb inequality \eqref{eq:Berezin-Lieb}, we have
\begin{align*}
	-\log\frac{Z_{\lambda}}{Z_0}\geq& \fH_{\text{cl}}(\mu_{P,\lambda}^\lambda,\mu_{P,0}^\lambda)+\int_{P\cH} W_p^{\varepsilon,N}\ud\mu_{P,\lambda}^\lambda-C|\log\lambda|^{p^2}((\log N)^{p-2}+1)
	(N^{-1/4}\varepsilon^{3-3p}+\lambda^\delta)\,.
\end{align*}
The pointwise estimate \cite[(8.25)]{NZZ25}, applied to the present two-dimensional cutoff, gives
\begin{align}\label{5.23}
	\fH_{\text{cl}}(\mu_{P,\lambda}^\lambda,\mu_{P,0}^\lambda)\geq \fH_{\text{cl}}(\mu_{P,\lambda}^\lambda,\mu_{0,P})-C\lambda N^4\int_{P\cH}\|u\|^2\ud\mu_{P,\lambda}^\lambda(u)
	\geq\fH_{\text{cl}}(\mu_{P,\lambda}^\lambda,\mu_{0,P})-C\lambda N^4 |\log\lambda|^{p}\,,
\end{align} 
where in the second inequality we used $\lambda N^2\leq1,$ and
\begin{align*}
	\int_{P\cH}\|u\|^2\ud\mu_{P,\lambda}^\lambda(u)=\lambda \Tr(\Gamma_{\lambda})_P^{(1)}+\lambda\, \Tr P
	\lesssim |\log\lambda|^{p}+\lambda N^2\,.
\end{align*}
Since $\lambda N^4\leq N^{-\frac12}$, we have
\begin{align*}
	-\log\frac{Z_{\lambda}}{Z_0}
	\geq&\fH_{\text{cl}}(\mu_{P,\lambda}^\lambda,\mu_{0,P})+\int_{P\cH} W_p^{\varepsilon,N}\ud\mu_{P,\lambda}^\lambda -C|\log\lambda|^{p^2}((\log N)^{p-2}+1)
	(N^{-1/2}\varepsilon^{3-3p}+\lambda^\delta)\nonumber\\
	=&\fH_{\text{cl}}(\mu_{P,\lambda}^\lambda,\widetilde\mu_p^{\varepsilon,N})-\log\int_{P\cH}e^{-W_p^{\varepsilon,N}(u)}\ud\mu_{0,P}(u)-C|\log\lambda|^{p^2}((\log N)^{p-2}+1)
	(N^{-1/4}\varepsilon^{3-3p}+\lambda^\delta)\,,
\end{align*}
where $\widetilde\mu_p^{\varepsilon,N}=\mu_p^{\varepsilon,N}\circ P^{-1}$ with $\mu_p^{\varepsilon,N}$ defined in \eqref{def:mu_p^N}.
Since $P_NP=PP_N=P_N$, for any $u\in\cH$, we have $W_p^{\varepsilon,N}(u)=W_p^{\varepsilon,N}(Pu)$ and $$\int_{P\cH}e^{-W_p^{\varepsilon,N}(u)}\ud\mu_{0,P}(u)=\int_{\cH} e^{-W_p^{\varepsilon,N}(u)}\ud\mu_0(u)\,,$$
which concludes the proof.

\end{proof}

\subsection{Free energy upper bound}
Now we follow the strategy in \cite[Proposition 10.1]{LNR21} to derive the relative free energy upper bound. Together with \eqref{partition function conv} and \eqref{2.0}, this completes the proof of the free energy convergence \eqref{thm-eq1}.

As in the quartic case, we use a trial state consisting of an interacting Gibbs state on the low-energy Fock space and the free Gibbs state on its orthogonal complement. For \(p\ge3\), the main new difficulty is to establish a lower bound for the projected interaction uniformly in the cutoff \(N\). The low-energy interaction contains effective kernels
\(f_l^{\varepsilon,N}\) of every order \(2\le l\le p-1\). Although the corresponding lower-order terms are controlled by the positive \(p\)-body contribution in Lemma~\ref{lem 4.1}, this argument cannot be applied directly since \(P_N\) does not commute with multiplication by the interaction kernels.
We therefore separate the contributions with pairwise distinct particle labels from those with repeated labels, and use a partition of unity to obtain the uniform lower bound in Lemma~\ref{lem 5.5}.

\begin{lem}\label{lem:upper bound}
	In the setting of Lemma \ref{Lem:lower bound}, when $\lambda$ is sufficiently small, there exists a positive constant $C$ depending only on $p$, such that 
	\begin{align}
		-\log\frac{Z_{\lambda}}{Z_0}\leq&-\log\int e^{-W_p^{\varepsilon,N}(u)}\ud\mu_{0}(u)
		+C|\log\lambda|^{p^2}((\log N)^{p-2}+1)
		(N^{-1/4}\varepsilon^{3-3p}+\lambda^\delta)\,.\label{2.1}
	\end{align}
\end{lem}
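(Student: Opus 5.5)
We follow the trial-state strategy of \cite[Proposition 10.1]{LNR21} and \cite[Section 8]{NZZ25}. With $P:=\1_{\{h\le(2\pi)^2N^2+1\}}$ and $Q=\1-P$, we use the unitary $\fU$ from \eqref{eq:Fock factor} and define the trial state
\[
\Gamma_{\rm trial}:=\fU^*\bigl(\widetilde\Gamma_P\otimes(\Gamma_0)_Q\bigr)\fU,\qquad
\widetilde\Gamma_P:=\widetilde Z^{-1}\exp\bigl(-\lambda\ud\Gamma(Ph)-\WW_N|_{\cF(P\cH)}\bigr).
\]
Here $\WW_N$ is the low-frequency interaction \eqref{def:W_N}. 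It acts only on $\cF(P\cH)$, since $P_NP=P_N$. Inserting $\Gamma_{\rm trial}$ into \eqref{variational problem} gives
\[
-\log\frac{Z_\lambda}{Z_0}\le -\log\frac{\widetilde Z}{Z_{0,P}}+\Tr\bigl((\WW-\WW_N)\Gamma_{\rm trial}\bigr).
\]
The relative entropy factorizes, and the $Q$ part contributes zero because $\Gamma_0$ itself factorizes.

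\textbf{Step 1: uniform lower bound for the projected interaction.} Before $\widetilde\Gamma_P$ is even well defined, and before we can control it, we need a lower bound
\[
\WW_N|_{\cF(P\cH)}\ge -C|\log\lambda|^p
\]
that is uniform in $N$; this is the statement referred to as Lemma~\ref{lem 5.5}. We expect this step to be the main obstacle. The multiplication by $v^\varepsilon$ is now sandwiched between the $P_N$'s, so the pointwise argument of Lemma~\ref{lem 4.1} cannot be applied directly. Our first attempt would be the following.
\begin{enumerate}
\item Write $\ud\Gamma(e^-_{k_1})\cdots\ud\Gamma(e^-_{k_l})$ as the sum of its distinct-label part and lower-label remainders. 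The remainders are bounded by $\fN^{l-1}$, and hence by $|\log\lambda|$-type powers times $\lambda\fN$ after reordering.
\item For the distinct-label part of order $p$, interpret it as $\int v^\varepsilon\prod_j a^\dagger_{x_j}a_{x_j}$ with $a_x=a(P_N\delta_x)$. It is nonnegative because $v^\varepsilon\ge0$ and normal ordering gives $\langle\Psi,\cdot\,\Psi\rangle=\int v^\varepsilon|a_{x_1}\cdots a_{x_p}\Psi|^2$.
\item Insert the partition of unity $\1_{B(y,r_0\varepsilon)}$ and repeat the local-average and finite-covering argument of Lemma~\ref{lem 4.1} with the operators $\int_{B(y,\cdot)}a_x^\dagger a_x\,\ud x$. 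These commute with each other, and their normal-ordered powers control the lower orders through Hölder's inequality for commuting operators.
\item Finally, absorb the one-body counterterm $f_1^{\varepsilon,N}\lambda\Wick{\fN}$ and the linear terms using $\lambda\fN\lesssim(\lambda^p\cdot\text{$p$-body})^{1/p}+1$.
\end{enumerate}

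\textbf{Step 2: semiclassical upper bound for $\widetilde Z$.} This is a finite-dimensional problem of dimension $d\sim N^2$. We use the upper symbol / Berezin--Lieb upper inequality (Gibbs state on $\cF(P\cH)$ versus the classical measure) as in \cite[Section 8]{NZZ25}. This gives
\[
-\log\frac{\widetilde Z}{Z_{0,P}}\le-\log\int e^{-W_p^{\varepsilon,N}}\ud\mu_0+\text{error},
\]
where the error comes from converting $\WW_N$ into its symbol. That conversion produces lower-order normal-ordering terms of relative size $\lambda N^2$. They are controlled by the bounds of type \eqref{5.35}, giving $\lambda N^2\varepsilon^{2-2p}|\log\lambda|^{p(p-1)}$. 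The comparison between the Gaussian symbol and $\mu_{0,P}$ from \cite[(8.25)]{NZZ25} costs $\lambda N^4|\log\lambda|^p$. We also need moment bounds for $\widetilde\Gamma_P$, which follow from Step 1 and the argument of Lemma~\ref{Lemma4.3}.

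\textbf{Step 3: high-frequency error in the trial state.} We bound $\Tr((\WW-\WW_N)\Gamma_{\rm trial})$ exactly as in Step 1 of Lemma~\ref{Lem:lower bound}.
\begin{itemize}
\item Since the $Q$ part is exactly Gaussian, the high-momentum fluctuations $\langle|\Wick{\ud\Gamma(e_k^+)}|^2\rangle$ can be computed directly, and they are bounded by the right-hand side of \eqref{1.4}.
\item The particle-number moments of $\Gamma_{\rm trial}$ are bounded by $|\log\lambda|^{pk}$.
\item The term I is handled via \eqref{5.14} and the kernel-difference bounds.
\end{itemize}
This yields \eqref{4.6} for $\Gamma_{\rm trial}$. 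Collecting the errors, and using $N\le\lambda^{-1/8}$ and $\varepsilon\ge\lambda^\eta$, we obtain \eqref{2.1}.
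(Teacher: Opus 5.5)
\textbf{Where the proposal breaks: Step~1.} Your overall architecture matches the paper: a decoupled trial state on $\cF(P\cH)\otimes\cF(Q\cH)$, factorised entropy, a coherent-state lower bound for $\widetilde Z$, and the high-frequency error handled as in Lemma~\ref{Lem:lower bound}. But Step~1, which you rightly single out as the crux, fails as proposed.

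With $a_x=a(P_N\delta_x)$ one has $[a_x,a_y^\dagger]=(P_N^2)(x,y)$, not $\delta(x-y)$. So the local occupation operators $n_B=\int_B a_x^\dagger a_x\,\ud x=\ud\Gamma(P_N\1_BP_N)$ attached to different balls do \emph{not} commute: $[n_B,n_{B'}]=\ud\Gamma([P_N\1_BP_N,P_N\1_{B'}P_N])\neq0$. Hence the operator steps of Lemma~\ref{lem 4.1} are unavailable. The inequality $n_{B(y,(R+r_0)\varepsilon)}\le\sum_m n_{B(y+\varepsilon z_m,r_0\varepsilon)}$ does not give the corresponding inequality for $l$-th powers, and H\"older/Minkowski in $y$ for operator-valued integrands requires commutativity.

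Structurally, in $\lambda^p\fV^{(n)}_{\varepsilon,p}[P_N]-C\lambda^l\fV^{(n)}_{\varepsilon,l}[P_N]$ each summand is compressed by $(P_N)_{\ii}$ acting on only $p$, respectively $l$, particles. The two terms are therefore compressions by different operators, and the pointwise kernel comparison cannot be transported.

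Relatedly, in (i) and (iv) you cannot work with $\fN$. On $\cF(P\cH)$ the projected $p$-body term controls only $\ud\Gamma(P_N^2)$; it gives no control on modes $|k|\le N$ where $\widehat\rho(k/N)$ is small or zero, e.g.\ $|k|=N$. The collision remainders must therefore be bounded by powers of $\ud\Gamma_n(P_N^2)$, using $\pm P_NXP_N\le\|X\|P_N^2$ as in \eqref{B_l(F)}, not by $\fN^{l-1}$.

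\textbf{The missing idea and two further repairs.} The missing idea is to localise in particle space rather than in physical space. Set $Q_N=(\1-P_N^2)^{1/2}$ and $K_S=\prod_{j\in S}(P_N)_j\prod_{j\notin S}(Q_N)_j$. Then $\sum_{S\subset[n]}K_S^2=\1$ and, for all $l$ simultaneously,
$\fV^{(n)}_{\varepsilon,l}[P_N]=\sum_S K_S\bigl(\sum_{\ii\in S^l_{\neq}}v^\varepsilon_l(x_{i_2}-x_{i_1},\ldots,x_{i_l}-x_{i_1})\bigr)K_S$.
The brackets are multiplication operators, so \eqref{lwb1}--\eqref{lwb2} apply on each $S$; restricting to distinct labels costs only $(\lambda|S|)^{p-1}$. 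Moreover $\sum_S|S|^pK_S^2\ge\ud\Gamma_n(P_N^2)^p$ gives the coercivity \eqref{lwb4}, which absorbs the collision and one-body terms.

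First further repair: in Step~2 the coherent-state error is $C\lambda N^2\varepsilon^{2-2p}|\log\lambda|^{p-1}\bigl(1+\|P_Nu\|^{2(p-1)}\bigr)$, which is not uniform in $u$. A lower bound on $\widetilde Z$ comes from this coherent-state expectation via \eqref{5.27}, not from an upper symbol. The paper cancels the growth by putting $-\lambda^{1/4}(\lambda\ud\Gamma(P_N^2))^{p-1}$ into the trial Hamiltonian. Alternatively, Jensen in $u$ with respect to $\mu_p^{\varepsilon,N}$, together with \eqref{3.17}, would also work.

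Second further repair: in Step~3, $e_k^+=e_k-P_Ne_kP_N$ is not supported in $Q\cH$, because $P_N\neq P$ on the modes $N/2<|k|\le N$ of $P\cH$. The fluctuation bound is therefore not a Gaussian computation. The paper re-runs the correlation inequality of \cite{DNN25} for the decoupled Gibbs state. It also uses the entropy bound $\fH(\Gamma_{\mathrm{trial}},\Gamma_0)\lesssim|\log\lambda|^p$, which term~I requires.
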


For $N\geq1$, define the interaction $\widetilde\WW_N$ on $\cF$ as
\begin{align}
	\widetilde\WW_N:=&\WW_N-\lambda^{\frac14}(\lambda\ud\Gamma(P_N^2))^{p-1}\,,\label{def:quantum-W2}
\end{align}
where $\WW_N$ is given in \eqref{def:W_N}, and $P_N=\fF^{-1}(\widehat\rho(N^{-1}\cdot)\fF)$.
We first establish a uniform lower bound for \(\widetilde\WW_N\) with respect to $\varepsilon$. Besides being used in the estimates below, this bound ensures that the low-energy interacting Gibbs state in our trial state, which we introduce later, is well-defined.

\begin{lem}\label{lem 5.5}
	Let $\widetilde\WW_N$ be defined in \eqref{def:quantum-W2} and assume $0<\eta<1/(2p-2)$. There exists a constant $C>0$, such that for every $0<\varepsilon\leq 1/(4R)$, $\varepsilon\geq\lambda^\eta$ and $N\in\NN$,
	\begin{align}\label{lower bound:WN}
		\widetilde\WW_N\geq-C|\log\lambda|^p.
	\end{align}
	Here $R$ is given in \eqref{def:R}.
\end{lem}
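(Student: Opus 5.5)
The plan is to mirror the proof of Lemma~\ref{lem 4.1}, but to work on the projected one-body operators $e_k^-=P_Ne_kP_N$ instead of multiplication operators. I would separate the contributions with pairwise distinct particle labels from those with repeated labels. Since $\WW_N$ commutes with $\fN$, it suffices to bound $\WW_N$ from below on each sector $\cH^n$, uniformly in $n$, $N$ and $\varepsilon$.

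\textbf{Step 1 (label decomposition).} First expand each $\Wick{\ud\Gamma(e^-_{k_j})}=\ud\Gamma(e^-_{k_j})-\1_{k_j=0}\langle\ud\Gamma(e^-_0)\rangle_0$ as in \eqref{4.2}. Then split every product $\prod_{j\le m}\ud\Gamma(e^-_{k_j})$ as in \eqref{1.8}. The part with distinct labels is, on $\cH^n$,
\[
\sum_{\ii\in[n]^m_{\neq}}\big(P_N^{\otimes m}F\,P_N^{\otimes m}\big)_{\ii},
\]
with $F$ the corresponding marginal kernel. The repeated-label part has at most $m-1$ labels.

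\textbf{Step 2 (error terms).} All terms of order $l\le p-1$, the repeated-label remainders and the centering constants are bounded in operator norm. The tool is the Fourier $\ell^1$ bound \eqref{5.35}, $\sum_k|\fF(F_l^{\varepsilon,N})|\lesssim \varepsilon^{2-2l}|\log\lambda|^{p-l}$, together with $\|e^-_k\|\le1$ and $\lambda N_0\lesssim|\log\lambda|$. This gives the sector bound
\[
\lambda^l\varepsilon^{2-2l}|\log\lambda|^{p-l}(\lambda n+|\log\lambda|)^{l}\cdot\lambda\,(\lambda n)^{-1}\ \ \text{for repeated labels},
\]
and the analogous bound without the factor $\lambda(\lambda n)^{-1}$ for the lower-order kernels. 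These bounds are too crude for the lower-order distinct-label terms. For those I would instead repeat the local-average argument of Lemma~\ref{lem 4.1}, applied to the sandwiched kernels: the bound $|R_l|\lesssim v_l^\varepsilon$ from \eqref{R} and the pointwise bound \eqref{v:bound} survive conjugation by $P_N^{\otimes l}$, because the kernels are dominated by positive kernels.

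\textbf{Step 3 (partition of unity and coercivity).} Fix a smooth bump $\chi\ge0$ supported in $B(0,r_0)$ and set $\chi_y:=\chi((\cdot-y)/\varepsilon)$. By \eqref{periodiz1} one has
\[
v^\varepsilon(x_2-x_1,\dots)\ \ge\ c\,\varepsilon^{-2p}\int_\Lambda\prod_{j=1}^p\chi_y(x_j)\,\ud y .
\]
Conjugating by $P_N^{\otimes p}$ preserves this operator inequality. The distinct-label sum of $\prod_j(P_N\chi_yP_N)_{i_j}$ equals $\ud\Gamma(X_y)^p$, where $X_y:=P_N\chi_yP_N\ge0$, minus repeated-label terms. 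Those repeated-label terms are bounded by $\|X_y\|^{k}\ud\Gamma(X_y)^{p-k}$ with $\|X_y\|\le1$.

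Next, for a fixed normalized vector $\Psi\in\cH^n$, apply Jensen's inequality for the spectral measure of the single operator $\ud\Gamma(X_y)$, and then again in $y$. This yields
\[
\Big\langle\Psi,\int\ud\Gamma(X_y)^p\,\ud y\,\Psi\Big\rangle\ \ge\ \varepsilon^{2-2p}c'\,\big\langle\Psi,\ud\Gamma(P_N\chi_\varepsilon^{*}P_N)\Psi\big\rangle^p ,
\qquad \int\chi_y\,\ud y=c\varepsilon^2 .
\]
The same Jensen trick controls the lower-order distinct-label terms by $\langle\ud\Gamma(X_y)\rangle$-powers, exactly as in \eqref{lwb1}, giving an error of order $-C|\log\lambda|^p$.

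\textbf{Main obstacle.} The hard part is absorbing $-\lambda^{1/4}(\lambda\ud\Gamma(P_N^2))^{p-1}$. Coercivity only gives $\langle\lambda\ud\Gamma(P_N^2)\rangle^p$, not $\langle(\lambda\ud\Gamma(P_N^2))^{p-1}\rangle$. I would handle this by decomposing $\cH^n$ spectrally for $\ud\Gamma(P_N^2)$. This is not invariant under $\WW_N$, so as a first attempt I would instead use $\ud\Gamma(P_N^2)\le\fN$ together with the factor $\lambda^{1/4}$ versus $\varepsilon^{2-2p}\ge\lambda^{(2p-2)\eta}$ (this is where $\eta<1/(2p-2)$ enters), and bound the positive term from below by $c(\lambda n)^p\varepsilon^{0}$ via \eqref{lwb2}-type averaging with $\chi\equiv$ local indicator before projection. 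I expect this step to need the most care.
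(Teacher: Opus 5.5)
There is a genuine gap, located exactly at the step you flag as the main obstacle, and your approach does not overcome it once $p\ge3$. Your Step~3 only gives a bound on expectations: with $X:=\lambda\ud\Gamma_n(P_N^2)$, you get $\langle\Psi,\lambda^p\fV^{(n)}_{\varepsilon,p}[P_N]\Psi\rangle\gtrsim\langle\Psi,X\Psi\rangle^p-C$. This cannot be upgraded to an operator inequality, because the operators $\ud\Gamma(X_y)$ for different $y$ do not commute and $t\mapsto t^p$ is not operator convex for $p>2$.

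A bound of this form cannot absorb errors of degree $p-1$ such as $-\lambda^{1/4}X^{p-1}$. Jensen gives $\langle\Psi,X^{p-1}\Psi\rangle\ge\langle\Psi,X\Psi\rangle^{p-1}$, which is the wrong direction. For $p=2$ all errors are linear and your bound would suffice.

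The same loss of commutativity breaks your treatment of the lower-order distinct-label terms ``exactly as in \eqref{lwb1}''. That argument raises an inequality between local averages to the power $l$ and applies Minkowski's inequality, which is legitimate only for commuting multiplication operators. After projection, the local averages $\ud\Gamma(P_N\chi_yP_N)$ no longer commute.

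Your fallback is false. No bound $\lambda^p\fV^{(n)}_{\varepsilon,p}[P_N]\ge c(\lambda n)^p-C$ can hold: any $\Psi\in\cH^n$ built from plane waves $e_k$ with $|k|>N$ satisfies $(P_N)_i\Psi=0$ for every $i$, so $\langle\Psi,\fV^{(n)}_{\varepsilon,p}[P_N]\Psi\rangle=0$ for arbitrary $n$. For the same reason, your Step~2 bounds must be expressed through $\ud\Gamma_n(P_N^2)$ rather than $n$. This is possible: every repeated-label product is sandwiched between factors $P_N$, which gives $\pm B_l(F)\lesssim(\ud\Gamma_n(P_N^2)^{l-1}+1)\sum_k|\widehat F(k)|$.

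The missing idea is an exact localization that turns the projected interaction back into unprojected multiplication operators. Set $Q_N:=(\1-P_N^2)^{1/2}$ and $K_S^{[n]}:=\prod_{j\in S}(P_N)_j\prod_{j\notin S}(Q_N)_j$. Then $\sum_{S\subset[n]}(K_S^{[n]})^2=\1$. Because only pairwise distinct labels occur, $\fV^{(n)}_{\varepsilon,l}[P_N]=\sum_{S\subset[n]}K_S^{[n]}\fV^{(S)}_{\varepsilon,l,\neq}K_S^{[n]}$, where $\fV^{(S)}_{\varepsilon,l,\neq}$ is the plain distinct-label multiplication operator over the particles in $S$.

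The commutative arguments of Lemma~\ref{lem 4.1} apply verbatim to each $\fV^{(S)}$. For the $p$-body term, restricting to distinct labels costs at most $\lambda\varepsilon^{2-2p}(\lambda|S|)^{p-1}\lesssim(\lambda|S|)^{p-1}$; this is where $\eta<1/(2p-2)$ enters. For the lower-order terms, $\fV^{(S)}_{\varepsilon,l,\neq}\le\fV^{(S)}_{\varepsilon,l}$ by positivity. Conjugating by $K_S^{[n]}$ and summing over $S$ gives \eqref{lwb3}, and also $\lambda^p\fV^{(n)}_{\varepsilon,p}[P_N]\ge\frac c2\lambda^p\sum_S|S|^p(K_S^{[n]})^2-C\ge\frac c2(\lambda\ud\Gamma_n(P_N^2))^p-C$. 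The last step uses \eqref{eq:subset-identity} and $0\le P_N^2\le\1$.

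This is an operator inequality in the single self-adjoint operator $\ud\Gamma_n(P_N^2)$. All remaining errors, including $-\lambda^{1/4}(\lambda\ud\Gamma(P_N^2))^{p-1}$, are polynomials of degree at most $p-1$ in $\lambda\ud\Gamma_n(P_N^2)$. They are therefore absorbed by functional calculus and Young's inequality, and the obstacle you describe disappears.
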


\begin{proof}
	We prove the bound on each \(n\)-particle sector $\cH^n$. Throughout the proof, the constants are independent of \(n,N,\varepsilon\), and \(\lambda\).
	
	\medskip
	\noindent\textbf{Step 1. Reduction to operators with pairwise distinct labels.}
	Recall the definition of $\ud\Gamma_n(T)$ in \eqref{def:Gamma_n} for one-body operator $T$.
	Then on $\cH^n$, $n\geq1$, the truncated interaction $\WW_N$ defined in \eqref{def:W_N} can be written as
	\begin{align}\label{WN}
		\WW_N|_{\cH^n}=\sum_{l=2}^p\lambda^l\sum_{k_{1:l}}\fF(F^{\varepsilon,N}_l)(k_{1:l})
		\prod_{j=1}^{l}\Wick{\ud\Gamma_n(e_{k_j}^-)} + f_1^{\varepsilon,N}\lambda\Wick{\ud\Gamma_n(P_N^2)} + f_0^{\varepsilon,N}.
	\end{align} 
	For simplicity, we denote by $\boldsymbol i$ an ordered
	$k$-tuple $(i_1,\ldots,i_k)$, and for a set $S$, $S_{\neq}^k$ denotes the set of ordered $k$-tuples of pairwise distinct elements of $S$.
	For $2\leq l\leq p$ and every real-valued function $F\in \cH^l$ with $\sum_{k\in\ZZ^{2l}}|\widehat F(k)|<\infty$, set
	\begin{align*}
		B_l(F):=\sum_{k_{1:l}\in(\ZZ^2)^l}\widehat F(k_{1:l})\prod_{j=1}^l\ud\Gamma_n(e_{k_j}^-)-\sum_{\ii\in[n]^l_{\neq}}(P_N)_{\ii}F(x_{i_{1:l}})(P_N)_{\ii},
	\end{align*}
	where $(P_N)_{\ii}:=\prod_{j=1}^l(P_N)_{i_j}$. 
	Since \(F\) is real-valued and symmetric, \(B_l(F)\) is self-adjoint. It holds that
	\begin{align*}
		\sum_{\boldsymbol i\in[n]^l_{\neq}}(P_N)_{\boldsymbol i}
		F(x_{i_{1:l}})(P_N)_{\boldsymbol i} =
		\sum_{k_{1:l}\in(\mathbb Z^2)^l}
		\widehat F(k_{1:l})\sum_{\boldsymbol i\in[n]_{\neq}^l}
		(P_N)_{\boldsymbol i}\prod_{j=1}^l(e_{k_j})_{i_j}
		(P_N)_{\boldsymbol i}.
	\end{align*}
Since the operators in the
different particle variables commute, all contributions with pairwise distinct particle labels cancel, and
\[B_l(F)
=\sum_{k_{1:l}\in(\mathbb Z^2)^l}\widehat F(k_{1:l})
\sum_{\boldsymbol i\in[n]^l\backslash[n]^l_{\neq}}\prod_{j=1}^l(e_{k_j}^-)_{i_j}.\]
Every tuple in \([n]^l\setminus[n]^l_{\neq}\) contains at most \(l-1\) distinct labels. Hence \(B_l(F)\) is an operator of order at most \(l-1\).
Then, using \(\|e_k\|_{\cH}=1\) and $\|P_N\|\leq1$, where $\|\cdot\|$ denotes the operator norm on $\cH$, we obtain
	\begin{align}\label{B_l(F)}
		\pm B_l(F)\lesssim\sum_{r=1}^{l-1}\ud\Gamma_n(P_N^2)^r\sum_{k\in\ZZ^{2l}}|\widehat F(k)|\lesssim(\ud\Gamma_n(P_N^2)^{l-1}+1)\sum_{k\in\ZZ^{2l}}|\widehat F(k)|.
	\end{align}
	Since $P_N$ is a Fourier multiplier, by translation invariance,
	\begin{align}\label{N_0}
		\Tr(e_k^-\Gamma_0^{(1)})=\1_{k=0}\Tr(P_N^2\Gamma_0^{(1)})=:\1_{k=0}\widetilde N_0\leq\1_{k=0}\lambda^{-1}|\log\lambda|.
	\end{align} 
Then, by \eqref{4.2}, we have
	\begin{align}
		\sum_{k_{1:l}}\fF(F_l^{\varepsilon,N})(k_{1:l})\prod_{j=1}^l\Wick{\ud\Gamma_n(e_{k_j}^-)}
		=&\sum_{r=0}^l\binom lr(-\widetilde N_0)^{l-r}\sum_{\ii\in[n]^r_{\neq}}(P_N)_{\ii}(\Pi_{r+1}F_l^{\varepsilon,N})(x_{i_{1:r}})(P_N)_{\ii} + B,\label{5.30}
	\end{align}
where $$B=\sum_{r=2}^l\binom lr(-\widetilde N_0)^{l-r}B_r(\Pi_{r+1}F_l^{\varepsilon,N}).$$
Recall that $F_l^{\varepsilon,N}=\sum_{\fm\in\fM_{p,l}}C_\fm F_{l,\fm}^{\varepsilon,N}$, where $F_{l,\fm}^{\varepsilon,N}$ is defined in \eqref{def:Fl,m}.
Since $\fF(F_{l,\fm}^{\varepsilon,N})\geq0$ by \eqref{f:positive Fourier}, for each $2\leq r\leq l$, we have
$$\sum_{k\in\ZZ^{2r}}|\fF(\Pi_{r+1}F^{\varepsilon,N}_l)(k)|\lesssim \sum_{\fm\in\fM_{p,l}}\sum_{k\in\ZZ^{2r}}\fF(\Pi_{r+1}F^{\varepsilon,N}_{l,\fm})(k)=\sum_{\fm\in\fM_{p,l}}\Pi_{r+1}F_{l,\fm}^{\varepsilon,N}(0).$$
Then, combining with \eqref{B_l(F)} and \eqref{N_0}, it follows that
$$\pm B\lesssim \sum_{r=2}^l(\lambda^{-1}|\log\lambda|)^{l-r}(\ud\Gamma_n(P_N^2)^{r-1}+1)\left(\sum_{\substack{\fm\in\fM_{p,l}\\C_\fm\neq0}}\Pi_{r+1}F_{l,\fm}^{\varepsilon,N}(0)\right).$$
For $2\leq r\leq l$, by \eqref{F:bound} and \eqref{v:bound},
$$\Pi_{r+1}F_{l,\fm}^{\varepsilon,N}(0)=\Pi_rf_{l,\fm}^{\varepsilon,N}(0)\lesssim\varepsilon^{2-2l}|\log\lambda|^{p-l}\int_{\Lambda^{l-r}}\1_{\{d(y_i)\leq R\varepsilon,\forall i\}}\ud y_{r:l-1}\lesssim\varepsilon^{2-2r}|\log\lambda|^{p-l}.$$
Therefore, in the regime $\varepsilon\geq\lambda^\eta$ with $0<\eta<1/(2p-2)$, we have
$$\pm \,\lambda^l\,B\lesssim \lambda\sum_{r=2}^l|\log\lambda|^{p-r}(\lambda\ud\Gamma_n(P_N^2))^{r-1}\varepsilon^{2-2r} + 1 \lesssim (\lambda\ud\Gamma_n(P_N^2))^{p-1} +1.$$
Substituting into \eqref{5.30}, by \eqref{WN}, it follows that
\begin{align*}
	\WW_N|_{\cH^n}\geq&\sum_{l=2}^p\sum_{r=0}^l\binom lr(-\lambda\widetilde N_0)^{l-r}\lambda^r\sum_{\ii\in[n]^r_{\neq}}(P_N)_{\ii}(\Pi_{r+1}F_l^{\varepsilon,N})(x_{i_{1:r}})(P_N)_{\ii} + f_1^{\varepsilon,N}\lambda\Wick{\ud\Gamma_n(P_N^2)} + f_0^{\varepsilon,N}\\
	&-C((\lambda\ud\Gamma_n(P_N^2))^{p-1} +1).
\end{align*}

\noindent\textbf{Step 2. Expansion of the projected second-quantized interaction.}
Let $\vartheta^N$, $E_0^N$ and $R_l^N(v^\varepsilon)$ be in \eqref{def:R_l} and \eqref{def:nu} with $N_0$ and $f_l^\varepsilon$ replaced by $\widetilde N_0$ and $f_l^{\varepsilon,N}$ respectively. For convenience, we set $R_p^N(v^\varepsilon):=\frac1p v^\varepsilon$. Since $\Pi_{m+1}F_l^{\varepsilon,N}=\Pi_mf_l^{\varepsilon,N}$ for every $0\leq m\leq l$ (with $\Pi_0 f_l^{\varepsilon,N}$ understood as the full integral of $f_l^{\varepsilon,N}$), the interaction in \eqref{def:quantum-W2} has the lower bound
	\begin{align}
		\WW_N|_{\cH^n}\geq& \sum_{l=2}^{p}\lambda^l\sum_{\ii\in[n]^l_{\neq}}(P_N)_{\ii} R_l^N(v^\varepsilon)(x_{i_2}-x_{i_1},\cdots,x_{i_l}-x_{i_1})(P_N)_{\ii}-\lambda(\vartheta^N+1)\ud\Gamma_n(P_N^2)+ E_0^N\nonumber\\
		&-C((\lambda\ud\Gamma_n(P_N^2))^{p-1}+1)\,.\label{5.31}
	\end{align}
By \eqref{C3_2}, the estimates \eqref{F:bound} and \eqref{f_0} remain valid, uniformly in \(N\), with \(f_l^\varepsilon\) replaced by \(f_l^{\varepsilon,N}\). Consequently, using \eqref{N_0}, we obtain the same bound as \eqref{R} and \eqref{R1}:
	\begin{align}\label{6.28}
		|E_0^N|\lesssim(|\log\lambda|+|\log\varepsilon|)^p,\quad |\vartheta^N|\lesssim(|\log\lambda|+|\log\varepsilon|)^{p-1},\quad |R_l^N(v^\varepsilon)|\lesssim v_l^\varepsilon, \, 2\leq l\leq p-1.
	\end{align}
For $2\leq l\leq p$, we define
\begin{align*}
	\fV_{\varepsilon,l}^{(n)}[P_N]
	:=\sum_{\boldsymbol i\in[n]^l_{\neq}}
	(P_N)_{\boldsymbol i}\, v_l^\varepsilon(x_{i_2}-x_{i_1},\cdots,x_{i_l}-x_{i_1})\, (P_N)_{\boldsymbol i}.
\end{align*}
Then, in the regime $\varepsilon\geq\lambda^\eta$ with $0<\eta<1/(2p-2)$, \eqref{5.31} becomes
	\begin{align}
		\WW_N|_{\cH^n}\geq&\frac{\lambda^p}{p}\fV^{(n)}_{\varepsilon,p}[P_N]- C\sum_{l=2}^{p-1}\lambda^l\fV^{(n)}_{\varepsilon,l}[P_N]
		-C\left((\lambda\ud\Gamma_n(P_N^2))^{p-1}+
		|\log\lambda|^{p-1}\lambda \ud\Gamma_n(P_N^2)+|\log\lambda|^p\right).\label{4.8_1}
	\end{align}
We claim that analogues of \eqref{lwb1} and \eqref{lwb2} hold after insertion of $P_N$.
More precisely, for every \(2\le l\le p-1\), $n\geq 1$ and $C_1>0$, in the regime
$\varepsilon\geq\lambda^\eta$ with $0<\eta<1/(2p-2)$, there exists
\(C_2>0\), independent of $\varepsilon$, $n$ and $\lambda$, such that
\begin{align}
	\lambda^p\fV_{\varepsilon,p}^{(n)}[P_N]
	-C_1\lambda^l\fV_{\varepsilon,l}^{(n)}[P_N]
	\ge-C_2|\log\lambda|^p.\label{lwb3}
\end{align}
Also, there exists $c>0$ and $C>0$, independent of $\varepsilon$, $n$ and $\lambda$, such that
\begin{align}
	\lambda^p\fV_{\varepsilon,p}^{(n)}[P_N]\geq c\lambda^p\ud\Gamma_n(P_N^2)^p-C.\label{lwb4}
\end{align}
For $p=2$, the lower-order sum in \eqref{4.8_1} is empty. For $p\geq3$, apply \eqref{lwb3} to each of its $p-2$ terms. In either case,

\begin{align*}
	\WW_N|_{\cH^n}\ge\frac{1}{2p}\lambda^p\fV_{\varepsilon,p}^{(n)}[P_N]
	-C\left((\lambda\ud\Gamma_n(P_N^2))^{p-1}+
	|\log\lambda|^{p-1}\lambda\ud\Gamma_n(P_N^2)+|\log\lambda|^p\right).
\end{align*}
By \eqref{lwb4} and Young's inequality, we obtain \eqref{lower bound:WN}.

\noindent\textbf{Step 3. Proof of the claim.}
Now we prove \eqref{lwb3} and \eqref{lwb4} for $n\geq 1$.
Since $0\leq P_N\leq \1_{\cH}$, the operator $Q_N:=(\1_{\cH}-P_N^2)^{1/2}$
is well defined.
For $S\subset[n]$, define $K_S$ as an operator on $\cH^n$: 
\begin{align*}
	K_S^{[n]}:=\prod_{j\in S}(P_N)_j\prod_{j\in[n]\backslash S}(Q_N)_j.
\end{align*}
Since \(Q_N\) is a Borel function of \(P_N\), all factors occurring in \(K_S^{[n]}\) commute and \(K_S^{[n]}\) is self-adjoint. Moreover, we have the exact partition of the identity
\begin{align}
	\sum_{S\subset[n]}(K_S^{[n]})^2
	=\sum_{S\subset[n]}\prod_{j\in S}(P_N)_j^2\prod_{j\notin S}(Q_N)_j^2
	&=\prod_{j=1}^n((P_N)_j^2+(Q_N)_j^2)=\1_{\cH^n}.\label{eq:partition-identity}
\end{align}
More generally, for every $T\subset[n]$, if we set $(P_N^2)_T:=\prod_{j\in T}(P_N^2)_j$, then
\begin{align}
	\sum_{S\supset T}(K_S^{[n]})^2
	&=(P_N^2)_T\sum_{U\subset [n]\backslash T}
	\prod_{j\in U}(P_N)_j^2\prod_{j\in T^c\setminus U}(Q_N)_j^2=(P_N^2)_T\prod_{j\in [n]\backslash T}((P_N)_j^2+(Q_N)_j^2)
	=(P_N^2)_T. \label{eq:subset-identity}
\end{align}
We claim that, for every $2\leq l\leq p$,
\begin{equation}
	\fV_{\varepsilon,l}^{(n)}[P_N]
	=\sum_{S\subset[n]}K_S^{[n]}\,\fV_{\varepsilon,l,\neq}^{(S)}\,K_S^{[n]}, \label{eq:exact-distinct-localization}
\end{equation}
where 
\begin{align}\label{5.36}
	\fV_{\varepsilon,l,\neq}^{(S)}=\sum_{\ii\in S^l_{\neq}}v_l^\varepsilon(x_{i_2}-x_{i_1},\cdots,x_{i_l}-x_{i_1})=:\sum_{\ii\in S^l_{\neq}}v_{l,\ii}^\varepsilon.
\end{align}
To prove this identity, expand the right-hand side of \eqref{eq:exact-distinct-localization} and interchange the finite sums:
\begin{align}
	\sum_{S\subset[n]}K_S^{[n]}\,\fV_{\varepsilon,l,\neq}^{(S)}\,K_S^{[n]}
	&=\sum_{S\subset[n]}
	\sum_{\boldsymbol i\in S_{\neq}^l}
	K_S^{[n]}\,v_{l,\ii}^\varepsilon\,K_S^{[n]}
	=\sum_{\boldsymbol i\in[n]_{\neq}^l}
	\sum_{S\supset \underline{\ii}}
	K_S^{[n]}\,v_{l,\ii}^\varepsilon\,K_S^{[n]}.\label{5.37}
\end{align}
where \(\underline{\boldsymbol i}:=\{i_1,\ldots,i_l\}\).
Since the labels in $\ii$ are pairwise distinct, for every set $S\supset\underline{\ii}$,
$$K_S^{[n]}=(P_N)_{\ii}\, K_{S\backslash\underline{\ii}}^{[n]\backslash\underline{\ii}}\,.$$
The second factor acts only on the particle variables outside
\(\underline{\boldsymbol i}\), and hence commutes with both
$v_{l,\ii}^\varepsilon$ and $(P_N)_{\ii}$.
Therefore, by \eqref{eq:partition-identity} applied on
\([n]\setminus\underline{\boldsymbol i}\), we have
\begin{align*}
	\sum_{S\supset \underline{\ii}}
	K_S^{[n]}\,v_{l,\ii}^\varepsilon\,K_S^{[n]} = (P_N)_{\ii}\,v_{l,\ii}^\varepsilon\,(P_N)_{\ii} \sum_{U\subseteq [n]\backslash\underline{\ii}}(K_U^{[n]\backslash\underline{\ii}})^2=(P_N)_{\ii}\,v_{l,\ii}^\varepsilon\,(P_N)_{\ii}.
\end{align*}
Substituting this identity into \eqref{5.37} proves
\eqref{eq:exact-distinct-localization}.

Let $\fV_{\varepsilon,l}^{(S)}$ be defined as in \eqref{5.36} with $S_{\neq}^l$ replaced by $S^l$.
The proofs of \eqref{lwb1} and \eqref{lwb2} apply to every subset $S\subseteq[n]$ and show that, for $2\leq l\leq p-1$ and $C_1>0$, there are constants $C_2,c>0$ such that

\begin{align}
	\lambda^p\fV_{\varepsilon,p}^{(S)}
	-C_1\lambda^l\fV_{\varepsilon,l}^{(S)}
	\ge-C_2|\log\lambda|^p,\quad \text{and}
\quad \fV_{\varepsilon,p}^{(S)}\geq c|S|^p.\label{lwb2'}
\end{align}
Since $v_l^\varepsilon\geq0$, we have $\fV_{\varepsilon,l,\neq}^{(S)}\leq \fV_{\varepsilon,l}^{(S)}$, and by \eqref{periodiz}, under the regime $\varepsilon\geq\lambda^\eta$ with $0<\eta<1/(2p-2)$, 
$$\lambda^p(\fV_{\varepsilon,p}^{(S)}-\fV_{\varepsilon,p,\neq}^{(S)})\leq \lambda^p\|v^\varepsilon\|_{L^\infty}\sum_{\ii\in S^p\backslash S^p_{\neq}}\1_{\cH^n} \lesssim\lambda\varepsilon^{2-2p}(\lambda|S|)^{p-1}\lesssim(\lambda|S|)^{p-1}.$$
Therefore, by \eqref{lwb2'} and Young's inequality, we have
$$\lambda^p\fV_{\varepsilon,p,\neq}^{(S)}
-C_1\lambda^l\fV_{\varepsilon,l,\neq}^{(S)}\geq\frac12\left(\lambda^p\fV_{\varepsilon,p}^{(S)}
-2C_1\lambda^l\fV_{\varepsilon,l}^{(S)}\right)+\frac12\lambda^p\fV_{\varepsilon,p}^{(S)}-C(\lambda|S|)^{p-1}
\gtrsim-|\log\lambda|^p,$$
and $$\lambda^p\fV_{\varepsilon,p,\neq}^{(S)}\geq\lambda^p\fV_{\varepsilon,p}^{(S)}-C(\lambda|S|)^{p-1}\geq\frac c2(\lambda|S|)^p-C.$$
Then, by \eqref{eq:exact-distinct-localization}, it follows that
\begin{align*}
	\lambda^p\fV_{\varepsilon,p}^{(n)}[P_N]
	-C_1\lambda^l\fV_{\varepsilon,l}^{(n)}[P_N] = \sum_{S\subset[n]}K_S^{[n]}\,(\lambda^p\fV_{\varepsilon,p,\neq}^{(S)}-C_1\lambda^l\fV_{\varepsilon,l,\neq}^{(S)})\,K_S^{[n]}\gtrsim-|\log\lambda|^p,
\end{align*}
Thus, \eqref{lwb3} holds.
By \eqref{eq:partition-identity}, we have
\begin{align}
	\lambda^p\fV_{\varepsilon,p}^{(n)}[P_N]=\sum_{S\subset[n]}K_S^{[n]}\,(\lambda^p\fV_{\varepsilon,p,\neq}^{(S)})\,K_S^{[n]}\geq \frac c2\lambda^p\sum_{S\subset[n]}|S|^p(K_S^{[n]})^2 - C.\label{5.42}
\end{align}
Moreover, since $0\leq P_N\leq1$,
\eqref{eq:subset-identity} gives
$$\sum_{S\subset[n]}|S|^p(K_S^{[n]})^2=\sum_{\ii\in[n]^p}\sum_{S\supset\{i_1,\cdots,i_p\}}(K_S^{[n]})^2=\sum_{\ii\in[n]^p}\prod_{j\in\{i_1,\ldots,i_p\}}(P_N^2)_j\geq(\ud\Gamma_n(P_N^2))^p.$$
In the last inequality, repeated indices in the expansion of
$(\ud\Gamma_n(P_N^2))^p$ produce higher powers of $(P_N^2)_j$, which are bounded above by $(P_N^2)_j$.
Substituting into \eqref{5.42}, we obtain \eqref{lwb4}.
\end{proof}

We now return to the proof of Lemma~\ref{lem:upper bound}.
Let $P:=\1_{\{h\leq(2\pi)^2N^2+1\}}$ and $Q=\1_{\cH}-P$. Since \(P_NP=PP_N=P_N\), the restriction
\[\mathbb W_P
:=\widetilde\WW_N\big|_{\mathfrak F(P\mathfrak H)}\]
satisfies the same lower bound as in \eqref{lower bound:WN}. Hence the quadratic form
\[\mathbb H_{\lambda,P}:=\lambda\ud\Gamma(Ph)+\mathbb W_P\]
is bounded from below and admits a Friedrichs extension.
Using the unitary $\fU$ in \eqref{eq:Fock factor}, we define a trial state on $\cF$:  
$$\widehat\Gamma_\lambda = \fU^*(\Gamma_{\lambda,P}\otimes(\Gamma_0)_Q)\fU,$$ 
where $(\Gamma_0)_Q$ is the $Q$-localization of the Gaussian state and $\Gamma_{\lambda,P}$ is the interacting Gibbs state on $\cF(P\cH)$:
$$\Gamma_{\lambda,P} =\frac{e^{-\mathbb H_{\lambda,P}}}
{\Tr_{\mathfrak F(P\mathfrak H)}(e^{-\mathbb H_{\lambda,P}})}.$$

\begin{proof}[Proof of Lemma \ref{lem:upper bound}]
Under the factorization induced by
\(\mathfrak H=P\mathfrak H\oplus Q\mathfrak H\), the Gaussian state satisfies
\[
\Gamma_0
=
\mathcal U^*
\bigl(
(\Gamma_0)_P\otimes(\Gamma_0)_Q
\bigr)
\mathcal U.
\]
Then, by \cite[Lemma 10.3]{LNR21}, we have
\[\mathcal H(\widehat\Gamma_\lambda,\Gamma_0)
=\mathcal H(\Gamma_{\lambda,P},(\Gamma_0)_P).\]
Applying the variational principle \eqref{variational problem} with
\(\Gamma=\widehat\Gamma_\lambda\), we obtain
\begin{align}\label{4.9}
	-\log\frac{Z_{\lambda}}{Z_0}\leq \fH(\Gamma_{\lambda,P},(\Gamma_0)_P)+\Tr(\WW\widehat\Gamma_\lambda).
\end{align}

First, we estimate the expectation of $\WW-\WW_N$ with respect to $\widehat\Gamma_\lambda$ similarly as Step 1 in the proof of Lemma \ref{Lem:lower bound}. 
By the variational formula for $\widehat\Gamma_\lambda$, we have
\begin{align*}
	&\fH(\widehat\Gamma_\lambda,\Gamma_0)
	+\Tr\!\left(\fU^*(\WW_P\otimes\1)\fU\,\widehat\Gamma_\lambda\right)\leq
	\Tr\!\left(\fU^*(\WW_P\otimes\1)\fU\,\Gamma_0\right)
	=\Tr(\WW_P(\Gamma_0)_P).
\end{align*}
Using \eqref{lower bound:WN} and $\WW_P\leq\WW_N|_{\cF(P\cH)}$, we infer
\begin{align}\label{eq:trial-entropy}
	\fH(\widehat\Gamma_\lambda,\Gamma_0)
	\leq C|\log\lambda|^p+
	\Tr(\WW_N(\Gamma_0)_P).
\end{align}
As in \eqref{3.7}, we have
 \begin{align*}
 	\Tr(\WW_N(\Gamma_{0})_P)=&\sum_{l=2}^p\sum_{k_{1:l}}\fF(F_l^{\varepsilon,N})(k_{1:l})\lambda^l \,\Tr\left(\prod_{j=1}^l\Wick{\ud\Gamma(e_{k_j}^-)}(\Gamma_0)_P\right) + f_0^{\varepsilon,N}\,.
 \end{align*}
Since $P_NP=PP_N=P_N$ and $|\widehat\rho|\leq1$, the estimate proceeds as before, with \eqref{3.8} replaced by
\begin{align*}
	\lambda^{|c|}\Tr_c(T^N)
	=&\lambda^{|c|}\1_{\sum_{i\in c}k_i=0}\sum_{q\in\ZZ^2}\prod_{j=1}^{|c|}\left(\lambda_{q+k_{c_1}+\cdots+k_{c_j}}\widehat\rho(N^{-1}(q+k_{c_1}+\cdots+k_{c_j}))^2\right)\\
	\leq&\1_{\sum_{i\in c}k_i=0}\sum_{q\in\ZZ^2}\prod_{j=1}^{|c|}\widehat G(q+k_{c_1}+\cdots+k_{c_j}),
\end{align*}
where $(T^N)_j = e_{k_j}^-\Gamma_0^{(1)}$ for every $1\leq j\leq l$. Thus all estimates are uniform in
$N$. By \eqref{eq:trial-entropy}, we get the same result as \eqref{entropy}: $$\fH(\widehat\Gamma_\lambda,\Gamma_0)\lesssim |\log\lambda|^p \,.$$
Moreover, the particle-number argument of Lemma~\ref{Lemma4.3} applies to the decoupled
Hamiltonian of $\widehat\Gamma_\lambda$. More precisely, replacing $h$ by $h_t=h-t|\log\lambda|^{-p}$ leaves the projected interaction unchanged;
\eqref{lower bound:WN} and the estimate remain uniform for
$|t|\leq1$. The higher-order correlation inequality used in
Lemma~\ref{Lemma4.3} therefore yields, for every fixed $m\in\NN$,
\begin{align}\label{eq:trial-moments}
	\lambda^m\Tr(\fN^m\widehat\Gamma_\lambda)
	\lesssim_m |\log\lambda|^{pm}.
\end{align}
The proof of the high-momentum estimate \eqref{1.4} applies as well with $F_l^{\varepsilon}$ replaced by $F_l^{\varepsilon,N}$, since the latter kernels
satisfy the same Fourier bounds uniformly in $N$. The only new term is the double commutator with
$-\lambda^{1/4}(\lambda\ud\Gamma(P_N^2))^{p-1}$, comparing with \eqref{4.36}.  Since
$$\|[[\ud\Gamma_n(f_k^+),\ud\Gamma_n(P_N^2)],\ud\Gamma_n(f_k^+)]\|=\|\ud\Gamma_n([[f_k^+,P_N^2],f_k^+])\|\lesssim n,$$
similarly as in \eqref{eq:double-commutator-sector-bound}, we have
\begin{align*}
\lambda^{\frac94}\|[[\ud\Gamma_n(f_k^+),(\lambda\ud\Gamma(P_N^2))^{p-1}],\ud\Gamma_n(f_k^+)]\|
	\lesssim \lambda^{\frac94}(\lambda n)^{p-1}.
\end{align*}
Therefore, the expectation of its second quantization with respect to $\widehat\Gamma_\lambda$ is dominated by the right-hand side of \eqref{1.4}, in view of \eqref{eq:trial-moments}. Consequently, Step~1 of the lower-bound proof
can be repeated for $\widehat\Gamma_\lambda$, and gives
\begin{align}\label{5.25}
	\left|\Tr\bigl((\WW-\WW_N)\widehat\Gamma_\lambda\bigr)\right|
	\lesssim |\log\lambda|^{p^2}((\log N)^{p-2}+1)
	(N^{-1/4}\varepsilon^{3-3p}+\lambda^\delta),
\end{align}
where $\delta=15\theta_0/16-(2p-2+\theta_0)\eta>0$.

Now we can reduce the problem to a finite dimensional estimate. Since $\Tr(\widetilde \WW_N\widehat\Gamma_\lambda)=\Tr(\WW_P\Gamma_{\lambda,P})$, and
\begin{align*}
	\lambda^{\frac14}\Tr((\lambda\ud\Gamma(P_N^2))^{p-1}\Gamma_{\lambda,P})=&\lambda^{\frac14}\Tr(\fU^*((\lambda\ud\Gamma(P_N^2))^{p-1}\otimes\1_{\cF(Q\cH)})\fU\widehat\Gamma_\lambda)
	\leq\lambda^{\frac14}\Tr((\lambda\fN)^{p-1}\widehat\Gamma_\lambda),
\end{align*}
by \eqref{eq:trial-moments}, \eqref{4.9} and \eqref{5.25}, we have
\begin{align}
	-\log\frac{Z_{\lambda}}{Z_0}\leq&\fH(\Gamma_{\lambda,P},(\Gamma_0)_P)+\Tr(\WW_P\Gamma_{\lambda,P})+C|\log\lambda|^{p^2}((\log N)^{p-2}+1)
	(N^{-1/4}\varepsilon^{3-3p}+\lambda^\delta)\nonumber\\
=&-\log\frac{\Tr(e^{-\lambda\ud\Gamma(Ph)-\WW_P})}{\Tr(e^{-\lambda\ud\Gamma(Ph)})} + C|\log\lambda|^{p^2}((\log N)^{p-2}+1)(N^{-1/4}\varepsilon^{3-3p}+\lambda^\delta)\,.\label{4.10} 
\end{align}
To estimate the relative partition function in the right-hand side in \eqref{4.10}, as \cite[(10.13)]{LNR21}, we use the resolution of identity \eqref{resolution of identity} and the Peierls–Bogoliubov inequality to get
\begin{align}
	\Tr(e^{-\lambda\ud\Gamma(Ph)-\WW_P})\geq\frac{1}{(\lambda\pi)^K}\int_{P\cH}\exp[-\langle W(u/\sqrt\lambda),(\lambda\ud\Gamma(Ph)+\WW_P)W(u/\sqrt\lambda)\rangle]\ud u \,,\label{5.27}
\end{align}
where $K:=\Tr(P)$.
Now we upper bound $\langle W(u/\sqrt\lambda),\WW_P W(u/\sqrt\lambda)\rangle$ for $u\in P\cH$. 
For $1\leq l\leq p$ and $k_{1:l}\in\ZZ^{2l}$, expanding $\Wick{\ud\Gamma(e_{k_j}^-)}$ for $1\leq j\leq l$ gives
\begin{align}\label{5.28}
	\Tr\left(\prod_{j=1}^{l}\Wick{\ud\Gamma(e_{k_j}^-)}|W(u/\sqrt\lambda)\rangle\langle W(u/\sqrt\lambda)|\right)
	=\sum_{Q\subseteq[l]}\prod_{i\notin Q}(-\langle\ud\Gamma(e_{k_i}^-)\rangle_0)\left\langle W(u/\sqrt\lambda),\prod_{j\in Q}\ud\Gamma(e_{k_j}^-)W(u/\sqrt\lambda)\right\rangle\,.
\end{align}
Fix $Q\subseteq[l]$ and write $q=|Q|$. When $q=0$, the coherent-state expectation in \eqref{5.28} equals $1$. 
If \(q\ge1\), by \eqref{1.8}, we have
\begin{align}
	\prod_{j\in Q}\ud\Gamma(e_{k_j}^-)
	&=\bigoplus_{n\ge q}
	\sum_{\ii\in[n]^q_{\neq}}
	\prod_{j\in Q}(e_{k_j}^-)_{i_j}
	+B_Q
	=:A_Q+B_Q ,
	\label{eq:coherent-collision}
\end{align}
where \(\ii=(i_j)_{j\in Q}\), and \([n]^q_{\neq}\) denotes the set of ordered \(q\)-tuples of pairwise distinct particle labels. 
For $q=1$, \(B_Q=0\). For \(q\ge2\), \(B_Q\) collects all terms in which at least two of the one-body operators act on the same particle. Grouping these terms according to the number of distinct particle labels, we write
\[B_Q=\sum_{r=1}^{q-1}\bigoplus_{n\ge r}\left(\sum_{\mathbf i\in[n]^r_{\neq}}
(K_r)_{i_1,\cdots,i_r}\right),\]
where \(K_r\) is a finite sum of \(r\)-body operators, and each tensor factor of these operators is a product of
\(e_{k_j}^-\) for some $j\in Q$. 
Then, for $1\leq r\leq q-1$, we have
\[\left|\langle u^{\otimes r},K_{r}u^{\otimes r}\rangle\right|\lesssim_q\|P_Nu\|_{\cH}^{2r}.\]
Consequently, by \eqref{density matrix of coherent state}, we obtain
\begin{align*}
	\lambda^q
	\left\langle
	W(u/\sqrt\lambda),A_QW(u/\sqrt\lambda)
	\right\rangle
	=\prod_{j\in Q}\langle u,e_{k_j}^-u\rangle.
\end{align*}
and
\[
\begin{aligned}
	\lambda^q\left|\left\langle
	W(u/\sqrt\lambda),B_QW(u/\sqrt\lambda)\right\rangle\right|\lesssim\sum_{r=1}^{q-1}
	\lambda^{q-r}\|P_Nu\|_{\cH}^{2r}
	\lesssim
	\lambda\left(1+\|P_Nu\|_{\cH}^{2(q-1)}\right).
\end{aligned}
\]
Combining with \eqref{eq:coherent-collision}, we conclude that

\begin{align}\label{eq:coherent-product}
	&\lambda^q\left\langle W(u/\sqrt\lambda),
	\prod_{j\in Q}\ud\Gamma(e_{k_j}^-)
	W(u/\sqrt\lambda)\right\rangle
	=\prod_{j\in Q}\langle u,e_{k_j}^-u\rangle
	+O\!\left(\lambda
	\bigl(1+\|P_Nu\|^{2(q-1)}_\cH\bigr)\right).
\end{align}

Since $\lambda|\langle\ud\Gamma(e_{k_i}^-)\rangle_0|\leq\lambda N_0\lesssim|\log\lambda|$, by \eqref{5.28} and \eqref{eq:coherent-product}, we have
\begin{align*}
	\lambda^l\left\langle W(u/\sqrt\lambda),
	\prod_{j=1}^{l}\Wick{\ud\Gamma(e_{k_j}^-)}
	W(u/\sqrt\lambda)\right\rangle
	=&\sum_{Q\subseteq[l]}\prod_{i\notin Q}(-\lambda\langle\ud\Gamma(e_{k_i}^-)\rangle_0)\prod_{j\in Q}\langle u,e_{k_j}^-u\rangle\\
	&+O\!\left(\lambda|\log\lambda|^{l-1}
	\bigl(1+\|P_Nu\|^{2(l-1)}_\cH\bigr)\right)\,.
\end{align*}
Then, by \eqref{difference} and $|\langle u,e_{k_j}^-u\rangle|\leq\|P_Nu\|^2$, the above becomes
\begin{align}\label{eq:coherent-wick-comparison}
	&\lambda^l\left\langle W(u/\sqrt\lambda),
	\prod_{j=1}^{l}\Wick{\ud\Gamma(e_{k_j}^-)}
	W(u/\sqrt\lambda)\right\rangle
	=\prod_{j=1}^l\Wick{\langle u,e_{k_j}^-u\rangle}
	+O\!\left(\lambda N^2|\log\lambda|^{l-1}
	\bigl(1+\|P_Nu\|^{2(l-1)}_\cH\bigr)\right).
\end{align}
We now sum \eqref{eq:coherent-wick-comparison} against the Fourier coefficients of the truncated interaction. For \(2\le l\leq p\), by \eqref{5.35}, the error generated by the \(l\)-th interaction term is bounded by
\[
\begin{aligned}
	&
	C\lambda N^2|\log\lambda|^{l-1}
	\left(
	1+\|P_Nu\|_{\cH}^{2(l-1)}
	\right)
	\sum_{k_{1:l}}
	\left|
	\mathcal F(F_l^{\varepsilon,N})(k_{1:l})
	\right|\lesssim
	\lambda N^2\varepsilon^{2-2p}|\log\lambda|^{p-1}
	\left(
	1+\|P_Nu\|_{\cH}^{2(p-1)}
	\right).
\end{aligned}
\]
For the one-body term, \eqref{eq:coherent-wick-comparison} with
\(l=1\) and \(k_1=0\) gives
\[
\lambda
\left\langle
W(u/\sqrt\lambda),
\Wick{\ud\Gamma(P_N^2)}
W(u/\sqrt\lambda)
\right\rangle
=
\Wick{\langle u,P_N^2u\rangle}
+
O(\lambda N^2).
\]
By \eqref{f_0}, this error is also absorbed by the above bound.
By the Parseval equality, the main terms obtained after summing over \(k_{1:l}\) are exactly the corresponding terms in \(W_p^{\varepsilon,N}(u)\). We therefore obtain
\begin{align}\label{eq:coherent-WPN}
	\left\langle
	W(u/\sqrt\lambda),
	\WW_N|_{\cF(P\cH)}
	W(u/\sqrt\lambda)
	\right\rangle
	\le
	W_p^{\varepsilon,N}(u)
	+
	C\lambda N^2\varepsilon^{2-2p}
	|\log\lambda|^{p-1}
	\bigl(1+\|P_Nu\|_\cH^{2(p-1)}\bigr).
\end{align}
Moreover, Jensen's inequality for the spectral measure of the positive operator $\lambda\ud\Gamma(P_N^2)$ gives the exact lower bound
\begin{align*}
	&\left\langle W(u/\sqrt\lambda),
	(\lambda\ud\Gamma(P_N^2))^{p-1}W(u/\sqrt\lambda)\right\rangle\geq
	\left\langle W(u/\sqrt\lambda),
	\lambda\ud\Gamma(P_N^2)W(u/\sqrt\lambda)\right\rangle^{p-1}
	=\|P_Nu\|^{2(p-1)}_\cH.
\end{align*}
Combining with \eqref{eq:coherent-WPN}, we have
\begin{align*}
	\langle W(u/\sqrt\lambda),\WW_PW(u/\sqrt\lambda)\rangle
	\leq& W_p^{\varepsilon,N}(u)+C\lambda N^2\varepsilon^{2-2p}|\log\lambda|^{p-1}
	-(\lambda^{\frac14}-C\lambda|\log\lambda|^{p-1}N^2\varepsilon^{2-2p})\|P_Nu\|^{2(p-1)}_\cH\\
	\leq& W_p^{\varepsilon,N}(u)+C\lambda N^2\varepsilon^{2-2p}|\log\lambda|^{p-1}.
\end{align*}
Indeed, $N\leq\lambda^{-1/8}$ and
$\varepsilon\geq\lambda^\eta$ imply
	\[
	\lambda^{3/4}N^2\varepsilon^{2-2p}|\log\lambda|^{p-1}
	\lesssim \lambda^{1/2-2\eta(p-1)}|\log\lambda|^{p-1}=o(1),
	\]
so the coefficient of $\|P_Nu\|_{\cH}^{2(p-1)}$ is non-negative for small $\lambda$.
Substituting into \eqref{5.27} and using \cite[(10.12) and (9.15)]{LNR21}, we obtain
\begin{align*}
	\frac{\Tr(e^{-\lambda\ud\Gamma(Ph)-\WW_P})}{\Tr(e^{-\lambda\ud\Gamma(Ph)})}
	\geq&(1-C\lambda N^4)e^{-C\lambda N^2\varepsilon^{2-2p}|\log\lambda|^{p-1}}\int_{P\cH}e^{-W_p^{\varepsilon,N}(u)}\ud\mu_{0,P}(u)\,.
\end{align*}
The two additional errors are absorbed by the error in \eqref{2.1} since $\lambda N^4\leq\lambda^\frac12$ and $\lambda N^2\leq N^{-\frac14}$.
Combining the above estimate with \eqref{4.10}, we obtain \eqref{2.1}.

\end{proof}

\section{Convergence of density matrices}\label{sec6}

In this section, we prove the convergence of the reduced density matrices in \eqref{thm-eq2}-\eqref{thm-eq3}. 
As in the previous section, we set
$P=\1_{\{h\leq(2\pi)^2N^2+1\}}$ and $Q=\1_{\cH}-P$.
Throughout this section, the limit is taken along
\(\lambda,\varepsilon\to0\) with \(\varepsilon\geq\lambda^\eta\), and
\(N=\lfloor\lambda^{-1/8}\rfloor\).  We assume that
$0<\eta<\min\left\{\frac{1}{96p},
\frac{15\theta_0}{16(2p-2+\theta_0)}\right\}$, where $\theta_0:=\frac12\min\{1,\delta_0\}$ with $\delta_0$ defined in \eqref{v-fourier}.

\subsection{Hilbert--Schmidt convergence}

The proof of \eqref{thm-eq2} follows closely from \cite[Section 11]{LNR21}.
However, due to the lower bound of the interaction in \eqref{lower bound:W}, the bound of $\lambda^k\|\Gamma_{\lambda}^{(k)}\|_{\text{HS}}$ for any $k\in\NN$ derived as in the proof of \cite[Lemma 11.3]{LNR21} is not sufficient. Hence, we instead use \eqref{0.2} to obtain
\begin{align}
	\lambda^k\|\Gamma_{\lambda}^{(k)}\|_{\text{HS}}\leq\lambda^k\Tr(\Gamma_{\lambda}^{(k)})\leq\lambda^k \Tr(\fN^k\Gamma_{\lambda})\lesssim_{k} |\log\lambda|^{pk}.\label{HS-bound}
\end{align}
Now we are ready to give the proof of \eqref{thm-eq2}.

\begin{proof}[Proof of \eqref{thm-eq2}]
	Let 
	$$\widetilde\Gamma_{\lambda} = \fU^*((\Gamma_{\lambda})_P\otimes (\Gamma_{\lambda})_Q)\fU \,.$$
	By \cite[(11.20)]{LNR21} and \eqref{HS-bound}, the reduced density matrices of \(\widetilde\Gamma_\lambda\) satisfy the same bound as in \eqref{HS-bound}.
	Consequently, \cite[Lemma~11.4]{LNR21} and Pinsker's inequality (e.g. \cite[(6.1)]{LNR21}) imply
	\begin{align}\label{2.2}
		\lambda^k\|\Gamma_{\lambda}^{(k)}-\widetilde\Gamma_{\lambda}^{(k)}\|_{\text{HS}}\lesssim_{k}|\log\lambda|^{pk}(\Tr|\Gamma_{\lambda}-\widetilde\Gamma_{\lambda}|)^{\frac12}\lesssim |\log\lambda|^{pk}\fH(\Gamma_{\lambda},\widetilde\Gamma_{\lambda})^{\frac14}\,.
	\end{align}
	For the relative entropy $\fH(\Gamma_\lambda,\widetilde\Gamma_\lambda)$,
	by \cite[(11.6)]{LNR21} and \eqref{5.22}, we have
	\begin{align*}
		-\log\frac{Z_{\lambda}}{Z_0}\geq&
		\fH(\Gamma_\lambda,\widetilde\Gamma_\lambda)+\fH((\Gamma_\lambda)_P,(\Gamma_0)_P)+\fH((\Gamma_\lambda)_Q,(\Gamma_0)_Q)+\int_{P\cH} W_p^{\varepsilon,N}\ud\mu_{P,\lambda}^\lambda
		-CR(\lambda,\varepsilon,N),
	\end{align*}
where 
\begin{align}\label{R(lambda,N)}
	R(\lambda,\varepsilon,N)=|\log\lambda|^{p^2}((\log N)^{p-2}+1)(N^{-1/4}\varepsilon^{3-3p}+\lambda^\delta)=o(|\log\lambda|^{-m})
\end{align}
for any $m\geq0$ due to \(N=\lfloor\lambda^{-1/8}\rfloor\), $\varepsilon\geq\lambda^\eta$ with $0<\eta<\frac{1}{96p}$ and $\delta>0$.
Then, by the Berezin-Lieb inequality \eqref{eq:Berezin-Lieb} and \eqref{5.23}, the above is lower bounded by
\begin{align*}
	\fH(\Gamma_\lambda,\widetilde\Gamma_\lambda)+\fH((\Gamma_\lambda)_Q,(\Gamma_0)_Q)+\fH_{\text{cl}}(\mu_{P,\lambda}^\lambda,\widetilde\mu_p^{\varepsilon,N})-\log\int_{P\cH}e^{-W_p^{\varepsilon,N}(u)}\ud\mu_{0,P}(u)-CR(\lambda,\varepsilon,N)\,.
\end{align*}
Here we also used the equality $$\fH_{\text{cl}}(\mu_{P,\lambda}^\lambda,\mu_{0,P})+\int_{P\cH} W_p^{\varepsilon,N}\ud\mu_{P,\lambda}^\lambda
=\fH_{\text{cl}}(\mu_{P,\lambda}^\lambda,\widetilde\mu_p^{\varepsilon,N})-\log\int_{P\cH}e^{-W_p^{\varepsilon,N}(u)}\ud\mu_{0,P}(u),$$
where $\widetilde\mu_p^{\varepsilon,N}=\mu_p^{\varepsilon,N}\circ P^{-1}$ with $\mu_p^{\varepsilon,N}$ defined in \eqref{def:mu_p^N}.
Together with the upper bound \eqref{2.1}, this gives
	\begin{align}
		\fH(\Gamma_{\lambda},\widetilde\Gamma_{\lambda})+\fH((\Gamma_{\lambda})_Q,(\Gamma_0)_Q)+\fH_{\text{cl}}(\mu_{P,\lambda}^\lambda,\widetilde\mu_p^{\varepsilon,N})\lesssim R(\lambda,\varepsilon,N)\,.\label{2.3}
	\end{align} 
	In particular, \eqref{2.2} becomes
	\begin{align}
		\lambda^{4k}\|\Gamma_\lambda^{(k)}-\widetilde\Gamma_\lambda^{(k)}\|^4_{\text{HS}}\lesssim&_k|\log\lambda|^{4pk}R(\lambda,\varepsilon,N)\,.\label{6.3}
	\end{align}

It therefore remains to analyze $\widetilde\Gamma_\lambda^{(k)}$. 
By \cite[(11.20)]{LNR21}, it suffices to prove that the P-localized term $(\Gamma_\lambda)_P^{(k)}$ converges to the desired limit in the Hilbert-Schmidt norm, while the Q-localized term $(\Gamma_\lambda)_Q^{(l)}$ converges to 0 for any $l\leq k$.
	We first consider the $P$-localized term. By \eqref{0.2}, we have from \eqref{eq:quantitative}:
	\begin{align}
		\left\|k!\,\lambda^k(\Gamma_\lambda)_P^{(k)}-\int_{P\cH}|u^{\otimes k}\rangle \langle u^{\otimes k}|\ud\mu_{P,\lambda}^\lambda(u)\right\|_{\text{HS}}
		\lesssim_k \lambda^k\sum_{l=0}^{k-1}d^{k-l}\Tr(\fN^l\Gamma_{\lambda})
		\lesssim_k\lambda N^2|\log\lambda|^{pk},\label{2.4}
	\end{align}
where $d=\Tr(P)\lesssim N^2.$
This formula, together with \eqref{HS-bound}, gives
\begin{align}\label{6.12}
	\left\|\int_{P\cH}|u^{\otimes k}\rangle\langle u^{\otimes k}|\ud\mu_{P,\lambda}^\lambda(u)\right\|_{\text{HS}}\lesssim_{k}\lambda N^2|\log\lambda|^{pk}+\lambda^k\Tr(\Gamma_\lambda^{(k)})
	\lesssim |\log\lambda|^{pk}.
\end{align} 
Set $$\fM_k:=\int|u^{\otimes k}\rangle\langle u^{\otimes k}|
\ud\mu_p(u).$$
By \eqref{eq:moment-map-bound} for $g=\cZ_p^{-1}e^{-V_p}$, we have $\fM_k\in\mathfrak S^2$. Since \(P\to\1\) strongly, combining with \eqref{matrices conv}, we have
\begin{align}
	&\left\|\int_{P\cH}|u^{\otimes k}\rangle\langle u^{\otimes k}|
	\ud\widetilde\mu_p^{\varepsilon,N}(u)
	-\fM_k\right\|_{\mathfrak S^2}
	\leq\left\|\int|u^{\otimes k}\rangle\langle u^{\otimes k}|\ud\,(\mu_p^{\varepsilon,N}(u)-\mu_p(u))\right\|_{\mathfrak S^2}
	+\left\|P^{\otimes k}\mathcal M_k P^{\otimes k}
	-\mathcal M_k\right\|_{\mathfrak S^2},\label{eq:classical-moment-convergence}
\end{align}
which vanishes as $\lambda\to0$.
In particular, for sufficiently small \(\lambda\),
\begin{align}\label{6.12'}
	\left\|\int_{P\cH}|u^{\otimes k}\rangle\langle u^{\otimes k}|
	\,\ud\widetilde\mu_p^{\varepsilon,N}(u)\right\|_{\mathfrak S^2}\leq 1+\|\fM_k\|_{\mathfrak S^2}\lesssim_k1.
\end{align}

Let \(\nu_\lambda:=|\mu_{P,\lambda}^\lambda-
	\widetilde\mu_p^{\varepsilon,N}|\).
Since \begin{align*}
		&\left\|\int_{P\cH}|u^{\otimes k}\rangle\langle u^{\otimes k}|(\ud\mu_{P,\lambda}^\lambda(u)-\ud\widetilde\mu_p^{\varepsilon,N}(u))\right\|_{\text{HS}}^2
		\leq\int|\langle u_1,u_2\rangle|^{2k}\ud\nu_\lambda(u_1)\ud\nu_\lambda(u_2),
	\end{align*}
we use the Cauchy-Schwarz inequality on
\(P\cH\times P\cH\) with respect to
\(\nu_\lambda\otimes\nu_\lambda\) to obtain
\begin{align*}
	\left\|\int_{P\cH}|u^{\otimes k}\rangle\langle u^{\otimes k}|(\ud\mu_{P,\lambda}^\lambda(u)-\ud\widetilde\mu_p^{\varepsilon,N}(u))\right\|_{\text{HS}}^2
	\leq
	\nu_\lambda(P\cH)
	\left\|
	\int_{P\cH}
	|u^{\otimes 2k}\rangle\langle u^{\otimes 2k}|\,
	\ud\rho_\lambda(u)
	\right\|_{\mathfrak S^2},
\end{align*}
where $\rho_\lambda=\mu_{P,\lambda}^\lambda+
\widetilde\mu_p^{\varepsilon,N}$.
The classical Pinsker inequality and \eqref{2.3} imply
\begin{align}\label{6.13}
	\nu_\lambda(P\cH)^2
	&\lesssim
	\fH_{\mathrm{cl}}
	\bigl(\mu_{P,\lambda}^{\lambda},
	\widetilde\mu_p^{\varepsilon,N}\bigr)
	\lesssim R(\lambda,\varepsilon,N).
\end{align}
Combining this estimate with the preceding moment estimates in \eqref{6.12} and \eqref{6.12'} with \(k\) replaced by \(2k\), we conclude that
\begin{align}\label{6.14}
	\left\|\int_{P\cH}|u^{\otimes k}\rangle\langle u^{\otimes k}|(\ud\mu_{P,\lambda}^\lambda(u)-\ud\widetilde\mu_p^{\varepsilon,N}(u))\right\|_{\text{HS}}^2
	&\lesssim_k
	|\log\lambda|^{2pk}R(\lambda,\varepsilon,N)^{1/2},
\end{align}
which vanishes by \eqref{R(lambda,N)}.
Combining with \eqref{2.4} and \eqref{eq:classical-moment-convergence}, and using
\(\lambda N^2|\log\lambda|^{pk}\to0\), we obtain
\begin{align}\label{eq:P-localized-convergence}
	\left\|k!\lambda^k(\Gamma_\lambda)_P^{(k)}
	-\int|u^{\otimes k}\rangle\langle u^{\otimes k}|
	\,\ud\mu_p(u)\right\|_{\mathfrak S^2}
	\longrightarrow0.
\end{align}

For the $Q$-localized term, apply \cite[Lemma~11.4]{LNR21} to the states
\((\Gamma_\lambda)_Q\) and \((\Gamma_0)_Q\).  By \eqref{2.3} and Pinsker's inequality, for every fixed
\(1\leq j\leq k\),
\begin{align}\label{eq:Q-comparison}
	\lambda^j
	\|(\Gamma_\lambda)_Q^{(j)}-(\Gamma_0)_Q^{(j)}\|_{\mathfrak S^2}
	\lesssim_j |\log\lambda|^{pj}R(\lambda,\varepsilon,N)^{1/4}.
\end{align}
Moreover, the quasi-free state satisfies
\begin{align}\label{eq:free-Q-tail}
	\lambda^j\|(\Gamma_0)_Q^{(j)}\|_{\mathfrak S^2}
	\leq \|Qh^{-1}\|_{\mathfrak S^2}^j
	\lesssim_j N^{-j},
\end{align}
which concludes the proof.
	
\end{proof}

\subsection{Trace class convergence of the relative one-body density matrix}

Now we prove the convergence of the relative one-body density matrices in the trace class norm. 
Our argument follows the strategy of \cite[Lemma~11.5]{LNR21}, with several modifications. Specifically, due to the divergent upper bound of the relative entropy in \eqref{entropy}, we cannot take $P_L$ for $L$ fixed as in \cite{LNR21}, and simply applying the original truncation $P$ in the proof of the relative partition function is insufficient. Instead, we introduce a smaller truncation $P_{L_\lambda}$ to control $\fH(\Gamma_\lambda,\Gamma_0)$, and use the bound obtained in the proof of Hilbert-Schmidt convergence to conclude the result.

\begin{proof}[Proof of \eqref{thm-eq3}]
Let
	\begin{align*}
		X_\lambda:=\lambda(\Gamma_\lambda^{(1)}-\Gamma_0^{(1)}),\quad
		X_p^{\varepsilon,N}:=\int_{\cH}|u\rangle\langle u|\,
		(\ud\mu_p^{\varepsilon,N}-\ud\mu_0)(u),\quad
		X:=\int_{\cH}|u\rangle\langle u|\,
		(\ud\mu_p-\ud\mu_0)(u).
	\end{align*}
	Lemma~\ref{lem:Gaussian-moment-maps} and
	\eqref{1.2} for $m=1,2$ imply that $X_p^{\varepsilon,N}, X\in\mathfrak S^1(\cH)$ and
	\begin{align}\label{eq:classical-trace-convergence}
		\|X_p^{\varepsilon,N} - X\|_{\mathfrak S^1}
		\leq C\|(\cZ_p^{\varepsilon,N})^{-1}e^{-W_p^{\varepsilon,N}}-\cZ_p^{-1}e^{-V^p}\|_{L^2(\mu_0)}
		\longrightarrow0.
	\end{align}

We next obtain a quantitative Hilbert-Schmidt estimate under the cutoff \(P\). Taking \(k=1\) in \eqref{2.4} and
\eqref{6.14}, and using
\begin{align*}
	P X_p^{\varepsilon,N} P
	=\int_{P\cH}|u\rangle\langle u|\,
	(\ud\widetilde\mu_p^{\varepsilon,N}-\ud\mu_{0,P})(u),
\end{align*}
we obtain
\begin{align*}
	\left\|P\bigl(X_\lambda-X_p^{\varepsilon,N}\bigr)P\right\|_{\mathfrak S^2}
	\leq C\left(\lambda N^2|\log\lambda|^p
	+|\log\lambda|^pR(\lambda,\varepsilon,N)^{1/4}
	+\left\|\lambda(\Gamma_0)_P^{(1)}-Ph^{-1}\right\|_{\mathfrak S^2}\right).
\end{align*}
For the free state, we have
\begin{align*}
	\left\|\lambda(\Gamma_0)_P^{(1)}-Ph^{-1}P\right\|^2_{\mathfrak S^2}
	=&\left\|P\left(\frac{\lambda}{e^{\lambda h}-1}-h^{-1}\right)P
	\right\|^2_{\mathfrak S^2}\nonumber\\
	\lesssim&\int_1^{N^2+1}\left(\frac{\lambda}{e^{\lambda x}-1}-\frac1x\right)^2\ud x=\int_1^{N^2+1}\frac{(e^{\lambda x}-1-\lambda x)^2}{x^2(e^{\lambda x}-1)^2}\ud x\leq\lambda^2N^2,
\end{align*}
where we used $0\leq e^{\lambda x}-1-\lambda x\leq \lambda x(e^{\lambda x}-1)$ in the last step.
Therefore,
\begin{align}\label{eq:low-HS-rate}
	\left\|P\bigl(X_\lambda-X_p^{\varepsilon,N}\bigr)P\right\|_{\mathfrak S^2}\leq C(\lambda N^2|\log\lambda|^p+|\log\lambda|^p
	R(\lambda,\varepsilon,N)^{1/4}+\lambda N)=:a_\lambda.
\end{align}
Since $N\simeq \lambda^{-\frac18}$, by \eqref{R(lambda,N)}, we have $a_\lambda^{1/4}|\log\lambda|^p\to0.$
On the other hand, $0<\lambda\leq\frac12$ and
\(\varepsilon\leq(4R)^{-1}\) implies
\begin{align}\label{eq:a-lambda-vs-N}
	a_\lambda^{-1}\lesssim R(\lambda,\varepsilon,N)^{-\frac14}\lesssim N^{1/16}\varepsilon^{(3p-3)/4}=o(N^2).
\end{align}

Define
\begin{align*}
	L_\lambda:=a_\lambda^{-1},\qquad
	P_{L_\lambda}:=\1_{\{h\leq L_\lambda\}},\qquad
	Q_{L_\lambda}:=\1_\cH-P_{L_\lambda}.
\end{align*}
By \eqref{eq:a-lambda-vs-N}, \(PP_{L_\lambda}=P_{L_\lambda}P=P_{L_\lambda}\) for all
sufficiently small \(\lambda\). 
Moreover, \(X_\lambda\) and \(X_p^{\varepsilon,N}\) commute with the translation group on \(\cH\). 
Indeed, the many-body Hamiltonian $\HH_\lambda$ is invariant under the translations, hence so is \(\Gamma_\lambda\), and therefore
\(\Gamma_\lambda^{(1)}\) commutes with the one-body translation group. The same is true for \(\Gamma_0^{(1)}\). On the classical side, \(W_p^{\varepsilon,N}\) and $V_p$ are translation invariant.
Hence \(X_\lambda\) and \(X_p^{\varepsilon,N}\) are Fourier multipliers and commute with \(P\).
Consequently, by the Cauchy-Schwarz inequality, \eqref{eq:low-HS-rate} implies
\begin{align}\label{eq:low-trace-rate}
	\|P_{L_\lambda}(X_\lambda-X_p^{\varepsilon,N})\|_{\mathfrak S^1}=\|P_{L_\lambda}P(X_\lambda-X_p^{\varepsilon,N})P\|_{\mathfrak S^1}
	\leq (\Tr P_{L_\lambda})^{1/2}a_\lambda
	\lesssim a_\lambda^{1/2}
	\to0.
\end{align}
It remains to control the quantum high-energy tail. By \eqref{eqa priori bound on RDM} and H\"older's inequality, we have
\begin{align}\label{eq:high-quantum-tail}
	\|Q_{L_\lambda}X_\lambda\|_{\mathfrak S^1}
	&\leq \|h^{-\frac12}\|_{\mathfrak S^4}\|Q_{L_\lambda}h^{-1/2}\|_{\mathfrak S^4}
	\|h^{1/2}X_\lambda h^{1/2}\|_{\mathfrak S^2}\lesssim L_\lambda^{-1/4}|\log\lambda|^p
	=a_\lambda^{1/4}|\log\lambda|^p
	\to0.
\end{align}
Finally, \(X\in\mathfrak S^1\) and \(Q_{L_\lambda}\to0\) strongly, so $\|Q_{L_\lambda}X\|_{\mathfrak S^1}\to0.$
Combining \eqref{eq:classical-trace-convergence},
\eqref{eq:low-trace-rate}, and \eqref{eq:high-quantum-tail}, and using the exact decomposition
\[
X_\lambda-X
=P_{L_\lambda}(X_\lambda-X_p^{\varepsilon,N})
+P_{L_\lambda}(X_p^{\varepsilon,N}-X)
+Q_{L_\lambda}X_\lambda-Q_{L_\lambda}X,
\]
we conclude that
\begin{align*}
	\|X_\lambda-X\|_{\mathfrak S^1}
	\leq{}&\|P_{L_\lambda}(X_\lambda-X_p^{\varepsilon,N})\|_{\mathfrak S^1}
	+\|X_p^{\varepsilon,N}-X\|_{\mathfrak S^1}
	+\|Q_{L_\lambda}X_\lambda\|_{\mathfrak S^1}
	+\|Q_{L_\lambda}X\|_{\mathfrak S^1}
	\to 0.
\end{align*}
This proves \eqref{thm-eq3}.

\end{proof}

\appendix
\renewcommand{\appendixname}{Appendix}
\renewcommand{\theequation}{A.\arabic{equation}}
\section{Complex Wick expansion}\label{Appendix}
Fix $N\in\NN$, let \(u_N=P_Nu\) be the regularized complex Gaussian field. We recall that
\[
\mathbb E\big[u_N(x)\overline{u_N(y)}\big]=G_N(x-y),
\qquad
\mathbb E\big[u_N(x)u_N(y)\big]=0 .
\]
All Wick products in the following are taken with respect to this Gaussian field. 
In this appendix, we decompose the product $\prod_{k=1}^p \Wick{|u_N(x_k)|^2}$ into Wick-ordered monomials
corresponding to all possible contraction patterns. Then we  express these monomials by the red graphs in \(\fH^N(x_{1:p})\), which is introduced in Definition~\ref{def:graph}.

\begin{lem}\label{lem:complex-wick}
	Let $p\geq2$, \(N\in\NN\) and \(x_1,\ldots,x_p\in\Lambda\). Then,
	\begin{equation}\label{0.0}
		\prod_{k=1}^p \Wick{|u_N(x_k)|^2}
		=\sum_{\substack{I,J\subseteq[p]\\|I|=|J|}}\sum_{\sigma\in S_{I^c,J^c}}
		\Wick{\prod_{i\in I}u_N(x_i)\prod_{j\in J}\overline{u_N(x_j)}}\prod_{i\in I^c}G_N(x_i-x_{\sigma(i)}),
	\end{equation}
where $I^c=[p]\backslash I$, $J^c=[p]\backslash J$ and
\[S_{I^c,J^c}
:=\Bigl\{\sigma:I^c\to J^c; \sigma \text{ is a bijection and}
\;\sigma(i)\neq i \text{ for all } i\in I^c\Bigr\}.\]
If \(I^c=J^c=\varnothing\), the inner sum consists of the empty bijection and the empty product is understood to be \(1\). If no such bijection exists, the corresponding inner sum is \(0\).
Equivalently, in the graph notation of Definition~\ref{def:graph}, we have
\begin{align}\label{0.1}
	\prod_{k=1}^p \Wick{|u_N(x_k)|^2}
	=
	\sum_{\fg\in\fH^N(x_{1:p})}2^{m_\fg}\fg .
\end{align}
\end{lem}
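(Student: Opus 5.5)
The plan is to derive \eqref{0.0} from the standard Wick theorem for complex Gaussian fields, and then translate it into the graph form \eqref{0.1} by grouping terms with the same graph.

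\textbf{Step 1: the general product formula.} Write each factor as a Wick product of degree two, $\Wick{|u_N(x_k)|^2}=\Wick{u_N(x_k)\overline{u_N(x_k)}}$. We use the general product formula for Wick products of Gaussian variables: a product of Wick monomials equals the sum over all partial pairings of the legs in which no two legs from the same factor are paired. Each pairing contributes the product of covariances of its pairs, times the Wick product of the unpaired legs. Since $\EE[u_Nu_N]=0$, only pairs consisting of one $u_N$-leg and one $\overline{u_N}$-leg survive, and such a pair contributes $G_N(x_i-x_j)$.

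\textbf{Step 2: reading off \eqref{0.0}.} Let $I$ be the set of vertices whose $u_N$-leg is uncontracted and $J$ the set whose $\overline{u_N}$-leg is uncontracted. The contractions are then exactly a bijection $\sigma:I^c\to J^c$, pairing $u_N(x_i)$ with $\overline{u_N(x_{\sigma(i)})}$. The condition $\sigma(i)\neq i$ encodes that no two legs of the same factor are paired. Equality of the numbers of contracted legs of each type forces $|I|=|J|$. This gives \eqref{0.0}, with the stated conventions for empty or nonexistent bijections.

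\textbf{Step 3: passing to \eqref{0.1}.} Assign to each pair $(I,J,\sigma)$ a red graph as follows: place a dashed edge $\{i,\sigma(i)\}$ for each $i\in I^c$, and oriented red half-edges at the vertices of $I$ and $J$. At every vertex the number of dashed edges plus red lines is then two, so the graph lies in $\fH^N(x_{1:p})$. Conversely, every graph in $\fH^N(x_{1:p})$ arises this way.

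The map is not injective, and counting the fibres is the main obstacle. By the component classification (i)–(iii) in the proof of Lemma~\ref{Lem2.2}, components are dashed cycles, isolated red pairs, or dashed paths with oppositely oriented endpoints.
\begin{itemize}
\item On a path, the orientations of the endpoint legs determine the direction of every edge, so $\sigma$ is determined.
\item On a cycle of length at least three, $\sigma$ restricted to the cycle is a cyclic permutation, and it can run in either of two directions, both with the same $G_N$ factors. This gives a factor $2$.
\item On a 2-cycle (a double edge between $x_i$ and $x_j$), $\sigma$ swaps $i$ and $j$. This is a single choice, even though $G_N(x_i-x_j)G_N(x_j-x_i)$ appears as a double edge.
\end{itemize}
Hence the multiplicity of a graph is $2^{m_\fg}$, where $m_\fg$ counts the dashed loops with at least three vertices.

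We also need $G_N(x)=G_N(-x)$, which holds because $\rho$ is radial, so that an unoriented dashed edge is well defined. Finally, the convention that two same-orientation single red lines cannot be connected matches the structure of the paths, and the convention excluding self-loops matches $\sigma(i)\neq i$.
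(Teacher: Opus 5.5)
Your proof is correct. The second half, where you build a red graph from $(I,J,\sigma)$ and count fibres (paths rigid, cycles of length $\geq 3$ contributing two orientations, $2$-cycles one, using $G_N(x)=G_N(-x)$), is exactly the paper's argument.

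The difference lies in how you obtain \eqref{0.0}. You invoke the general product formula for Wick monomials (the standard Feynman-diagram formula, cf.\ \cite{Jan97}), extended to $u_N,\overline{u_N}$ by multilinearity with the bilinear covariance. You then read off $\sigma$ from the admissible pairings.

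The paper instead proves this special case from scratch with generating functions:
\begin{itemize}
\item It writes $\prod_k\Wick{|u_N(x_k)|^2}$ as $\prod_k\partial_{\lambda_k}\partial_{\mu_k}$ applied to the product of normalized single-site Wick exponentials $\Phi_k$.
\item It factors $\prod_k\Phi_k=\Phi\,\Psi$, where $\Phi$ is the joint Wick exponential and $\Psi=\exp\bigl(\sum_{i\neq j}\lambda_i\mu_jG_N(x_i-x_j)\bigr)$.
\item It distributes the derivatives by Leibniz's rule.
\end{itemize}
In that argument the exclusion $\sigma(i)\neq i$ appears automatically, because the diagonal terms $\lambda_i\mu_iG_N(0)$ cancel against the normalizations of the $\Phi_k$.

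Your route is shorter and makes the ``no self-contraction'' rule conceptually transparent, but it relies on an external theorem. The paper's route is self-contained.

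Two small points are worth stating explicitly in your write-up. First, \eqref{wC} for the graphs you construct follows from the degree-two condition by the handshake count $2\#\{\text{dashed edges in }C\}+w_C=2\#\{\text{vertices in }C\}$. Second, your appeal to the classification (i)--(iii) from the proof of Lemma~\ref{Lem2.2} is not circular. That classification uses only the definition of $\fH^N(x_{1:p})$ and \eqref{wC}, not the present lemma.
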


\begin{proof}
	For real parameters $\lambda_{1:p}$ and $\mu_{1:p}$, set $$X_k:=\lambda_ku_N(x_k)+\mu_k\overline{u_N(x_k)},\quad 1\leq k\leq p.$$
	Define the normalized Wick exponentials by 
	$$\Phi(\lambda,\mu):=\frac{\exp\bigl(\sum_{k=1}^p X_k\bigr)}
	{\EE\bigl[\exp\bigl(\sum_{k=1}^p X_k\bigr)\bigr]},\quad \Phi_k(\lambda_k,\mu_k):=\frac{\exp(X_k)}{\EE[\exp(X_k)]},\, 1\leq k\leq p.$$ 
	By applying the real Wick ordering \cite[(A.1)]{FKSS25} to the real and imaginary parts of the Gaussian vector \((u_N(x_1),\ldots,u_N(x_p))\), and then rewriting the result in terms of \(u_N\) and \(\overline{u_N}\), for any subsets \(I,J\subseteq[p]\), we have
\begin{align}\label{A0}
	\left.\Wick{\prod_{i\in I}u_N(x_i)\prod_{j\in J}\overline{u_N(x_j)}}=
	\prod_{i\in I}\partial_{\lambda_i}\prod_{j\in J}\partial_{\mu_j}\Phi(\lambda,\mu)\right|_{\lambda=\mu=0}.
\end{align}
In particular, for every $1\leq k\leq p$, we have
\begin{align*}
\left.\partial_{\lambda_k}\partial_{\mu_k}\Phi_k(\lambda_k,\mu_k)\right|_{\lambda_k=\mu_k=0}=\Wick{|u_N(x_k)|^2}.
\end{align*}
Therefore,
\begin{equation}\label{A1} 
	\left.\prod_{k=1}^p\Wick{|u_N(x_k)|^2} = \prod_{k=1}^p\partial_{\lambda_k}\partial_{\mu_k}\left(\prod_{k=1}^p\Phi_k(\lambda_k,\mu_k)\right)\right|_{\lambda=\mu=0}.
\end{equation} 

Using \eqref{Green}, we have
\[\EE\left[\exp\left(\sum_{k=1}^p X_k\right)\right]
=\exp\left(\sum_{i,j=1}^p\lambda_i\mu_jG_N(x_i-x_j)\right),
\quad \prod_{k=1}^p\EE[e^{X_k}]
=\exp\left(\sum_{k=1}^p\lambda_k\mu_kG_N(0)\right).\]
Then,
\begin{align}\label{A3}
	\prod_{k=1}^p\Phi_k(\lambda_k,\mu_k)
	=\Phi(\lambda,\mu)\,\Psi(\lambda,\mu),\quad \Psi(\lambda,\mu)
	:=\exp\left(\sum_{1\le i\neq j\le p}
	\lambda_i\mu_jG_N(x_i-x_j)\right).
\end{align}
We apply $\prod_{k=1}^p\partial_{\lambda_k}\partial_{\mu_k}$ to the product $\Phi(\lambda,\mu)\Psi(\lambda,\mu)$. 
By Leibniz's rule, we choose a subset \(I\subseteq[p]\) of the $\lambda$-derivatives and a subset
\(J\subseteq[p]\) of the $\mu$-derivatives to hit \(\Phi\); the remaining derivatives hit \(\Psi\). 
Then, by \eqref{A1} and \eqref{A0}, 
\begin{align}\label{A2}
	\left.\prod_{k=1}^p\Wick{|u_N(x_k)|^2} =
	\sum_{I,J\subseteq[p]}\Wick{\prod_{i\in I}u_N(x_i)\prod_{j\in J}\overline{u_N(x_j)}}\prod_{i\in I^c}\partial_{\lambda_i}\prod_{j\in J^c}\partial_{\mu_j}\Psi(\lambda,\mu)\right|_{\lambda=\mu=0}.
\end{align}
The exponential \(\Psi\) contains only the bilinear monomials \(\lambda_i\mu_jG_N(x_i-x_j)\) with \(i\neq j\). Hence a non-zero derivative at the origin must pair every remaining \(\lambda_i\)-derivative with exactly one remaining \(\mu_j\)-derivative. This is possible only when \(|I^c|=|J^c|\), or equivalently \(|I|=|J|\). The condition \(i\neq j\) precisely removes the local self-contractions already subtracted by the local Wick ordering. Thus the admissible pairings are the bijections \(\sigma\in S_{I^c,J^c}\), and
\begin{align*}
	\left.
	\prod_{i\in I^c}\partial_{\lambda_i}
	\prod_{j\in J^c}\partial_{\mu_j}
	\Psi(\lambda,\mu)
	\right|_{\lambda=\mu=0}
	=\sum_{\sigma\in S_{I^c,J^c}}\prod_{i\in I^c}G_N(x_i-x_{\sigma(i)}),
\end{align*}
which, inserted into \eqref{A2}, proves \eqref{0.0}.

It remains to identify \eqref{0.0} with the graphical
expansion \eqref{0.1}. 
Fix a summand indexed by \(I,J\subset[p]\) and
\(\sigma\in\mathfrak S_{I^c,J^c}\) in the right-hand side of \eqref{0.0}. For \(i\in I\) (resp. $J$), we draw the uncontracted field \(u_N(x_i)\) (resp. \(\overline{u_N(x_j)}\)) as a red wavy half-edge at \(x_i\). For each \(i\in I^c\), the factor \(G_N(x_i-x_{\sigma(i)})\) is represented by a dashed edge joining \(x_i\) and \(x_{\sigma(i)}\). The condition \(\sigma(i)\ne i\) excludes dashed self-loops.
By construction, each of the two local field legs at every vertex is used exactly once, either as an uncontracted red wavy half-edge or as an endpoint of a dashed contraction edge. Hence the graph obtained from this summand is precisely one of the red contraction graphs in
\(\fH^N(x_{1:p})\).

Conversely, every graph \(\fg\in\fH^N(x_{1:p})\) arises from this construction.
Its red wavy lines determine the uncontracted sets \(I\) and \(J\). The remaining field legs must then be paired along the dashed edges. On each dashed-connected component containing red wavy lines, these pairings are uniquely determined by the orientations of the red wavy lines. On a component with no red wavy lines, every vertex has dashed degree two, so the component is a closed dashed loop.
If such a loop contains at least three vertices, it has two possible cyclic orientations, and both give the same unoriented graph because \(G_N(x-y)=G_N(y-x)\). Hence each such loop contributes a factor \(2\). A loop with two vertices does not give an additional multiplicity, since reversing the orientation gives the same contractions.
Therefore, exactly \(2^{m_\fg}\) summands in
\eqref{0.0} give the same graph \(\fg\), where \(m_\fg\) is
the number of dashed loops containing at least three vertices. Regrouping the terms according to the resulting red graph gives \eqref{0.1}.

\end{proof}

\renewcommand{\theequation}{B.\arabic{equation}}
\section{Green function estimates and properties of the effective kernels}\label{appB}

In this appendix, we collect the basic properties of the periodic Green function $G$, of its truncated version $G_N$, and of the kernels $f_l^\varepsilon$ defined in \eqref{def:F_l}.  
Throughout the appendix, we identify $G$ with its periodic extension to $\RR^2$. In particular, $G(x+n)=G(x)$ for every $x\in\RR^2$ and $n\in\ZZ^2$.  Since \(\Lambda=[-1/2,1/2)^2\), we have \(d(x)=|x|\) for \(x\in\Lambda\).

First, we recall from \cite[Lemma B.1]{FKSS25} that 
there exists a periodic function $\widetilde G$ such that, for a.e. $x\in\mathbb R^2$,
\begin{equation}
	G(x)=\frac{1}{2\pi}\log\frac{1}{d(x)} + \widetilde G(x), \label{C0}
\end{equation}
and \begin{align}\label{C0_1}
|\widetilde G(x)|+|\nabla\widetilde G(x)|\lesssim 1,\quad |\nabla^2 \widetilde G(x)|\lesssim 1+\log\frac{1}{d(x)},
\end{align}
where $d(x):=\min_{n\in \mathbb Z^2}|x+n|.$
Let $\widetilde \rho:=\rho*\rho$. For each $N\in\mathbb N$, the truncated Green function $G_N=\widetilde\rho_N*G$ is defined in \eqref{Green}, where
\begin{align*}
	\widetilde\rho_N = \sum_{n\in\ZZ^2}N^2\widetilde\rho(N(\cdot+n)).
\end{align*}
Since $\widehat \rho\in C_c^\infty(\mathbb R^2)$, we also have $\widehat{\widetilde \rho}\in C_c^\infty(\mathbb R^2)$, and hence $\widetilde \rho$ is rapidly decaying:
\begin{align}
	|\widetilde\rho(x)|\lesssim_m(1+|x|^2)^{-m},\qquad x\in\mathbb R^2,\ m\ge 1.\label{C2_1}
\end{align}
We shall use the following consequences for the periodization. If \(x\in\Lambda\) and \(|n|\geq1\), then \(|x+n|\geq |n|/4\). Therefore, for every \(m\geq2\),
\begin{align}
	|\widetilde\rho_N(x)|\leq N^2|\widetilde\rho(Nx)| +\sum_{|n|\geq1}N^2|\widetilde\rho(N(x+n))|\lesssim_m \frac{N^2}{(1+N^2|x|^2)^{m}} + \sum_{|n|\geq1}\frac{N^{2-2m}}{|n|^{2m}}
	\lesssim_m \frac{N^2}{(1+N^2|x|^2)^m}.\label{C2}
\end{align}
Moreover, for $m>2$, by the change of variable $y=N^{-1}z$, we have
\begin{align}
	\int_{\Lambda}|y|^2|\widetilde\rho_N(y)|\ud y\lesssim_m N^2\int_{\RR^2}\frac{|y|^2}{(1+N^2|y|^2)^m}\ud y=N^{-2}\int_{\RR^2}\frac{|z|^2}{(1+|z|^2)^m}\ud z\lesssim_m\frac{1}{N^2}.\label{C7}
\end{align}

The next lemma estimates the approximation error \(G_N-G\). It is the periodic analogue of \cite[Lemma B.3]{FKSS25}; we include the proof to keep track of the dependence on \(N\).

\begin{lem}\label{lem:G}
	For every $x\in\Lambda\backslash\{0\}$, it holds that
	\begin{align}
		|G_N(x)-G(x)|\lesssim&(1+|\log (N|x|)|)\wedge\frac{1}{N^2|x|^2}.\label{C3}
	\end{align} 
As a result, for any $M>N$, 
\begin{align}
		|G_N(x)-G_M(x)|\lesssim&(1+|\log (N|x|)|)\wedge\frac{1}{N^2|x|^2},\label{C3_1}
\end{align}
and \begin{align}
	|G_N(x)|\lesssim&(1+\log N)\wedge(1+|\log|x||).\label{C3_2}
\end{align}
\end{lem}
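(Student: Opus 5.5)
We start from the representation $G_N-G=(\widetilde\rho_N-\delta)*G=\int_\Lambda \widetilde\rho_N(y)\bigl(G(x-y)-G(x)\bigr)\,\ud y$, which uses $\int_\Lambda\widetilde\rho_N=\widehat{\widetilde\rho}(0)=1$. We then split according to the size of $N|x|$. We also insert the decomposition \eqref{C0} of $G$ into a logarithmic part plus a smooth part. The smooth part $\widetilde G$ is the easy one. By \eqref{C0_1} it is $C^1$ with a mildly singular Hessian, so we Taylor expand to second order. The first-order term vanishes because $\widetilde\rho$ is radial, hence even. Together with \eqref{C7}, this leaves a contribution $O(N^{-2})$, up to a harmless logarithmic factor near the origin. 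That factor is integrable against $\widetilde\rho_N$, and since $|x|\le 1$ it is dominated by both $(N|x|)^{-2}$ and $1+|\log(N|x|)|$.

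For the logarithmic part, we treat two regimes. When $N|x|\le 2$, we bound each piece separately. First, $|G(x)|\lesssim 1+|\log|x||$. Second, $|G_N(x)|\le\int|\widetilde\rho_N(y)|\,|G(x-y)|\,\ud y\lesssim 1+\log N$, using \eqref{C2} and the integrability of $\log$ at scale $N^{-1}$. The difference is then $\lesssim 1+|\log(N|x|)|$.

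When $N|x|\ge 2$, we split the $y$-integral into $|y|\le|x|/2$ and $|y|>|x|/2$. On $|y|\le |x|/2$, the function $\log d(x-y)$ is smooth with Hessian $\lesssim |x|^{-2}$. A second-order Taylor expansion, again with the linear term cancelled by evenness, together with \eqref{C7}, gives $O(N^{-2}|x|^{-2})$. On $|y|>|x|/2$, we use the decay \eqref{C2} with $m$ large, which gives $|\widetilde\rho_N(y)|\lesssim N^{2}(N|x|)^{-2m}$. We combine this with the local integrability of $|\log d(x-y)|+|\log|x||$ over $\Lambda$. This region therefore contributes $\lesssim (N|x|)^{-2}$ after choosing $m\ge 2$. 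The term $G(x)\int_{|y|>|x|/2}\widetilde\rho_N$ is similarly $\lesssim(1+|\log|x||)(N|x|)^{2-2m}$, which is again $\lesssim (N|x|)^{-2}$.

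The main obstacle is the near-diagonal region $|x-y|\lesssim |x|$ inside $|y|>|x|/2$, where $\log d(x-y)$ is singular. Here we use the bound $\sup_{|y|>|x|/2}|\widetilde\rho_N(y)|\lesssim N^2(N|x|)^{-2m}$ together with $\int_{|z|\lesssim|x|}|\log|z||\,\ud z\lesssim |x|^2(1+|\log|x||)$. This gives $(N|x|)^{2-2m}(1+|\log|x||)$, which is acceptable since $|\log|x||\le\log N$ is dominated by a power of $N|x|$ only after we also use $N|x|\ge2$. To close this step we take $m$ large and note that $(1+|\log|x||)(N|x|)^{-1}\lesssim 1$ for $x\in\Lambda$, because $|x|\log(1/|x|)$ is bounded. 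Combining the two regimes proves \eqref{C3}.

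For \eqref{C3_1}, we apply the triangle inequality to get $|G_N-G_M|\le|G_N-G|+|G-G_M|$. The bound \eqref{C3} at level $M$ is dominated by the bound at level $N$, since $M>N$: indeed $(M|x|)^{-2}\le(N|x|)^{-2}$. When $M|x|\le 1$ we also have $N|x|\le1$, so $|\log(M|x|)|\le \log(M/N)+|\log(N|x|)|$. This last piece needs care. We can instead handle that region directly, since both $|G_M|$ and $|G_N|$ are $\lesssim 1+|\log|x||$ there. We rewrite $|G_N(x)-G_M(x)|\le |G_N(x)-G(x)|+|G(x)-G_M(x)|$ and bound $|G_M(x)-G(x)|$ by the same argument as in the first regime, which yields $\lesssim 1+|\log(N|x|)|$ uniformly in $M$. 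Finally, \eqref{C3_2} follows from $|G_N|\le|G|+|G_N-G|$, combined with the regime split: for $N|x|\le1$ we use the direct bound $1+\log N$, and for $N|x|\ge1$ we use $|G|+(N|x|)^{-2}\lesssim 1+|\log|x||\le 1+\log N$.
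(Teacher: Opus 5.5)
Your argument has a genuine gap at the crossover scale $|x|\sim N^{-1}$, which is exactly where \eqref{C3} says something nontrivial. In the regime $N|x|\le 2$ you estimate the two terms separately ($|G(x)|\lesssim 1+|\log|x||$, $|G_N(x)|\lesssim 1+\log N$, and likewise for the logarithmic part). You then assert that the difference is $\lesssim 1+|\log(N|x|)|$. This does not follow: the triangle inequality only gives $2+\log N+\log(1/|x|)$, which equals $2+2\log N$ at $|x|=1/N$, while \eqref{C3} asserts an $O(1)$ bound there. This is not a loss of constants. At this scale both $G(x)$ and $G_N(x)$ equal $\frac{1}{2\pi}\log N+O(1)$, and the lemma is precisely the statement that these divergences cancel; no estimate of the two terms separately can detect this.

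The same loss reappears for $N|x|\ge 2$. Your bounds $(1+|\log|x||)(N|x|)^{2-2m}$, for the term $G(x)\int_{|y|>|x|/2}\widetilde\rho_N$ and for the near-diagonal region, equal $2^{2-2m}(1+\log(N/2))$ at $N|x|=2$. This is not $\lesssim (N|x|)^{-2}$ uniformly in $N$ for any fixed $m$. Your justification that $(1+|\log|x||)(N|x|)^{-1}\lesssim 1$ ``because $|x|\log(1/|x|)$ is bounded'' is incorrect. Boundedness of $|x|\log(1/|x|)$ controls $(1+|\log|x||)\,|x|$, whereas at $|x|=2/N$ the quantity you need is $\frac12(1+\log(N/2))$.

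The missing idea is to keep the difference intact. Using $\int_\Lambda\widetilde\rho_N=1$, the paper writes $G_N(x)-G(x)=\int_\Lambda\widetilde\rho_N(y)(G(x-y)-G(x))\,\ud y$. For $|x|\le 1/6$ and $|y|\le 1/3$ it uses \eqref{C0} in the form $|G(x-y)-G(x)|\lesssim 1+|\log(|x-y|/|x|)|$. After $y=z/N$ this becomes an integral of $\bigl(1+\bigl|\log\bigl|\tfrac{z}{N|x|}-\tfrac{x}{|x|}\bigr|\bigr|\bigr)(1+|z|^2)^{-2}$. That yields $1+|\log(N|x|)|$ and, on $|x|/2<|y|\le 1/3$, also $(N|x|)^{-2}$, with no $\log N$. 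Only on $|y|>1/3$ are the two terms bounded separately. There the tail mass is $N^{-2}$ (not $(N|x|)^{-2}$), and it absorbs $\log(1/|x|)$ because $|x|^2\log(1/|x|)\lesssim 1$.

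A second, more technical, problem is your splitting of $G$ via \eqref{C0} into $\frac{1}{2\pi}\log(1/d)$ and $\widetilde G$, each Taylor-expanded separately to second order. The function $d$ has a ridge along $\partial\Lambda+\ZZ^2$, so both pieces have gradient jumps there; the jumps cancel only in $G$, which is smooth off $\ZZ^2$. For $x$ on the ridge, each piece separately has mollification error of order $N^{-1}$, not $N^{-2}$. For example, for $x=(0.49,0)$ the points $x-y$ with $|y|\le|x|/2$ cross the line $\{x_1=1/2\}$, so $\log d(x-y)$ is not smooth there. The paper avoids this by Taylor-expanding $G$ itself, and only where $x-ty$ stays away from $\ZZ^2$.

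Finally, \eqref{C3_1} needs no separate argument. The function $f(s)=(1+|\log s|)\wedge s^{-2}$ is decreasing, so $f(M|x|)\le f(N|x|)$. In particular, for $N|x|<M|x|\le 1$ one has $|\log(M|x|)|\le|\log(N|x|)|$, not merely the bound with an extra $\log(M/N)$ that sent you into a detour. \eqref{C3_2} is unaffected: the direct bound $|G_N|\lesssim 1+\log N$, together with $|G_N|\le|G|+|G_N-G|$, only needs the cruder estimates you actually proved.
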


	\begin{proof}	
	To prove \eqref{C3}, we write
	$$G_N(x)-G(x) = \int_\Lambda (G(x-y)-G(x))\widetilde\rho_N(y)\ud y.$$
	We first consider the case $|x|\leq1/6$ and split $\Lambda$ into four regions: $$\Omega_1=\{|y|\leq|x|/2\},\quad \Omega_2=\{|x|/2<|y|<2|x|\},\quad \Omega_3=\{2|x|\leq|y|\leq 1/3\},\quad \Omega_4=\{|y|>1/3\}.$$	
	For $y\in\Omega_4$, \(d(x-y)\geq d(y)-d(x)\geq1/6\). Hence, by \eqref{C0},
	$$|G(x-y)-G(x)|\leq|G(x-y)|+|G(x)|\lesssim1+|\log|x||.$$
	Using \eqref{C2} with $m=2$,
	\begin{align*}
		\int_{\Omega_4}|G(x-y)-G(x)||\widetilde\rho_N(y)|\ud y\lesssim&(1+|\log|x||)\int_{\frac13\leq|y|\leq1}\frac{N^2}{(1+N^2|y|^2)^2}\ud y	\lesssim
		\frac{1+|\log |x||}{N^2}.
	\end{align*}
This is bounded by the right side of \eqref{C3} since $1+|\log t|\lesssim t^{-2}$ for $0<t\leq1$.
	
	Next, since $|x-y|\leq\frac12$ on $\Lambda\backslash\Omega_4$, by \eqref{C0} and \eqref{C2} with $m=2$, for $i=1,2,3$,
	\begin{align}\label{C6}
		\int_{\Omega_i} |G(x-y)-G(x)||\widetilde\rho_N(y)|\ud y\lesssim N^2\int_{\Omega_i}\frac{1+|\log(|x-y|/|x|)|}{(1+N^2|y|^2)^2}\ud y=\int_{N\Omega_i}\frac{1+\left|\log\left|\frac{z}{N|x|}-\frac{x}{|x|}\right|\right|}{(1+|z|^2)^2}\ud z.
	\end{align}
where we changed the variable $y=N^{-1}z$.
Consider first \(i=3\). On \(N\Omega_3\),
\[
1\leq\left|\frac{z}{N|x|}-\frac{x}{|x|}\right|
\leq \frac{3|z|}{2N|x|}.
\]
Thus
\begin{align}\label{C6_1}
	\int_{\Omega_3}|G(x-y)-G(x)|\,|\widetilde\rho_N(y)|\,\ud y
	&\lesssim
	\int_{|z|\geq2N|x|}
	\frac{1+\log\frac1{N|x|}+\log|z|}{(1+|z|^2)^2}\,\ud z
	\lesssim
	1+|\log(N|x|)|.
\end{align}
The same integral also satisfies the complementary bound. Indeed, setting \(z=N|x|w\),
\begin{align}\label{C6_2}
	&\int_{|z|\geq2N|x|}
	\frac{1+\log\frac1{N|x|}+\log|z|}{(1+|z|^2)^2}\,\ud z
	=
	\int_{|w|\geq2}
	\frac{N^2|x|^2(1+\log|w|)}{(1+N^2|x|^2|w|^2)^2}\,\ud w
	\lesssim
	\frac{1}{N^2|x|^2}.
\end{align}
For $i=2$, since $|y|\geq|x|/2$ on $\Omega_2$, we have 
$$N^2\int_{\Omega_2}\frac{1+|\log(|x-y|/|x|)|}{(1+N^2|y|^2)^2}\ud y\lesssim\frac{N^2}{(1+N^2|x|^2)^2}\int_{\Omega_2}(1+|\log(|x-y|/|x|)|)\ud y$$
Then, by the change of variable $y=|x|z'+x$, \eqref{C6} becomes
\begin{align*}
\int_{\Omega_2}|G(x-y)-G(x)||\widetilde\rho_N(y)|\ud y
	\lesssim\frac{N^2|x|^2}{(1+N^2|x|^2)^2}\int_{|z'|\leq3}(1+|\log|z'||)\ud z'\lesssim1\wedge\frac{1}{N^2|x|^2}.
\end{align*}
Combining this with \eqref{C6_1} and \eqref{C6_2} gives
\begin{align}\label{C8}
	\int_{\Omega_2\cup\Omega_3}|G(x-y)-G(x)||\widetilde\rho_N(y)|\ud y\lesssim(1+|\log (N|x|)|)\wedge\frac{1}{N^2|x|^2}.
\end{align}

Finally, on $\Omega_1$, since $\frac12|x|\leq|x-y|\leq\frac32|x|$, it follows that
$$N^2\int_{\Omega_1}\frac{1+|\log(|x-y|/|x|)|}{(1+N^2|y|^2)^2}\ud y\lesssim N^2\int_{|y|\leq1}(1+N^2|y|^2)^{-2}\ud y\leq\int_{\RR^2}(1+|z|^2)^{-2}\ud z\lesssim1.$$
Thus, by \eqref{C6},
$$\int_{\Omega_1}|G(x-y)-G(x)||\widetilde\rho_N(y)|\ud y\lesssim1.$$
Moreover, since $\widetilde\rho_N$ is even, we have
$$\int_{\Omega_1}(G(x-y)-G(x))\widetilde\rho_N(y)\ud y = \int_{\Omega_1}(G(x-y)-G(x)+y\cdot\nabla G(x))\widetilde\rho_N(y)\ud y.$$
By \eqref{C0_1}, for any $z\in\Lambda$, $$|\nabla^2G(z)|\lesssim1+|\log|z||+\frac{1}{|z|^2}\lesssim\frac{1}{|z|^2}.$$
Then, by the Taylor's formula, for any $y\in\Omega_1$,
\begin{align}
	|G(x-y)-G(x)+y\cdot\nabla G(x)|\leq|y|^2\int_0^1(1-t)|\nabla^2G(x-ty)|\ud t\lesssim|y|^2\int_0^1\frac{1-t}{|x-ty|^2}\ud t\lesssim\frac{|y|^2}{|x|^2},\label{C5}
\end{align}
where in the last inequality, we used that $|x-ty|\geq|x|/2$ for any $|t|<1$ and $y\in\Omega_1$.
Therefore, by \eqref{C7},
$$\left|\int_{\Omega_1}(G(x-y)-G(x))\widetilde\rho_N(y)\ud y\right| \lesssim\frac{1}{|x|^2}\int_{\Omega_1}|y|^2|\widetilde\rho_N(y)|\ud y\lesssim\frac{1}{N^2|x|^2}.$$
Together with \eqref{C8}, this proves \eqref{C3} when \(|x|\leq1/6\).

If $|x|>\frac16$, split $\Lambda$ into $\Omega_1'=\{y\in\Lambda:|y|\leq1/12\}$, and $\Omega_2'=\Lambda\backslash\Omega_1'$. 
On $\Omega_1'$, for any $0<t<1$, $d(x-ty)\geq d(x)-d(ty)>1/12$ and thus,
$$|\nabla^2G(x-ty)|\lesssim 1+|\log d(x-ty)|+\frac{1}{d(x-ty)^2}\lesssim1.$$
Arguing as above, by \eqref{C5} and \eqref{C7},
\begin{align*}
	\left|\int_{\Omega_1'}(G(x-y)-G(x))\widetilde\rho_N(y)\ud y\right|\leq\int_{\Omega_1'}\int_0^1(1-t)|\nabla^2G(x-ty)||y|^2|\widetilde\rho_N(y)|\ud t\ud y \lesssim\int_{\Omega_1'}|y|^2|\widetilde\rho_N(y)|\ud y\lesssim\frac{1}{N^2}.
\end{align*}
On $\Omega_2'$, by \eqref{C2} with $m=2$, $|\widetilde\rho_N|\lesssim N^{-2}$, while $|G(x)|\lesssim1$ by $d(x)=|x|>1/6$ and \eqref{C0}. Hence
\begin{align}\label{C7_2}
	\int_{\Omega_2'}|G(x-y)-G(x)||\widetilde\rho_N(y)|\ud y\lesssim N^{-2}\int_\Lambda(|G(x-y)|+1)\ud y\lesssim N^{-2}\int_\Lambda(|G(y)|+1)\ud y\lesssim N^{-2}.
\end{align}
Since $1/6\leq|x|\leq1$, 
this is bounded by the right side of \eqref{C3}. Hence \eqref{C3} holds for all \(x\in\Lambda\setminus\{0\}\).

For \eqref{C3_1}, set 
$$f(s) = (1+|\log s|)\wedge s^{-2}=\begin{cases}
	1+\log\frac1s, & 0< s\le 1,\\
	s^{-2}, & s>1.
\end{cases}$$ 
Since $f$ decreases on $(0,\infty)$, for any $M>N$, by \eqref{C3},
$$|G_N(x)-G_M(x)|\lesssim f(N|x|) + f(M|x|)\leq 2f(N|x|),$$
which is \eqref{C3_1}.

Finally, for \eqref{C3_2}, since $\supp\widehat\rho\subseteq\{|k|\leq1\}$ and $0\leq\widehat\rho\leq1$,
$$|G_N(x)|\leq\sum_{k\in\ZZ^2}\widehat\rho(N^{-1}k)^2\langle k\rangle^{-2}\leq\sum_{|k|\leq N}(1+|k|^2)^{-1}\lesssim1+\log N.$$
Moreover, if $|x|>1/N$, then by \eqref{C0} and \eqref{C3},
$$|G_N(x)|\leq|G_N(x)-G(x)|+|G(x)|\lesssim \frac{1}{N^2|x|^2} + 1+|\log|x||\lesssim1+|\log|x||,$$
which proves \eqref{C3_2}.
\end{proof}

The pointwise estimates in Lemma~\ref{lem:G} allow us to compare the singular kernel $G$ with its truncated approximation $G_N$. 
We now use these bounds to study the kernels $f_l^\varepsilon$ defined in \eqref{def:F_l} for $2\leq l\leq p-1$ and their truncated versions
\(f_l^{\varepsilon,N}\), obtained by replacing \(G\) with \(G_N\).
First, we claim that for each fixed \(\varepsilon>0\), as $N\to\infty$,
\[\|f_{l}^{\varepsilon,N}-f_{l}^{\varepsilon}\|_{L^1(\Lambda^{l-1})}\to0.
\]
Indeed, expanding $\prod_{1\le a<b\le p} G_N(x_a-x_b)^{m_{ab}}-\prod_{1\le a<b\le p} G(x_a-x_b)^{m_{ab}}$ for every $\fm\in\fM_{p,l}$, each summand contains exactly one factor $G_N-G$, while the remaining \(p-l-1\) Green factors are bounded by \(1+|\log d(\cdot)|\) thanks to Lemma~\ref{lem:G}. Since $\|v^\varepsilon\|_{L^\infty}\lesssim\varepsilon^{2-2p}$,by H\"older's inequality and Young's inequality, we obtain
\begin{align}
	\|f_{l}^{\varepsilon,N}-f_{l}^{\varepsilon}\|_{L^1}^2
	\lesssim&\left(\int_{\Lambda}\Pi_2v^\varepsilon(x)|G_N(x)-G(x)|^2\ud x\right)\left(\int_{\Lambda}\Pi_2v^\varepsilon(x)(1+|\log|x||^{2(p-l)})\ud x\right)\label{limit}\\
	\lesssim&\varepsilon^{4-4p}\int_{\Lambda}|G_N(x)-G(x)|^2\ud x\lesssim\varepsilon^{4-4p}N^{-2},\nonumber
\end{align} 
where in the last inequality, we used \eqref{C3} and \eqref{F_{N,M}}.

The next lemma proves the symmetry of \(f_l^\varepsilon\) and
\(f_l^{\varepsilon,N}\), together with the positivity of the Fourier transform of each component. For convenience, we
write $f_l^{\varepsilon,\infty}:=f_l^{\varepsilon}$ and $G_\infty:=G$.
For $1\leq N\leq\infty$, define the associated translation-invariant \(l\)-variable kernel by
\begin{align}
	F_l^{\varepsilon,N}(x_{1:l})
	:=f_l^{\varepsilon,N}(x_2-x_1,\ldots,x_l-x_1).\label{F}
\end{align}
Recall $\fM_{p,l}$ in \eqref{def:M}. For every \(\fm\in\fM_{p,l}\) with \(C_\fm\neq0\), set
\begin{align}
	F_{l,\fm}^{\varepsilon,N}(x_{1:l}):=&f_{l,\fm}^{\varepsilon,N}(x_2-x_1,\dots,x_l-x_1)\nonumber\\
	:=&
	\int_{\Lambda^{p-l}} v^\varepsilon(x_2-x_1,\dots,x_p-x_1)
	\prod_{1\leq a<b\leq p} G_N(x_a-x_b)^{m_{ab}}\,\ud x_{l+1:p}.\label{def:Fl,m}
\end{align}

\begin{lem}\label{lem:fl}
	For $2\leq l\leq p-1$, $\varepsilon>0$ and $1\leq N\leq\infty$, \(F_l^{\varepsilon,N}\) is symmetric in the variables \(x_1,\ldots,x_l\).
	Equivalently, for every \(\sigma\in S_l\),
\begin{align}\label{sym:f}
	F_l^{\varepsilon,N}(x_{1:l})
	=
	F_l^{\varepsilon,N}(x_{\sigma(1)},\ldots,x_{\sigma(l)}).
\end{align}
Moreover, for every \(k_{1:l}\in(\ZZ^2)^l\), $N\geq1$ and $\fm\in\fM_{p,l}$ with \(C_\fm\neq0\), it holds that
\begin{align}\label{f:positive Fourier}
	\fF(F_{l,\fm}^{\varepsilon})(k_{1:l})\geq\fF(F_{l,\fm}^{\varepsilon,N})(k_{1:l})\ge 0.
\end{align}
\end{lem}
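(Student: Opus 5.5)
The two claims are independent, and I will treat them in turn, symmetry first and Fourier positivity second.

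\textbf{Symmetry.} I would deduce \eqref{sym:f} from the combinatorial invariance of the graph coefficients rather than from any explicit formula. By construction, for $x_{1:l}\in\Lambda^l$,
\[
\int_{\Lambda^l} F_l^{\varepsilon,N}(x_{1:l})\prod_{k=1}^l\Wick{|u_N(x_k)|^2}\,\ud x_{1:l}=-\frac1p\sum_{\fg\in\fG_p^N:\,v_\fg=2l}C_{\fg(x_{1:p})}\,\fg .
\]
Here the sum runs over all labelled graphs, and the choice of the $l$ blue-pair vertices was normalized to $\{1,\dots,l\}$ via the factor $\binom pl$. To make this pointwise, I would write $F_l^{\varepsilon,N}(x_{1:l})$ as a sum over $\fm\in\fM_{p,l}$ of $C_\fm$ times the integral in \eqref{def:Fl,m}.

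Fix $\sigma\in S_l$ and extend it to $\tilde\sigma\in S_p$ by letting it fix $l+1,\dots,p$. Relabelling the vertices by $\tilde\sigma$ maps $\fM_{p,l}$ to itself through $m_{ab}\mapsto m_{\tilde\sigma(a)\tilde\sigma(b)}$, with indices reordered. It also preserves membership in $\fG_p^N$, the cycle count $m_\fg$, the relation $\prec_V$ (up to relabelling $V$), and $\widetilde\fT^N$.

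From this I would check by induction on $p$, using the recursion \eqref{def:C_g}, that $C_{\fg}$ is invariant under vertex relabelling. This is the invariance asserted just before \eqref{W}. Combined with \eqref{con:sym}, namely $v^\varepsilon(x_{\tilde\sigma(2)}-x_{\tilde\sigma(1)},\dots)=v^\varepsilon(x_2-x_1,\dots)$, a change of summation index $\fm\mapsto\tilde\sigma\fm$ gives $F_l^{\varepsilon,N}(x_{\sigma(1:l)})=F_l^{\varepsilon,N}(x_{1:l})$.

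\textbf{Fourier positivity.} For fixed $\fm$, I would compute $\fF(F^{\varepsilon,N}_{l,\fm})$ by inserting Fourier series. We have $v^\varepsilon(y)=\sum_{t}\widehat v(\varepsilon t)e_t(y)$ with $\widehat v\ge0$ by \eqref{v-fourier}, and $G_N^{m}(x)=\sum_q \fF(G_N^m)(q)e_q(x)$. Moreover $\fF(G_N^m)\ge0$, since $\widehat{G_N}(k)=\widehat\rho(k/N)^2\langle k\rangle^{-2}\ge0$ and powers of $G_N$ correspond to convolutions of nonnegative sequences.

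Integrating out $x_{l+1:p}$ and taking the Fourier transform in $x_{1:l}$ produces exactly the formula \eqref{F:fourier}: a sum over $(q_\me)$ of products of nonnegative factors $\prod_\me\fF(G_N^{m_\me})(q_\me)$ times a value of $\widehat{v^\varepsilon}\ge0$. This yields $\fF(F^{\varepsilon,N}_{l,\fm})\ge0$. Absolute convergence, needed to justify the interchange, follows from \eqref{v-fourier} and the logarithmic integrability \eqref{C3_2}.

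For the monotonicity in $N$, I would note that $0\le\widehat\rho\le1$ gives $0\le\widehat{G_N}(k)\le\widehat G(k)$ pointwise. Consequently $0\le\fF(G_N^m)\le\fF(G^m)$ termwise, since these are iterated convolutions of nonnegative sequences. Because every factor in \eqref{F:fourier} is nonnegative and monotone, the case $N=\infty$ dominates.

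\textbf{Main obstacle.} The step I expect to be hardest is making the $N=\infty$ case rigorous. There $G^m$ is only in $L^r$ for all $r<\infty$, so $\fF(G^m)$ exists but the Fourier sums need not converge absolutely. I would handle this by passing to the limit $N\to\infty$ using the $L^1$ convergence \eqref{limit}: Fourier coefficients converge pointwise, and the inequality $\fF(F^{\varepsilon,N}_{l,\fm})\le\fF(F^{\varepsilon,M}_{l,\fm})$ for $M>N$ (using $\widehat\rho(k/N)\le\widehat\rho(k/M)$, which may require monotonicity of $\widehat\rho$ along rays, or else comparing directly with $\widehat G$) passes to the limit. If radial monotonicity of $\widehat\rho$ is unavailable, I would instead compare each truncated transform directly with the untruncated one, using nonnegativity of all summands and monotone convergence in the $q$-sums.
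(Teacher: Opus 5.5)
Your proposal is correct and follows the paper's proof essentially step by step. Symmetry comes from reindexing $\fm\mapsto\sigma\cdot\fm$, using the relabelling invariance of the graph coefficients (which the paper simply asserts, and which your induction on the recursion \eqref{def:C_g} would justify) together with \eqref{con:sym}. Positivity and monotonicity come from the Fourier expansion \eqref{B5}, with $0\le\fF(G_N^m)\le\fF(G^m)$ and $\widehat v\ge0$.

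Your worry about $N=\infty$ is legitimate, since the paper uses \eqref{B5} there without comment, and your fallback settles it without any monotonicity of $\widehat\rho$. The finite-$N$ sums are dominated term by term by the $N=\infty$ sum and converge to it term by term. Fatou's lemma, together with $\fF(F^{\varepsilon,N}_{l,\fm})\to\fF(F^{\varepsilon}_{l,\fm})$ (from the componentwise $L^1$ convergence \eqref{limit}), then gives both the identity at $N=\infty$ and the inequality.
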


\begin{proof}
	We begin from \eqref{sym:f}. Fix $1\leq N\leq\infty$.
	Define
	\[\Phi_l(x_{1:p}):=
	\sum_{\fm\in\fM_{p,l}}C_\fm\prod_{1\leq a<b\leq p} G_N(x_a-x_b)^{m_{ab}}.\]
Since $v$ satisfies \eqref{con:sym}, it is enough to prove that \(\Phi_l\) is symmetric in the first \(l\) variables.
	Fix \(\sigma\in S_l\), and extend it to \(\overline\sigma\in S_p\) by setting \(\overline\sigma(r)=\sigma(r)\) for \(1\leq r\leq l\), and \(\overline\sigma(r)=r\) for \(l<r\leq p\).
For an unordered pair \(\{a,b\}\), write \(m_{ab}:=m_{\min(a,b),\max(a,b)}\). 
For every $\fm\in\fM_{p,l}$, define \(\sigma\cdot\fm\in\fM_{p,l}\) by
\[
(\sigma\cdot\fm)_{ab}
:=
m_{\{\overline\sigma^{-1}(a),\overline\sigma^{-1}(b)\}},
\qquad 1\leq a<b\leq p.
\]
	Then \begin{align}
		\Phi_l(x_{\sigma(1)},\dots,x_{\sigma(l)},x_{l+1},\dots,x_p) =& \sum_{\fm\in\fM_{p,l}}C_\fm\prod_{1\leq a<b\leq p} G_N(x_{\overline\sigma(a)}-x_{\overline\sigma(b)})^{m_{ab}}\nonumber\\
		=&\sum_{\fm\in\fM_{p,l}}C_{\fm}\prod_{1\leq a<b\leq p} G_N(x_a-x_b)^{(\sigma\cdot\fm)_{ab}},\label{B1}
	\end{align}
	and the graph defined in \eqref{graph1} satisfies
	$$\fg_\fm(x_{\sigma(1)},\dots,x_{\sigma(l)},x_{l+1},\dots,x_p) = \fg_{\sigma\cdot\fm}(x_{1:p}).$$
	The coefficient \(C_{\fg_m(x_{1:p})}\) are invariant under relabelling of the first \(l\)
	vertices. Hence $C_\fm=C_{\sigma\cdot \fm},$
	and	\eqref{B1} becomes
	\begin{align*}
		\Phi_l(x_{\sigma(1)},\dots,x_{\sigma(l)},x_{l+1},\dots,x_p) =& \sum_{\fm\in\fM_{p,l}}C_{\sigma\cdot\fm}\prod_{1\leq a<b\leq p} G_N(x_a-x_b)^{(\sigma\cdot\fm)_{ab}}
		=\Phi_l(x_1,\dots,x_p).
	\end{align*}
	This completes the proof of \eqref{sym:f}.

	We now prove \eqref{f:positive Fourier}. 
Let $E_p:=\{(a,b):1\le a<b\le p\}$. Set \(y_0=0\) and \(y_j=x_{j+1}-x_1\), \(1\leq j\leq p-1\).
Then
\begin{align*}
	f_{l,\fm}^{\varepsilon,N}(y_{1:l-1})
	&=
	\int_{\Lambda^{p-l}}v^\varepsilon(y_{1:p-1})
	\prod_{\me=(a,b)\in E_p}
	G_N(y_{a-1}-y_{b-1})^{m_\me}\,\ud y_{l:p-1},
\end{align*}
where $m_\me=m_{ab}$ if $\me=(a,b)$.
For \(m_\me=0\), we interpret \(G_N^0=1\). 
Expanding all these factors into Fourier series gives
\begin{align*}
	f_{l,\fm}^{\varepsilon,N}(y_{1:l-1})
	&=
	\sum_{(q_\me)_{\me\in E_p}}
	\left(\prod_{\me\in E_p}\fF(G_N^{m_\me})(q_\me)\right)
	\int_{\Lambda^{p-l}}
	v^\varepsilon(y_{1:p-1})
	\prod_{r=1}^{p-1}e_{\mathbf q_r}(y_r)\,\ud y_{l:p-1},
\end{align*}
where $$\mathbf q_r=\sum_{\me=(a,b)\in E_p}(\1_{a=r+1}-\1_{b=r+1})q_\me,\quad 1\leq r\leq p-1.$$
Therefore, for $k_{1:l-1}\in(\ZZ^2)^{l-1}$, 
	\begin{align}\label{B5}
	\fF(f_{l,\fm}^{\varepsilon,N})(k_{1:l-1})
	&=
	\sum_{(q_\me)_{\me\in E_p}}
	\left(\prod_{\me\in E_p}\fF(G_N^{m_\me})(q_\me)\right)
	\widehat{v^\varepsilon}
	(k_1-\mathbf q_1,\ldots,k_{l-1}-\mathbf q_{l-1},-\mathbf q_l,
	\ldots,-\mathbf q_{p-1}).
\end{align}
For every $m\geq1$ and $q\in\ZZ^2$, we have
$$\mathcal F(G_N^m)(q)=\sum_{k_1+\cdots+k_m=q}\widehat{G_N}(k_1)\cdots \widehat{G_N}(k_m)=\sum_{k_1+\cdots+k_m=q}\prod_{i=1}^m(\widehat\rho(N^{-1}k_i)^2\langle k_i\rangle^{-2}).$$
Since $|\widehat\rho|\leq1$ and $\supp\widehat\rho\subseteq\{q\in\RR^2:|q|\leq1\}$, it holds that
\begin{align}
	0\leq \mathcal F(G_N^m)(q)\leq \1_{\{|q|\leq mN\}}\sum_{k_1+\cdots+k_m=q}\prod_{i=1}^m\langle k_i\rangle^{-2}=\1_{\{|q|\leq mN\}}\fF(G^m)(q)\lesssim\1_{\{|q|\leq mN\}},\label{GN:bound}
\end{align}
where we used $\fF(G^m)(q)\leq\int_\Lambda(1+|\log|x||)^m\ud x\lesssim1$ in the last inequality.
Since $\widehat v\geq0$, it follows from \eqref{B5} that
$0\leq\fF(f_{l,\fm}^{\varepsilon,N})\leq\fF(f_{l,\fm}^{\varepsilon})$.
This concludes the proof since $\fF(F_{l,\fm}^{\varepsilon,N})(k_{1:l})=\1_{k_1+\cdots+k_l=0}\fF(f_{l,\fm}^{\varepsilon,N})(k_{2:l})$.

\end{proof}

\renewcommand{\theequation}{C.\arabic{equation}}
\section{Estimates of the correlation functions}\label{appC}

In this appendix we recall two elementary estimates for the correlation functions associated with the Gaussian free field \(\mu_0\). For \(n\in\mathbb N\) and \(1\le r<\infty\), we denote by \(\mathfrak S^r(\mathfrak H^n)\) the
Schatten class
\[\mathfrak S^r(\mathfrak H^n)
:=
\{T\in\mathcal K(\mathfrak H^n): \Tr |T|^r<\infty\},
\quad
\|T\|_{\mathfrak S^r(\mathfrak H^n)}
:=(\Tr |T|^r)^{1/r},\]
where \(\cK(\mathfrak H^n)\) is the space of compact operators on \(\mathfrak H^n\), and \(|T|=(T^*T)^{1/2}\). Note that \(\|\cdot\|_{\mathrm{HS}}=\|\cdot\|_{\mathfrak S^2}\).
For \(m\ge1\) and \(g\in L^2(\mu_0)\), define the correlation function weakly by

	\begin{align}\label{eq:moment-map-definition}
		\mathcal M_m(g)
		:=\int|u^{\otimes m}\rangle\langle u^{\otimes m}|\,
		g(u)\,\ud\mu_0(u).
	\end{align}

	\begin{lem}\label{lem:Gaussian-moment-maps}
		For every \(m\geq1\) and \(g\in L^2(\mu_0)\), we have
		\begin{align}\label{eq:moment-map-bound}
			\|\mathcal M_m(g)\|_{\mathfrak S^2(\cH^m)}
			\leq C_m\|g\|_{L^2(\mu_0)}.
		\end{align}
		Moreover, if \(g,g'\in L^2(\mu_0)\) are probability densities, then
		\begin{align}\label{eq:relative-moment-trace-bound}
			\fM(g,g')
			:=\int|u\rangle\langle u|\,(g(u)-g'(u))\,\ud\mu_0(u)
			\in\mathfrak S^1(\cH)
		\end{align}
		and
		\begin{align}\label{eq:relative-moment-trace-estimate}
			\|\fM(g,g')\|_{\mathfrak S^1(\cH)}
			\leq \bigl(\Tr \,h^{-2}\bigr)^{1/2}
			\|g-g'\|_{L^2(\mu_0)}.
		\end{align}
\end{lem}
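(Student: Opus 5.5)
The plan is to reduce both estimates to Gaussian (Wick) computations for the free field $\mu_0$, whose covariance operator is $h^{-1}$. Throughout, $\mathcal M_m(g)$ is understood weakly, as the operator $T$ determined by $\langle \Psi_1,T\Psi_2\rangle=\int \overline{\langle \Psi_1,u^{\otimes m}\rangle}\langle\Psi_2,u^{\otimes m}\rangle\, g\,\ud\mu_0$ for $\Psi_1,\Psi_2$ in a dense set of smooth symmetric functions.

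For \eqref{eq:moment-map-bound} I would use duality. The Hilbert--Schmidt norm equals the supremum of $|\Tr(\mathcal M_m(g)K)|$ over $K$ with $\|K\|_{\mathfrak S^2}\le1$, and by density it suffices to take $K$ smooth and finite rank. For such $K$,
\[
\Tr(\mathcal M_m(g)K)=\int\langle u^{\otimes m},Ku^{\otimes m}\rangle\,g\,\ud\mu_0 .
\]
Cauchy--Schwarz in $L^2(\mu_0)$ then reduces the claim to
\[
\int|\langle u^{\otimes m},Ku^{\otimes m}\rangle|^2\ud\mu_0\le C_m^2\|K\|_{\mathfrak S^2}^2 .
\]
To prove this, I would expand the fourth-order Gaussian moment by Wick's theorem into a sum over pairings of $2m$ factors $u$ with $2m$ factors $\bar u$. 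Each pairing produces a trace of products of $K$, $K^*$ and copies of $h^{-1}$ (partial traces included). Since $h^{-1}$ is bounded and $\|h^{-1}\|_{\mathfrak S^2}<\infty$ in two dimensions, Hölder's inequality for Schatten norms bounds each term by $C\|K\|_{\mathfrak S^2}^2$. The number of pairings depends only on $m$, which gives the constant $C_m$.

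For \eqref{eq:relative-moment-trace-estimate}, the key point is that $g$ and $g'$ are probability densities, so $\int(g-g')\,\ud\mu_0=0$. We therefore have the freedom to subtract constants from the integrand. I would again use duality, now with bounded $A$ satisfying $\|A\|\le1$. Write
\[
\Tr(\fM(g,g')A)=\int\bigl(\langle u,Au\rangle-\Tr(Ah^{-1})\bigr)(g-g')\,\ud\mu_0 ,
\]
which is legitimate because the subtracted term is a constant and integrates to zero against $g-g'$. The bracket is the Wick-ordered quadratic $\Wick{\langle u,Au\rangle}$. Its variance under $\mu_0$ is exactly $\Tr(h^{-1}Ah^{-1}A^*)$, which is at most $\|A\|^2\Tr(h^{-2})$ by Cauchy--Schwarz for traces. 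Applying Cauchy--Schwarz in $L^2(\mu_0)$ then yields
\[
|\Tr(\fM(g,g')A)|\le(\Tr h^{-2})^{1/2}\|g-g'\|_{L^2(\mu_0)} .
\]

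It remains to pass from this bound to trace-class membership, since $\fM$ is a priori only weakly defined. I would first establish the bound for finite-rank $A$, where everything is well defined; this already bounds $P_n\fM P_n$ uniformly in $\mathfrak S^1$ for spectral cutoffs $P_n$ of $h$. Lower semicontinuity of the trace norm under weak limits then gives $\fM\in\mathfrak S^1$ together with the estimate.

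The main obstacle is the combinatorial Wick bound in the first part. One has to check that every pairing diagram closes into traces with at most two Hilbert--Schmidt factors $K,K^*$, and that contractions internal to $\langle u^{\otimes m},Ku^{\otimes m}\rangle$ produce only bounded operators and finite traces $\Tr h^{-1}$-type objects. The latter is false: $\Tr h^{-1}$ diverges in two dimensions. So self-contractions inside one factor must be examined carefully. My first attempt would be to note that such contractions give partial traces of $K$ against $h^{-1}$, and to bound these via $\|K(h^{-1}\otimes1)\|_{\mathfrak S^1}\le\|K\|_{\mathfrak S^2}\|h^{-1}\|_{\mathfrak S^2}$, which is finite.
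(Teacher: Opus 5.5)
Your argument for the trace-class part is the paper's. You subtract the constant $\Tr(Ah^{-1})$ using $\int(g-g')\,\ud\mu_0=0$, bound the variance of $\Wick{\langle u,Au\rangle}$ by $\|A\|^2\Tr h^{-2}$, and apply Cauchy--Schwarz for finite-rank $A$. You then pass to $\mathfrak S^1$; your lower-semicontinuity step is equivalent to the paper's use of $\mathfrak K(\cH)^*=\mathfrak S^1(\cH)$. Your reduction of \eqref{eq:moment-map-bound}, via duality and Cauchy--Schwarz in $L^2(\mu_0)$, to the Gaussian bound
\[
\int|\langle u^{\otimes m},Ku^{\otimes m}\rangle|^2\,\ud\mu_0\le C_m^2\,\|K\|_{\mathfrak S^2}^2
\]
is also the paper's.

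\textbf{The gap.} This Gaussian bound carries the whole first part, and you leave it unproved; the remedy you propose for internal contractions does not work. The inequality $\|K(h^{-1}\otimes\1)\|_{\mathfrak S^1}\le\|K\|_{\mathfrak S^2}\|h^{-1}\|_{\mathfrak S^2}$ is false for $m\ge2$, because $h^{-1}\otimes\1$ is not Hilbert--Schmidt on the $m$-particle space. For a counterexample, take:
\begin{itemize}
\item $\phi\equiv1$, so $h\phi=\phi$;
\item $(f_j)$ orthonormal in $\{\phi\}^\perp$;
\item $\psi_j:=\sqrt2\,\phi\otimes_s f_j$ and $K:=\sum_j j^{-1}|\psi_j\rangle\langle\psi_j|\in\mathfrak S^2(\cH^2)$.
\end{itemize}
Then $\langle\psi_j,K(h^{-1}\otimes\1)\psi_j\rangle\ge(2j)^{-1}$, so $K(h^{-1}\otimes\1)\notin\mathfrak S^1$. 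Your worry about divergent factors $\Tr h^{-1}$ is misplaced. Every contraction line ends on a leg of $K$ or $K^*$, so no closed $h^{-1}$-loop can occur. What you would really need is that partial traces such as $\Tr_1[K(h^{-1}\otimes\1)]$ are Hilbert--Schmidt, and then every diagram type would still have to be checked.

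\textbf{The missing idea.} No diagram-by-diagram analysis is needed. Since $|\langle u^{\otimes m},Ku^{\otimes m}\rangle|^2=\langle u^{\otimes 2m},(K\otimes K^*)u^{\otimes 2m}\rangle$, Wick's theorem gives
\[
\int|\langle u^{\otimes m},Ku^{\otimes m}\rangle|^2\,\ud\mu_0=\sum_{\sigma\in S_{2m}}\Tr\bigl((K\otimes K^*)\,U_\sigma(h^{-1})^{\otimes 2m}\bigr),
\]
where $U_\sigma$ is the unitary permuting the $2m$ tensor factors. Each of your pairings is one $\sigma$, whether or not it contracts within a single factor. In two dimensions $\|U_\sigma(h^{-1})^{\otimes 2m}\|_{\mathfrak S^2}=\|h^{-1}\|_{\mathfrak S^2}^{2m}<\infty$. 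Cauchy--Schwarz in $\mathfrak S^2$ on the $2m$-fold tensor product therefore bounds every term by $\|K\|_{\mathfrak S^2}^2\|h^{-1}\|_{\mathfrak S^2}^{2m}$, so $C_m^2=(2m)!\,\|h^{-1}\|_{\mathfrak S^2}^{2m}$ suffices.

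This is precisely the paper's argument. For self-adjoint finite-rank $A$ it writes the integral as $\Tr\bigl((A\otimes A)\mathcal M_{2m}(1)\bigr)$ with $\mathcal M_{2m}(1)=(2m)!\,(h^{-1})^{\otimes 2m}$ and applies Cauchy--Schwarz in $\mathfrak S^2$. It then handles general finite-rank $A=A_1+\iota A_2$ through its self-adjoint parts.
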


\begin{proof}
	
	Recall from \cite[Lemma 3.3]{LNR18} that for $m\geq1$,
	$$\fM_m(1)=\int|u^{\otimes m}\rangle\langle u^{\otimes m}|\,\ud\mu_0(u) = m!\,(h^{-1})^{\otimes m}.$$
	Then, since $h^{-1}$ is in $\mathfrak S^2(\cH)$, by the Cauchy-Schwarz inequality in $\mathfrak S^2$, for any self-adjoint finite-rank operator $A$ on $\cH^m$, we have
	$$\int|\langle u^{\otimes m},Au^{\otimes m}\rangle|^2\ud\mu_0(u)=\Tr((A\otimes A)\fM_{2m}(1))\leq (2m)!\|(h^{-1})^{\otimes(2m)}\|_{\mathfrak S^2(\cH^{2m})}\|A\|_{\mathfrak S^2(\cH^m)}^2\lesssim_m\|A\|_{\mathfrak S^2(\cH^m)}^2.$$
	Consequently, by Cauchy-Schwarz inequality in \(L^2(\mu_0)\), we get
	\begin{align}
		|\Tr(A\fM_m(g))|^2\leq \|g\|_{L^2(\mu_0)}^2\left(\int|\langle u^{\otimes m},Au^{\otimes m}\rangle|^2\ud\mu_0(u)\right)\lesssim_m\|g\|_{L^2(\mu_0)}^2\|A\|_{\mathfrak S^2(\cH^m)}^2.\label{M_m}
	\end{align}
	Now, let \(A\) be an arbitrary finite-rank operator and write $A=A_1+\iota A_2$ with
	$$A_1:=\frac{A+A^*}{2},
	\quad A_2:=\frac{A-A^*}{2\iota}.$$
	Both \(A_1\) and \(A_2\) are self-adjoint, and 
	$\|A\|_{\mathfrak S^2}^2=\|A_1\|_{\mathfrak S^2}^2
	+\|A_2\|_{\mathfrak S^2}^2.$ Thus \eqref{M_m} holds for every finite-rank operator.
	Since these operators are dense in
	\(\mathfrak S^2\), the self-duality of
	\(\mathfrak S^2\) yields a unique operator
	\(\mathcal M_m(g)\in\mathfrak S^2(\cH^m)\) satisfying
	\eqref{eq:moment-map-bound}.
	
	For \eqref{eq:relative-moment-trace-estimate}, let \(A\) be finite rank on \(\cH\).  Since
	\(\int(g-g')\,\ud\mu_0=0\),
	\begin{align*}
		\Tr\bigl(A\mathcal M(g,g')\bigr)
		=\int \Wick{\langle u,Au\rangle}\,(g(u)-g'(u))\,\ud\mu_0(u).
	\end{align*}
	Then, by the H\"older inequality, and \cite[(5.6)]{LNR21},
	$$|\Tr\bigl(A\mathcal M(g,g')\bigr)|^2\leq\|g-g'\|_{L^2(\mu_0)}^2\int\bigl|\Wick{\langle u,Au\rangle}\bigr|^2\ud\mu_0\leq \|g-g'\|_{L^2(\mu_0)}^2\Tr(h^{-2})\|A\|^2,$$
	where $\|\cdot\|$ is the operator norm.
	Since the finite-rank operators are dense in \(\mathfrak K(\cH)\), and 
	$\mathfrak K(\cH)^*=\mathfrak S^1(\cH),$ we obtain \eqref{eq:relative-moment-trace-estimate}.
\end{proof}

\textbf{Acknowledgements:} We thank Hao Liang for informing us about his work on nonlocal Hartree measures with three-body interactions. 
Part of this work was completed while the second author was visiting LMU Munich, and she would like to thank the members of the Department of Mathematics at LMU for the warm hospitality and support. 
P.T.N. acknowledges support from the European Research Council through the ERC Consolidator Grant RAMBAS (Project No. 10104424).
Z.Y. and X.Z. are grateful to the financial supports by National Key R \& D Program of China (No. 2022YFA1006300) and the financial supports of the NSFC (No. 12426205), and the financial supports in part by the NSFC (No. 12595281, 12288201) and the support by key Lab of Random Complex Structures and Data Science, Chinese Academy of Science.

\def\cprime{$'$} \def\ocirc#1{\ifmmode\setbox0=\hbox{$#1$}\dimen0=\ht0
	\advance\dimen0 by1pt\rlap{\hbox to\wd0{\hss\raise\dimen0
			\hbox{\hskip.2em$\scriptscriptstyle\circ$}\hss}}#1\else {\accent"17 #1}\fi}


\begin{thebibliography}{CKRG26}
	
	\bibitem[Bou96]{Bou96}
	J. Bourgain.
	Invariant measures for the 2D-defocusing nonlinear Schr\"odinger equation.
	{\em Comm. Math. Phys.} 176 (1996), no. 2, 421--445.
	
	\bibitem[CKRG26]{CKRG26}
	C. Caraci, A. Knowles, A. Ranallo, and P. Torres Giesteira.
	The Euclidean $\phi^4_2$ theory as a limit of an inhomogeneous Bose gas.
	arXiv:2603.12241, 2026.
	
	\bibitem[DNN25]{DNN25}
	A. Deuchert, P. T. Nam, and M. Napi\'orkowski.
	The Gibbs state of the mean-field Bose gas.
	arXiv:2501.19396, 2025.
	
	\bibitem[DNY24]{DNY24}
	Y. Deng, A. R. Nahmod, and H. Yue.
	Invariant Gibbs measures and global strong solutions for nonlinear Schr\"odinger equations in dimension two.
	{\em Ann. of Math.} (2) 200 (2024), no. 2, 399--486.
	
	\bibitem[FKSS17]{FKSS17}
	J. Fr\"ohlich, A. Knowles, B. Schlein, and V. Sohinger.
	Gibbs measures of nonlinear Schr\"odinger equations as limits of many-body quantum states in dimensions $d\leq 3$.
	{\em Comm. Math. Phys.} 356 (2017), no. 3, 883--980.
	
	\bibitem[FKSS22]{FKSS23}
	J. Fr\"ohlich, A. Knowles, B. Schlein, and V. Sohinger.
	The mean-field limit of quantum Bose gases at positive temperature.
	{\em J. Amer. Math. Soc.} 35 (2022), no. 4, 955--1030.
	
	\bibitem[FKSS25]{FKSS25}
	J. Fr\"ohlich, A. Knowles, B. Schlein, and V. Sohinger.
	The Euclidean $\phi^4_2$ theory as a limit of an interacting Bose gas.
	{\em J. Eur. Math. Soc.} 27 (2025), no. 11, 4399--4468.
	
	\bibitem[GIP15]{GIP15} M. Gubinelli, P. Imkeller, N. Perkowski. Paracontrolled distributions and singular PDEs, {\em Forum Math.Pi 3} no. 6, 2015.
	
	\bibitem[GJ87]{GJ87}
	J. Glimm and A. Jaffe.
	{\em Quantum Physics. A Functional Integral Point of View}.
	Second edition.
	Springer-Verlag, New York, 1987.
	
	\bibitem[Jan97]{Jan97}
	S. Janson.
	{\em Gaussian Hilbert Spaces}.
	Cambridge Tracts in Mathematics, vol. 129.
	Cambridge University Press, Cambridge, 1997.
	
	\bibitem[JR25]{JR25}
	L. Jougla and N. Rougerie.
	$\Phi^4_2$ theory limit of a many-body bosonic free energy.
	arXiv:2512.10704, 2025.
	
	
	\bibitem[LNR15]{LNR15}
	M. Lewin, P. T. Nam, and N. Rougerie.
	Derivation of nonlinear Gibbs measures from many-body quantum mechanics.
	{\em J. \'Ec. polytech. Math.} 2 (2015), 65--115.
	
	\bibitem[LNR18]{LNR18}
	M. Lewin, P. T. Nam, and N. Rougerie.
	Gibbs measures based on 1D (an)harmonic oscillators as mean-field limits.
	{\em J. Math. Phys.} 59 (2018), no. 4, 041901.
	
	\bibitem[LNR21]{LNR21}
	M. Lewin, P. T. Nam, and N. Rougerie.
	Classical field theory limit of many-body quantum Gibbs states in 2D and 3D.
	{\em Invent. Math.} 224 (2021), no. 2, 315--444.
	
	\bibitem[Lia26]{Lia26} H. Liang. Nonlocal cubic density Gibbs measures from bosonic Gibbs states with three-body interactions. Preprint 2026.
	
	\bibitem[LNZ26]{LNZ26}
	L. L\"u, P. T. Nam, and R. Zhu.
	Derivation of the focusing $\Phi^6_1$ measure in the optimal mass regime from many-body quantum Gibbs states.
	arXiv:2605.25755, 2026.
	
	\bibitem[Nel66]{Nel66}
	E. Nelson.
	A quartic interaction in two dimensions.
	In {\em Mathematical Theory of Elementary Particles}
	(Proc. Conf., Dedham, Mass., 1965), pp. 69--73.
	M.I.T. Press, Cambridge, Mass., 1966.
	
	\bibitem[Nel73]{Nel73}
	E. Nelson.
	Construction of quantum fields from Markoff fields.
	{\em J. Funct. Anal.} 12 (1973), 97--112.
	
	\bibitem[NZZ25]{NZZ25}
	P. T. Nam, R. Zhu, and X. Zhu.
	$\Phi^4_3$ theory from many-body quantum Gibbs states.
	arXiv:2502.04884, 2025.
	
	\bibitem[NZZ26]{NZZ26}
	P. T. Nam, R. Zhu, and X. Zhu.
	Derivation of Gibbs measure from Gibbs state with the fractional-Bessel interaction in two dimensions.
	arXiv:2604.21583, 2026.

	
	\bibitem[OT18]{OT18}
	T. Oh and L. Thomann.
	A pedestrian approach to the invariant Gibbs measures for the 2D defocusing nonlinear Schr\"odinger equations.
	{\em Stoch. Partial Differ. Equ. Anal. Comput.} 6 (2018), 397--445.
	
	\bibitem[RS25]{RS25}
	A. Rout and V. Sohinger.
	A microscopic derivation of Gibbs measures for the 1D focusing quintic nonlinear Schr\"odinger equation.
	{\em SIAM J. Math. Anal.} 57 (2025), no. 5, 4680--4755.
	
	\bibitem[Sim74]{Sim74}
	B. Simon.
	The $P(\phi)_2$ Euclidean (Quantum) Field Theory.
	Princeton Series in Physics.
	Princeton University Press, Princeton, NJ, 1974.
	
	
\end{thebibliography}
\end{document}